\numberwithin{equation}{section} 
\numberwithin{table}{section} 
\numberwithin{figure}{section} 
\setlist[enumerate,1]{leftmargin=2.00em,itemsep=1pt,listparindent=15pt} 
\setlist[enumerate,2]{leftmargin=2.00em,itemsep=1pt,listparindent=15pt} 
\theoremstyle{plain}
\newtheorem{theorem}{Theorem}[section]
\newtheorem{definition}[theorem]{Definition}
\newtheorem{lemma}[theorem]{Lemma}
\newtheorem{corollary}[theorem]{Corollary}
\newtheorem{proposition}[theorem]{Proposition}
\newtheorem{assumption}[theorem]{Assumption}
\newtheorem{remark}[theorem]{Remark}
\theoremstyle{nonumberplain}
\newtheorem{proof}{Proof}
\SetMathAlphabet{\mathcal}{normal}{OMS}{cmsy}{m}{n} 
\SetMathAlphabet{\mathcal}{bold}{OMS}{cmsy}{m}{n} 
\providecommand{\ie}{i.~e.~}
\providecommand{\eg}{e.~g.~}
\providecommand{\cf}{cf.~}
\providecommand{\R}{\mathbb{R}}
\providecommand{\C}{\mathbb{C}}
\renewcommand{\C}{\mathbb{C}}
\providecommand{\N}{\mathbb{N}}
\providecommand{\Z}{\mathbb{Z}}
\providecommand{\ii}{\mathrm{i}}
\providecommand{\e}{\mathrm{e}}
\providecommand{\Hil}{\mathcal{H}}
\providecommand{\eps}{\varepsilon}
\providecommand{\Cont}{\mathcal{C}}
\providecommand{\dd}{\mathrm{d}}
\providecommand{\order}{\mathcal{O}}
\providecommand{\Fourier}{\mathcal{F}}
\providecommand{\abs}[1]{\left \lvert #1 \right \rvert}
\providecommand{\sabs}[1]{\lvert #1 \vert}
\providecommand{\babs}[1]{\bigl \lvert #1 \bigr \rvert}
\providecommand{\Babs}[1]{\Bigl \lvert #1 \Bigr \rvert}
\providecommand{\norm}[1]{\left \lVert #1 \right \rVert}
\providecommand{\snorm}[1]{\lVert #1 \rVert}
\providecommand{\bnorm}[1]{\bigl \lVert #1 \bigr \rVert}
\providecommand{\Bnorm}[1]{\Bigl \lVert #1 \Bigr \rVert}
\providecommand{\scpro}[2]{\left \langle #1 , #2 \right \rangle}
\providecommand{\sscpro}[2]{\langle #1 , #2 \rangle}
\providecommand{\bscpro}[2]{\bigl \langle #1 , #2 \bigr \rangle}
\providecommand{\sexpval}[1]{\langle #1 \rangle}
\providecommand{\bexpval}[1]{\bigl \langle #1 \bigr \rangle}
\providecommand{\weyl}{\star}
\providecommand{\semisuper}{\bullet}
\providecommand{\super}{\sharp}
\providecommand{\kerprod}{\diamond}
\providecommand{\op}{\mathrm{op}}
\providecommand{\Op}{\mathrm{Op}}
\providecommand{\Int}{\mathrm{Int}}
\providecommand{\smoothpoly}{C_{\textup{pol}}^\infty}
\providecommand{\Schwartz}{\mathcal{S}}
\providecommand{\bounded}{\mathcal{B}}
\providecommand{\triflux}{\gamma^B}
\providecommand{\cocyp}{\omega^B} 
\providecommand{\trifluxp}{\gamma^B}
\providecommand{\recfluxp}{\Omega^B}
\providecommand{\wigner}{\mathfrak{w}}
\providecommand{\Wigner}{\mathcal{W}}
\providecommand{\ssMoyalSpace}{\mathfrak{m}^B(\Pspace)}
\providecommand{\MoyalAlgebra}{\mathcal{M}}
\providecommand{\sMoyalAlg}{\MoyalAlgebra^B(\Pspace)}
\providecommand{\Alg}{\mathcal{A}}
\providecommand{\vNAlg}{\mathfrak{A}}
\providecommand{\pspace}{\Xi}
\providecommand{\Pspace}{\Xi^2}
\providecommand{\Ad}{\mathrm{Ad}}
\providecommand{\Xbf}{\mathbf{X}}
\providecommand{\Ybf}{\mathbf{Y}}
\providecommand{\Zbf}{\mathbf{Z}}
\providecommand{\ssMoyalSpace}{\mathfrak{m}^B(\Pspace)}
\title{A Calculus for Magnetic \\ Pseudodifferential Super Operators}
\author{Gihyun Lee${}^1$ \& Max Lein${}^2$}
\begin{document}

\maketitle
\vspace{-9mm}
\begin{center}
	${}^1$ Max-Planck-Institut f\"ur Mathematik \linebreak
	Vivatsgasse 7, 53111 Bonn, Germany \linebreak
	{\footnotesize \href{mailto:gihyun@mpim-bonn.mpg.de}{\texttt{gihyun@mpim-bonn.mpg.de}}}
	\medskip
	\\
	${}^2$ Advanced Institute of Materials Research,  
	Tohoku University \linebreak
	2-1-1 Katahira, Aoba-ku, 
	Sendai, 980-8577, 
	Japan \linebreak
	{\footnotesize \href{mailto:max.lein@tohoku.ac.jp}{\texttt{max.lein@tohoku.ac.jp}}}
\end{center}
\begin{abstract}
	This work develops a magnetic pseudodifferential calculus for \emph{super} operators $\Op^A(F)$; these map operators onto operators (as opposed to $L^p$ functions onto $L^q$ functions). Here, $F$ could be a tempered distribution or a Hörmander symbol. An important example is Liouville super operators $\hat{L} = - \ii \, \bigl [ \op^A(h) , \, \cdot \, \bigr ]$ defined in terms of a magnetic pseudodifferential operator $\op^A(h)$. Our work combines ideas from magnetic Weyl calculus developed in  \cite{Mantoiu_Purice:magnetic_Weyl_calculus:2004,Iftimie_Mantoiu_Purice:magnetic_psido:2006,Lein:progress_magWQ:2010} and the pseudodifferential calculus on the non-commutative torus from \cite{Ha_Lee_Ponge:pseudodifferential_theory_on_noncommutative_torus_1:2018,Ha_Lee_Ponge:pseudodifferential_theory_on_noncommutative_torus_2:2018}. Thus, our calculus is inherently gauge-covariant, which means all essential properties of $\Op^A(F)$ are determined by properties of the magnetic field $B = \dd A$ rather than the vector potential $A$. 
	
	There are conceptual differences to ordinary pseudodifferential theory. For example, in addition to an analog of the (magnetic) Weyl product that emulates the composition of two magnetic pseudodifferential super operators on the level of functions, the so-called semi-super product describes the action of a pseudodifferential \emph{super} operator on a pseudodifferential operator. 
\end{abstract}
\noindent{\scriptsize \textbf{Key words:} Pseudodifferential operators, magnetic operators, non-commutative $L^p$ spaces}\\ 
{\scriptsize \textbf{MSC 2020:} 35S05, 47C15, 47L80, 81R60}

\newpage
\tableofcontents

\section{Introduction} 
\label{intro}
Pseudodifferential theory and microlocal analysis have come a long way from their infancy in the 1960s and 1970s \cite{Hoermander:Fourier_integral_operators_1:1971,Hoermander:Fourier_integral_operators_2:1973,Hoermander:Weyl_calculus:1979,Grigis_Sjoestrand:microlocal_analysis:1994,Martinez:intro_microlocal_analysis:2002}, and has since become a robust tool in analysis. Its origins can be traced back to the seminal works of Weyl \cite{Weyl:qm_gruppentheorie:1927}, Wigner \cite{Wigner:Wigner_transform:1932} and Moyal \cite{Moyal:Weyl_calculus:1949}.

\subsection{The basic ideas of pseudodifferential calculi} 
\label{intro:pseudodifferential_calculus}
In its most basic form, pseudodifferential theory assigns operators 
\begin{align*}
	\op(f) : \mathcal{D} \subseteq L^p(\R^d) \longrightarrow L^q(\R^d)
	, 
	&&
	1 \leq p , q \leq \infty 
	, 
\end{align*}
between $L^p$ and Sobolev spaces to suitable functions $f : \R^{2d} \longrightarrow \C$ via the quantization map $\op$. Importantly, we can infer properties of the \emph{operator} $\op(f)$ from properties of the \emph{function} $f$. 

What is more, the map $\op$, also referred to as \emph{Weyl quantization} in this context, gives rise to a \emph{calculus:} we can pull back the operator adjoint $\op(f)^* = \op(\bar{f})$, which translates to complex conjugation, and the operator product of two pseudodifferential operators 
\begin{align*}
	\op(f) \, \op(g) =: \op(f \weyl g) 
\end{align*}
to the level of functions. This foreshadows that many classes of functions — or, equivalently, of pseudodifferential operators — form $\ast$-algebras. More advanced results like Beals- and Bony-type commutator criteria \cite{Beals:characterization_psido:1977,Bony:characterization_psido:1996,Iftimie_Mantoiu_Purice:commutator_criteria:2008,Cornean_Helffer_Purice:simple_proof_Beals_criterion_magnetic_PsiDOs:2018} stipulate conditions when an operator $\hat{f} = \op(f)$ is a pseudodifferential operator and characterize the function $f$. Among other things, they show that resolvents of certain pseudodifferential operators are pseudodifferential operators; this gives rises to a functional calculus via the Helffer-Sjöstrand formula \cite{Helffer_Sjoestrand:mag_Schroedinger_equation:1989}. 

In applications, such a functional calculus is tremendously useful since it allows us to easily construct pseudodifferential operators \emph{with known properties} from other pseudodifferential operators. One relevant case is density operators 
\begin{align}
	\hat{\rho} = Z_T^{-1} \, f_T \bigl ( \op(h) \bigr ) = \op(\rho_T)
	\label{intro:eqn:thermal_state_Fermi_Dirac}
\end{align}
that model states of condensed matter systems at finite temperature $T > 0$. These are defined via functional calculus for the selfadjoint pseudodifferential operator $\op(h) = \op(h)^*$ with respect to the Fermi-Dirac distribution 
\begin{align*}
	f_T(E) &= \frac{1}{1 + \e^{\frac{1}{k_{\mathrm{B}} T}(E - \mu)}} 
	. 
\end{align*}
Here, the partition sum $Z_T > 0$ is a normalization constant, $k_{\mathrm{B}}$ is the Boltzmann constant and the parameter $\mu \in \R$ is the chemical potential or Fermi energy. 

Continuous refinements over the years have led to pseudodifferential calculi that are adapted to specific situations, including spin systems \cite{Stratonovich:distributions_rep_space:1957,Gat_Lein_Teufel:semiclassical_dynamics_particle_spin:2013}. Given the importance magnetic fields have played in many recent works, we will base this work on a variant adapted to systems with magnetic fields called \emph{magnetic Weyl calculus} \cite{Mueller:product_rule_gauge_invariant_Weyl_symbols:1999,Mantoiu_Purice:magnetic_Weyl_calculus:2004,Iftimie_Mantoiu_Purice:magnetic_psido:2006,Iftimie_Mantoiu_Purice:commutator_criteria:2008,Iftimie_Purice:magnetic_Fourier_integral_operators:2011,Lein:two_parameter_asymptotics:2008,Lein:progress_magWQ:2010}; we will outline the definitions with more references to the pertinent literature in Section~\ref{magnetic_Weyl_calculus}. 

\subsection{Non-commutative geometry and its relation to pseudodifferential theory} 
\label{intro:non_commutative_geometry}
Another separate, but equally influential development was Alain Connes's non-com\-mutative geometry \cite{Connes:noncommutative_geometry:1994}. It has proven invaluable in many areas of mathematical physics, including condensed matter physics. This approach exploits that if the relevant operators are elements of or affiliated to a $C^*$- or von Neumann algebra $\vNAlg$ \cite{Dixmier:C_star_algebras:1977,Dixmier:von_Neumann_algebras:1981}. Properties of, say, the hamiltonian $\hat{h}$ are encoded in the algebraic structure of $\vNAlg$, and we can exploit that then many operations preserve this algebraic structure. For example, if $\hat{h}$ is affiliated to $\vNAlg$, then the finite-temperature state $\hat{\rho}_T = Z_T^{-1} \, f_T(\hat{h}) \in \vNAlg$ lies inside the algebra. 

Among other things, they have been used to great success to understand topological phenomena from physics such as the Quantum Hall Effect \cite{Bellissard:quantum_Hall_effect_noncommutative:1984,Bellissard:K_theory_Cast_algebras_solid_state_physics:1986,Bellissard:Cstar_algebras_solid_state_physics:1988,Bellissard_van_Elst_Schulz_Baldes:noncommutative_geometry_quantum_hall_effect:1994,Bellissard_Schulz_Baldes:kinetic_theory_quantum_transport_aperiodic_media:1998,Prodan:non_commutative_geometry_topological_insulators:2014}. Very naturally this approach allows one to include effects of disorder, which is one reason why it has proven very popular. 

As the name suggests one of the driving forces behind non-commutative geometry was to extend concepts from analysis and differential geometry like derivatives to operator algebras. So it comes as no surprise that the Venn diagram of pseudodifferential theory and non-commutative geometry is not empty. Indeed, Baaj made the connection between pseudodifferential operators and crossed product $C^*$-algebras explicit \cite{Connes:Cstar_algebras_and_differential_geometry:1980,Baaj:twisted_X_products_1:1988,Baaj:twisted_X_products_2:1988}; this was later extended to \emph{magnetic} pseudodifferential operators that are affiliated to \emph{twisted} crossed product $C^*$-algebras \cite{Mantoiu_Purice_Richard:twisted_X_products:2004}. That connection has been used to great effect to \eg derive propagation estimates \cite{Mantoiu_Purice_Richard:Cstar_algebraic_framework:2007}, prove continuity of spectra \cite{Mantoiu_Purice:continuity_spectra:2009}, to characterize the essential spectrum of magnetic pseudodifferential operators \cite{Lein_Mantoiu_Richard:anisotropic_mag_pseudo:2009} and to study semiclassical limits \cite{Belmonte_Lein_Mantoiu:mag_twisted_actions:2010}. These works show how ideas and methods from very different fields of mathematics can be combined to one's advantage. 

\subsection{Super operators and their applications to physics} 
\label{intro:super_operators}
One way to describe a quantum mechanical system is to propose that the \emph{density operator} $\hat{\rho}(t)$ which describes the state of the system evolves according to the Liouville equation 
\begin{align*}
	\frac{\dd}{\dd t} \hat{\rho}(t) &= \hat{L} \bigl ( \hat{\rho}(t) \bigr )
	, 
	&&
	\hat{\rho}(t_0) \in \mathfrak{L}^1(\vNAlg)
	. 
\end{align*}
The \emph{super} operator $\hat{L}$ is often a Liouville \emph{super} operator 
\begin{align}
	\hat{L}(\hat{\rho}) = - \ii \, \bigl [ \hat{h} \, , \, \hat{\rho} \bigr ] 
	\label{intro:eqn:Liouville_super_operator}
\end{align}
defined with respect to some selfadjoint operator $\hat{h} = \hat{h}^*$ affiliated to the algebra $\vNAlg$. The qualifier \emph{super} operator emphasizes that $\hat{L}$ maps operators onto operators rather than $L^p$ functions onto $L^q$ functions, and we will use this delineation throughout this manuscript to help the reader follow along. Of course, the commutator on the right of \eqref{intro:eqn:Liouville_super_operator} needs to be carefully defined in case $\hat{h}$ is an \emph{unbounded} operator \cite[Chapter~3.3]{DeNittis_Lein:linear_response_theory:2017}. More generally, $\hat{L}$ is a Lindblad-type super operator \cite{Lindblad:generator_semigroups:1976} that contains additional terms to account for dissipation and thermalization mechanisms; a discussion of the physical processes these extra terms model can be found in  \cite[Section~IV.A]{Bellissard_van_Elst_Schulz_Baldes:noncommutative_geometry_quantum_hall_effect:1994} and \cite[Section~2.2]{Bellissard_Schulz_Baldes:kinetic_theory_quantum_transport_aperiodic_media:1998}. 

Density operators by definition satisfy $\hat{\rho}(t) = \hat{\rho}(t)^* \geq 0$ and have trace $1 = \mathcal{T} \bigl ( \hat{\rho}(t) \bigr )$. Thus, the Liouville equation is naturally studied on the space of trace class operators. When developing the general theory though, it is necessary to consider $\mathfrak{L}^p(\vNAlg)$ for $p \neq 1$ as well. Importantly, $\mathcal{T}$ need not be the canonical trace on a Hilbert space and could instead be \eg the trace-per-unit-volume \cite{Lenz:random_operators_crossed_products:1999}. Consequently, it does not suffice to consider the regular $1$-Schatten class and need to work with non-commutative space $\mathfrak{L}^1(\vNAlg)$; for more exposition and precise definitions we refer to \cite[Chapter~3]{DeNittis_Lein:linear_response_theory:2017}. The algebraic point of view has proven very successful to understand systems from statistical mechanics in the thermodynamic limit, where the number of constituents tends to $\infty$ \cite{Bratteli_Robinson:operator_algebras_1:2002,Bratteli_Robinson:operator_algebras_2:2003}. 

\subsection{Motivation for developing a pseudodifferential theory for super operators} 
\label{intro:our_motivation}
Ultimately, we want to work at the intersection of the Venn diagram where $\hat{h} = \op(h)$ that defines the Liouville \emph{super} operator~\eqref{intro:eqn:Liouville_super_operator} is a pseudodifferential operator \emph{and} affiliated to some algebra $\vNAlg$. That suggests to ask: is the Liouville super operator in that case then a pseudodifferential \emph{super} operator? The answer is yes, namely that $\hat{L}$ is defined from the function 
\begin{align*}
	L(X_L,X_R) = - \ii \, \bigl ( h(X_L) - h(X_R) \bigr )
	, 
	&&
	X_L , X_R \in \R^{2d} 
	. 
\end{align*}
The purpose of this paper is to make that statement precise by developing a calculus for magnetic pseudodifferential super operators; for an overview of the calculus and precise equations we refer to Section~\ref{formal_super_calculus}. 

Before we embark on this endeavor, though, let us ask what such a pseudodifferential super calculus might buy us. Recently, one of the authors of this work developed a framework for linear response theory on the level of non-commutative $\mathfrak{L}^p$-spaces \cite{DeNittis_Lein:linear_response_theory:2017}, which generalizes results by Bouclet, Germinet, Klein and Schenker \cite{Bouclet_Germinet_Klein_Schenker:linear_response_theory_magnetic_Schroedinger_operators_disorder:2005} as well as Dombrowski and Germinet \cite{Dombrowski_Germinet:linear_response_theory:2008}. All of the advantages of an algebraic approach are on full display: the theory is able to make linear response theory rigorous for a wide array of systems, be it on the continuum or discrete, and can naturally include effects of disorder. However, the price to pay are a set of rather technical assumptions, Hypotheses~5–6 from \cite[Chapter~2]{DeNittis_Lein:linear_response_theory:2017} in particular, which seem difficult to verify for concrete models. Roughly speaking, these conditions are necessary to ensure that products and commutators are well-defined, and that all expressions have the right regularity properties. 

Should all of the operators involved be pseudodifferential, we then expect that products and commutators yield well-defined pseudodifferential operators. For example, this should be the case when the Liouville super operator~\eqref{intro:eqn:Liouville_super_operator} is defined from $\hat{h} = \op(h)$ and $\hat{\rho}_0 = \op(\rho_T)$ the thermal state~\eqref{intro:eqn:thermal_state_Fermi_Dirac}. A well-developed pseudodifferential calculus would allow us to infer properties of products and commutators from properties of the input. It therefore stands to reason that this would simplify Hypotheses~5–6 from \cite[Chapter~2]{DeNittis_Lein:linear_response_theory:2017} and dispense with many of the technical complications in the proofs. 

\subsection{The state-of-the-art} 
\label{intro:literature}
Fortunately, the other author together with Ha and Ponge has developed such a calculus for the special case of the non-commutative torus \cite{Ha_Lee_Ponge:pseudodifferential_theory_on_noncommutative_torus_1:2018,Ha_Lee_Ponge:pseudodifferential_theory_on_noncommutative_torus_2:2018}. Because the non-commutative torus can be regarded as (a representation of) the twisted crossed product $\C \rtimes^B \Z^d$, the present work aims to generalize their construction to twisted crossed products of the form $\Alg \rtimes^{\omega} \R^d$. The important difference between the two is that the former applies to a special case of discrete systems whereas we wish to cover (super) operators on the continuum. 

Apart from \cite{Ha_Lee_Ponge:pseudodifferential_theory_on_noncommutative_torus_1:2018,Ha_Lee_Ponge:pseudodifferential_theory_on_noncommutative_torus_2:2018}, we are only aware of one \emph{non-rigorous} work on this subject by Tarasov \cite{Tarasov:quantum_mechanics_non-hamiltonian_dissipative_systems:2008}. He has only defined the super Weyl quantization for non-magnetic systems (\cf \cite[Chapter~12.8, equation~(15)]{Tarasov:quantum_mechanics_non-hamiltonian_dissipative_systems:2008}), but has \emph{not} developed a full calculus with products and dequantization. Moreover, even though his work gives definitions and stipulates theorems, which makes it easy to read for mathematicians, most proofs are missing. So to the best of our knowledge, our results are original. 

\subsection{Future developments} 
\label{sub:future_developments}
In spirit, this paper should be thought of as being the equivalent of \cite{Mantoiu_Purice:magnetic_Weyl_calculus:2004} and parts of \cite{Iftimie_Mantoiu_Purice:magnetic_psido:2006}: we merely set up the calculus in its most bare-bones form and cover the composition properties of Hörmander symbols. 

Quite naturally, a topic that was originally meant to be part of this work was the asymptotic expansion of the semi-super and the super Weyl products as well as a discussion of the semiclassical limit akin to \cite{Lein:two_parameter_asymptotics:2008} and \cite[Chapter~3]{Lein:progress_magWQ:2010}. We had to cut this part for length and instead plan to publish it as a separate work. Of course, this should not merely recast the semiclassical limit from \cite[Theorem~3.6.1]{Lein:progress_magWQ:2010} in a different language, but extend it to Lindblad-type operators. That should explain how Lindblad-type dynamics are well-approximated by a classical dissipative dynamical system. 

Boundedness results are another avenue. In standard pseudodifferential theory analytic and algebraic results are well-separated from one another. Interestingly, though, for magnetic pseudodifferential \emph{super} operators the two are strongly linked and it is strictly necessary to develop a better understanding of the operator algebraic aspects \emph{first} before proving analytic results. Say we would like to establish the boundedness of the super operator 
\begin{align*}
	\Op^A(F) : \mathfrak{L}^p(\vNAlg) \longrightarrow \mathfrak{L}^q(\vNAlg)
	, 
	&&
	1 \leq p , q \leq \infty 
	, 
\end{align*}
once we place suitable conditions on the function $F$. However, this requires a better understanding of the non-commutative spaces $\mathfrak{L}^p(\vNAlg)$. At first glance, it might be tempting to simply pick $\vNAlg = \mathcal{B} \bigl ( L^2(\R^d) \bigr )$ and choose the canonical trace $\mathrm{Tr}_{L^2(\R^d)}$; then the non-commutative $L^p$ spaces are just the usual $p$-Schatten class ideals. However, in \eg periodic and random systems one usually wants to define “integration” with respect to the trace-per-unit-volume, \ie we would like to choose a different “measure” with respect to which we define “integration”. Moreover, the trace needs to be compatible with the algebra, so $\mathcal{B} \bigl ( L^2(\R^d) \bigr )$ might be too big for our purposes to be useful. In the context of pseudodifferential theory the relevant $C^*$- and von Neumann algebras are twisted crossed product algebras (\cf \cite{Mantoiu_Purice_Richard:twisted_X_products:2004}) and their bicommutants after representation through super Weyl quantization $\Op^A$. 

That would set the stage for more advanced analytic results such as commutator criteria akin to those by Beals and Bony \cite{Beals:characterization_psido:1977,Bony:characterization_psido:1996,Cornean_Helffer_Purice:simple_proof_Beals_criterion_magnetic_PsiDOs:2018} as well as a functional calculus similar to that set up in \cite[Section~6.2]{Iftimie_Mantoiu_Purice:commutator_criteria:2008}. 

\subsection{Outline} 
\label{intro:outline}
Section~\ref{magnetic_Weyl_calculus} summarizes magnetic Weyl quantization. While introducing basic notions and notation in a pedagogical fashion, we will also emphasize the structure of a pseudodifferential calculus and how to develop it systematically and rigorously. Then Section~\ref{formal_super_calculus} outlines our magnetic pseudodifferential super calculus. We give the main notions, explicit formulas and indicate how to formulate them rigorously. Our goal was to give readers not yet familiar with pseudodifferential theory an easy-to-follow introduction. And it might suffice to explain our results to experts in pseudodifferential theory, since the proofs rely only on standard techniques. Indeed, Sections~\ref{rigorous_definition_supercalculus_on_S} and \ref{super_calculus_extension_by_duality} set up the calculus for Schwartz functions and extend it by duality. This is the basis for establishing a calculus for Hörmander symbol classes in Section~\ref{symbol_super_calculus} with oscillatory integral techniques. Lastly, we have relegated some technical results to an appendix. 

\subsection*{Acknowledgements} 
\label{sub:acknowledgements}
G.~L.~acknowledges the support of BK21 PLUS SNU, Mathematical Sciences Division (South Korea) and Max Planck Institute for Mathematics in Bonn (Germany). The initial part of the research of this manuscript was carried out during the first named author's visit to Advanced Institute for Materials Research of Tohoku University (Japan), and G.L. would like to thank them for their hospitality. 

M.~L.~thanks JSPS for their support of this project through a Wakate~B (No.~16K17761) and a Kiban~C grant (No.~20K03761). 

The authors would like to thank Elmar Schrohe for making us aware of prior use of what we called Hörmander super symbol classes in the literature. 
\section{Magnetic Weyl calculus} 
\label{magnetic_Weyl_calculus}
The origins of modern pseudodifferential theory go back to the works of Weyl \cite{Weyl:qm_gruppentheorie:1927}, Wigner \cite{Wigner:Wigner_transform:1932} and Moyal \cite{Moyal:Weyl_calculus:1949}. Their goal was to make mathematical sense of Dirac's quantization procedure \cite{Dirac:foundations_qm_en:1947} and solve the operator ordering problem: in its simplest form, the question is what operator we should associate to classical observables (that is, functions on phase space) such as $f(x,\xi) = x \cdot \xi$? The question is non-trivial to solve even formally, because position and momentum operators no longer commute, 
\begin{align*}
	\ii \, [ Q_j , Q_k ] = 0 
	, 
	&&
	\ii \, [ P_j , P_k ] = 0 
	, 
	&&
	\ii \, [ P_j , Q_k ] = \delta_{jk}
	. 
\end{align*}
So formally at least, $Q \cdot P$, $P \cdot Q$ and $\tfrac{1}{2} (Q \cdot P + P \cdot Q)$ are all possible solutions. 

Mathematically, we can rephrase the operator ordering problem as looking for a functional calculus to a family of non-commuting operators. One solution to this is to map suitable functions $f : \R^{2d} \longrightarrow \C$ to operators 
\begin{align}
	f(Q,P) \equiv \op(f)
	= \frac{1}{(2\pi)^d} \int_{\R^{2d}} \dd x \, \dd \xi \, (\Fourier_{\sigma} f)(x,\xi) \, \e^{+ \ii (x \cdot P - \xi \cdot Q)}
	\label{magnetic_Weyl_calculus:eqn:definition_non_magnetic_op}
\end{align}
by formally applying the \emph{symplectic Fourier transform} 
\begin{align}
	(\Fourier_{\sigma} f)(x,\xi) := \frac{1}{(2\pi)^d} \int_{\R^{2d}} \dd x' \, \dd \xi' \, \e^{+ \ii (\xi \cdot x' - x \cdot \xi')} \, f(x',\xi') 
	, 
	\label{magnetic_Weyl_calculus:eqn:symplectic_Fourier_transform}
\end{align}
twice, thereby “replacing” the position variable $x$ with the position operator $Q$ and $\xi$ with the momentum operator $P$. The reason why we prefer to use the symplectic Fourier transform rather than the ordinary one is that $\Fourier_{\sigma} = \Fourier_{\sigma}^{-1}$ is its own inverse. This simplifies many formulæ since we do not need to keep track of signs in the exponent of the phases as much. The index $\sigma$ stems from the abbreviation we will use for the phase factor, namely 
\begin{align*}
	\sigma \big( (x,\xi) , (x',\xi') \big) := \xi \cdot x' - x \cdot \xi' 
	. 
\end{align*}
Of course, \emph{a priori} equation~\eqref{magnetic_Weyl_calculus:eqn:definition_non_magnetic_op} is just a formal expression that needs to be made mathematical sense of. But this has become standard fare. Hörmander's early works on that topic are still worth reading \cite{Hoermander:Fourier_integral_operators_1:1971,Hoermander:Weyl_calculus:1979}, although by now several text books have been dedicated to this topic \cite{Folland:harmonic_analysis_hase_space:1989,Robert:tour_semiclassique:1987,Taylor:PsiDO:1981,Kumanogo:pseudodiff:1981}; an entry-level introduction that is more approachable for students is given in the lecture notes \cite{Lein:quantization_semiclassics:2010}. 

In a nutshell, the standard approach consists of three steps: (1)~we first define it for Schwartz functions; (2)~we extend the expression by duality to tempered distributions with nice composition properties; and (3)~we show that certain classes of functions, most notably Hörmander symbols, are among those with nice composition properties. Since we will implement this procedure for the super calculus, we content ourselves pointing the interested reader to the excellent exposition in \cite[Sections~II]{Mantoiu_Purice:magnetic_Weyl_calculus:2004}; more precisely, Sections~\ref{rigorous_definition_supercalculus_on_S}, \ref{super_calculus_extension_by_duality} and \ref{symbol_super_calculus} correspond to each of these steps. 

Adapted pseudodifferential calculi to certain scenarios such as spin systems \cite{Stratonovich:distributions_rep_space:1957,Gat_Lein_Teufel:resonance_phenomena_wavepacket_qubit:2013} have also been investigated. Of particular relevance for this work is \emph{magnetic Weyl calculus} for magnetic systems, first proposed non-rigorously by Müller \cite{Mueller:product_rule_gauge_invariant_Weyl_symbols:1999}. His ideas were made rigorous independently by the works of M\u{a}ntoiu, Purice and co-workers \cite{Mantoiu_Purice:magnetic_Weyl_calculus:2004,Iftimie_Mantoiu_Purice:magnetic_psido:2006,Iftimie_Mantoiu_Purice:commutator_criteria:2008}, including several contributions by one of the authors \cite{Lein:two_parameter_asymptotics:2008,Lein_Mantoiu_Richard:anisotropic_mag_pseudo:2009,Lein:progress_magWQ:2010,DeNittis_Lein:Bloch_electron:2009,Fuerst_Lein:scaling_limits_Dirac:2008}. 

The purpose of this section is two-fold: on the one hand, we wish to introduce the basic tenets of magnetic pseudodifferential theory to fix basic notions and notation. On the other hand, we aim to emphasize the structural similarities between usual magnetic Weyl calculus and the magnetic super Weyl calculus we will develop in the coming sections.

\subsection{The magnetic Weyl system} 
\label{magnetic_Weyl_calculus:magnetic_weyl_system}
Suppose we want to study a charged quantum particle subjected to the magnetic field $B$. To represent this system on a Hilbert space, we need to pick a vector potential $A$ so that $\dd A = B$; our assumptions on $B$ guarantee that the components of $A$ are $\Cont^{\infty}_{\mathrm{pol}}$.\footnote{We will tacitly identify $1$-forms $A = \sum_{j = 1}^d A_j \, \dd x_j$ with vector fields and $2$-forms $B = \sum_{j , k = 1}^d B_{jk} \, \dd x_j \wedge \dd x_k$ with their antisymmetric matrix-valued functions $(B_{jk})_{1 \leq j , k \leq d}$ comprised of the coefficients. } 

For the purpose of this work, we shall always assume that the magnetic field and vector potential satisfy one of the following two assumptions:
\begin{assumption}[Polynomially bounded magnetic field]\label{intro:assumption:polynomially_bounded_magnetic_field}
	We say a magnetic field $B = \dd A$ and any associated vector potential $A$ satisfy the \emph{polynomially bounded magnetic field assumption} if and only if the components of $B$ are of class $\Cont^{\infty}_{\mathrm{u,pol}}(\R^d)$, \ie smooth, uniformly polynomially bounded functions. 
	
	Furthermore, we assume that we choose an associated vector potential whose components are of class $\Cont^{\infty}_{\mathrm{pol}}(\R^d)$, \ie smooth and polynomially bounded and satisfy $\dd A = B$. 
\end{assumption}
Note that it is always possible to choose a smooth, polynomially bounded vector potential when $B$ is of class $\Cont^{\infty}_{\mathrm{pol}}$. 

However, when we want to work with Hörmander symbols, we have to work with a more restricted class of electromagnetic fields, namely 
\begin{assumption}[Bounded magnetic field]\label{intro:assumption:bounded_magnetic_field}
	We say a magnetic field $B = \dd A$ and any associated vector potential $A$ satisfy the \emph{bounded magnetic field assumption} if and only if the components of $B$ are of class $\Cont^{\infty}_{\mathrm{b}}(\R^d)$, \ie bounded with bounded derivatives to any order. 
	
	Furthermore, we assume that we choose an associated vector potential whose components are of class $\Cont^{\infty}_{\mathrm{pol}}(\R^d)$, \ie smooth and polynomially bounded and satisfy $\dd A = B$. 
\end{assumption}
We will further introduce two small parameters as in \cite{Lein:two_parameter_asymptotics:2008}: the coupling of the charge to the magnetic field is quantified by the dimensionless parameter $\lambda$. Furthermore, we include a semiclassical parameter $\eps$ in our considerations. While for much of the general theory small parameters are irrelevant, they are crucial for asymptotic expansions that have proven useful in making semiclassical limits (\cf \cite[Théorème~IV.19]{Robert:tour_semiclassique:1987} or \cite[Theorem~3.6.1]{Lein:progress_magWQ:2010}) and perturbation expansions \cite{PST:sapt:2002,PST:effective_dynamics_Bloch:2003,PST:Born-Oppenheimer:2007,DeNittis_Lein:Bloch_electron:2009,Gat_Lein_Teufel:semiclassical_dynamics_particle_spin:2013,Fuerst_Lein:scaling_limits_Dirac:2008} mathematically rigorous. Should these parameters not be needed, then we can still set their values to $1$; this choice simplifies some of our estimates and paves the way for future works. 

The selfadjoint position operators $Q_j = Q_j^*$ and momentum operators $P^A_j = {P^A_j}^*$ for magnetic Weyl calculus are characterized by the commutation relations 
\begin{align}
	\ii \, [ Q_j , Q_k ] = 0 
	, 
	&&
	\ii \, [ P^A_j , P^A_k ] = \eps \, \lambda \, B_{jk}(Q) 
	, 
	&&
	\ii \, [ P^A_j , Q_k ] = \eps \, \delta_{jk}
	. 
	\label{magnetic_Weyl_calculus:eqn:commutation_relations}
\end{align}
The difference to non-magnetic systems is that momenta along different directions no longer commute. Of course, $Q_j$ and $P^A_k$ are unbounded operators, so these commutators are initially just formal expressions. One way to make them rigorous is to initially define them on a joint core such as $\Cont^{\infty}_{\mathrm{c}}(\R^d)$ or $\Schwartz(\R^d)$. But the avenue we will opt for here is to formulate them in terms of the composition properties of the associated unitary evolution groups. 

We emphasize that the commutation relations~\eqref{magnetic_Weyl_calculus:eqn:commutation_relations} are invariant under changes of representation, \ie they capture algebraic relations amongst position and momentum operators. Importantly, the commutator of two magnetic momenta only depends on the magnetic \emph{field} $B$ rather than the vector potential $A$. 

In Schrödinger representation on the Hilbert space $L^2(\R^d)$ we define the position operators $Q = \eps \hat{x}$ for Schwartz functions $\psi \in \Schwartz(\R^d)$ as 
\begin{align}
	(Q_j \psi)(x) := \eps \, x_j \, \psi(x) 
	, 
	&&
	j = 1 , \ldots , d
	. 
	\label{magnetic_Weyl_calculus:eqn:position_operator_adiabatic_scaling}
\end{align}
To define the kinetic momentum operator on the dense domain $\Schwartz(\R^d) \subset L^2(\R^d)$, 
\begin{align}
	(P^A_j \psi)(x) := - \ii \partial_{x_j} \psi(x) - \lambda A_j(x) \, \psi(x) 
	, 
	&&
	j = 1 , \ldots , d 
	, 
	\label{magnetic_Weyl_calculus:eqn:momentum_operator_adiabatic_scaling}
\end{align}
we necessarily have to pick a magnetic vector \emph{potential} $A$ for the magnetic field $B = \dd A$. Keeping track of the choice of vector potential is of conceptual and practical importance. Therefore, we will make it explicit in our choice of Hilbert space 
\begin{align*}
	\Hil^A = L^2(\R^d)
	, 
\end{align*}
even thought the space itself evidently is independent of our choice of vector potential. 

For example, when we change gauge, \ie we pick some other vector potential $A' = A + \eps \dd \chi$ that differs from $A$ by the gradient of some real-valued function $\chi \in \Cont^{\infty}_{\mathrm{pol}}(\R^d,\R)$, the operator 
\begin{align*}
	\e^{+ \ii \lambda \chi(Q)} : \Hil^A \longrightarrow \Hil^{A'}
\end{align*}
defines a unitary between these two Hilbert spaces. While it maps the position operators 
\begin{align*}
	Q_j' := \e^{+ \ii \lambda \chi(Q)} \, Q_j \, \e^{- \ii \lambda \chi(Q)} 
	= Q_j 
\end{align*}
onto themselves, kinetic momenta change covariantly, 
\begin{align*}
	P^{A'}_j &= \e^{+ \ii \lambda \chi(Q)} \, P^A_j \, \e^{- \ii \lambda \chi(Q)} 
	. 
\end{align*}
This notation will also make even more sense later on when we introduce magnetic Weyl quantization as the representation map for certain algebras of functions or distributions onto operator algebras on the Hilbert spaces $\Hil^A$ (\cf Section~\ref{magnetic_Weyl_calculus:algebraic_point_of_view} and references therein). 

Similarly, we can view $Q$ and $P^A$ in other representations by \eg adjoining with the 
Fourier transform (to change into momentum representation). Or we could switch from microscopic to macroscopic units for measuring lengths (\cf \cite[Section~2.2]{DeNittis_Lein:Bloch_electron:2009}) so that $Q$ and $P^A$ are mapped to 
\begin{align*}
	q_j &= \hat{x}_j
	, 
	\\
	p_j^A &= - \ii \eps \partial_{x_j} - \lambda A_j(\hat{x})
	. 
\end{align*}
This scaling is perhaps more familiar to many, but since it is unitarily equivalent to \eqref{magnetic_Weyl_calculus:eqn:position_operator_adiabatic_scaling} and \eqref{magnetic_Weyl_calculus:eqn:momentum_operator_adiabatic_scaling}, it is ultimately a mere matter of preference. Indeed, many aspects of the calculus like the formula~\eqref{magnetic_Weyl_calculus:eqn:magnetic_Weyl_product_formula} for the magnetic Weyl product $\weyl^B$ are independent of the choice of representation. 

The best way to encode the commutation relations~\eqref{magnetic_Weyl_calculus:eqn:commutation_relations} rigorously is to look at the evolution groups. To do that effectively, we define the magnetic Weyl system 
\begin{align}
	w^A(x,\xi) \equiv w^A(X) := \e^{- \ii \sigma(X,(Q,P^A))}
	= \e^{+ \ii (x \cdot P^A - \xi \cdot Q)}
	, 
	&&
	X = (x,\xi) \in T^* \R^d := \pspace
	, 
	\label{magnetic_Weyl_calculus:eqn:definition_magnetic_Weyl_system}
\end{align}
which differs from the product of the evolution groups for position and momentum operators by a phase. 
\begin{lemma}[Fundamental properties of the magnetic Weyl system]\label{magnetic_Weyl_calculus:lem:properties_Weyl_system}
	Suppose the magnetic field $B = \dd A$ and the associated vector potential $A$ are polynomially bounded in the sense of Assumption~\ref{intro:assumption:polynomially_bounded_magnetic_field}. Then the following holds true: 
	\begin{enumerate}[(1)]
		\item The action of $w^A(Y)$ on $\varphi \in \Schwartz(\R^d) \subset L^2(\R^d)$ is given by 
		\begin{align*}
			\bigl ( w^A(Y) \varphi \bigr )(x) = \e^{- \ii \eps (x + \frac{y}{2}) \cdot \eta} \, \e^{- \ii \frac{\lambda}{\eps} \Gamma^A([\eps x,\eps x+\eps y])} \, \varphi(x+y)
		\end{align*}
		where $\Gamma^A(x,y) := \int_{[x,y]} A$ is the magnetic circulation of the one-form $A$ along the line segment $[x,y]$ connecting $x$ and $y$. 
		\item The magnetic Weyl system is gauge-covariant, \ie if $A' = A + \eps \dd \chi$ is an equivalent vector potential for $\dd A' = B = \dd A$ and $\chi \in \Cont^{\infty}_{\mathrm{pol}}(\R^d,\R)$, then the two magnetic Weyl systems are unitarily equivalent, 
		\begin{align}
			w^{A'}(X) &= w^{A + \eps \dd \chi}(X) 
			= \e^{+ \ii \lambda \chi(Q)} \, w^A(X) \, \e^{- \ii \lambda \chi(Q)} 
			&&
			\forall X \in \pspace
			. 
		\end{align}
		\item For all points in phase space $X , Y \in \pspace$ we have 
	    \begin{align}
	        w^A(X) \, w^A(Y) &= \e^{+ \ii \frac{\eps}{2} \sigma(X,Y)} \, \e^{- \ii \frac{\lambda}{\eps} \Gamma^B(Q,Q + \eps x,Q + \eps x + \eps y)} \, w^A(X+Y) 
			\notag \\
	        &= \e^{+ \ii \frac{\eps}{2}\sigma(X,Y)} \, \cocyp(Q;x,y) \, w^A(X+Y) 
			,
			\label{magnetic_Weyl_calculus:eqn:composition_magnetic_Weyl_system}
	    \end{align}
		where the magnetic flux 
		\begin{align*}
			\Gamma^B(q,x,y) := \int_{\sexpval{q,x,y}} B 
		\end{align*}
		through the triangle with corners $x$, $y$ and $z$ enters into the definition of 
		\begin{align}
			\cocyp(q;x,y) := \e^{- \ii \frac{\lambda}{\eps} \Gamma^B(q , q + \eps x,q + \eps x + \eps y)}
			. 
			\label{magnetic_Weyl_calculus:eqn:definition_scaled_magnetic_phase_factor}
		\end{align}
	\end{enumerate}
\end{lemma}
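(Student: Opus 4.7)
For part (1), I would prove the explicit formula by identifying the right-hand side as the unique unitary evolution group generated by $y\cdot P^A-\eta\cdot Q$. Concretely, define
\begin{align*}
	(V(t)\varphi)(x) := \e^{-\ii\eps(x+ty/2)\cdot t\eta}\,\e^{-\ii\frac{\lambda}{\eps}\Gamma^A([\eps x,\eps x+t\eps y])}\,\varphi(x+ty)
\end{align*}
for $\varphi\in\Schwartz(\R^d)$ and $Y=(y,\eta)$. The group law $V(t+s)=V(t)V(s)$ reduces to two elementary identities: the quadratic identity $(t+s)^2/2=t^2/2+ts+s^2/2$ for the quadratic part of the translation phase, and the additivity of line integrals along colinear segments $\Gamma^A([\eps x,\eps x+t\eps y])+\Gamma^A([\eps x+t\eps y,\eps x+(t+s)\eps y])=\Gamma^A([\eps x,\eps x+(t+s)\eps y])$. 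Unitarity on $L^2(\R^d)$ is immediate since each factor except the translation is of modulus one. Differentiating at $t=0$ and using that the first-order Taylor expansion of $\Gamma^A([\eps x,\eps x+t\eps y])$ is $t\eps A(\eps x)\cdot y+\bigO(t^2)$, I would identify the generator as $\ii(y\cdot P^A-\eta\cdot Q)$ on $\Schwartz(\R^d)$. Essential selfadjointness of $y\cdot P^A-\eta\cdot Q$ on $\Schwartz(\R^d)$ under Assumption~\ref{intro:assumption:polynomially_bounded_magnetic_field} then gives $V(1)=\e^{+\ii(y\cdot P^A-\eta\cdot Q)}=w^A(Y)$ by Stone's theorem.

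Part (2) is a direct computation. For $A'=A+\eps\,\dd\chi$ with $\chi\in\Cont^\infty_{\mathrm{pol}}(\R^d,\R)$, the exactness of $\dd\chi$ gives
\begin{align*}
	\Gamma^{A'}([\eps x,\eps x+\eps y]) = \Gamma^A([\eps x,\eps x+\eps y])+\eps\bigl(\chi(\eps x+\eps y)-\chi(\eps x)\bigr).
\end{align*}
Plugging this into the formula from (1) produces two extra phase factors $\e^{-\ii\lambda\chi(\eps(x+y))}$ and $\e^{+\ii\lambda\chi(\eps x)}$. Since $Q=\eps\hat x$ acts by multiplication, these factors are precisely what appears when one expands $\bigl(\e^{+\ii\lambda\chi(Q)}\,w^A(Y)\,\e^{-\ii\lambda\chi(Q)}\varphi\bigr)(x)$ using part (1), so the two expressions agree pointwise on $\Schwartz(\R^d)$ and thus as bounded operators on $L^2(\R^d)$.

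For part (3), I would compute $w^A(X)\,w^A(Y)\varphi$ on $\Schwartz(\R^d)$ by applying the formula of part (1) twice, and compare the result to the proposed right-hand side. The translation phases combine into $-\ii\eps(z+a/2)\cdot\alpha-\ii\eps(z+a+b/2)\cdot\beta$ (writing $X=(a,\alpha)$, $Y=(b,\beta)$ and $z$ for the evaluation variable), whose difference from $-\ii\eps\bigl(z+(a+b)/2\bigr)\cdot(\alpha+\beta)$ is exactly $+\ii\tfrac{\eps}{2}(\alpha\cdot b-a\cdot\beta)=+\ii\tfrac{\eps}{2}\sigma(X,Y)$, producing the symplectic phase. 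The three magnetic circulations $\Gamma^A([\eps z,\eps z+\eps a])+\Gamma^A([\eps z+\eps a,\eps z+\eps a+\eps b])-\Gamma^A([\eps z,\eps z+\eps(a+b)])$ form the oriented boundary of the triangle with vertices $\eps z,\eps z+\eps a,\eps z+\eps a+\eps b$, and Stokes' theorem applied to $B=\dd A$ converts this boundary integral into $\Gamma^B(\eps z,\eps z+\eps a,\eps z+\eps a+\eps b)$, which is precisely the exponent in $\cocyp(Q;a,b)$ evaluated at $Q=\eps z$.

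The main obstacle is the rigorous identification in part (1): establishing essential selfadjointness of $y\cdot P^A-\eta\cdot Q$ on $\Schwartz(\R^d)$ under merely polynomial bounds on $A$. Once this is in hand, parts (2) and (3) are routine algebraic manipulations together with one application of Stokes' theorem. A clean alternative is to bypass Stone's theorem entirely by \emph{defining} $w^A(Y)$ as the operator $V(1)$ from the plan above; then the commutation relations~\eqref{magnetic_Weyl_calculus:eqn:commutation_relations} themselves are a consequence of (1)–(3) rather than a prerequisite, which is in keeping with the authors' stated strategy of encoding the algebraic relations through the Weyl system.
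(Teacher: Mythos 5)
The paper does not prove this lemma; it refers to \cite[Sections~3.1--3.2]{Lein:progress_magWQ:2010}. Your outline is the standard argument and is essentially what is done there: exhibit the claimed right-hand side as a strongly continuous one-parameter unitary group $V(t)$ via the two elementary identities you list, differentiate at $t=0$ on $\Schwartz(\R^d)$ to see that the generator coincides with $y\cdot P^A - \eta\cdot Q$ on Schwartz functions, and invoke Stone's theorem. Parts~(2) and~(3) then follow by pointwise computation using the formula from~(1): I checked the symplectic phase bookkeeping and the Stokes argument, and both are correct (the closed polygonal path $\eps z \to \eps z + \eps a \to \eps z + \eps a + \eps b \to \eps z$ indeed yields $\Gamma^B(\eps z, \eps z + \eps a, \eps z + \eps a + \eps b)$, which is $\cocyp$ evaluated on the range of $Q = \eps\hat{x}$).

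You correctly identify the one genuine technical step: essential selfadjointness of $y\cdot P^A - \eta\cdot Q$ on $\Schwartz(\R^d)$ under Assumption~\ref{intro:assumption:polynomially_bounded_magnetic_field}. Without it, the exponential in the definition~\eqref{magnetic_Weyl_calculus:eqn:definition_magnetic_Weyl_system} is ambiguous and the identification $V(1) = w^A(Y)$ cannot be drawn. The cited reference handles this via a commutator-theorem argument (Nelson-type, exploiting that $A$ and its derivatives are polynomially bounded); your alternative of taking $V(1)$ as the \emph{definition} of $w^A(Y)$ is also the pragmatic path taken in parts of the magnetic Weyl calculus literature, and you are right that it turns the commutation relations~\eqref{magnetic_Weyl_calculus:eqn:commutation_relations} into consequences rather than inputs. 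One small internal-consistency remark worth recording: your first-order Taylor expansion $\Gamma^A([\eps x,\eps x+t\eps y]) = t\eps\,A(\eps x)\cdot y + \bigO(t^2)$ is correct, but matching it to the generator then forces the kinetic momentum to be $P^A_j = -\ii\partial_{x_j} - \lambda A_j(\eps\hat{x})$ (vector potential evaluated at $\eps x$), whereas equation~\eqref{magnetic_Weyl_calculus:eqn:momentum_operator_adiabatic_scaling} as printed has $A_j(x)$. The rest of the paper (e.g.\ the Wigner transform~\eqref{magnetic_Weyl_calculus:eqn:magnetic_Wigner_transform} and the kernel computation in the proof of Proposition~\ref{magnetic_super_PsiDOs:prop:semi_super_product_S}) uses the $A(\eps\,\cdot\,)$ convention, so the discrepancy is a typo in~\eqref{magnetic_Weyl_calculus:eqn:momentum_operator_adiabatic_scaling} rather than a defect in your proof; you tacitly use the correct scaling.
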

Proofs for these specific facts can be found in \cite[Sections~3.1–3.2]{Lein:progress_magWQ:2010}. 
\begin{remark}[Notation]
	As is usual in pseudodifferential theory, capital letters $X = (x,\xi)$, $Y = (y,\eta)$ and $Z = (z,\zeta)$ denote points in phase space $\pspace = T^* \R^d \cong \R^{2d}$, \ie the cotangent bundle $T^* \R^d$ equipped with the symplectic form 
	\begin{align*}
		\varpi^B = \sum_{j = 1}^d \dd x_j \wedge \dd \xi_j + \frac{\lambda}{2} \sum_{j , k = 1}^d B_{jk} \, \dd x_j \wedge \dd x_k 
		. 
	\end{align*}
	However, the expert reader will have noticed that we have deviated from standard notation. Traditionally, the magnetic Weyl system is denoted with $W^A(X)$ and magnetic Weyl quantization with $\Op^A$, \ie they are denoted with capital letters. However, later on we will introduce the analogs on the level of \emph{super} operators. Since they build upon the objects introduced in this section, we will denote operators acting on Hilbert spaces with small letters, \eg $w^A(X)$ and $\op^A(f)$; the analogous expressions for super operators are capitalized, \eg $W^A(\Xbf)$ denotes the magnetic \emph{super} Weyl system~\eqref{formal_super_calculus:eqn:super_Weyl_system} and $\Op^A(F)$ the magnetic \emph{super} Weyl quantization~\eqref{formal_super_calculus:eqn:Op_A_formal_definition}. 
\end{remark}
%

\subsection{Magnetic Weyl quantization and the Wigner transform} 
\label{magnetic_Weyl_calculus:quantization_Wigner_transform}
Magnetic Weyl quantization is defined analogously to \eqref{magnetic_Weyl_calculus:eqn:definition_non_magnetic_op}, we just need to replace the magnetic Weyl system to obtain 
\begin{align}
	f(Q,P^A) \equiv \op^A(f)
	= \frac{1}{(2\pi)^d} \int_{\pspace} \dd X \, (\Fourier_{\sigma} f)(X) \, w^A(X) 
	. 
	\label{magnetic_Weyl_calculus:eqn:definition_magnetic_Weyl_quantization}
\end{align}
Of course, this formula initially makes sense for Schwartz functions $f \in \Schwartz(\pspace)$, but can be extended to more general functions and tempered distributions (see \eg \cite[Section~IV]{Mantoiu_Purice:magnetic_Weyl_calculus:2004} for details). 

Magnetic Weyl quantization intertwines the operator adjoint with complex conjugation, that is, 
\begin{align*}
	\op^A(f)^* = \op^A(\bar{f}) 
	. 
\end{align*}
Importantly, magnetic Weyl quantization directly inherits gauge-covariance from the Weyl system (Lemma~\ref{magnetic_Weyl_calculus:lem:properties_Weyl_system}~(1)), 
\begin{align}
	\op^{A + \eps \dd \chi}(f) &= \e^{+ \ii \lambda \chi(Q)} \, \op^A(f) \, \e^{- \ii \lambda \chi(Q)} 
	. 
	\label{magnetic_Weyl_calculus:eqn:gauge_covariance_op_A}
\end{align}
In fact, this is \emph{the} distinguishing feature between magnetic Weyl quantization and non-magnetic Weyl quantization after minimal substitution. More precisely, the non-magnetic Weyl quantization of  $f_A(x,\xi) := f \bigl ( x , \xi - \lambda A(x) \bigr )$ fails to satisfy the gauge-covariance condition, 
\begin{align*}
	\op_{A + \eps \dd \chi}(f) := \op(f_{A + \eps \dd \chi}) &\neq \e^{+ \ii \lambda \chi(Q)} \, \op_A(f) \, \e^{- \ii \lambda \chi(Q)} 
	. 
\end{align*}
The only exception are polynomials up to quadratic order in $\xi$ and functions of $x$ only. 

This is more than just a cosmetic problem. From a mathematical perspective, this places additional, unnecessary restrictions on the magnetic fields: say, we want to consider the $\Psi$DO associated to a Hörmander symbol, \ie a function that belongs to the following class: 
\begin{definition}[Hörmander symbols $S^m_{\rho,\delta}(\pspace)$]\label{magnetic_Weyl_calculus:defn:Hoermander_classes}
	The class of Hörmander symbols of order $m \in \R$ and type $(\rho,\delta)$, $0 \leq \delta \leq \rho \leq 1$, form the Fréchet space 
	\begin{align*}
		S^m_{\rho,\delta}(\pspace) := \Bigl \{ f \in \Cont^{\infty}(\pspace) \; \; \big \vert \; \; \forall a , \alpha \in \N_0^d : \; \snorm{f}_{m,a \alpha} < \infty \Bigr \} 
	\end{align*}
	where for $a , \alpha \in \N_0^d$ the seminorms are defined as 
	\begin{align*}
		\snorm{f}_{m , a \alpha} := \sup_{(x,\xi) \in \pspace} \Bigl ( \sexpval{\xi}^{-m - \sabs{a} \delta + \sabs{\alpha} \rho} \; \babs{\partial_x^a \partial_{\xi}^{\alpha} f(x,\xi)} \Bigr ) 
	\end{align*}
	Here, $\sexpval{\xi} := \sqrt{1 + \sabs{\xi}^2}$ is the Japanese bracket. 
\end{definition}
\begin{remark}[Equivalent family of seminorms]
	Very often we shall use a different family of seminorms that generates the same Fréchet topology in the end: for $m \in \R$ and $N \in \N_0$ we set 
	\begin{align}
		p^m_N(f) := \max_{\sabs{a} + \sabs{\alpha} \leq N} \snorm{f}_{m , a \alpha} 
		. 
		\label{magnetic_Weyl_calculus:eqn:max_seminorm_Hoermander_symbols}
	\end{align}
\end{remark}
For Hörmander symbols, we can expect to have a good theory for magnetic fields that admit a vector potential $A$ whose components are $\Cont^{\infty}_{\mathrm{b}}$; this ensures that $f_A$ belongs to the same Hörmander class as $f$. With a little more finagling, the assumption can be relaxed to the case where all derivatives of $A$ need to be bounded, \ie $A$ may grow at most linearly. 

But there is a conceptual cost, too: properties of systems with magnetic fields only depend on the magnetic field rather than the choice of vector potential. Gauge-covariance ensures that \eg spectrum and the spectral decomposition of $\op^A(f)$ does not depend on the choice of gauge and only on $B$; that is false for $\op_A(f)$. 

\subsection{The magnetic Wigner transform} 
\label{magnetic_Weyl_calculus:wigner_transform}
The inverse of Weyl quantization is constructed from the magnetic Wigner transform 
\begin{align}
	(\wigner^A K)(x,\xi) :=  \frac{1}{(2\pi)^{d/2}} \int_{\R^d} \dd y \, \e^{- \ii y \cdot \xi} \, 
	\e^{- \ii \frac{\lambda}{\eps} \Gamma^A([x - \frac{\eps}{2} y , x + \frac{\eps}{2} y])} \, 
	K \bigl ( \tfrac{x}{\eps} + \tfrac{y}{2} , \tfrac{x}{\eps} - \tfrac{y}{2} \bigr )
	\label{magnetic_Weyl_calculus:eqn:magnetic_Wigner_transform}
\end{align}
that maps some suitable function or distribution on $\R^d \times \R^d$ onto a function or distribution on phase space $\pspace$. 

More specifically, if $K_T$ is the distributional operator kernel of some operator $T \in \mathcal{B}(\Hil^A)$, then the inverse of $\op^A$ is given by 
\begin{align}
	{\op^A}^{-1}(T) := \wigner^A K_T 
	. \label{magnetic_Weyl_calculus:eqn:inverse_of_magnetic_Weyl_quantization}
\end{align}
As before, we refer the interested reader to the aforementioned literature if they would like to know how to make sense of this formal expression~\eqref{magnetic_Weyl_calculus:eqn:magnetic_Wigner_transform} for more general classes of functions and distributions. 

\subsection{The magnetic Weyl product} 
\label{magnetic_Weyl_calculus:product}
The third ingredient of a pseudodifferential calculus is a product $\weyl^B$ implicitly defined through 
\begin{align}
	\op^A \bigl ( f \weyl^B g \bigr ) := \op^A(f) \, \op^A(g)
\end{align}
that mimics the operator product on the level of functions or distributions on phase space $\pspace$. For certain classes of functions such as $\mathcal{S}(\pspace)$ and $S^m_{\rho,\delta}(\pspace)$, we can express the magnetic Weyl product as the (oscillatory) integral 
\begin{align}
	\bigl ( f \weyl^B g \bigr )(X) &= \frac{1}{(2\pi)^{2d}} \int_{\pspace} \dd Y \int_{\pspace} \dd Z \, \e^{+ \ii \sigma(X,Y+Z)} \, \e^{+ \ii\frac{\eps}{2} \sigma(Y,Z)} \, \e^{- \ii \lambda \trifluxp(x,y,z)} (\Fourier_{\sigma} f)(Y) \, (\Fourier_{\sigma} g)(Z)  
	\label{magnetic_Weyl_calculus:eqn:magnetic_Weyl_product_formula}
\end{align}
where the magnetic phase factor is the scaled magnetic flux 
\begin{align}
	\trifluxp(x,y,z) := \tfrac{1}{\eps} \, \Gamma^B \bigl ( x - \tfrac{\eps}{2}(y+z) \, , x + \tfrac{\eps}{2}(y-z) \, , x + \tfrac{\eps}{2}(y+z) \bigr ) 
	. 
	\label{magnetic_Weyl_calculus:eqn:scaled_magnetic_flux}
\end{align}
Note that since the area of the triangle the magnetic field passes through is $\order(\eps^2)$, this scaled magnetic flux is actually $\triflux = \order(\eps)$ small. 

The magnetic Weyl product inherits the non-commutativity of the operator product, and gauge-covariance~\eqref{magnetic_Weyl_calculus:eqn:gauge_covariance_op_A} of $\op^A$ implies that $\weyl^B$ depends on the magnetic \emph{field} $B$, not on the magnetic \emph{vector potential} $A$. 

One of the standard results in the literature is that the magnetic Weyl product $\weyl^B$ maps two Hörmander symbols onto a Hörmander symbol, \ie the bilinear map 
\begin{align*}
	\weyl^B : S^{m_1}_{\rho,\delta}(\pspace) \times S^{m_2}_{\rho,\delta}(\pspace) \longrightarrow S^{m_1 + m_2}_{\rho,\delta}(\pspace) 
\end{align*}
is continuous with respect to the relevant Fréchet topologies (\cf \cite[Theorem~2.6]{Iftimie_Mantoiu_Purice:magnetic_psido:2006}). 

What makes (magnetic) Weyl calculus such a nice tool in rigorous perturbation theory is that it allows one to systematically expand the operator product in a small parameter, \eg we are able to expand 
\begin{align*}
	f \weyl^B g &= \sum_{n = 0}^{\infty} \eps^n \, \bigl ( f \weyl^B g \bigr )_{(n)} + \order(\eps^{\infty})
	\\
	&= f \, g - \eps \, \tfrac{\ii}{2} \{ f , g \}_B + \order(\eps^2)
\end{align*}
asymptotically in the semiclassical parameter $\eps$, where the first subleading correction is given in terms of the magnetic Poisson bracket 
\begin{align*}
	 \{ f , g \}_B := \nabla_{\xi} f \cdot \nabla_x g - \nabla_x f \cdot \nabla_{\xi} g - \lambda \, \sum_{j , k = 1}^d B_{jk} \, \partial_{\xi_j} f \, \partial_{\xi_k} g 
	. 
\end{align*}
Making this asymptotic expansion rigorous for Hörmander-class symbols (\cf \cite{Lein:two_parameter_asymptotics:2008} and \cite[Chapter~3]{Lein:progress_magWQ:2010}) allows one to prove an Egorov-type theorem, \cite[Theorem~3.6.1]{Lein:progress_magWQ:2010}, which is one path to connecting the full quantum dynamics to classical Hamiltonian equations of motion. 

\subsection{Algebraic point of view of pseudodifferential theory} 
\label{magnetic_Weyl_calculus:algebraic_point_of_view}
Magnetic pseudodifferential theory is intimately connected to the theory of twisted crossed product $C^*$-algebras \cite{Mantoiu_Purice_Richard:twisted_X_products:2004,Lein_Mantoiu_Richard:anisotropic_mag_pseudo:2009,Belmonte_Lein_Mantoiu:mag_twisted_actions:2010}. They appear naturally and allow for \eg to study essential spectra of magnetic pseudodifferential operators (\cf \cite[Theorem~4.2]{Lein_Mantoiu_Richard:anisotropic_mag_pseudo:2009}). Likewise, the theory of magnetic pseudodifferential \emph{super} operators has an algebraic point of view, something we intend to develop in a follow-up paper. Even though we are postponing an in-depth discussion to another time, we have decided to include this section to point out a few facts that will help the reader better understand our choice of notation, our thought processes and where the journey will be going. 

The twisted crossed product algebras are all $C^*$-subalgebras of 
\begin{align*}
	\mathfrak{C}^B := {\op^A}^{-1} \Bigl ( \mathcal{B} \bigl ( L^2(\R^d) \bigr ) \Bigr ) 
	. 
\end{align*}
Endowed with the norm $\norm{f}_B := \bnorm{\op^A(f)}_{\mathcal{B}(L^2(\R^d))}$, the involution $f^{\weyl^B} := {\op^A}^{-1} \bigl ( \op^A(f)^* \bigr )$ and the product 
\begin{align*}
	f \weyl^B g := {\op^A}^{-1} \bigl ( \op^A(f) \, \op^A(g) \bigr ) 
\end{align*}
the vector space $\mathfrak{C}^B$ in fact inherits the $C^*$-algebraic structure of its parent $\mathcal{B} \bigl ( L^2(\R^d) \bigr )$. 

The Schwartz Kernel Theorem tells us that $\mathfrak{C}^B$ is composed of tempered distributions, although of course, many of its elements can be regarded as functions in the ordinary sense. For, say, Schwartz functions the product $\weyl^B$ is computed through equation~\eqref{magnetic_Weyl_calculus:eqn:magnetic_Weyl_product_formula} and the involution $f^{\weyl^B} = \bar{f}$ is just pointwise complex conjugation. 

As the notation suggests, $\mathfrak{C}^B$ depends only on the magnetic field $B$, courtesy again of gauge-covariance~\eqref{magnetic_Weyl_calculus:eqn:gauge_covariance_op_A}. Indeed, this can be seen from the explicit expressions of the product $\weyl^B$, involution ${}^{\weyl^B}$ and norm $\norm{\, \cdot \,}_B$. 

That is why $\mathfrak{C}^B$ and suitable subalgebras $\mathfrak{A}^B \subseteq \mathfrak{C}^B$ are often considered more fundamental. In the language of algebras, choosing a vector potential $A$ is tantamount to choosing a \emph{representation} 
\begin{align*}
	\op^A : \mathfrak{A}^B \subseteq \mathfrak{C}^B \longrightarrow \mathcal{B} \bigl ( L^2(\R^d) \bigr ) = \mathcal{B}(\Hil^A) 
	. 
\end{align*}
Representations with respect to equivalent gauges are \emph{unitarily} equivalent. This is why we prefer to keep track of the choice of vector potential $A$ in our notation for the Hilbert space $\Hil^A = L^2(\R^d)$; the superscript $A$ is meant to remind the reader that we are using $\op^A$ to promote functions to operators rather than $\op^{A + \eps \dd \chi}$. By construction, $\op^A$ is faithful and hence, norm-preserving. Therefore, $\norm{\, \cdot \,}_B$ coincides with the universal $C^*$-norm, a fact that is quite important when connecting these more intrinsic, $B$-dependent algebras to pseudodifferential theory. 

Once we select a trace with suitable properties, we can introduce a notion of integration and develop a theory of non-commutative $L^p$ spaces (\cf \cite[Chapter~3.2]{DeNittis_Lein:linear_response_theory:2017} and references therein). Rather than equivalence classes of functions these non-commutative $L^p$ spaces consist of suitable operators, some of which can be regarded as magnetic pseudodifferential operators. Our ultimate goal is to study how magnetic pseudodifferential \emph{super} operators act on these non-commutative $L^p$ spaces. Quite naturally, this necessitates notions from analysis such as operator domains and Sobolev spaces to be translated to the realm of operator algebras. 
\section{Formal derivations of magnetic pseudodifferential super operator calculus} 
\label{formal_super_calculus}
One of our motivations to introduce and study a magnetic pseudodifferential calculus is to define \emph{super} operators. A relevant example from quantum mechanics is the Liouville operator 
\begin{align*}
	\hat{L}_{\hat{h}^A} = - \ii \, \bigl [ \hat{h}^A \, , \, \cdot \, \bigr ]
	= \Op^A(L_h)
\end{align*}
associated to the pseudodifferential operator $\hat{h}^A = \op^A(h)$, which is the magnetic quantization of a function $L_h$. It turns out the solution is 
\begin{align*}
	L_h(X_L,X_R) = - \ii \, \bigl ( h(X_L) - h(X_R) \bigr ) 
	, 
\end{align*}
where $X_L = (x_L,\xi_L) \in \pspace$ and $X_R = (x_R,\xi_R) \in \pspace$ are left and right phase space variables; the reason why we call them left and right variables is that the contribution due to $- \ii \, h(X_L)$ acts from the left and that from $+ \ii \, h(X_R)$ acts from the right. 

Exploiting the pseudodifferential nature of the super operator $\hat{L}_{\hat{h}^A}$ would be tremendously beneficial for the same reasons that ordinary pseudodifferential is: it would allow us to deduce many of the fundamental properties of $\hat{L}_{\hat{h}^A}$ from properties of the functions $L_h$ and $h$ that define it. Moreover, it might allow us to side step many of the technical problems one encounters when working with super operators. One is the issue of measurability that complicates the precise definition of products and commutators such as the one that enters the Liouville super operator $\hat{L}_{\hat{h}^A}$ (\cf \cite[Chapter~3.3]{DeNittis_Lein:linear_response_theory:2017}). That could lead to simplified proofs in applications such as linear response theory \cite{DeNittis_Lein:linear_response_theory:2017} and the development of new rigorous perturbation schemes on the level of super operators in the spirit of \eg \cite{PST:sapt:2002}. 

To motivate the main equations and fix some notation, let us dispense with mathematical rigor for the moment. Suppose we would like to understand super operators of the form 
\begin{align*}
	\hat{F}^A \, \hat{g}^A := \op^A(f_L) \, \hat{g}^A \, \op^A(f_R)
	. 
\end{align*}
Here, the operator in the middle $\hat{g}^A \in \mathcal{B} \bigl ( L^2(\R^d) \bigr ) , \mathfrak{L}^p \bigl ( \mathcal{B} \bigl ( L^2(\R^d) \bigr ) \bigr )$ is a bounded operator or belongs to the $p$-Schatten class with respect to the trace;\footnote{Our notation borrows from the theory of non-commutative $L^p$ spaces rather than more traditional functional analysis where $\mathcal{T}^p(\Hil) = \mathfrak{L}^p \bigl ( \mathcal{B}(\Hil) \bigr )$ means we use the canonical trace on the Hilbert space $\Hil$. Since this is not necessarily the case for us, we shall use slightly different notation. } the superscript $A$ serves as a reminder to the reader that we must specify a vector potential to fix a representation on a Hilbert space. The operators on the left and on the right are ordinary magnetic pseudodifferential operators; the indices $L$ and $R$ will indicate whether the associated $\Psi$DO acts on $\hat{g}^A$ from the left or the right. 

Plugging in the definition~\eqref{magnetic_Weyl_calculus:eqn:definition_magnetic_Weyl_quantization} for magnetic pseudodifferential operators yields 
\begin{align}
	\hat{F}^A \, \hat{g}^A = \frac{1}{(2\pi)^{2d}} \int_{\pspace} \dd X_L \int_{\pspace} \dd X_R \, (\Fourier_{\sigma} f_L)(X_L) \, (\Fourier_{\sigma} f_R)(X_R) \, w^A(X_L) \, \hat{g}^A \, w^A(X_R) 
	. 
	\label{formal_super_calculus:eqn:product_super_operator_Op_A_formal_definition}
\end{align}
This suggests to introduce the magnetic Weyl super quantization 
\begin{align}
	\Op^A(F) \, \hat{g}^A := \frac{1}{(2\pi)^{2d}} \int_{\Pspace} \dd \Xbf \, (\Fourier_{\Sigma} F)(\Xbf) \, W^A(\Xbf) \, \hat{g}^A 
	. 
	\label{formal_super_calculus:eqn:Op_A_formal_definition}
\end{align}
Here we have collected left and right phase space coordinates into 
\begin{align*}
	\Xbf = (X_L,X_R) \in \pspace \times \pspace 
	=: \Pspace 
	. 
\end{align*}
Analogously to equation~\eqref{magnetic_Weyl_calculus:eqn:symplectic_Fourier_transform} we have introduced a symplectic Fourier transform $\Fourier_{\Sigma}$ on the doubled phase space $\Pspace$, 
\begin{align*}
	(\Fourier_{\Sigma} F)(\Xbf) := \frac{1}{(2\pi)^{2d}} \int_{\Pspace} \dd \Xbf' \, \e^{+ \ii \Sigma(\Xbf,\Xbf')} \, F(\Xbf')
	, 
\end{align*}
that is defined in terms of the symplectic form 
\begin{align*}
	\Sigma(\Xbf,\Ybf) := \sigma(X_L,Y_L) + \sigma(X_R,Y_R) 
	. 
\end{align*}
The last piece of the puzzle is the magnetic super Weyl system 
\begin{align}
	W^A(\Xbf) \, \hat{g}^A := w^A(X_L) \, \hat{g}^A \, w^A(X_R) 
	. 
	\label{formal_super_calculus:eqn:super_Weyl_system}
\end{align}
Importantly, the magnetic super Weyl quantization of $F(X_L,X_R) = f_L(X_L) \, f_R(X_R)$ via equation~\eqref{formal_super_calculus:eqn:Op_A_formal_definition} reduces to the operator from equation~\eqref{formal_super_calculus:eqn:product_super_operator_Op_A_formal_definition}. Since we will use product operators~\eqref{formal_super_calculus:eqn:product_super_operator_Op_A_formal_definition} every now and then, we abbreviate their symbols with $F = f_L \otimes f_R$ and write $\hat{F}^A =  \hat{f}_L^A \otimes \hat{f}_R^A$ with abuse of notation. 

Given that $\Op^A(F)$ is an operator acting on other operators, we may ask what happens if $\hat{g}^A = \op^A(g)$ is itself a magnetic pseudodifferential operator. For product operators~\eqref{formal_super_calculus:eqn:product_super_operator_Op_A_formal_definition} we can immediately rewrite the resulting operator 
\begin{align*}
	\Op^A \bigl ( f_L \otimes f_R \bigr ) \, \op^A(g) &= \op^A(f_L) \, \op^A(g) \, \op^A(f_R) 
	= \op^A \bigl ( f_L \weyl^B g \weyl^B f_R \bigr ) 
\end{align*}
as a double magnetic Weyl product. At this point, we could write out the double magnetic Weyl product explicitly and verify once again that we can replace $(\Fourier_{\sigma} f_L)(X_L) \, (\Fourier_{\sigma} f_R)(X_R)$ with $(\Fourier_{\Sigma} F)(\Xbf)$ for functions $F$ that are not necessarily products. But we will skip ahead to the result and propose the formula 
\begin{align}
	\bigl ( F \semisuper^B g \bigr )(X) = \frac{1}{(2\pi)^{3d}} \int_{\Pspace} \dd \Ybf \int_{\pspace} \dd Z \, &\e^{+ \ii \sigma(X , Y_L + Y_R + Z)} \, \e^{+ \ii \frac{\eps}{2} \sigma(Y_L + Z , Y_R + Z)} \, \e^{- \ii \lambda \recfluxp(x,y_L,y_R,z)} \, \cdot 
	\notag \\
	&\cdot \, 
	(\Fourier_{\Sigma} F)(\Ybf) \, (\Fourier_{\sigma} g)(Z) 
	\label{formal_super_calculus:eqn:semi_super_product_explicit_Fourier}
\end{align}
for the \emph{semi-super product} $\semisuper^B$ that is defined by 
\begin{align}
	 \op^A \bigl ( F \semisuper^B g \bigr ) :& \negmedspace = \Op^A(F) \, \op^A(g) 
	 \label{formal_super_calculus:eqn:definition_semisuper_product}
\end{align}
and involves the magnetic phase factor 
\begin{align}
	\recfluxp(x,y_L,y_R,z) := \trifluxp(x,y_L,z) + \trifluxp(x,y_L+z,y_R)
	. 
	\label{formal_super_calculus:eqn:definition_magnetic_phase_factor_super_product}
\end{align}
When we write out the two summands as magnetic flux integrals, we see that $\recfluxp$ is the magnetic flux through a quadrangle that is made up of two conjoined triangles; we have illustrated this in Figure~\ref{magnetic_super_PsiDOs:figure:flux_quadrangle}. 
\begin{figure}[t]
	\begin{center}
		\begin{minipage}{0.6\linewidth}
			\centering{\def\svgwidth{\columnwidth}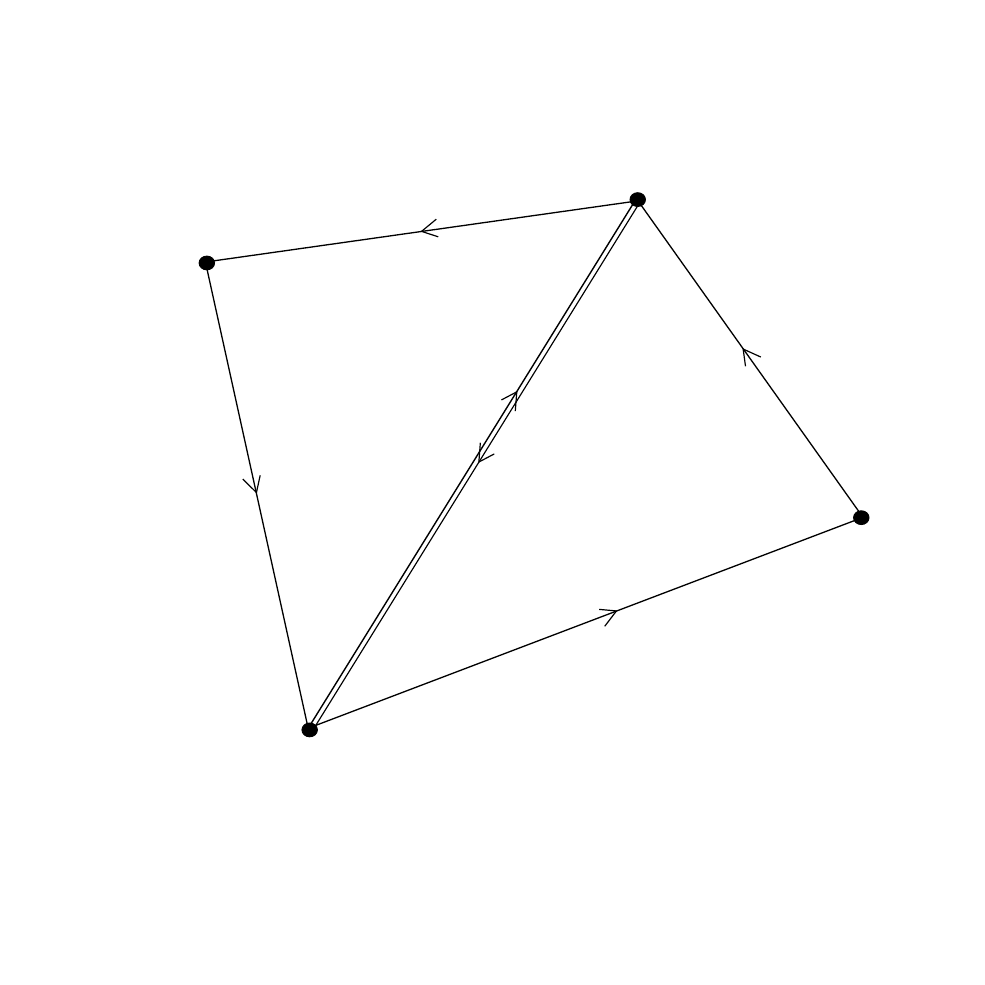}
		\end{minipage}
	\end{center}
	\label{magnetic_super_PsiDOs:figure:flux_quadrangle}
	\caption{This is the magnetic flux quadrangle that appears in the semi-super product $\semisuper^B$.}
\end{figure}

This product does \emph{not} have an analog in regular Weyl calculus, though. Since one of its arguments is a symbol on $\Pspace$ and the other a symbol on $\pspace$, we will speak of the semi-super product. 

The analog of $\weyl^B$ on the level of super operators is played by the magnetic super Weyl product that is defined as the function or tempered distribution $F \super^B G$ that satisfies 
\begin{align*}
	\Op^A \bigl ( F \super^B G \bigr ) := \Op^A(F) \, \Op^A(G) 
	. 
\end{align*}
The magnetic super Weyl product of two product super operators~\eqref{formal_super_calculus:eqn:product_super_operator_Op_A_formal_definition} reduces to 
\begin{align*}
	\Op^A \bigl ( f_L \otimes f_R \bigr ) \, \Op^A(g_L \otimes g_R) \, \hat{h}^A &= \Op^A \bigl ( f_L \otimes f_R \bigr ) \, \op^A \bigl ( g_L \weyl^B h \weyl^B g_R \bigr )
	\\
	&= \op^A \Bigl ( \bigl ( f_L \weyl^B g_L \bigr ) \weyl^B h \weyl^B \bigl ( g_R \weyl^B f_R \bigr ) \Bigr )
	. 
\end{align*}
Note that the order of multiplication of the symbols to the right of $g$ is opposite of those to the left. The brackets are just for emphasis and entirely unnecessary as the magnetic Weyl product $\weyl^B$ is associative, 
\begin{align*}
	\bigl ( f \weyl^B g \bigr ) \weyl^B h &= f \weyl^B \bigl ( g \weyl^B h \bigr )
	= f \weyl^B g \weyl^B h
	. 
\end{align*}
When $F$ and $G$ are not necessarily product symbols, the formula for the magnetic super Weyl product cannot be disentangled into a left and a right component and instead takes the form 
\begin{align}
	F \super^B G(\Xbf) = \frac{1}{(2\pi)^{4d}} \int_{\Pspace} \dd \Ybf \int_{\Pspace} \dd \Zbf \; &\e^{+ \ii \Sigma(\Xbf,\Ybf+\Zbf)} \, \e^{+ \ii \frac{\eps}{2} \Sigma(r(\Ybf),\Zbf)} \, 
	\cdot \notag \\ 
	&
	\e^{- \ii \lambda \trifluxp(x_L,y_L,z_L)} \, \e^{- \ii \lambda \trifluxp(x_R,z_R,y_R)} \, (\Fourier_{\Sigma} F)(\Ybf) \, (\Fourier_{\Sigma} G)(\Zbf) 
	\label{formal_super_calculus:eqn:super_Weyl_product_formula}
	. 
\end{align}
As with the usual magnetic Weyl product $\weyl^B$, also the formulas for the other two products $\semisuper^B$ and $\super^B$ can be recast in different, but ultimately equivalent ways. This is achieved by writing out the Fourier transforms and integrating out some of the variables. 
\medskip

\noindent
The last piece of the puzzle is a dequantization procedure via a super Wigner transform. Also here we can infer the formula for the general case, 
\begin{align}
	\Wigner^B \, K(\Xbf) &= \frac{\eps^{3d}}{2^{4d}\pi^d} \int_{\Pspace} \dd \Ybf \, \e^{+ \ii \xi_L \cdot y_L} \, \e^{+ \ii \xi_R \cdot y_R} \, \e^{+ \frac{\ii}{2} \left ( (-x_L + x_R) \cdot \eta_R - \frac{\eps}{2} (y_L + y_R) \cdot \eta_L \right )} \, 
	\cdot \notag \\
	&\qquad \qquad \qquad \quad
	\e^{+ \ii \lambda \recfluxp \left ( \frac{1}{2}(x_L+x_R) - \frac{\eps}{4} (y_L - y_R) , y_L , y_R , \frac{1}{\eps}(x_R-x_L) - \frac{1}{2} (y_L + y_R) \right )} \, 
	\cdot \notag \\
	&\qquad \qquad \qquad \quad 
	K \bigl ( \tfrac{1}{2} (x_L + x_R) - \tfrac{\eps}{4} (y_L - y_R) , \tfrac{\eps}{4} (\eta_L - \eta_R) , 
	\bigr . \notag \\
	&\qquad \qquad \qquad \qquad \; \bigl . 
	\tfrac{1}{2}(x_L + x_R) + \tfrac{\eps}{4} (y_L - y_R) , \tfrac{\eps}{4} (\eta_L + \eta_R) \bigr )
	, 
\end{align}
from the formula for product symbols since then clearly the $\Op^A$ dequantization is the product of the $\op^A$ dequantizations, 
\begin{align*}
	{\Op^A}^{-1} \bigl ( \hat{f}_L^A \otimes \hat{f}_R^A \bigr ) = {\op^A}^{-1}(\hat{f}_L^A) \otimes {\op^A}^{-1}(\hat{f}_R^A) 
	. 
\end{align*}
While these considerations so far have been purely formal, readers with a background in pseudodifferential theory can probably already see how to make these formal manipulations rigorous. And the people who lack this background should not worry, we try to be pedagogical in our exposition. 

Just like with $\op^A$ the first step is to establish the magnetic super Weyl calculus for Schwartz class functions. In a second step, we extend by duality. And then in a third we use oscillatory integral techniques to prove that \eg $\super^B$ maps two Hörmander class symbols onto a Hörmander class symbol in a continuous fashion. 
\section{Rigorous definition of super calculus on $\Schwartz$} 
\label{rigorous_definition_supercalculus_on_S}
Our arguments in the previous section suggest that we might exploit the tensor product structure. More concretely, it is tempting to first prove all important facts for product super operators by referencing well-established results from ordinary magnetic Weyl calculus. 

Then we bootstrap these arguments to magnetic Weyl super calculus by approximating general functions 
\begin{align*}
	F \approx \sum_{k = 1}^n f_{L,k} \otimes f_{R,k} 
\end{align*}
on $\Pspace$ from the relevant function spaces by products. 

However, this presumes that finite linear combinations lie dense with respect to the standard Fréchet topologies. And unfortunately, in the cases that matter, this is not true. To be more specific, in general there are several ways to construct a tensor product of two topological vector spaces with Fréchet topology \cite[Chapter~43]{Treves:topological_vector_spaces:1967}. Only when at least one of the the locally convex topological vector spaces $\mathcal{X}_1$ and $\mathcal{X}_2$ is nuclear is there only one way to complete the algebraic tensor product $\mathcal{X}_1 \otimes \mathcal{X}_2$ (\cf Definition~50.1 and Theorem~50.1 in \cite{Treves:topological_vector_spaces:1967}); otherwise there are at least two tensor products in the sense that we can take completions of $\mathcal{X}_1 \otimes \mathcal{X}_2$ with respect to two different topologies. 

Schwartz spaces and their duals are nuclear, so we can apply this strategy to some of our arguments in the beginning and borrow some facts from \cite{Mantoiu_Purice:magnetic_Weyl_calculus:2004,Lein:progress_magWQ:2010}. 

Unfortunately, infinite-dimensional Banach spaces \cite[Chapter~50, Corollary~2]{Treves:topological_vector_spaces:1967} and Hörmander classes are \emph{not} nuclear (see \cite{Witt:weak_topology_symbol_spaces:1997}, specifically Theorem~3.2, Remark~3.3 and Proposition~4.4). Therefore the topology of the tensor product “$S^{m_L}_{\rho,\delta} \otimes S^{m_R}_{\rho,\delta}$” differs from the Fréchet topology given in Definition~\ref{symbol_super_calculus:defn:Hoermander_super_symbols}. That unfortunately means we need to establish the existence of certain oscillatory integrals to make sense of \eg $F \super^B G$ by hand; since the arguments are completely standard and not very enlightening, we have opted to include them as Appendix~\ref{appendix:oscillatory_integrals}. The techniques are completely standard, but the computations are somewhat lengthy and involved because we are dealing with twice as many variables and the formulas contain magnetic phase factors. 
\medskip

\noindent
That is why we first develop the magnetic Weyl super calculus for Schwartz functions. Compared to ordinary magnetic Weyl calculus from Section~\ref{magnetic_Weyl_calculus} the only piece of the puzzle that does not have an exact analog is the magnetic semi-super Weyl product. 

Before we proceed, from hereon out let us make the following
\begin{assumption}[Small parameters]
	We assume that the small parameters $\eps$ and $\lambda$ lie in $(0,1]$. 
\end{assumption}
The reason for this \emph{non-essential} assumption is to simplify some of the estimates. That is because when $\lambda \leq 1$ we can simply replace factors like $\lambda^n$ by $1$. In principle, we could replace it with $\eps , \lambda \in (0,R]$ for some $R > 0$.

\subsection{Definition of non-commutative $L^p$ spaces} 
\label{rigorous_definition_supercalculus_on_S:non_commutative_Lp_spaces}
Standard (magnetic) pseudodifferential operators typically act on $L^p(\R^d)$ or spaces derived from it such as (magnetic) Sobolev spaces. The definition of these spaces is part and parcel of every course on functional analysis, the
\begin{align}
	L^p(\R^d) := \Bigl \{ \mbox{$f : \R^d \longrightarrow \C$ measurable} \; \; \big \vert \; \; \int_{\R^d} \dd x \, \sabs{f(x)}^p < \infty \Bigr \} / \sim 
	\label{rigorous_definition_supercalculus_on_S:eqn:definition_ordinary_Lp_spaces}
\end{align}
consist of equivalence classes of measurable functions that agree on a set of full measure and whose $p$th power is absolutely integrable. Most of the time we consider $\op^A(f)$ as an operator acting on the Hilbert space $L^2(\R^d)$. 

The situation with super operators is trickier for several reasons. While there is only one choice of translation-invariant measure (up to a multiplicative constant) on $\R^d$ for constructing $L^p(\R^d)$, several notions of traces can be defined on suitable subsets of $\mathcal{B} \bigl ( L^2(\R^d) \bigr )$. An obvious one is the standard trace 
\begin{align*}
	\mathrm{Tr}_{L^2(\R^d)}(A) := \sum_{n = 1}^{\infty} \scpro{\varphi_n}{A \varphi_n}
\end{align*}
that is initially defined with respect to some orthonormal basis $\{ \varphi_n \}_{n \in \N}$. Another choice in the context of periodic operators is the trace-per-unit-volume 
\begin{align*}
	\mathcal{T}_{\mathrm{puv}}(A) := \lim_{n \to \infty} \mathrm{Vol}(\Lambda_n)^{-1} \, \mathrm{Tr}_{L^2(\R^d)}(1_{\Lambda_n}(\hat{x}) \, A \, 1_{\Lambda_n}(\hat{x}) \bigr ) 
	, 
\end{align*}
where $\Lambda_n \nearrow \R^d$ is any Følner exhausting sequence that eventually covers the entire $\R^d$ and $1_{\Lambda_n}(\hat{x})$ localizes the operator $A$ to the region $\Lambda_n$ \cite{Lenz:random_operators_crossed_products:1999}. 

The definition of non-commutative $L^p$ spaces mimics \eqref{rigorous_definition_supercalculus_on_S:eqn:definition_ordinary_Lp_spaces}, namely 
\begin{align*}
	\mathfrak{L}^p(\Alg) := \Bigl \{ \mbox{$A$ measurable} \; \; \big \vert \; \; \mathcal{T} \bigl ( \sabs{A}^p \bigr ) < \infty \Bigr \} 
	, 
\end{align*}
where $\Alg \subseteq \mathcal{B}(\Hil)$ is a von Neumann algebra and $\mathcal{T}$ a faithful, normal and semifinite (f.n.s.) trace (\cf \cite[Chapter~VII, Definition~1.1]{Takesaki:operator_algebras_2:2003}); we refer to \cite[Section~3.2]{DeNittis_Lein:linear_response_theory:2017} for a more thorough overview of the construction and references therein. Importantly, the notion of measurability is defined with respect to the trace $\mathcal{T}$ (\cf \cite[Definition~3.2.2]{DeNittis_Lein:linear_response_theory:2017}). 
When $\Alg = \mathcal{B} \bigl ( L^2(\R^d) \bigr )$ and $\mathcal{T} = \mathrm{Tr}_{L^2(\R^d)}$, this is nothing but the usual $p$-Schatten class $\mathfrak{L}^p \bigl ( \mathcal{B} \bigl ( L^2(\R^d) \bigr ) \bigr )$. 
But other cases are frequently of interest as well. When we choose the algebra of periodic operators $\Alg = \Alg_{\mathrm{per}} \subset \mathcal{B} \bigl ( L^2(\R^d) \bigr )$, then the natural trace is the trace-per-unit-volume $\mathcal{T} = \mathcal{T}_{\mathrm{puv}}$. And in case $\Alg = L^{\infty}(\R^d)$ is the commutative von Neumann algebra, we eventually recover the usual $L^p$ spaces (\cf \cite[Example~3.2.6~(4)]{DeNittis_Lein:linear_response_theory:2017}). 

So far the whole theory looks very similar to those of standard (commutative) $L^p$ spaces. But there is a twist, one that is a major motivator for developing the theory of pseudodifferential super operators: not all operators we encounter are measurable. An example is the humble Laplacian $-\Delta$ or periodic Schrödinger operators $H = -\Delta + V_{\mathrm{per}}$, which are \emph{not} measurable with respect to the trace-per-unit-volume (\cf \cite[Remark~3.2.7]{DeNittis_Lein:linear_response_theory:2017}). Therefore, the Liouville operator associated to periodic Schrödinger operators can be formally thought of as the product of a measurable and a \emph{non}-measurable operator. The fact that this occurs with just the Laplacian shows that this is not a complication we can deal with by assuming the hamiltonian is measurable. 

Since the purpose of this paper is to introduce the calculus in its simplest form, we will stick to $p$-Schatten classes, \ie we will pick the usual trace $\mathrm{Tr}_{L^2(\R^d)}$ where the define non-commutative $L^p$ spaces $\mathfrak{L}^p \bigl ( L^2(\R^d) \bigr )$. Nevertheless, after suitable extensions and restrictions, we still expect that all formulas hold verbatim when we choose another f.n.s.\ trace and work with suitable subalgebras. 

\subsection{Construction of the quantization} 
\label{rigorous_definition_supercalculus_on_S:construction_quantization}
The first step is to construct an analog of magnetic Weyl quantization $\op^A$. We shall start with a Weyl system that implements the group action of translations in position and momentum. Afterwards, we define $\Op^A$ and discuss the adjoint operation.

\subsubsection{The magnetic super Weyl system and gauge covariance} 
\label{rigorous_definition_supercalculus_on_S:construction_quantization:super_weyl_system}
As before, the starting point of the calculus is the 
\begin{definition}[Magnetic super Weyl system]\label{magnetic_super_PsiDOs:defn:super_Weyl_system}
	For $\Xbf \in \Pspace$ we define the operator 
	\begin{align*}
		W^A(\Xbf) \, \hat{g}^A \equiv W^A(X_L,X_R) \, \hat{g}^A
		:= w^A(X_L) \, \hat{g}^A \, w^A(X_R) 
		, 
		&&
		\hat{g}^A \in \mathcal{B} \bigl ( L^2(\R^d) \bigr ) 
	\end{align*}
	in terms of the magnetic Weyl systems~\eqref{magnetic_Weyl_calculus:eqn:definition_magnetic_Weyl_system}. 
\end{definition}
On the level of super operators gauge covariance works a little differently, because this also changes representation of the underlying Hilbert space. Suppose we are given a unitary 
\begin{align*}
	U : L^2(\R^d) \longrightarrow \Hil
\end{align*}
and we consider the position and kinetic momentum operators 
\begin{subequations}\label{rigorous_definition_supercalculus_on_S:eqn:unitarily_equivalent_building_block_operators}
	\begin{align}
		Q_U :& \negmedspace = U \, Q \, U^{-1} 
		,
		\\
		P^A_U :& \negmedspace = U \, P^A \, U^{-1} 
		,
	\end{align}
\end{subequations}
in the new representation. Note that a change of gauge is just such a change of representation where 
\begin{align*}
	U &= \e^{+ \ii \lambda \chi(Q)}
\end{align*}
holds for some real-valued function $\chi : \R^d \longrightarrow \R$. Then the unitary commutes with $Q$ and modifies the vector potential to $A' = A + \dd \chi$ for kinetic momentum, 
\begin{align*}
	Q_U &= Q 
	, 
	\\
	P_U^A &= P^{A + \dd \chi}
	. 
\end{align*}
But there are other interesting unitaries to choose from, \eg the Fourier transform or the Bloch-Floquet-Zak transform (\cf \cite[Section~2.2]{DeNittis_Lein:Bloch_electron:2009}) that is used to analyze (perturbed) periodic operators. Importantly, the unitarity ensures that there is no need to touch our estimates below for \eg the magnetic super Weyl product. 

Unlike ordinary Weyl calculus, we also need to covariantly change the operator the Weyl system acts on, \ie we apply the adjoint map with respect to $U$, 
\begin{align*}
	\Ad_U : \mathcal{B} \bigl ( L^2(\R^d) \bigr ) \ni \hat{g}^A \mapsto \hat{g}^A_U := U \, \hat{g}^A \, U^{-1} \in \mathcal{B}(\Hil) 
	. 
\end{align*}
Thanks to the unitarity of $U$ the two operators have identical operator norm, 
\begin{align*}
	\snorm{\hat{g}^A_U}_{\mathcal{B}(\Hil)} = \snorm{\hat{g}^A}_{\mathcal{B}(L^2(\R^d))} 
	. 
\end{align*}
With this notation in hand, we can summarize the covariance condition as 
\begin{align}
	W^A_U(\Xbf) \, \hat{g}^A_U :& \negmedspace = w^A_U(X_L) \, \hat{g}^A_U \, w^A_U(X_R) 
	= U \, w^A(X_L) \, U^{-1} \, \hat{g}^A_U \, U \, w^A(X_R) \, U^{-1} 
	\notag \\
	&= U \, w^A(X_L) \, \hat{g}^A \, w^A(X_R) \, U^{-1} 
	= \Ad_U \, W^A(\Xbf) \, \hat{g}^A
	\notag \\
	&= \Ad_U \, W^A(\Xbf) \, \Ad_U^{-1} \, \hat{g}^A_U 
	= \Ad_{\Ad_U} \, W^A(\Xbf) \, \hat{g}^A_U
	\label{rigorous_definition_supercalculus_on_S:eqn:super_Weyl_system_new_representation}
	. 
\end{align}
where the magnetic Weyl system in the new representation is 
\begin{align*}
	w^A_U(X_L) :& \negmedspace = \e^{- \ii \sigma(X_L,(Q_U,P^A_U))}
	\\
	&= U \, \e^{- \ii \sigma(X_L,(Q,P^A))} \, U^{-1} 
	= \Ad_U \, w^A(X_L) 
	. 
\end{align*}
In contrast to ordinary magnetic Weyl calculus, when we change the gauge with $U = \e^{+ \ii \lambda \chi(Q)}$, we need to keep track of this change of gauge on the level of the operator $\hat{g}^A$ the magnetic super Weyl system acts on. 

Importantly, the composition law of both, the Weyl and the super Weyl system are identical. This reflects the fact that the commutation relations of $Q_U$ and $P^A_U$ are identical to those of $Q$ and $P^A$. 

The action of the gauge transformation on $\hat{g}^A$ simplifies when we use the Schwartz kernel theorem to write 
\begin{align*}
	\hat{g}^A = \op^A(g)
\end{align*}
as the magnetic pseudodifferential operator associated to some tempered distribution $g \in \Schwartz'(\pspace)$. 
\medskip

\noindent
To anticipate our discussion of the semi-super and the super product, it is helpful to study the composition laws of the magnetic super Weyl system. Given that the magnetic super Weyl system is defined in terms of the ordinary magnetic Weyl system and that we can write any bounded operator $\hat{g}^A = \op^A(g)$ as the magnetic pseudodifferential operator for some distribution $g$, we will also make use of a second composition law. The advantage now is that the magnetic phase factors neatly combine with one another. 
\begin{lemma}\label{magnetic_super_PsiDOs:lem:composition_super_Weyl_system_Weyl_system}
	For any $\Xbf \in \Pspace$ and $Y \in \pspace$ the magnetic super Weyl system and the magnetic Weyl system satisfy the composition law 
	\begin{align*}
		W^A(\Xbf) \, w^A(Y) &= \e^{\ii \frac{\eps}{2} \sigma(X_L+Y,X_R+Y)} \, \omega^B(Q;x_L,y) \, \omega^B(Q;x_L+y,x_R) \, w^A(Y+X_L+X_R) 
		.
	\end{align*}
\end{lemma}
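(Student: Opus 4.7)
The statement is a direct consequence of applying the composition law from Lemma~\ref{magnetic_Weyl_calculus:lem:properties_Weyl_system}~(3) twice, so the plan is essentially a careful bookkeeping exercise; I expect no conceptual obstacle.

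\textbf{Step 1.} Unfold the definition of the magnetic super Weyl system (Definition~\ref{magnetic_super_PsiDOs:defn:super_Weyl_system}). Since $w^A(Y) \in \mathcal{B} \bigl ( L^2(\R^d) \bigr )$, we may substitute $\hat{g}^A = w^A(Y)$ and obtain
\begin{align*}
	W^A(\Xbf) \, w^A(Y) = w^A(X_L) \, w^A(Y) \, w^A(X_R)
	.
\end{align*}

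\textbf{Step 2.} Apply the composition law \eqref{magnetic_Weyl_calculus:eqn:composition_magnetic_Weyl_system} from Lemma~\ref{magnetic_Weyl_calculus:lem:properties_Weyl_system}~(3) to the leftmost pair, giving
\begin{align*}
	w^A(X_L) \, w^A(Y) = \e^{+ \ii \frac{\eps}{2} \sigma(X_L,Y)} \, \cocyp(Q;x_L,y) \, w^A(X_L+Y)
	.
\end{align*}
Multiplying from the right by $w^A(X_R)$ and applying \eqref{magnetic_Weyl_calculus:eqn:composition_magnetic_Weyl_system} a second time to the pair $w^A(X_L+Y) \, w^A(X_R)$ yields
\begin{align*}
	W^A(\Xbf) \, w^A(Y) = \e^{+ \ii \frac{\eps}{2} \bigl ( \sigma(X_L,Y) + \sigma(X_L+Y,X_R) \bigr )} \, \cocyp(Q;x_L,y) \, \cocyp(Q;x_L+y,x_R) \, w^A(X_L+Y+X_R)
	.
\end{align*}
Here I use that the scalar phase $\e^{+ \ii \frac{\eps}{2} \sigma(X_L,Y)}$ and the multiplication operator $\cocyp(Q;x_L,y)$ from the first step commute past everything and remain to the left, so the order of the two cocycle factors in the final expression matches the statement.

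\textbf{Step 3.} Simplify the phase. By bilinearity of $\sigma$ and antisymmetry (so $\sigma(Y,Y) = 0$),
\begin{align*}
	\sigma(X_L+Y,X_R+Y) = \sigma(X_L,X_R) + \sigma(X_L,Y) + \sigma(Y,X_R) = \sigma(X_L,Y) + \sigma(X_L+Y,X_R)
	.
\end{align*}
Substituting this identity into the exponent produces exactly the claimed prefactor $\e^{+ \ii \frac{\eps}{2} \sigma(X_L+Y,X_R+Y)}$, and the proof is complete.

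The only point worth double-checking is the placement of the two cocycles $\cocyp(Q; \cdot , \cdot )$: they are multiplication operators in $Q$, hence they commute with each other and can be safely moved past the scalar phase from the second application, but in general they do \emph{not} commute with a Weyl system, which is why they must be accumulated on the \emph{left} of $w^A(X_L + Y + X_R)$ in the order dictated by the two applications of~\eqref{magnetic_Weyl_calculus:eqn:composition_magnetic_Weyl_system}. No further analytic input (\eg domain considerations) is needed, because $w^A(Y)$ is unitary and all identities hold at the level of bounded operators on $L^2(\R^d)$.
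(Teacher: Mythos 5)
Your proof is correct. The paper states Lemma~\ref{magnetic_super_PsiDOs:lem:composition_super_Weyl_system_Weyl_system} without supplying a proof, and your argument — unfolding $W^A(\Xbf) \, w^A(Y) = w^A(X_L) \, w^A(Y) \, w^A(X_R)$ and applying the composition law~\eqref{magnetic_Weyl_calculus:eqn:composition_magnetic_Weyl_system} twice — is exactly the natural route. The phase identity $\sigma(X_L,Y) + \sigma(X_L + Y, X_R) = \sigma(X_L + Y, X_R + Y)$ checks out by bilinearity and antisymmetry, the cocycles $\cocyp(Q;\cdot,\cdot)$ are multiplication operators in $Q$ and so accumulate harmlessly on the left in the order you give, and no domain issues arise since all operators involved are bounded (indeed unitary). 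Nothing is missing.
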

Likewise, the product of two magnetic super Weyl systems also simplifies to a super Weyl system operator adjoined by a magnetic and a non-magnetic phase. 
%

\subsubsection{The magnetic super Weyl quantization} 
\label{rigorous_definition_supercalculus_on_S:construction_quantization:Op_A}
Just like before, the magnetic \emph{super} Weyl calculus is constructed from the magnetic super Weyl system, the only difference being that we are promoting functions on $\Pspace$ (rather than $\pspace$) to operators. 
\begin{definition}[Magnetic super Weyl quantization]\label{magnetic_super_PsiDOs:defn:super_Weyl_quantization}
	The magnetic super Weyl quantization of a Schwartz function $F \in \Schwartz(\Pspace)$ is defined in the strong sense as the Bochner integral 
	\begin{align}
		\Op^A(F) \, \hat{g}^A := \frac{1}{(2\pi)^{2d}} \int_{\Pspace} \dd \Xbf \, (\Fourier_{\Sigma} F)(\Xbf) \, W^A(\Xbf) \, \hat{g}^A 
		. 
		\label{magnetic_super_PsiDOs:eqn:definition_super_Weyl_quantization}
	\end{align}
\end{definition}
The above definition intentionally does not specify the space from which we take $\hat{g}^A$. One common choice is $\mathcal{B} \bigl ( L^2(\R^d) \bigr )$, but we might also want to consider the $p$-Schatten classes $\mathfrak{L}^p \bigl ( \mathcal{B} \bigl ( L^2(\R^d) \bigr ) \bigr )$ [cite] or operators on $L^2(\R^d) \otimes \mathfrak{h}$ where $\mathfrak{h}$ is some separable Hilbert space. Let us collect some basic facts. 
\begin{lemma}\label{rigorous_definition_supercalculus_on_S:lem:properties_super_Weyl_quantization}
	Suppose Assumption~\ref{intro:assumption:polynomially_bounded_magnetic_field} on the magnetic field $B = \dd A$ and the vector potential $A$ are satisfied, and let $F \in \Schwartz(\Pspace)$. Then the following holds: 
	\begin{enumerate}[(1)]
		\item $\Op^A(F)$ gives rise to a bounded linear operator from $\bounded \bigl ( L^2(\R^d) \bigr )$ to itself.
		\item For every $p \geq 1$, $\Op^A(F)$ gives rise to a bounded linear operator from $\mathfrak{L}^p \bigl ( \mathcal{B} \bigl ( L^2(\R^d) \bigr ) \bigr )$ to itself, where $\mathfrak{L}^p \bigl ( \mathcal{B} \bigl ( L^2(\R^d) \bigr ) \bigr )$ is the ideal of $p$th Schatten-class operators.
	\end{enumerate}
\end{lemma}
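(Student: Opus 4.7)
The key observation driving the proof is that each magnetic Weyl system operator $w^A(Y)$ is \emph{unitary} on $L^2(\R^d)$—this is immediate from the explicit formula in Lemma~\ref{magnetic_Weyl_calculus:lem:properties_Weyl_system}(1), which expresses $w^A(Y)$ as a phase multiplication composed with a translation. Consequently, the super Weyl system $W^A(\Xbf) \hat{g}^A = w^A(X_L) \hat{g}^A w^A(X_R)$ acts as an isometry on $\mathcal{B}(L^2(\R^d))$ and, thanks to the two-sided unitary invariance of Schatten norms, on each $\mathfrak{L}^p \bigl ( \mathcal{B}(L^2(\R^d)) \bigr )$. Combined with $F \in \Schwartz(\Pspace) \implies \Fourier_\Sigma F \in \Schwartz(\Pspace) \subset L^1(\Pspace)$, the target bound
\[
	\bnorm{\Op^A(F)\, \hat{g}^A}_{\ast} \leq \tfrac{1}{(2\pi)^{2d}} \, \snorm{\Fourier_\Sigma F}_{L^1(\Pspace)} \, \snorm{\hat{g}^A}_{\ast}
\]
(with $\ast$ standing for either the operator norm or a Schatten $p$-norm) should fall out of the triangle inequality, once the Bochner integral in \eqref{magnetic_super_PsiDOs:eqn:definition_super_Weyl_quantization} is made sense of in the appropriate topology.

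For (1), I would interpret the integral in the strong operator sense: define $\Op^A(F)\hat{g}^A$ as the bounded operator whose action on $\psi \in L^2(\R^d)$ is $\bigl ( \Op^A(F)\hat{g}^A \bigr ) \psi = (2\pi)^{-2d} \int_\Pspace \dd \Xbf \, (\Fourier_\Sigma F)(\Xbf) \, W^A(\Xbf) \hat{g}^A \psi$. The explicit form of $w^A(Y)$ makes the map $Y \mapsto w^A(Y)\phi$ continuous from $\pspace$ to $L^2(\R^d)$ for every $\phi \in L^2(\R^d)$; the same therefore holds for $\Xbf \mapsto W^A(\Xbf) \hat{g}^A \psi$, so the integrand is strongly measurable. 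Together with the pointwise bound $\snorm{W^A(\Xbf) \hat{g}^A \psi}_{L^2} \leq \snorm{\hat{g}^A}_{\mathcal{B}} \snorm{\psi}_{L^2}$ and $\Fourier_\Sigma F \in L^1(\Pspace)$, this yields Bochner integrability in $L^2(\R^d)$ and immediately the claimed norm estimate.

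For (2), I would show Bochner integrability directly in $\mathfrak{L}^p$ when $1 \leq p < \infty$. The isometry $\snorm{W^A(\Xbf)\hat{g}^A}_{\mathfrak{L}^p} = \snorm{\hat{g}^A}_{\mathfrak{L}^p}$ takes care of the majorant, so only $\mathfrak{L}^p$-continuity of $\Xbf \mapsto W^A(\Xbf)\hat{g}^A$ needs justification. I would argue by density: for a finite-rank operator $\hat{g}^A = \sum_{k=1}^N \sopro{\varphi_k}{\psi_k}$, the $\mathfrak{L}^p$-continuity reduces to continuity of $Y \mapsto w^A(Y) \varphi_k$ and $Y \mapsto w^A(Y)^* \psi_k$ in $L^2$, which we already have. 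A standard $\tfrac{\eps}{3}$-argument, using the uniform isometry bound to control the tails, extends continuity to all $\hat{g}^A \in \mathfrak{L}^p$; Bochner integrability and the estimate then follow as in~(1). The case $p = \infty$ (understood as $\mathcal{B} \bigl ( L^2(\R^d) \bigr )$) is exactly part~(1).

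The main technical point is the $\mathfrak{L}^p$-norm continuity of $\Xbf \mapsto W^A(\Xbf)\hat{g}^A$ for $p < \infty$, since in general the magnetic Weyl system is only \emph{strongly} continuous, not norm continuous, on $\mathcal{B}(L^2(\R^d))$. The finite-rank density argument is the standard way around this, but deserves to be written out carefully; every other step is a direct application of unitarity plus the Schwartz hypothesis on $F$.
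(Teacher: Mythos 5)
Your proof is correct and follows essentially the same route as the paper: unitarity of $w^A(X)$, the ideal property and two-sided unitary invariance of the Schatten $p$-norms, and $\Fourier_\Sigma F \in L^1(\Pspace)$ give the bound $\snorm{\Op^A(F)\hat{g}^A}_p \leq (2\pi)^{-2d}\,\snorm{\Fourier_\Sigma F}_{L^1}\,\snorm{\hat{g}^A}_p$. The one genuine addition you make is the careful treatment of Bochner integrability: the paper simply declares the integral in Definition~\ref{magnetic_super_PsiDOs:defn:super_Weyl_quantization} to be a Bochner integral ``in the strong sense'' and estimates its norm without verifying strong measurability of the integrand, whereas you observe that $\Xbf \mapsto W^A(\Xbf)\hat{g}^A$ is not norm continuous into $\mathcal{B}(L^2(\R^d))$ (only strongly continuous), so for (1) one must work vector-by-vector, and for (2) one needs the finite-rank density argument to upgrade strong continuity on $L^2$ to $\mathfrak{L}^p$-norm continuity for $p<\infty$. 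That justification is worth writing out, since it is what makes the triangle-inequality estimate legitimate; the rest of your argument coincides with the paper's.
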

\begin{proof}
	\begin{enumerate}[(1)]
		\item This assertion follows easily by replacing the Schatten $p$-norm $\norm{\, \cdot \,}_p$ by the operator norm $\norm{\, \cdot \,}$ in the below arguments that prove (2). 
		\item Let $p \geq 1$, $\hat{g}^A \in \mathfrak{L}^p \bigl ( \mathcal{B} \bigl ( L^2(\R^d) \bigr ) \bigr )$ and denote the Schatten $p$-norm of $\mathfrak{L}^p \bigl ( \mathcal{B} \bigl ( L^2(\R^d) \bigr ) \bigr )$ by $\norm{\, \cdot \,}_p$ and the operator norm of $\mathcal{B} \bigl ( L^2(\R^d) \bigr )$ by $\norm{\, \cdot \,}$. We know that $\mathfrak{L}^p \bigl ( \mathcal{B} \bigl ( L^2(\R^d) \bigr ) \bigr )$ is the two-sided ideal of $\mathcal{B} \bigl ( L^2(\R^d) \bigr )$ and $\bnorm{\hat{f} \, \hat{g}^A \, \hat{h}}_p \leq \snorm{\hat{f}} \, \snorm{\hat{g}^A}_p \, \snorm{\hat{h}}$ for all $\hat{f} , \hat{h} \in \mathcal{B} \bigl ( L^2(\R^d) \bigr )$ and $\hat{g}^A \in \mathfrak{L}^p \bigl ( \mathcal{B} \bigl ( L^2(\R^d) \bigr ) \bigr )$ (see, \eg \cite{Gohberg_Krein:linear_nonselfadjoint_operators:1969}). Therefore, as the product of two unitaries and an operator from the ideal $\mathfrak{L}^p \bigl ( \mathcal{B} \bigl ( L^2(\R^d) \bigr ) \bigr )$, we conclude 
		\begin{align*}
			W^A(\Xbf) \, \hat{g}^A = w^A(X_L) \, \hat{g}^A \, w^A(X_R) \in \mathfrak{L}^p \bigl ( \mathcal{B} \bigl ( L^2(\R^d) \bigr ) \bigr ) 
		\end{align*}
		and its Schatten $p$-norm 
		\begin{align*}
			\bnorm{W^A(\Xbf) \, \hat{g}^A}_p \leq \bnorm{w^A(X_L)} \, \snorm{\hat{g}^A}_p \, \bnorm{w^A(X_R)} = \snorm{\hat{g}^A}_p 
		\end{align*}
		reduces to that of $\hat{g}^A$. 
		
		Combined with the fact that the symplectic Fourier transform $\Fourier_{\Sigma}$ maps Schwartz functions onto Schwartz functions, and Schwartz functions are integrable, we deduce that the Schatten $p$-norm
		\begin{align*}
			\bnorm{\Op^A(F) \, \hat{g}^A}_p &\leq \frac{1}{(2\pi)^{2d}} \int_{\Xi^2} \dd \Xbf \, \Bnorm{ (\Fourier_{\Sigma} F)(\Xbf) \, W^A(\Xbf) \, \hat{g}^A}_p
			\\
			&= \left ( \frac{1}{(2\pi)^{2d}} \int_{\Xi^2} \dd \Xbf \, \babs{(\Fourier_{\Sigma} F)(\Xbf)} \right ) \snorm{\hat{g}^A}_p 
			\\
			&= (2\pi)^{-2d} \, \bnorm{\Fourier_{\Sigma} F}_{L^1(\Xi^2)} \, \snorm{\hat{g}^A}_p
			.
		\end{align*}
		is finite. This shows that $\Op^A(F)$ gives rise to a continuous linear operator on $\mathfrak{L}^p \bigl ( \mathcal{B} \bigl ( L^2(\R^d) \bigr ) \bigr )$. This proves the assertion~(2).
	\end{enumerate}
\end{proof}
In a future work, we plan to generalize this formalism by replacing $\mathcal{B} \bigl ( L^2(\R^d) \bigr )$ with some other $C^*$- or von Neumann algebra and replace the usual trace with a different trace. Of course, we then need to impose additional assumptions, \eg that the Weyl system is compatible with the algebra and the trace akin to Hypothesis~3 from \cite{DeNittis_Lein:linear_response_theory:2017}. 

\subsubsection{The adjoint} 
\label{rigorous_definition_supercalculus_on_S:construction_quantization:adjoint}
Ordinary (magnetic) pseudodifferential operators naturally come furnished with an involution as most of the time they are considered as operators on the Hilbert space $L^2(\R^d)$. Certainly, it is possible to use the inclusions 
\begin{align*}
	\Schwartz(\R^d) \subseteq L^p(\R^d) \subseteq \Schwartz'(\R^d) 
\end{align*}
to extend $\op^A(f)$ to Banach spaces where $1 \leq p < \infty$ and not just $p = 2$. 

The situation for super operators is different, though. Here, it is much more common to study $\Op^A(F)$ on \eg $\mathfrak{L}^1 \bigl ( \mathcal{B} \bigl ( L^2(\R^d) \bigr ) \bigr )$ or $\mathfrak{L}^{\infty} \bigl ( \mathcal{B} \bigl ( L^2(\R^d) \bigr ) \bigr ) = \mathcal{B} \bigl ( L^2(\R^d) \bigr )$ than $p = 2$. And for $p \neq 2$ usually one chooses the linear (as opposed to antilinear) map $\Op^A(F) \mapsto \Op^A(F)'$ where the prime denotes the adjoint with respect to the dual pairing 
\begin{align*}
	\bigl ( \hat{f}^A \, , \, \hat{g}^A \bigr ) \mapsto \mathrm{Tr}_{L^2(\R^d)} \bigl ( \hat{f}^A \, \hat{g}^A \bigr )
	, 
\end{align*}
where $\hat{g}^A \in \mathfrak{L}^p \bigl ( \mathcal{B}(L^2(\R^d) \bigr ) \bigr )$, $\hat{f}^A \in \mathfrak{L}^q \bigl ( \mathcal{B}(L^2(\R^d) \bigr ) \bigr )$ and $q$ is dual to $p$, that is, $\nicefrac{1}{p} + \nicefrac{1}{q} = 1$. 

When $p = 2$ it is customary to pick the \emph{anti}linear Hilbert adjoint $\Op^A(F) \mapsto \Op^A(F)^*$ since then the dual pairing 
\begin{align*}
	\bscpro{\hat{f}^A}{\hat{g}^A}_{\mathfrak{L}^2(\mathcal{B}(L^2(\R^d)))} := \mathrm{Tr}_{L^2(\R^d)} \bigl ( \hat{f}^{A \, \ast} \, \hat{g}^A \bigr )
\end{align*}
that is customarily chosen contains the operator adjoint of the first argument and defines a scalar product on $\mathfrak{L}^2 \bigl ( \mathcal{B}(L^2(\R^d) \bigr ) \bigr )$. 

Not just for the case under consideration, but much more broadly is the symbol of the Hilbert adjoint the complex conjugate of the symbol. 
\begin{proposition}
	Suppose $F \in \Cont^{\infty}(\Pspace)$ is a function quantizes to a bounded operator 
	\begin{align*}
		\Op^A(F) \in \mathcal{B} \Bigl ( \mathfrak{L}^2 \bigl ( \mathcal{B} \bigl ( L^2(\R^d) \bigr ) \bigr ) \Bigr ) 
		. 
	\end{align*}
	Then its adjoint $\Op^A(F)^* = \Op^A(\overline{F})$ in $\mathcal{B} ( \mathfrak{L}^2 ( \mathcal{B} ( L^2(\R^d) ) ) )$ is the magnetic pseudodifferential super operator associated to the complex conjugate function $F^*(X_L,X_R) := \overline{F(X_L,X_R)}$. 
\end{proposition}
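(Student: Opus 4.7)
The plan is to proceed by direct computation for Schwartz symbols first, and then to indicate how the identity lifts to the more general smooth symbols that quantize to bounded operators. Throughout, let $\langle \hat{u} , \hat{v} \rangle := \mathrm{Tr}_{L^2(\R^d)}(\hat{u}^* \hat{v})$ denote the Hilbert–Schmidt inner product on $\mathfrak{L}^2 \bigl ( \mathcal{B} \bigl ( L^2(\R^d) \bigr ) \bigr )$, and recall that the magnetic Weyl system is unitary with $w^A(X)^* = w^A(-X)$ because $Q$ and $P^A$ are selfadjoint.

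First I would assume $F \in \Schwartz(\Pspace)$. Then Lemma~\ref{rigorous_definition_supercalculus_on_S:lem:properties_super_Weyl_quantization} already gives boundedness of $\Op^A(F)$ on $\mathfrak{L}^2$, and the Bochner integral defining $\Op^A(F) B$ converges absolutely. For $A , B \in \mathfrak{L}^2$ I would then expand
\begin{align*}
\bigl \langle A , \Op^A(F) \, B \bigr \rangle
= \frac{1}{(2\pi)^{2d}} \int_{\Pspace} \dd \Xbf \, (\Fourier_{\Sigma} F)(\Xbf) \, \mathrm{Tr}_{L^2(\R^d)} \bigl ( A^* \, w^A(X_L) \, B \, w^A(X_R) \bigr )
,
\end{align*}
where the exchange of integral and trace is legitimate because the integrand is $L^1$ in $\Xbf$ with values in the trace class (the $w^A$'s being unitary and $\Fourier_{\Sigma} F \in \Schwartz$).

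Next I would use cyclicity of the trace together with the unitarity $w^A(X)^* = w^A(-X)$ to rewrite
\begin{align*}
\mathrm{Tr}_{L^2(\R^d)} \bigl ( A^* \, w^A(X_L) \, B \, w^A(X_R) \bigr )
= \mathrm{Tr}_{L^2(\R^d)} \Bigl ( \bigl ( w^A(-X_L) \, A \, w^A(-X_R) \bigr )^* \, B \Bigr )
.
\end{align*}
Substituting back and making the change of variables $\Xbf \mapsto - \Xbf$ in the outer integral (which preserves the Lebesgue measure on $\Pspace$), I get
\begin{align*}
\bigl \langle A , \Op^A(F) \, B \bigr \rangle
= \Bscpro{\frac{1}{(2\pi)^{2d}} \int_{\Pspace} \dd \Xbf \, \overline{(\Fourier_{\Sigma} F)(-\Xbf)} \, w^A(X_L) \, A \, w^A(X_R)}{B}
.
\end{align*}
The key algebraic identity is then $\overline{(\Fourier_{\Sigma} F)(-\Xbf)} = (\Fourier_{\Sigma} \overline{F})(\Xbf)$, which follows immediately from the definition of $\Fourier_{\Sigma}$ and the antisymmetry of $\Sigma$. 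Reinserting this into the integral identifies the left-hand side with $\bscpro{\Op^A(\overline{F}) \, A}{B}$ for all $A , B \in \mathfrak{L}^2$, which yields $\Op^A(F)^* = \Op^A(\overline{F})$ as operators on $\mathfrak{L}^2$.

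Finally, for general $F \in \Cont^{\infty}(\Pspace)$ whose quantization is bounded on $\mathfrak{L}^2$, the plan is to approximate $F$ by Schwartz symbols and pass to the limit. The hard part will be finding an approximation scheme $F_n \to F$ that is compatible with both the Hilbert space adjoint on the operator side \emph{and} with complex conjugation on the symbol side; a crude Schwartz cut-off need not preserve boundedness in the $\mathfrak{L}^2$ operator norm. The cleanest way around this, which I would adopt, is to extend $\Op^A$ by duality to tempered distributions on $\Pspace$ (this is exactly the procedure carried out in Section~\ref{super_calculus_extension_by_duality}), verify that the map $F \mapsto \overline{F}$ is continuous on $\Schwartz'(\Pspace)$, and that the identity $\Op^A(F)^* = \Op^A(\overline{F})$, already established on the dense subspace $\Schwartz(\Pspace)$, extends by continuity to all distributions whose quantization is bounded on $\mathfrak{L}^2$. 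The main technical obstacle is therefore not the core computation but rather ensuring that the duality extension is compatible with the involutive structure on both sides — once that is set up, no new calculation is needed.
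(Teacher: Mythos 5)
Your core computation is the same as the paper's: cyclicity of the trace, the unitarity identity $w^A(X)^* = w^A(-X)$, and the relation $\overline{(\Fourier_{\Sigma}F)(\Xbf)} = (\Fourier_{\Sigma}\overline{F})(-\Xbf)$ are exactly the three ingredients the authors combine. Your closing worry about extending from $\Schwartz(\Pspace)$ to general smooth symbols with bounded quantization is a legitimate subtlety, but the paper's own proof treats the manipulation as a direct (formal) computation valid on the whole class and does not supply the density argument you are sketching either, so you have if anything been more cautious, not less complete.
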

\begin{proof}
	The building blocks of a magnetic pseudodifferential super operator is the action of the Weyl system. Since the magnetic Weyl system $w^A(X) \in \mathcal{B} \bigl (L^2(\R^d) \bigr )$ is bounded and the $p$-Schatten classes are ideals (\cf \eg \cite[Theorem~VI.19~(b) and VI.22~(a)]{Reed_Simon:M_cap_Phi_1:1972} for $p = 1 , 2$), we deduce 
	\begin{align*}
		w^A(X_L) \, \mathfrak{L}^2 \bigl ( \mathcal{B} \bigl ( L^2(\R^d) \bigr ) \bigr ) \, w^A(X_R) = \mathfrak{L}^2 \bigl ( \mathcal{B} \bigl ( L^2(\R^d) \bigr ) \bigr ) 
		. 
	\end{align*}
	A straightforward computation using the cyclicity of the trace shows that we can push the magnetic Weyl systems from the right argument to the left argument, at the expense of inverting $X_L$ and $X_R$: 
	\begin{align*}
		\bscpro{\hat{f}^A \, }{ \, w^A(X_L) \, \hat{g}^A \, w^A(X_R)}_{\mathfrak{L}^2(\mathcal{B}(L^2(\R^d)))} &= \mathrm{Tr}_{L^2(\R^d)} \Bigl ( \hat{f}^{A \, \ast} \, w^A(X_L) \, \hat{g}^A \, w^A(X_R) \Bigr ) 
		\\
		&= \mathrm{Tr}_{L^2(\R^d)} \Bigl ( w^A(X_R) \, \hat{f}^{A \, \ast} \, w^A(X_L) \, \hat{g}^A \Bigr ) 
		\\
		&= \mathrm{Tr}_{L^2(\R^d)} \Bigl ( \bigl ( w^A(-X_L) \, \hat{f}^A \, w^A(-X_R) \bigr )^* \, \hat{g}^A \Bigr ) 
		\\
		&= \bscpro{w^A(-X_L) \, \hat{f}^A \, w^A(-X_R) \, }{ \, \hat{g}^A}_{\mathfrak{L}^2(\mathcal{B}(L^2(\R^d)))}
	\end{align*}
	Once we combine that with 
	\begin{align*}
		\overline{\Fourier_{\Sigma} F(X_L,X_R)} &= \bigl ( \Fourier_{\Sigma} \overline{F} \bigr )(-X_L,-X_R)
		, 
	\end{align*}
	we see that the $-$ signs in the arguments cancel and we obtain the claim. 
\end{proof}
\begin{remark}
	The theorem can be generalized to suitable von Neumann sub alegebras $\mathcal{A} \subset \mathcal{B} \bigl ( L^2(\R^d) \bigr )$ of the bounded operators and f.n.s.\ traces. For instance, we must ensure 
	\begin{align*}
		w^A(X_L) \, \mathfrak{L}^2(\mathcal{A}) \, w^A(X_R) = \mathfrak{L}^2(\mathcal{A}) 
		&&
		\forall X_L , X_R \in \pspace 
		, 
	\end{align*}
	which is no longer obvious as $w^A(X_L)$ need not be an element of $\mathcal{A}$. For the same reason, we must assume that adjoining with the magnetic Weyl system is compatible with the trace, \ie the cyclicity condition 
	\begin{align*}
		\mathcal{T} \bigl ( w^A(X) \, \hat{f}^A \, w^A(X)^{-1} \bigr ) = \mathcal{T}(\hat{f}^A) 
		&&
		\forall X \in \pspace
		. 
	\end{align*}
\end{remark}
%

\subsection{The magnetic semi-super Weyl product} 
\label{rigorous_definition_supercalculus_on_S:semi_super_weyl_product}
Next up is the magnetic semi-super product, which describes the action of a magnetic pseudodifferential \emph{super} operator on a magnetic pseuododifferential operator that belongs to some $\mathfrak{L}^p \bigl ( \mathcal{B}(\Hil) \bigr )$.

\subsubsection{Definition of the product} 
\label{rigorous_definition_supercalculus_on_S:semi_super_weyl_product:definition}
To express the action of the Weyl product in a more explicit fashion, we write operators $\hat{g}^A = \op^A(g)$ we are interested as the magnetic Weyl quantization of a distribution $g \in \Schwartz'(\pspace)$. The semi-super product~\eqref{formal_super_calculus:eqn:definition_semisuper_product} then by definition replicates the operator product on the level of (generalized) functions on phase space. As usual we need to deal with the case where $g \in \Schwartz(\pspace)$ first, though, before we can extend these ideas to more general classes of functions and distributions. 
\begin{proposition}\label{magnetic_super_PsiDOs:prop:semi_super_product_S}
	Suppose Assumption~\ref{intro:assumption:polynomially_bounded_magnetic_field} on the magnetic field $B = \dd A$ and the vector potential $A$ are satisfied. Then the following holds: 
	\begin{enumerate}[(1)]
		\item The map $(F,f) \mapsto F \semisuper^B f$ gives rise to a continuous bilinear map from $\Schwartz(\Pspace) \times \Schwartz(\pspace)$ to $\Schwartz(\pspace)$.
		\item The magnetic semi-super product of two Schwartz functions $F \in \Schwartz(\Pspace)$ and $g \in \Schwartz(\pspace)$ is given in terms of the integral 
		\begin{align}
			F \semisuper^B g(X) = \frac{1}{(2\pi)^{3d}} \int_{\Pspace} \dd \Ybf \int_{\pspace} \dd Z \, &\e^{+ \ii \sigma(X , Y_L + Y_R + Z)} \, \e^{+\ii \frac{\eps}{2} \sigma(Y_L + Z , Y_R + Z)} \, \e^{- \ii \lambda \recfluxp(x,y_L,y_R,z)} \, \cdot \notag \\
			& (\Fourier_{\Sigma} F)(\Ybf) \, (\Fourier_{\sigma} g)(Z) 
			.
			\label{magnetic_super_PsiDOs:eqn:semi_super_product_integral_expression}
		\end{align}
		\item Alternatively, this integral can be recast into the form 
		\begin{align}
			F \semisuper^B g(X) = \frac{1}{(2\pi)^{3d}} \int_{\Pspace} \dd \Ybf \int_{\pspace} \dd Z \, &\e^{+ \ii (z \cdot \zeta + y_L \cdot \eta_L + y_R \cdot \eta_R)} \, \e^{- \ii \lambda \recfluxp(x,y_L,y_R,z)} \, \cdot  \notag \\
			& F \bigl ( x - \tfrac{\eps}{2} (y_R + z) , \xi - \eta_L , x + \tfrac{\eps}{2}(y_L + z) , \xi - \eta_R \bigr ) \, \cdot \notag \\
			& g \bigl ( x + \tfrac{\eps}{2}(y_L - y_R) , \xi - \zeta \bigr ) 
			. 
			\label{magnetic_super_PsiDOs:eqn:semi_super_product_integral_expression_no_Fourier}
		\end{align}
		\item For the special case $F = f_L \otimes f_R$, $f_L , f_R \in \Schwartz(\pspace)$ the semi-super product reduces to 
		\begin{align}
			F \semisuper^B g = f_L \weyl^B g \weyl^B f_R .
			\label{magnetic_super_PsiDOs:eqn:semisuper_product}
		\end{align}
	\end{enumerate}
\end{proposition}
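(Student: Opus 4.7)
The starting point is the defining relation $\op^A(F \semisuper^B g) := \Op^A(F) \, \op^A(g)$ from~\eqref{formal_super_calculus:eqn:definition_semisuper_product}. Substituting the Bochner integral representations from Definition~\ref{magnetic_super_PsiDOs:defn:super_Weyl_quantization} and~\eqref{magnetic_Weyl_calculus:eqn:definition_magnetic_Weyl_quantization}, and invoking Fubini (both Fourier transforms are Schwartz, the Weyl systems are unitary), yields
\begin{align*}
	\Op^A(F) \, \op^A(g) = \frac{1}{(2\pi)^{3d}} \int_{\Pspace} \dd \Ybf \int_{\pspace} \dd Z \, (\Fourier_{\Sigma} F)(\Ybf) \, (\Fourier_{\sigma} g)(Z) \, W^A(\Ybf) \, w^A(Z) .
\end{align*}
I would then use Lemma~\ref{magnetic_super_PsiDOs:lem:composition_super_Weyl_system_Weyl_system} to rewrite the composition $W^A(\Ybf) \, w^A(Z)$ as the scalar symplectic phase $\e^{+\ii \frac{\eps}{2} \sigma(Y_L + Z, Y_R + Z)}$, a product of $Q$-dependent magnetic phases $\cocyp(Q; y_L, z) \, \cocyp(Q; y_L + z, y_R)$, and a single magnetic Weyl system $w^A(Y_L + Y_R + Z)$.

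The subtle point is that the $Q$-dependent phase factors do not commute with $w^A(Y_L + Y_R + Z)$, so the composition is not a scalar multiple of a single magnetic Weyl system. Following the same manipulation on the Fourier side that underlies the explicit formula~\eqref{magnetic_Weyl_calculus:eqn:magnetic_Weyl_product_formula} for $\weyl^B$ (namely, re-expressing left-multiplication by $\cocyp(Q;\,\cdot\,)$ as a convolution in the frequency variable that shifts the position argument of the flux to the Weyl-symmetric midpoint), the two $\cocyp$-factors combine into the single oscillatory exponential $\e^{-\ii \lambda \recfluxp(x, y_L, y_R, z)}$ in accordance with~\eqref{formal_super_calculus:eqn:definition_magnetic_phase_factor_super_product}. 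A change of variables $W = Y_L + Y_R + Z$ together with an application of $\Fourier_{\sigma}^{-1} = \Fourier_{\sigma}$ to $\op^A(F \semisuper^B g) = (2\pi)^{-d} \int_{\pspace} \dd W \, (\Fourier_{\sigma}(F \semisuper^B g))(W) \, w^A(W)$ recovers~\eqref{magnetic_super_PsiDOs:eqn:semi_super_product_integral_expression}. Formula~\eqref{magnetic_super_PsiDOs:eqn:semi_super_product_integral_expression_no_Fourier} then follows by expanding the two Fourier transforms in~\eqref{magnetic_super_PsiDOs:eqn:semi_super_product_integral_expression}, swapping the order of integration by Fubini, and performing a linear change of variables that converts the symplectic phases into translations of the arguments of $F$ and $g$. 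For the factorized case~(4) one can either track how $\recfluxp$ splits in~\eqref{magnetic_super_PsiDOs:eqn:semi_super_product_integral_expression} and recognize two nested copies of~\eqref{magnetic_Weyl_calculus:eqn:magnetic_Weyl_product_formula}, or more cheaply observe that $\Op^A(f_L \otimes f_R) \, \op^A(g) = \op^A(f_L) \, \op^A(g) \, \op^A(f_R) = \op^A(f_L \weyl^B g \weyl^B f_R)$ directly from the definitions, and appeal to the injectivity of $\op^A$ on Schwartz symbols.

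Assertion~(1) is the main obstacle. Under Assumption~\ref{intro:assumption:polynomially_bounded_magnetic_field}, the magnetic phase $\recfluxp(x, y_L, y_R, z)$ is smooth and all its partial derivatives are polynomially bounded in $(x, y_L, y_R, z)$, since it is a sum of two integrals of the polynomially bounded two-form $B$ over triangles whose vertices are affine in these variables. Starting from~\eqref{magnetic_super_PsiDOs:eqn:semi_super_product_integral_expression}, I would integrate by parts in $(Y_L, Y_R, Z)$ against the oscillatory factor $\e^{+\ii \sigma(X, Y_L + Y_R + Z)}$ to convert each polynomial weight in $X = (x, \xi)$ into derivatives of $(\Fourier_{\Sigma} F)$, $(\Fourier_{\sigma} g)$, and the magnetic exponential; conversely, I would differentiate~\eqref{magnetic_super_PsiDOs:eqn:semi_super_product_integral_expression} under the integral in $X$ to bring down powers of $(Y_L + Y_R + Z)$, absorbed in turn by the Schwartz decay of $\Fourier_{\Sigma} F$ and $\Fourier_{\sigma} g$. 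This yields an estimate of the form $p_N(F \semisuper^B g) \leq C_{N', N''} \, p_{N'}(F) \, p_{N''}(g)$ for Schwartz seminorms $p_N$, which simultaneously shows $F \semisuper^B g \in \Schwartz(\pspace)$ and delivers continuity of the bilinear map; bilinearity itself is manifest from~\eqref{magnetic_super_PsiDOs:eqn:semi_super_product_integral_expression}. The technical heart of the argument is bookkeeping the chain rule through $\recfluxp$ and controlling the polynomial factors it produces against the available Schwartz decay.
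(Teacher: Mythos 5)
Your overall outline is correct, and it reaches all four assertions, but the route for part~(1) diverges genuinely from the paper's own argument. The paper proves~(1) by introducing the kernel map $F \mapsto K_F^B$ (Proposition~\ref{magnetic_super_PsiDOs:prop:kernel_map_semisuper_product}) and decomposing it explicitly into seven elementary topological vector space isomorphisms on $\Schwartz(\Pspace)$ (coordinate changes, partial Fourier transforms, and multiplication by invertible $\smoothpoly$-class functions), so that continuity is inherited without ever writing a single seminorm estimate; assertion~(1) then drops out as Corollary~\ref{magnetic_super_PsiDOs:cor:kernel_map_semisuper_product} by composing with the continuous bilinear map $(K,f) \mapsto \Int(K)f$. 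You instead propose a direct oscillatory-integral argument — $L$-type integration by parts in $(\Ybf,Z)$ against the phases, differentiation under the integral in $X$, and bookkeeping of the polynomial growth of $\recfluxp$ — which is closer in spirit to what the paper does for Hörmander symbols in Lemma~\ref{appendix:oscillatory_integrals:lem:semi_super_product_existence_oscillatory_integral}. Both routes work; the paper's kernel-map decomposition buys you the extension by duality to $\Schwartz'$ for free (Proposition~\ref{magnetic_super_PsiDOs:prop:kernel_map_to_symbol_extension_distributions}) and the super Wigner transform as an inverse, while your estimate-based approach is more elementary and avoids needing the kernel map at all at this stage.

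For part~(2) your outline follows the paper's opening step (composition law Lemma~\ref{magnetic_super_PsiDOs:lem:composition_super_Weyl_system_Weyl_system}), but you gloss over the crucial mechanism. The paper does not "re-express left-multiplication by $\cocyp(Q;\,\cdot\,)$ as a convolution in the frequency variable"; instead it computes the distributional kernel of the operator $\hat{L}(y_L,y_R,Z) = \cocyp(Q;\ldots)\,\cocyp(Q;\ldots)\,w^A(Z)$ explicitly and then applies the magnetic Wigner transform~\eqref{magnetic_Weyl_calculus:eqn:magnetic_Wigner_transform}, which is precisely what realizes your heuristic "shift to the Weyl-symmetric midpoint." You would need to actually carry out that kernel/Wigner computation (or an equivalent one) to verify the identity
\[
\e^{- \ii \lambda \recfluxp(x,y_L,y_R,z)} = \cocyp \bigl ( x - \tfrac{\eps}{2}(z+y_L+y_R) ; y_L , z \bigr ) \, \cocyp \bigl ( x - \tfrac{\eps}{2}(z+y_L+y_R) ; z+y_L , y_R \bigr )
\]
that the paper checks against the definitions~\eqref{magnetic_Weyl_calculus:eqn:definition_scaled_magnetic_phase_factor} and~\eqref{formal_super_calculus:eqn:definition_magnetic_phase_factor_super_product}; as written your sketch names the conclusion but does not supply the step. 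Parts~(3) and~(4) are fine: for~(3) your Fourier-expansion and change-of-variables argument matches the paper's, and for~(4) your "cheap" observation that $\Op^A(f_L \otimes f_R)\,\op^A(g) = \op^A(f_L)\,\op^A(g)\,\op^A(f_R) = \op^A(f_L \weyl^B g \weyl^B f_R)$, plus injectivity of $\op^A$ on Schwartz symbols, is shorter than the paper's substitution into~\eqref{magnetic_super_PsiDOs:eqn:proof_semi_super_product_symbol_before_plugging_in_magnetic_phase} and equally sound, since the Bochner integral for $\Op^A(f_L \otimes f_R)$ manifestly factorizes.
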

\begin{proof}
	\begin{enumerate}[(1)]
		\item This is the content of Corollary~\ref{magnetic_super_PsiDOs:cor:kernel_map_semisuper_product}; its proof rests on the so-called kernel map that we will study in Section~\ref{rigorous_definition_supercalculus_on_S:semi_super_weyl_product:integral_kernel_map} below. 
		\item The proof of this result is a mere elaboration of the proofs of~\cite[eq.\ (22)]{Mueller:product_rule_gauge_invariant_Weyl_symbols:1999}, \cite[Proposition~13]{Mantoiu_Purice:magnetic_Weyl_calculus:2004} and~\cite[Theorem~2.2.17]{Lein:progress_magWQ:2010}. Combining with definition of the magnetic super Weyl quantization (Definition~\ref{magnetic_super_PsiDOs:defn:super_Weyl_system}) with the composition law between super and regular Weyl system (Lemma~\ref{magnetic_super_PsiDOs:lem:composition_super_Weyl_system_Weyl_system}), we obtain
		\begin{align*}
			\Op^A(F) \, \op^A(g) &= \frac{1}{(2\pi)^{3d}} \int_{\Pspace} \dd \Ybf \int_{\pspace} \dd Z \, (\Fourier_{\Sigma} F)(\Ybf) \, (\Fourier_{\sigma} g)(Z) \, W^A(\Ybf) \, w^A(Z) 
			\\
			&= \frac{1}{(2\pi)^{3d}} \int_{\Pspace} \dd \Ybf \int_{\pspace} \dd Z \, \e^{+ \ii \frac{\eps}{2}\sigma(Y_L+Z,Y_R+Z)} \, (\Fourier_{\Sigma} F)(\Ybf) \, (\Fourier_{\sigma} g)(Z) \, 
			\cdot \\
			&\qquad\qquad\qquad\quad\quad 
			\cocyp(Q;y_L,z) \, \cocyp(Q;y_L+z,y_R) \, w^A(Z+Y_L+Y_R) \\
			&= \frac{1}{(2\pi)^{3d}} \int_{\Pspace} \dd \Ybf \int_{\pspace} \dd Z \, \e^{+ \ii \frac{\eps}{2}\sigma(Z-Y_R,Z-Y_L)} \, (\Fourier_{\Sigma} F)(\Ybf) \, (\Fourier_{\sigma} g)(Z-Y_L-Y_R) \, 
			\cdot \\
			&\qquad\qquad\qquad\quad\quad 
			\cocyp(Q;y_L,z-y_L-y_R) \, \cocyp(Q;z-y_R,y_R) \, w^A(Z) 
			.
		\end{align*}
		The symbol of this operator can be computed by finding the kernel for the operator depending on the parameters $y_L$, $y_R$ and $Z$,
		\begin{align*}
			\hat{L}(y_L,y_R,Z) := \cocyp(Q;y_L,z-y_L-y_R) \, \cocyp(Q;z-y_R,y_R) \, w^A(Z) ,
		\end{align*}
		and then applying the magnetic Wigner transform~\eqref{magnetic_Weyl_calculus:eqn:magnetic_Wigner_transform} to it. Given $\varphi \in L^2(\R^d)$, we have
		\begin{align*}
			\bigl ( \hat{L} & (y_L,y_R,Z) \, \varphi \bigr )(q) \\
		 	&= \cocyp(\eps q;y_L,z-y_L-y_R) \, \cocyp(\eps q;z-y_R,y_R) \, 
			\e^{- \ii \eps \left ( q+\frac{z}{2}\right )\cdot \zeta} \, \e^{- \ii \frac{\lambda}{\eps}\Gamma^A([\eps q,\eps q+\eps z])} \, \varphi(q+z) 
			\\
			&= \int_{\R^d} \dd q' \, \cocyp(\eps(q'-z);y_L,z-y_L-y_R) \, \cocyp(\eps(q'-z);z-y_R,y_R) 
			\, \cdot \\
			& \quad \quad
			\e^{- \ii \eps \left ( q'-\frac{z}{2}\right )\cdot\zeta} \, \e^{- \ii \frac{\lambda}{\eps}\Gamma^A([\eps(q'-z),\eps q'])} \, \delta \bigl ( q'-(q+z) \bigr ) \, \varphi(q') \\
			&=: \frac{1}{(2\pi)^{d/2}} \int_{\R^d} \dd q' \, K_{\hat{L}}(y_L,y_R,Z;q,q') \, \varphi(q') .
		\end{align*}
		The magnetic Wigner transform~\eqref{magnetic_Weyl_calculus:eqn:magnetic_Wigner_transform} of the kernel $K_{\hat{L}}(y_L,y_R,Z;q,q')$ is
		\begin{align*}
			\wigner^A K_{\hat{L}} &(y_L,y_R,Z;\cdot,\cdot)(X) 
			\\
			&=
			\frac{1}{(2\pi)^{\nicefrac{d}{2}}} \int_{\R^d} \dd q \, \e^{- \ii q\cdot\xi} \, \e^{- \ii \frac{\lambda}{\eps}\Gamma^A\left ( [ x-\frac{\eps}{2}q , x+\frac{\eps}{2}q ] \right )} \, 
			K_{\hat{L}} \bigl ( y_L,y_R,Z ; \tfrac{x}{\eps} + \tfrac{q}{2} , \tfrac{x}{\eps} - \tfrac{q}{2} \bigr )
			\\
			&= \e^{+ \ii \sigma(X,Z)} \, \cocyp \bigl ( x-\tfrac{\eps}{2} z ; y_L , z - y_L - y_R \bigr ) \; \cocyp \bigl ( x-\tfrac{\eps}{2}z ; z-y_R,y_R \bigr ) 
			\\
			&=: L(y_L,y_R,Z;X) .
		\end{align*}
		By plugging this back into the above operator equation of $\Op^A(F) \, \op^A(g)$ we see that the operator $\Op^A(F) \, \op^A(g)$ is the magnetic Weyl quantization of the symbol 
		\begin{align}
			\frac{1}{(2\pi)^{3d}} &\int_{\Pspace} \dd \Ybf \int_{\pspace} \dd Z \, \e^{+ \ii \sigma(X,Z)} \, \e^{+ \ii \frac{\eps}{2}\sigma(Z-Y_R,Z-Y_L)} \, \cocyp \bigl ( x-\tfrac{\eps}{2}z;y_L,z-y_L-y_R \bigr ) \, \cdot
			\notag \\
			&\qquad \qquad
			\cocyp \bigl ( x-\tfrac{\eps}{2}z;z-y_R,y_R \bigr ) \, (\Fourier_{\Sigma} F)(\Ybf) \, (\Fourier_{\sigma} g)(Z-Y_L-Y_R) \notag \\
			&= \frac{1}{(2\pi)^{3d}} \int_{\Pspace} \dd \Ybf \int_{\pspace} \dd Z \, \e^{+ \ii \sigma(X,Y_L+Y_R+Z)} \, \e^{+ \ii \frac{\eps}{2}\sigma(Y_L+Z,Y_R+Z)} \, \cdot
			\notag \\
			&\qquad \qquad
			\cocyp \bigl ( x-\tfrac{\eps}{2}(z+y_L+y_R);y_L,z \bigr ) \, \cdot
			\notag \\
			&\qquad \qquad
			\cocyp \bigl ( x-\tfrac{\eps}{2}(z+y_L+y_R);z+y_L,y_R \bigr ) \, (\Fourier_{\Sigma} F)(\Ybf) \, (\Fourier_{\sigma} g)(Z) 
			.
			\label{magnetic_super_PsiDOs:eqn:proof_semi_super_product_symbol_before_plugging_in_magnetic_phase}
		\end{align}
		It follows from~\eqref{magnetic_Weyl_calculus:eqn:definition_scaled_magnetic_phase_factor} and~\eqref{formal_super_calculus:eqn:definition_magnetic_phase_factor_super_product} that we have
		\begin{align*}
			\e^{- \ii \lambda \recfluxp(x,y_L,y_R,z)} = \cocyp \bigl ( x-\tfrac{\eps}{2}(z+y_L+y_R);y_L,z \bigr ) \, \cocyp \bigl ( x-\tfrac{\eps}{2}(z+y_L+y_R);z+y_L,y_R \bigr ) .
		\end{align*}
		By using this we see that the symbol~\eqref{magnetic_super_PsiDOs:eqn:proof_semi_super_product_symbol_before_plugging_in_magnetic_phase} agrees with $F\semisuper^B g$ given in~\eqref{magnetic_super_PsiDOs:eqn:semi_super_product_integral_expression}, and hence we have $\Op^A(F)\op^A(g) = \op^A(F\semisuper^B g)$. 
		\item By writing the symplectic Fourier transforms in~\eqref{magnetic_super_PsiDOs:eqn:semi_super_product_integral_expression} explicitly and sorting all terms containing $\zeta$, $\eta_L$ and $\eta_R$, we see that $F \semisuper^B g(X)$ is equal to
		\begin{align*}
			\frac{1}{(2\pi)^{6d}} \int_{\Pspace} &\dd \mathbf{Y} \int_{\Pspace} \dd \mathbf{Y}' \int_{\pspace} \dd Z \int_{\pspace} \dd Z' 
			\e^{- \ii (x - z' + \frac{\eps}{2} (y_L - y_R)) \cdot \zeta} \, \e^{- \ii (x - y'_L - \frac{\eps}{2} (y_R + z)) \cdot \eta_L} \, 
			\cdot \\
			& \e^{- \ii (x - y'_R + \frac{\eps}{2} (y_L + z)) \cdot \eta_R} \, \e^{+ \ii (\xi-\zeta') \cdot z} \, \e^{+ \ii (\xi - \eta'_L) \cdot y_L} \, \e^{+ \ii (\xi - \eta'_R) \cdot y_R} \, \e^{- \ii \lambda \recfluxp(x,y_L,y_R,z)} \, F(\Ybf') \, g(Z') 
			.
		\end{align*}
		Integrating over $\zeta$, $\eta_L$ and $\eta_R$ and then integrating over $z'$, $y'_L$ and $y'_R$ yields
		\begin{align*}
			F &\semisuper^B g(X) 
			\\
			&= \frac{1}{(2\pi)^{3d}} \int_{\pspace} \dd y_L \, \dd \eta'_L \int_{\pspace} \dd y_R \, \dd \eta'_R \int_{\pspace} \dd z \, \dd \zeta' \, 
			\e^{+ \ii (\xi-\zeta') \cdot z} \, \e^{+ \ii (\xi-\eta'_L) \cdot y_L} \, \e^{+ \ii (\xi-\eta'_R) \cdot y_R} \, 
			\cdot \\
			&\qquad \e^{- \ii \lambda \recfluxp(x,y_L,y_R,z)} \, F \bigl ( x - \tfrac{\eps}{2} (y_R + z) , \eta'_L , x + \tfrac{\eps}{2} (y_L + z) , \eta'_R \bigr ) \, g \bigl ( x + \tfrac{\eps}{2} (y_L-y_R) , \zeta' \bigr ) 
			.
		\end{align*}
		Making the change of variables $\zeta = \xi - \zeta'$, $\eta_L = \xi - \eta'_L$ and $\eta_R = \xi - \eta'_R$, we get~\eqref{magnetic_super_PsiDOs:eqn:semi_super_product_integral_expression_no_Fourier}. 
		\item Once we plug $F = f_L \otimes f_R$ into equation~\eqref{magnetic_super_PsiDOs:eqn:proof_semi_super_product_symbol_before_plugging_in_magnetic_phase}, we get 
		\begin{align*}
			[ (f_L &\otimes f_R) \semisuper^B g ](X)
			\\
			&= \frac{1}{(2\pi)^{3d}} \int_{\Pspace} \dd \Ybf \int_{\pspace} \dd Z \, \e^{+ \ii \sigma(X,Y_L+Y_R+Z)} \, \e^{+ \ii \frac{\eps}{2}\sigma(Y_L+Z,Y_R+Z)} \, 
			\cdot \notag \\
			&\qquad \qquad
			\cocyp \bigl ( x-\tfrac{\eps}{2}(z+y_L+y_R);y_L,z \bigr ) \, \cocyp \bigl ( x-\tfrac{\eps}{2}(z+y_L+y_R);z+y_L,y_R \bigr ) \, 
			\cdot \notag \\
			&\qquad \qquad
			(\Fourier_{\sigma} f_L)(Y_L) \, (\Fourier_{\sigma} f_R)(Y_R) \, (\Fourier_{\sigma} g)(Z)
		\end{align*}
		And comparing that to the double magnetic Weyl product $f_L \weyl^B g \weyl^B f_R$, we see that the two integrals agree. 
	\end{enumerate}
\end{proof}
%

\subsubsection{The integral kernel map of the magnetic semi-super Weyl product} 
\label{rigorous_definition_supercalculus_on_S:semi_super_weyl_product:integral_kernel_map}
When extending ordinary magnetic Weyl calculus from Section~\ref{magnetic_Weyl_calculus} by duality, the so-called kernel map was tremendously useful. We shall make use of a slightly different kernel map here, namely $F \mapsto K_F^B$ that arises from 
\begin{align}
	\Int(K_F^B) \, g := F \semisuper^B g
	. 
	\label{rigorous_definition_supercalculus_on_S:eqn:definition_kernel_map_semi_super_product}
\end{align}
The integral map is defined in the obvious way, 
\begin{align}
	\bigl ( \Int(K) \, g \bigr )(X) := \int_{\pspace} \dd Y \, K(X,Y) \, g(Y) 
	, 
	&&
	g \in \Schwartz(\pspace)
	. 
	\label{rigorous_definition_supercalculus_on_S:eqn:definition_integral_map}
\end{align}
From equations~\eqref{magnetic_super_PsiDOs:eqn:semi_super_product_integral_expression} or \eqref{magnetic_super_PsiDOs:eqn:semi_super_product_integral_expression_no_Fourier} we have just proven we can not only infer the form of $K_F^B$, but that $F \mapsto K_F^B$ continuously maps Schwartz functions onto Schwartz functions. Before we proceed to prove these facts, we will first provide some additional context and show in what ways the kernel map is useful. 

Once we relate integral and kernel map on $\Schwartz$ to the magnetic super Weyl quantization 
\begin{align}
	\Op^A(F) \, \op^A(g) &= \op^A \bigl ( F \semisuper^B g \bigr )
	= \op^A \bigl ( \Int(K_F^B) \, g \bigr )
	\; \; \Longleftrightarrow \; \; 
	\Op^A(F) = \op^A \; \Int(K_F^B) \; {\op^A}^{-1} 
	,
	\label{magnetic_super_PsiDOs:eqn:writing_super_PsiDO_with_kernel_map}
\end{align}
their utility becomes immediately obvious: not only does it provides us with a way to dequantize magnetic Weyl super operators. But the duality bracket on $\pspace$ then gives a natural path to extend the semi-super product — and therefore magnetic super Weyl quantization — to larger classes of functions and distributions than $\Schwartz(\Pspace)$ and $\Schwartz(\pspace)$, respectively. Simply rewriting 
\begin{align*}
	\bigl ( F \semisuper^B g \, , \, h \bigr )_{\Schwartz(\pspace)} = \bigl ( \Int(K_F^B) g \, , \, h \bigr )_{\Schwartz(\pspace)} 
	= \bigl ( K_F^B \, , \, h \otimes g \bigr )_{\Schwartz(\Pspace)}
\end{align*}
explains why we may study $K_F^B$ rather than $g \mapsto F \semisuper^B g$. 

But we now have the added advantage that we may tap into the standard theory. For example, the Schwartz Kernel Theorem (\cf \cite[Section~51]{Treves:topological_vector_spaces:1967}) tells us that $\Int$ induces a topological vector space isomorphism from $\Schwartz'(\Pspace)$ to $\bounded \bigl ( \Schwartz(\pspace) , \Schwartz'(\pspace) \bigr )$. Combining this with the arguments from the proof of Proposition~\ref{magnetic_super_PsiDOs:prop:kernel_map_to_symbol_extension_distributions} below shows that, given a tempered distribution $F \in \Schwartz'(\Pspace)$, we obtain a continuous linear operator $\Int(K_F^B) : \Schwartz(\pspace) \longrightarrow \Schwartz'(\pspace)$. 

The starting point for the extension of $\semisuper^B$ is to make sure that kernels of Schwartz functions are also Schwartz class. 
\begin{proposition}\label{magnetic_super_PsiDOs:prop:kernel_map_semisuper_product}
	The explicit action of the map $F \mapsto K_F^B$ defined by the relation~\eqref{rigorous_definition_supercalculus_on_S:eqn:definition_kernel_map_semi_super_product} is
	\begin{align}
		K_F^B(X,Z) &= \frac{\e^{+ \ii \frac{2}{\eps} (x \cdot \xi - z \cdot \zeta)}}{(\pi \eps)^{3d}} \int_{\Pspace} \dd \Ybf \, \e^{+ \ii \frac{2}{\eps} y_R \cdot (\zeta - \xi)} \, \e^{+ \ii \frac{2}{\eps} y_L \cdot (\xi + \zeta)} \, \e^{-\ii\frac{2}{\eps}y_L\cdot\eta_L} \, \e^{-\ii\frac{2}{\eps}(y_L-z+x)\cdot\eta_R} \, 
		\cdot \notag \\
		&\qquad \qquad \,
		\e^{- \ii \lambda \recfluxp \left ( x , \frac{2}{\eps} y_L , \frac{2}{\eps} (x + y_L - z) , \frac{2}{\eps} (z - y_L - y_R) \right )} \, F(y_R , \eta_L , x + z - y_R , \eta_R) 
		,
		\label{magnetic_super_PsiDOs:eqn:kernel_semisuper_product_operator}
	\end{align}
	and gives rise to a topological vector space isomorphism from $\Schwartz(\Pspace)$ to itself. 
\end{proposition}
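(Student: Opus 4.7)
The plan is to derive the explicit formula~\eqref{magnetic_super_PsiDOs:eqn:kernel_semisuper_product_operator} by a change of variables in the integral representation~\eqref{magnetic_super_PsiDOs:eqn:semi_super_product_integral_expression_no_Fourier} of the semi-super product, and then to verify continuity and invertibility on $\Schwartz(\Pspace)$ by standard oscillatory integral techniques. The defining relation $\Int(K_F^B)\, g = F \semisuper^B g$ combined with~\eqref{rigorous_definition_supercalculus_on_S:eqn:definition_integral_map} forces $K_F^B(X, Z)$ to be the coefficient of $g(Z)$ once the argument of $g$ in~\eqref{magnetic_super_PsiDOs:eqn:semi_super_product_integral_expression_no_Fourier} is promoted to a free integration variable.

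Write the original integration variables of~\eqref{magnetic_super_PsiDOs:eqn:semi_super_product_integral_expression_no_Fourier} with bars. Make the substitution $z = x + \tfrac{\eps}{2}(\bar{y}_L - \bar{y}_R)$ and $\zeta = \xi - \bar{\zeta}$ so that $(z,\zeta)$ becomes the second kernel argument, and introduce the new integration variables $(y_L, y_R, \eta_L, \eta_R)$ via $\bar{y}_L = \tfrac{2}{\eps} y_L$, $\bar{y}_R = \tfrac{2}{\eps}(x + y_L - z)$, $\bar{z} = \tfrac{2}{\eps}(z - y_L - y_R)$, $\bar{\eta}_L = \xi - \eta_L$, $\bar{\eta}_R = \xi - \eta_R$. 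The Jacobian of the position substitution is $(2/\eps)^{3d}$ in absolute value, which combined with the original prefactor $1/(2\pi)^{3d}$ gives exactly $1/(\pi\eps)^{3d}$. Expanding the original phase $\bar{z}\bar{\zeta} + \bar{y}_L\bar{\eta}_L + \bar{y}_R\bar{\eta}_R$ in the new variables, pulling out the terms independent of the integration variables as $\e^{+\ii\frac{2}{\eps}(x\cdot\xi - z\cdot\zeta)}$, and using the rescalings to recognize the transformed magnetic phase as $\e^{-\ii\lambda\recfluxp(x,\tfrac{2}{\eps}y_L,\tfrac{2}{\eps}(x+y_L-z),\tfrac{2}{\eps}(z-y_L-y_R))}$ yields precisely~\eqref{magnetic_super_PsiDOs:eqn:kernel_semisuper_product_operator}; a direct check confirms that the four arguments of $F$ in the new variables simplify to $(y_R, \eta_L, x+z-y_R, \eta_R)$.

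For Schwartz continuity, note that~\eqref{magnetic_super_PsiDOs:eqn:kernel_semisuper_product_operator} is not absolutely convergent in $y_L$ (since $F$ does not depend on $y_L$) and must be interpreted as an oscillatory integral along the lines of Appendix~\ref{appendix:oscillatory_integrals}. Under Assumption~\ref{intro:assumption:polynomially_bounded_magnetic_field}, the magnetic phase $\e^{-\ii\lambda\recfluxp}$ is smooth with polynomially bounded derivatives in $x$. Integration by parts against each of the oscillating factors $\e^{\pm \ii \frac{2}{\eps}(\cdot)}$ in $y_L$, $y_R$, $\eta_L$ and $\eta_R$ converts polynomial weights in $(X, Z)$ into a finite sum of terms involving derivatives of $F$ and of $\recfluxp$; the Schwartz decay of $F$ dominates the polynomial growth produced by the magnetic phase. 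Derivatives in $X$ and $Z$ can be brought under the integral and handled the same way. This yields seminorm bounds of $K_F^B$ in $\Schwartz(\Pspace)$ by finitely many Schwartz seminorms of $F$, establishing continuity. Invertibility then follows because the underlying substitution is an affine linear diffeomorphism of $\R^{6d}$ with nondegenerate Jacobian, so reversing it and Fourier-inverting produces an explicit inverse $K \mapsto F$ given by an analogous oscillatory integral, continuous by the same estimates.

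The main technical obstacle is the bookkeeping of the magnetic phase. Each integration by parts generates derivatives of $\recfluxp$ which grow polynomially in $x$ under Assumption~\ref{intro:assumption:polynomially_bounded_magnetic_field}, and one must verify at every step that these growths are absorbed by the Schwartz decay of $F$ and its derivatives in the remaining integration variables. Once this combinatorial mess is organized — tracking which argument of $\recfluxp$ and $F$ is being differentiated at each step — the seminorm estimates close up cleanly and the claim follows.
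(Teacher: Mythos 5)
Your derivation of the explicit formula~\eqref{magnetic_super_PsiDOs:eqn:kernel_semisuper_product_operator} by change of variables in~\eqref{magnetic_super_PsiDOs:eqn:semi_super_product_integral_expression_no_Fourier} is precisely the change of variables used in the paper (equation~\eqref{rigorous_definition_supercalculus_on_S:eqn:change-of-variable-for-deriving-kernel}), and the bookkeeping of phases and the Jacobian is correct. Where your proof diverges is the isomorphism part, and there your route is genuinely different from the paper's.

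The paper proves that $F \mapsto K_F^B$ is a topological vector space isomorphism of $\Schwartz(\Pspace)$ by first writing an auxiliary function $G$ with $G(x+z,\xi+\zeta,-x+z,-\xi+\zeta) = K_F^B(X,Z)$, and then factoring the map $F \mapsto G$ into seven elementary steps: three compositions with invertible affine linear maps, two partial Fourier transforms, and two multiplications by invertible $\smoothpoly$ functions (the magnetic flux phase $\e^{-\ii\lambda\recfluxp}$ and a bilinear phase prefactor). Each piece is automatically a $\Schwartz$-isomorphism, so continuity and bijectivity are immediate, and the decomposition also hands over the explicit inverse that becomes the magnetic super Wigner transform in Proposition~\ref{magnetic_super_PsiDOs:prop:super_Wigner_transform_Schwartz_functions}. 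Your approach instead attacks the Schwartz estimates head-on with $L$-operator integration by parts. This can be made to work, but it is substantially more delicate than your sketch acknowledges: the phases depend on combinations such as $y_L - z + x$, $\zeta - \xi$, $\xi + \zeta - \eta_L - \eta_R$, and the third slot of $F$ is $x + z - y_R$, so no single integration by parts yields decay in $x$, $z$, $\xi$ or $\zeta$ individually. One must first extract decay in $|x+z|$ and $|x-z|$ (respectively $|\xi+\zeta|$ and $|\xi-\zeta|$) from two different mechanisms and then combine them, and then repeat this after taking derivatives in $(X,Z)$, which additionally hit the prefactor $\e^{+\ii\frac{2}{\eps}(x\cdot\xi-z\cdot\zeta)}$ producing factors of $\xi, x, \zeta, z$ that need to be absorbed. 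That is all doable, but ``the seminorm estimates close up cleanly'' hides a nontrivial amount of combinatorics. Your invertibility argument (``reversing the substitution and Fourier-inverting'') is really the kernel of the paper's decomposition idea stated without the precision needed to serve as a proof, and it would profit from being made explicit.

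There is also a minor logical wrinkle in your presentation that the paper carefully avoids: extracting $K_F^B(X,Z)$ as the coefficient of $g(Z)$ in~\eqref{magnetic_super_PsiDOs:eqn:semi_super_product_integral_expression_no_Fourier} by Fubini is not automatic, since — as you yourself note — the inner $\Ybf$-integral is only an oscillatory integral. The paper establishes $K_F^B \in \Schwartz(\Pspace)$ \emph{first} (via the decomposition) and only then verifies $\Int(K_F^B)\,g = F \semisuper^B g$ by Fubini in the now absolutely convergent double integral. Your ordering — derive the formula by changing variables, then prove it is Schwartz — is fixable, but as written it uses the Schwartz property implicitly before having proved it.
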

A quick corollary is the proof of Proposition~\ref{magnetic_super_PsiDOs:prop:semi_super_product_S}~(1) that we still owed to the readers, to the above we just need to add the observation that $\Schwartz(\Pspace) \times \Schwartz(\pspace) \ni (K,f) \mapsto \Int(K) \, f \in \Schwartz(\pspace)$ is a bilinear, continuous map. 
\begin{corollary}\label{magnetic_super_PsiDOs:cor:kernel_map_semisuper_product}
	The semisuper product $\semisuper^B : \Schwartz(\Pspace) \times \Schwartz(\pspace) \longrightarrow \Schwartz(\pspace)$ is a bilinear, continuous map. 
\end{corollary}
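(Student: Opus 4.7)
The bilinearity is immediate from the definition of $\semisuper^B$ via the integral expression~\eqref{magnetic_super_PsiDOs:eqn:semi_super_product_integral_expression}. The plan is to factor the semi-super product as the composition
\begin{align*}
	\semisuper^B : \Schwartz(\Pspace) \times \Schwartz(\pspace)
	\xrightarrow{\; (F,g) \mapsto (K_F^B,g) \;} \Schwartz(\Pspace) \times \Schwartz(\pspace)
	\xrightarrow{\; (K,g) \mapsto \Int(K) g \;} \Schwartz(\pspace)
\end{align*}
and verify that each factor is (jointly) continuous. The first factor is continuous since Proposition~\ref{magnetic_super_PsiDOs:prop:kernel_map_semisuper_product} establishes that $F \mapsto K_F^B$ is a topological vector space isomorphism of $\Schwartz(\Pspace)$ onto itself, and the identity on $\Schwartz(\pspace)$ is trivially continuous. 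It thus suffices to prove that
\begin{align*}
	\Int : \Schwartz(\Pspace) \times \Schwartz(\pspace) \longrightarrow \Schwartz(\pspace)
	,
	&&
	(K,g) \longmapsto \Bigl ( X \mapsto \textstyle \int_{\pspace} K(X,Y) \, g(Y) \, \dd Y \Bigr )
\end{align*}
is a bilinear continuous map.

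For the continuity, I would estimate seminorms in the standard way. Fix a multi-index pair $(\alpha,\beta) \in \N_0^{2d} \times \N_0^{2d}$ and compute
\begin{align*}
	X^{\alpha} \, \partial_X^{\beta} \bigl ( \Int(K) g \bigr )(X) = \int_{\pspace} \dd Y \, X^{\alpha} \, (\partial_X^{\beta} K)(X,Y) \, g(Y)
	.
\end{align*}
Differentiation under the integral sign is justified because $K \in \Schwartz(\Pspace)$ and $g \in \Schwartz(\pspace)$ decay faster than any polynomial. Taking absolute values, using $\sabs{g(Y)} \leq \snorm{g}_{0,2d+1} \, \sexpval{Y}^{-(2d+1)}$ and the Schwartz decay of $K$ in the form
\begin{align*}
	\sabs{X^{\alpha} (\partial_X^{\beta} K)(X,Y)} \leq C_{\alpha,\beta,N} \, \snorm{K}_{\alpha,\beta,N} \, \sexpval{X}^{-N}
\end{align*}
for any $N$, and integrating the $Y$ variable gives the seminorm bound
\begin{align*}
	\snorm{\Int(K) g}_{\alpha,\beta} \leq C_{\alpha,\beta} \, \snorm{K}_{\alpha,\beta,0} \cdot \snorm{g}_{0,2d+1}
	.
\end{align*}
This joint continuity estimate, together with bilinearity, concludes the argument.

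The only non-obvious point is the joint continuity of the bilinear map $\Int$; on a Fréchet space, however, separate continuity in each variable (which is immediate from the above bound when one argument is held fixed) already implies joint continuity, so no additional finesse is required. Composing with the continuous linear isomorphism $F \mapsto K_F^B$ yields the claim.
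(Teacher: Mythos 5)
Your proof is correct and follows exactly the route the paper sketches: factor $\semisuper^B$ through the kernel map $F \mapsto K_F^B$ of Proposition~\ref{magnetic_super_PsiDOs:prop:kernel_map_semisuper_product} and then observe that $(K,g) \mapsto \Int(K)\,g$ is a continuous bilinear map $\Schwartz(\Pspace) \times \Schwartz(\pspace) \to \Schwartz(\pspace)$. The only difference is that you spell out the seminorm estimate for $\Int$ (and add the — correct but here redundant, since your estimate already gives joint continuity — remark that separate continuity of a bilinear map on Fréchet spaces implies joint continuity), whereas the paper leaves that step as an unproven observation.
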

\begin{proof}[Proposition~\ref{magnetic_super_PsiDOs:prop:kernel_map_semisuper_product}]
	First, let us begin by proving that $F \mapsto K_F^B$ is a topological vector space isomorphism on $\Schwartz(\Pspace)$. And then in a second step we will connect the integral kernel $K_F^B$ to the semi-super product via $F \semisuper^B g = \Int(K_F^B) \, g$. 
	
	Let $F \in \Schwartz(\Pspace)$ be a Schwartz function, and set
	\begin{align}
	    G(\Zbf) &= \frac{\e^{- \frac{\ii}{\eps} (z_L \cdot \zeta_R + z_R \cdot \zeta_L)}}{(\pi\eps)^{3d}} \int_{\Pspace} \dd \Ybf \, \e^{+ \ii \frac{2}{\eps} \zeta_L \cdot y_L} \, \e^{+\ii\frac{2}{\eps}\zeta_R\cdot y_R} \, \e^{- \ii \frac{2}{\eps} y_L \cdot \eta_L} \, \e^{-\ii \frac{2}{\eps} (y_L - z_R) \cdot \eta_R} \, 
		\cdot \notag \\
		& \qquad \qquad \e^{-\ii \lambda \recfluxp \left ( \frac{1}{2} (z_L-z_R) , \frac{2}{\eps} y_L , \frac{2}{\eps} (y_L-z_R) , \frac{1}{\eps} (z_L+z_R-2y_L-2y_R) \right )} \, F(y_R , \eta_L , z_L - y_R , \eta_R) 
		.
		\label{eq:supercalculus.auxiliary-function-for-deriving-kernel}
	\end{align}
	Note that we have $G(x+z , \xi + \zeta , -x + z , -\xi + \zeta) = K_F^B(x,\xi,z,\zeta)$. Using this we can see that the map $F\mapsto K_F^B$ assigning the kernel to a symbol $F$ can be decomposed into the following seven topological vector space isomorphisms on the space of Schwartz functions.
	\begin{enumerate}[(1)]
		\item Compose $F$ with the invertible linear map $\Pspace \ni (y_R,\eta_L,z_L,\eta_R) \mapsto (y_R , \eta_L , z_L - y_R , \eta_R) \in \Pspace$.
		\item Take the partial Fourier transform of this function with respect to $\eta_L$ and $\eta_R$. Then we get the function,
		\begin{align*}
			(y_R,p,z_L,q) \longmapsto \frac{1}{(2\pi)^d} \int \dd \eta_L \dd \eta_R \, \e^{-\ii p\cdot\eta_L} \, \e^{-\ii q\cdot\eta_R} \, F(y_R,\eta_L,z_L-y_R,\eta_R) .
		\end{align*}
		\item Compose this function with the invertible linear map 
		\begin{align*}
			(z_L,y_L,z_R,y_R) \mapsto \bigl ( y_R , \tfrac{2}{\eps} y_L , z_L , \tfrac{2}{\eps}(y_L-z_R) \bigr )
			. 
		\end{align*}
		This yields the function,
		\begin{align*}
			(z_L,y_L,z_R,y_R) \longmapsto \frac{1}{(2\pi)^d} \int \dd \eta_L \dd \eta_R \, \e^{- \ii \frac{2}{\eps} y_L \cdot \eta_L} \, \e^{- \ii \frac{2}{\eps} (y_L - z_R) \cdot \eta_R} \, F(y_R , \eta_L , z_L - y_R , \eta_R) 
			.
		\end{align*}
		\item Multiply the magnetic flux factor,
		\begin{align}
			\e^{- \ii \lambda \recfluxp \left ( \frac{1}{2} (z_L - z_R) , \frac{2}{\eps} y_L , \frac{2}{\eps}(y_L - z_R) , \frac{1}{\eps}(z_L + z_R - 2y_L - 2y_R) \right )} 
			,
			\label{eq:supercalculus.magnetic-flux-factor}
		\end{align}
		which is given in~\eqref{eq:supercalculus.auxiliary-function-for-deriving-kernel}. Then we obtain
		\begin{align*}
			(z_L,y_L,z_R,y_R) \longmapsto \frac{1}{(2\pi)^d} \int \dd \eta_L \, \dd \eta_R & \; \e^{- \ii \frac{2}{\eps} y_L \cdot \eta_L} \, \e^{- \ii \frac{2}{\eps} (y_L - z_R) \cdot \eta_R} \, 
			\cdot \\
			& \e^{- \ii \lambda \recfluxp \left ( \frac{1}{2} (z_L - z_R) , \frac{2}{\eps} y_L , \frac{2}{\eps} (y_L - z_R) , \frac{1}{\eps} (z_L + z_R - 2y_L - 2y_R) \right )} \, 
			\cdot \\
			& F(y_R , \eta_L , z_L - y_R , \eta_R) 
			.
		\end{align*}
		\item Take the partial Fourier transform with respect to $y_L$ and $y_R$. Then we get the function,
		\begin{align*}
			(z_L,\zeta_L,z_R,\zeta_R) \longmapsto \frac{1}{(2\pi)^{2d}} \int \dd \Ybf \; &\e^{+ \ii \frac{2}{\eps} \zeta_L \cdot y_L} \, \e^{+ \ii \frac{2}{\eps} \zeta_R \cdot y_R} \, \e^{- \ii \frac{2}{\eps} y_L \cdot \eta_L} \, \e^{- \ii \frac{2}{\eps} (y_L - z_R) \cdot \eta_R} \, 
			\cdot \\
			& \e^{- \ii \lambda \recfluxp \left ( \frac{1}{2} (z_L - z_R) , \frac{2}{\eps} y_L , \frac{2}{\eps} (y_L - z_R) , \frac{1}{\eps} (z_L + z_R - 2y_L - 2y_R) \right )} \, 
			\cdot \\
			& F(y_R,\eta_L,z_L-y_R,\eta_R) .
		\end{align*}
		\item Multiply the function $\frac{2^{2d}}{\pi^d\eps^{3d}}\e^{-\frac{\ii}{\eps}(z_L\cdot\zeta_R+z_R\cdot\zeta_L)}$. We then obtain the function $G(\Zbf)$ in~\eqref{eq:supercalculus.auxiliary-function-for-deriving-kernel}.
		\item Compose $G$ with the invertible linear map $(x,\xi,z,\zeta)\mapsto(x+z,\xi+\zeta,-x+z,-\xi+\zeta)$. This yields the kernel $K_F^B$ given in~\eqref{magnetic_super_PsiDOs:eqn:kernel_semisuper_product_operator}.
	\end{enumerate}
	We know by Lemma~\ref{appendix:oscillatory_integrals:lem:magnetic_phase_factor_estimate} that the magnetic flux factor~\eqref{eq:supercalculus.magnetic-flux-factor} is an invertible function in $\smoothpoly(\R^{4d})$. Moreover, the function $\frac{2^{2d}}{\pi^d\eps^{3d}} \, \e^{-\frac{\ii}{\eps} (z_L \cdot \zeta_R + z_R \cdot \zeta_L)}$ is also invertible and belongs to $\smoothpoly(\Pspace)$. As the multiplication of an invertible $\smoothpoly$-class function induces a topological vector space isomorphism on the space of Schwartz functions, we see that the steps~(4) and~(6) induce topological vector space isomorphisms on the space of Schwartz functions. Since the partial Fourier transform gives rise to a topological vector space isomorphism on the space of Schwartz functions, we also see that the steps~(2) and~(5) induce topological vector space isomorphisms on the space of Schwartz functions. Furthermore, the coordinate transformation steps~(1), (3) and~(7) induce topological vector space isomorphisms on the space of Schwartz functions as well. All this shows that the map $F \mapsto K_F^B$ decomposes into isomorphisms on the space of Schwartz functions, and so it gives rise to a topological vector space isomorphism from $\Schwartz(\Pspace)$ to itself. 
	\medskip
	
	\noindent
	It remains to prove the connection $\Int(K_F^B) \, g = F \semisuper^B g$ between integral kernel map and semi-super product $\semisuper^B$. So let $F \in \Schwartz(\Pspace)$ and $g \in \Schwartz(\pspace)$ be Schwartz functions. Starting from~\eqref{magnetic_super_PsiDOs:eqn:semi_super_product_integral_expression_no_Fourier}, we can easily extract $K_F^B$ after a change of variables,
	\begin{equation}
	    ( Y_L, Y_R , Z ) \longmapsto \left( \tfrac{\eps}{2}y_L, \xi-\eta_L, x-\tfrac{\eps}{2}(z+y_R), \xi-\eta_R, x+\tfrac{\eps}{2}(y_L-y_R), \xi-\zeta \right)
		, 
		\label{rigorous_definition_supercalculus_on_S:eqn:change-of-variable-for-deriving-kernel}
	\end{equation}
	and see that the result agrees with \eqref{magnetic_super_PsiDOs:eqn:kernel_semisuper_product_operator}. The proof is complete. 
\end{proof}
%

\subsection{Magnetic super Wigner transform} 
\label{rigorous_definition_supercalculus_on_S:super_wigner_transform}
The introduction of the semi-super product and the kernel map $F \mapsto K_F^B$ pave the way for the analog of the Wigner transform. Recall that the map $F\mapsto K_F^B$~(\ref{magnetic_super_PsiDOs:eqn:kernel_semisuper_product_operator}) gives rise to a topological vector space isomorphism from $\Schwartz(\Pspace)$ to itself (\cf Proposition~\ref{magnetic_super_PsiDOs:prop:kernel_map_semisuper_product}).

\begin{definition}
We define the magnetic super Wigner transform $\Schwartz(\Pspace)\ni K\mapsto\Wigner^B \, K\in\Schwartz(\Pspace)$ as the inverse of the kernel map $F\mapsto K_F^B$ given in~(\ref{magnetic_super_PsiDOs:eqn:kernel_semisuper_product_operator}).
\end{definition}

As the magnetic super Wigner transform is the \emph{inverse of the kernel map} $F \mapsto K_F^B$, it conceptually does the same thing as the usual magnetic Wigner transform~\eqref{magnetic_Weyl_calculus:eqn:magnetic_Wigner_transform}. Thereby it provides a dequantization of superoperators, at least assuming for now that the super operator $\hat{F}^A$ has a Schwartz class operator kernel; we shall extend this relation later on in Section~\ref{super_calculus_extension_by_duality:super_weyl_quantization} to operators with distributional kernels.
\begin{proposition}\label{magnetic_super_PsiDOs:prop:super_Wigner_transform_Schwartz_functions}
	\begin{enumerate}[(1)]
		\item $\Wigner^B : \Schwartz(\Pspace) \longrightarrow \Schwartz(\Pspace)$ is the topological vector space isomorphism given by 
\begin{align}
	\Wigner^B \, K(\Xbf) &= \frac{\eps^{3d}}{2^{4d}\pi^d} \int_{\Pspace} \dd \Ybf \, \e^{+ \ii \xi_L \cdot y_L} \, \e^{+ \ii \xi_R \cdot y_R} \, \e^{+ \frac{\ii}{2} \left ( (-x_L + x_R) \cdot \eta_R - \frac{\eps}{2} (y_L + y_R) \cdot \eta_L \right )} \, 
	\cdot \notag \\
	&\qquad \qquad \qquad \quad 
	\e^{+ \ii \lambda \recfluxp \left ( \frac{1}{2} (x_L+x_R) - \frac{\eps}{4} (y_L - y_R) , y_L , y_R , \frac{1}{\eps}(x_R-x_L) - \frac{1}{2} (y_L + y_R) \right )} \, 
	\cdot \notag \\
	&\qquad \qquad \qquad \quad 
	K \bigl ( \tfrac{1}{2} (x_L + x_R) - \tfrac{\eps}{4} (y_L - y_R) , \tfrac{\eps}{4} (\eta_L - \eta_R) , 
	\bigr . \notag \\
	&\qquad \qquad \qquad \qquad \; \bigl . 
	\tfrac{1}{2}(x_L + x_R) + \tfrac{\eps}{4} (y_L - y_R) , \tfrac{\eps}{4} (\eta_L + \eta_R) \bigr )
	.
	\label{magnetic_super_PsiDOs:eqn:super_Wigner_transform}
\end{align}
		\item $\Wigner^B$ provides a way to dequantize super operators, namely 
		\begin{align*}
			\Op^A \bigl ( \Wigner^B K_{\hat{F}^A} \bigr ) = \hat{F}^A
			\; \; \Longleftrightarrow \; \; 
			{\Op^A}^{-1} \hat{F}^A = \Wigner^B K_{\hat{F}^A} 
			, 
		\end{align*}
		provided $\hat{F}^A$ is such that $K_{\hat{F}^A} \in \Schwartz(\Pspace)$. 
	\end{enumerate}
\end{proposition}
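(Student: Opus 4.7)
The plan is to leverage Proposition~\ref{magnetic_super_PsiDOs:prop:kernel_map_semisuper_product}, which already establishes that $F \mapsto K_F^B$ is a topological vector space isomorphism on $\Schwartz(\Pspace)$ by decomposing it into seven elementary isomorphisms. Since $\Wigner^B$ is \emph{defined} as the inverse of this map, the topological vector space isomorphism part of (1) is immediate. The remaining content of (1) is the explicit formula~\eqref{magnetic_super_PsiDOs:eqn:super_Wigner_transform}, which I obtain by inverting each of the seven steps in reverse order.

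Concretely, I start with a kernel $K \in \Schwartz(\Pspace)$ written in the variables $\Xbf = (x_L,\xi_L,x_R,\xi_R)$ and invert step~(7) via the coordinate transformation $\Xbf \mapsto \bigl ( \tfrac{1}{2}(x_R - x_L) , \tfrac{1}{2}(\xi_R - \xi_L) , \tfrac{1}{2}(x_L + x_R) , \tfrac{1}{2}(\xi_L + \xi_R) \bigr )$ to recover the function $G$ of~\eqref{eq:supercalculus.auxiliary-function-for-deriving-kernel}. Dividing by the plane wave $\tfrac{2^{2d}}{\pi^d \eps^{3d}} \e^{- \frac{\ii}{\eps} (z_L \cdot \zeta_R + z_R \cdot \zeta_L)}$ inverts step~(6), and an inverse partial Fourier transform in $\zeta_L, \zeta_R$ inverts step~(5); this produces the integral in $(y_L, y_R)$ that appears in~\eqref{magnetic_super_PsiDOs:eqn:super_Wigner_transform}. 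I then divide out the magnetic flux factor~\eqref{eq:supercalculus.magnetic-flux-factor} (step~(4)); by Lemma~\ref{appendix:oscillatory_integrals:lem:magnetic_phase_factor_estimate} this is an invertible $\smoothpoly$ function and hence a topological vector space isomorphism on $\Schwartz$. Undoing steps~(3), (2) and (1) — two affine changes of variables and one inverse partial Fourier transform in $\eta_L, \eta_R$ — rescales the integration variables by $\tfrac{\eps}{4}$ (generating the prefactor $\tfrac{\eps^{3d}}{2^{4d}\pi^d}$) and shifts the arguments of $K$ into the form $\tfrac{1}{2}(x_L + x_R) \pm \tfrac{\eps}{4}(y_L - y_R)$ and $\tfrac{\eps}{4}(\eta_L \mp \eta_R)$ displayed in~\eqref{magnetic_super_PsiDOs:eqn:super_Wigner_transform}. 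Bookkeeping the phases collected from the Fourier inversions and coordinate shifts yields exactly $\e^{+ \ii \xi_L \cdot y_L} \, \e^{+ \ii \xi_R \cdot y_R} \, \e^{+ \frac{\ii}{2}((-x_L + x_R) \cdot \eta_R - \frac{\eps}{2}(y_L + y_R) \cdot \eta_L)}$, establishing the claimed formula.

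For part~(2), the key identity is~\eqref{magnetic_super_PsiDOs:eqn:writing_super_PsiDO_with_kernel_map}, which says $\Op^A(F) = \op^A \, \Int(K_F^B) \, {\op^A}^{-1}$. Given $\hat{F}^A$ whose semi-super kernel $K_{\hat{F}^A}$ is Schwartz class, the assignment $\hat{F}^A \mapsto K_{\hat{F}^A}$ means precisely that $\op^A \, \Int(K_{\hat{F}^A}) \, {\op^A}^{-1} = \hat{F}^A$. Setting $F := \Wigner^B K_{\hat{F}^A} \in \Schwartz(\Pspace)$ we have, by definition of $\Wigner^B$ as the inverse of $F \mapsto K_F^B$, the equality $K_F^B = K_{\hat{F}^A}$, so $\Op^A(F) = \op^A \, \Int(K_F^B) \, {\op^A}^{-1} = \hat{F}^A$, which is the desired dequantization formula.

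The main obstacle is purely bookkeeping in the proof of (1): tracking all of the phase contributions arising from composing the two partial Fourier inversions with the two affine coordinate changes, and checking that the magnetic phase factor, after its arguments are transformed under inverse step~(7), coincides with the $\recfluxp$-factor appearing in~\eqref{magnetic_super_PsiDOs:eqn:super_Wigner_transform}. In particular, one has to verify that substituting $\tfrac{1}{2}(z_L - z_R) \leadsto \tfrac{1}{2}(x_L + x_R) - \tfrac{\eps}{4}(y_L - y_R)$ and $\tfrac{1}{\eps}(z_L + z_R - 2 y_L - 2 y_R) \leadsto \tfrac{1}{\eps}(x_R - x_L) - \tfrac{1}{2}(y_L + y_R)$ in~\eqref{eq:supercalculus.magnetic-flux-factor} reproduces the exponent of the magnetic phase in~\eqref{magnetic_super_PsiDOs:eqn:super_Wigner_transform}; this is a straightforward but delicate algebraic check. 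No deeper analytic issue arises, since the $\Schwartz$-isomorphism claim has already been reduced to the seven-step decomposition in Proposition~\ref{magnetic_super_PsiDOs:prop:kernel_map_semisuper_product}.
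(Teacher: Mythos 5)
Your proposal follows the paper's proof exactly: part~(1) inverts the seven isomorphisms from Proposition~\ref{magnetic_super_PsiDOs:prop:kernel_map_semisuper_product} in reverse, and part~(2) reads off the dequantization formula from the identity~\eqref{magnetic_super_PsiDOs:eqn:writing_super_PsiDO_with_kernel_map} together with the fact that $\Wigner^B$ is, by definition, the inverse of $F \mapsto K_F^B$. One small bookkeeping slip in your sketch of inverse step~(7): from $K_F^B(x,\xi,z,\zeta) = G(x+z,\xi+\zeta,-x+z,-\xi+\zeta)$ one recovers $G(\Xbf) = K_F^B\bigl(\tfrac{1}{2}(x_L - x_R), \tfrac{1}{2}(\xi_L - \xi_R), \tfrac{1}{2}(x_L + x_R), \tfrac{1}{2}(\xi_L + \xi_R)\bigr)$, so the first two components should carry $x_L - x_R$ and $\xi_L - \xi_R$ rather than the opposite signs you wrote; this falls within the ``delicate algebraic check'' you flagged as not fully carried out, and does not change the correctness of the approach.
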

\begin{proof}
	\begin{enumerate}[(1)]
		\item It is immediate from the definition that the map $\Schwartz(\Pspace)\ni K\mapsto\Wigner^B K\in\Schwartz(\Pspace)$ is a topological vector space isomorphism.
		
		Furthermore, we have seen in the proof of Proposition~\ref{magnetic_super_PsiDOs:prop:kernel_map_semisuper_product} that the kernel map $F\mapsto K_F^B$ decomposes into seven topologival vector space isomorphisms on the space of Schwartz functions. Thus, we can explicitly compute the magnetic super Wigner transform by taking inverses of these seven isomorphisms. This yields the formula~\eqref{magnetic_super_PsiDOs:eqn:super_Wigner_transform}. 
		\item For Schwartz functions $F \in \Schwartz(\Pspace)$ the relation between $\Op^A(F) = \op^A \; \Int(K_F^B) \; {\op^A}^{-1}$ and its kernel $K^B_F$ is one-to-one (see Propositions~\ref{magnetic_super_PsiDOs:prop:semi_super_product_S} and \ref{magnetic_super_PsiDOs:prop:kernel_map_semisuper_product}). 
		
		Thus, it suffices to prove that it is the inverse of the kernel map $F \mapsto K_F^B$ given by~\eqref{magnetic_super_PsiDOs:eqn:kernel_semisuper_product_operator}. But that can be checked by direct computation. This finishes the proof. 
	\end{enumerate}
\end{proof}
There are some key conceptual differences, though. Most importantly, unlike its ordinary counterpart, it only depends on the magnetic field $B$ as opposed to the magnetic vector potential. The reason is that the ordinary dequantization ${\op^A}^{-1}$ maps operators onto tempered distributions. The $C^*$-algebraic structure of $\mathcal{B} \bigl ( \Hil^A \bigr )$ is inherited by the algebra of distributions 
\begin{align*}
	\mathfrak{C}^B := {\op^A}^{-1} \bigl ( \mathcal{B} \bigl ( L^2(\R^d) \bigr ) \bigr )
	, 
\end{align*}
where \eg the norm is $\norm{f}_{\mathfrak{C}^B} := \bnorm{\op^A(f)}_{\mathcal{B}(\Hil^A)}$. The conceptual advantage, though, is that gauge-covariance $\op^{A + \dd \chi}(f) = \e^{- \ii \chi(Q)} \, \op^A(f) \, \e^{+ \ii \chi(Q)}$ implies the pulled back algebra $\mathfrak{C}^B$ only depends on the magnetic field. Indeed, for suitable elements of $\mathfrak{C}^B$ the operator adjoint maps to complex conjugation and the product reduces to the magnetic Weyl product~\eqref{magnetic_Weyl_calculus:eqn:magnetic_Weyl_product_formula}. 

\subsection{The magnetic super Weyl product} 
\label{ssub:the_magnetic_super_weyl_product}
The utility of a magnetic super Weyl calculus is that the product of two magnetic super pseudodifferential operators yields another magnetic super pseudodifferential operator. 
\begin{proposition}
	Suppose Assumption~\ref{intro:assumption:polynomially_bounded_magnetic_field} on the magnetic field $B = \dd A$ and the vector potential $A$ are satisfied. Given any $F , G \in \Schwartz(\Pspace)$ the product of two magnetic super Weyl quantizations implicitly defined through $\Op^A(F) \, \Op^A(G) = \Op^A \bigl ( F \super^B G \bigr )$ is the Schwartz function given by 
	\begin{align}
		F \super^B G(\Xbf) = \frac{1}{(2\pi)^{4d}} \int_{\Pspace} \dd \Ybf & \int_{\Pspace} \dd \Zbf \; \e^{+ \ii \Sigma(\Xbf,\Ybf+\Zbf)} \, \e^{+\ii\frac{\eps}{2}\Sigma(r(\Ybf),\Zbf)} \, 
		\cdot \notag \\ 
		& \e^{- \ii \lambda \trifluxp(x_L,y_L,z_L)} \, \e^{- \ii \lambda \trifluxp(x_R,z_R,y_R)} \, (\Fourier_{\Sigma} F)(\Ybf) \, (\Fourier_{\Sigma} G)(\Zbf) 
		,
		\label{magnetic_super_PsiDOs:eqn:magnetic_super_Weyl_product}
	\end{align}
	where $r(\Ybf) \equiv r(Y_L,Y_R) := (Y_L,-Y_R)$. Furthermore, $\super^B$ gives rise to a continuous bilinear map from $\Schwartz(\Pspace)\times\Schwartz(\Pspace)$ to $\Schwartz(\Pspace)$.
\end{proposition}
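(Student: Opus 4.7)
My plan is to first verify the claimed formula on the dense class of product symbols, then extend by continuity of the bilinear map to all of $\Schwartz(\Pspace) \times \Schwartz(\Pspace)$, exploiting that $\Schwartz(\Pspace) \cong \Schwartz(\pspace) \hat{\otimes}_\pi \Schwartz(\pspace)$ since Schwartz spaces are nuclear. The starting observation is that for a product symbol $F = f_L \otimes f_R$, the super Weyl quantization factorizes as $\Op^A(F) \, \hat{h}^A = \op^A(f_L) \, \hat{h}^A \, \op^A(f_R)$. Thus for $F = f_L \otimes f_R$ and $G = g_L \otimes g_R$ with $f_{L/R}, g_{L/R} \in \Schwartz(\pspace)$,
\begin{align*}
\Op^A(F) \, \Op^A(G) \, \hat{h}^A
&= \op^A(f_L) \, \op^A(g_L) \, \hat{h}^A \, \op^A(g_R) \, \op^A(f_R) \\
&= \op^A(f_L \weyl^B g_L) \, \hat{h}^A \, \op^A(g_R \weyl^B f_R) \\
&= \Op^A \bigl ( (f_L \weyl^B g_L) \otimes (g_R \weyl^B f_R) \bigr ) \, \hat{h}^A
,
\end{align*}
which identifies $F \super^B G = (f_L \weyl^B g_L) \otimes (g_R \weyl^B f_R)$ on product symbols.

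Substituting the explicit integral formula~\eqref{magnetic_Weyl_calculus:eqn:magnetic_Weyl_product_formula} for each factor and taking the tensor product yields an absolutely convergent four-fold integral. The key algebraic identities that rearrange the result into the claimed form are $\sigma(X_L, Y_L+Z_L) + \sigma(X_R, Z_R+Y_R) = \Sigma(\Xbf, \Ybf+\Zbf)$ and $\sigma(Y_L, Z_L) + \sigma(Z_R, Y_R) = \Sigma(r(\Ybf), \Zbf)$ with $r(\Ybf) = (Y_L, -Y_R)$, together with the factorization of the symplectic Fourier transform $(\Fourier_{\sigma} f_L)(Y_L) \, (\Fourier_{\sigma} f_R)(Y_R) = (\Fourier_{\Sigma} F)(\Ybf)$ on product symbols. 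The two triangle flux factors $\gamma^B(x_L,y_L,z_L)$ and $\gamma^B(x_R,z_R,y_R)$ are already in the form stipulated by the magnetic Weyl product and simply add. This reproduces exactly~\eqref{magnetic_super_PsiDOs:eqn:magnetic_super_Weyl_product} for product symbols.

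To extend to arbitrary Schwartz symbols, I will establish continuity of the bilinear map defined by the right-hand side of~\eqref{magnetic_super_PsiDOs:eqn:magnetic_super_Weyl_product} as a map $\Schwartz(\Pspace) \times \Schwartz(\Pspace) \longrightarrow \Schwartz(\Pspace)$. Mimicking the decomposition used in the proof of Proposition~\ref{magnetic_super_PsiDOs:prop:kernel_map_semisuper_product}, I write it as a finite composition of elementary continuous operations: (i)~symplectic Fourier transforms $\Fourier_{\Sigma}$, which are topological vector space isomorphisms on $\Schwartz(\Pspace)$; (ii)~multiplication by the $\smoothpoly$-class symplectic phase $\e^{+\ii \frac{\eps}{2} \Sigma(r(\Ybf),\Zbf)}$ and by the magnetic phase factor $\e^{-\ii\lambda [\triflux(x_L,y_L,z_L) + \triflux(x_R,z_R,y_R)]}$, whose $\smoothpoly$-estimates follow from Lemma~\ref{appendix:oscillatory_integrals:lem:magnetic_phase_factor_estimate}; (iii)~invertible linear changes of variables on $\Pspace \times \Pspace$; (iv)~integration against Schwartz-class factors realized as partial Fourier transforms. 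Each of these operations is continuous (and most are isomorphisms) on the relevant Schwartz spaces, so their composition is continuous.

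Since $\Schwartz(\Pspace) \cong \Schwartz(\pspace) \hat{\otimes}_\pi \Schwartz(\pspace)$, the algebraic tensor product $\Schwartz(\pspace) \otimes \Schwartz(\pspace)$ is dense in $\Schwartz(\Pspace)$. The two bilinear maps $(F,G) \mapsto F \super^B G$ (defined by~\eqref{magnetic_super_PsiDOs:eqn:magnetic_super_Weyl_product}) and $(F,G) \mapsto {\Op^A}^{-1} \bigl ( \Op^A(F) \, \Op^A(G) \bigr )$ (well-defined by Proposition~\ref{magnetic_super_PsiDOs:prop:super_Wigner_transform_Schwartz_functions} together with Lemma~\ref{rigorous_definition_supercalculus_on_S:lem:properties_super_Weyl_quantization}) are both separately continuous and agree on the dense subspace of products; they therefore coincide everywhere, and $\Op^A(F) \, \Op^A(G) = \Op^A(F \super^B G)$ holds for all $F , G \in \Schwartz(\Pspace)$. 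The hardest step is the continuity of the integral formula in~(ii) above: the $\smoothpoly$-growth of the two magnetic flux factors (each entering through a triangle depending polynomially on the phase-space variables) requires careful seminorm bookkeeping under the iterated changes of variables. However, this is completely parallel to the calculation already performed for the semi-super kernel, so no new geometric difficulty arises beyond tracking the additional variables.
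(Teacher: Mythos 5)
Your argument is correct and takes a genuinely different route from the paper's. Where you derive the formula~\eqref{magnetic_super_PsiDOs:eqn:magnetic_super_Weyl_product} by first checking it on product symbols $F = f_L \otimes f_R$, $G = g_L \otimes g_R$ via the ordinary magnetic Weyl product and the tensor identities for $\Sigma$, $r$ and $\Fourier_{\Sigma}$, and then extending by density of $\Schwartz(\pspace) \otimes \Schwartz(\pspace)$ in $\Schwartz(\Pspace)$, the paper avoids the density step entirely: it first establishes $F \super^B G \in \Schwartz(\Pspace)$ abstractly via the chain $F \super^B G = \Wigner^B\bigl(K_F^B \kerprod K_G^B\bigr)$, where kernel composition $\kerprod$ is manifestly a continuous bilinear map $\Schwartz(\Pspace) \times \Schwartz(\Pspace) \to \Schwartz(\Pspace)$ and the kernel map and super Wigner transform are TVS isomorphisms, and only then computes the explicit integral formula by hand using the composition law~\eqref{magnetic_Weyl_calculus:eqn:composition_magnetic_Weyl_system} of the magnetic Weyl system. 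Your route yields the formula more transparently, since it falls straight out of $\weyl^B$ and the tensor structure, but it shifts all of the analytic weight onto establishing continuity of the bilinear map defined by the right-hand side of~\eqref{magnetic_super_PsiDOs:eqn:magnetic_super_Weyl_product}. That step is not quite as routine as your sketch (ii)--(iv) suggests: unlike the semi-super kernel map, which is dimension-preserving and decomposes into isomorphisms, here one must integrate out all of $\Ybf, \Zbf \in \Pspace \times \Pspace$ while the magnetic flux factors still carry a residual dependence on the external variables $x_L, x_R$, so the map is dimension-reducing and cannot be realized as a composition of isomorphisms alone. The cleanest way to close that gap is in fact to invoke the same kernel-map/$\kerprod$/$\Wigner^B$ chain the paper uses; once you do so, your density argument is sound (nuclearity of Schwartz spaces, acknowledged by the paper, guarantees that products are dense), and both bilinear maps are separately continuous and agree on a dense subset, hence everywhere.
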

\begin{proof}
	Let $F , G \in \Schwartz(\Pspace)$ be two Schwartz functions. Then the product of two magnetic super Weyl quantizations 
	\begin{align*}
		\Op^A(F) \, \Op^A(G) = \op^A \; \Int(K_F^B) \; \Int(K_G^B) \; {\op^A}^{-1} = \op^A \, \Int \bigl ( K_F^B \kerprod K_G^B \bigr ) \, {\op^A}^{-1} 
	\end{align*}
	can be recast in terms of the integral map $\Int$ for the kernels~\eqref{magnetic_super_PsiDOs:eqn:writing_super_PsiDO_with_kernel_map}. Recall that the product of two integral operators is the integral operator for the kernel $K_F^B \kerprod K_G^B \in \Schwartz(\Pspace)$ given by $K_F^B\kerprod K_G^B (X,Z) := \int_\pspace dY \, K_F^B(X,Y) \, K_G^B(Y,Z)$.
	
	The operator kernel and the symbol are related via the magnetic super Wigner transform, Proposition~\ref{magnetic_super_PsiDOs:prop:super_Wigner_transform_Schwartz_functions}, which means we can use 
	\begin{align*}
		F \super^B G &= \Wigner^B \bigl ( K_F^B \kerprod K_G^B \bigr ) 
	\end{align*}
	to find an explicit expression for the magnetic super Weyl product. Not only that, since the magnetic super Wigner transform maps Schwartz functions onto Schwartz functions (\cf Proposition~\ref{magnetic_super_PsiDOs:prop:super_Wigner_transform_Schwartz_functions}), we conclude that $F \super^B G \in \Schwartz(\Pspace)$ holds. 
	
	In principle, we could proceed and obtain an explicit expression of $F \super^B G$ from the kernel and the Wigner transform. But there is a second pathway, which mimics the proof of \cite[Theorem~2.10]{Lein:two_parameter_asymptotics:2008} where we compute $\super^B$ by hand. Since we have already established that $F \super^B G \in \Schwartz(\Pspace)$, we just note that all the integrals below exist in the $L^1$ sense with absolutely integrable integrands. 
	
	The product applied to some $\hat{h}^A \in \bounded(\Hil^A)$ is equal to
	\begin{align*}
		\Op^A(F) \, &\Op^A(G) \, \hat{h}^A
		\\
		&= 
		\frac{1}{(2\pi)^{4d}} \int_{\Pspace} \dd \Ybf \, (\Fourier_{\Sigma} F)(\Ybf) \, w^A(Y_L) \, \left ( \int_{\Pspace} \dd \Zbf \, (\Fourier_{\Sigma} G)(\Zbf) \, w^A(Z_L) \, \hat{h}^A \, w^A(Z_R) \right ) \, w^A(Y_R) 
		\\
		&= \frac{1}{(2\pi)^{4d}} \int_{\Pspace} \dd \Ybf \int_{\Pspace} \dd \Zbf \, (\Fourier_{\Sigma} F)(\Ybf) \, (\Fourier_{\Sigma} G)(\Zbf) \, w^A(Y_L) \, w^A(Z_L) \, \hat{h}^A \, w^A(Z_R) \, w^A(Y_R) 
		.
	\end{align*}
	The composition law~\eqref{magnetic_Weyl_calculus:eqn:composition_magnetic_Weyl_system} of the (ordinary) magnetic Weyl system allows us to rewrite the above as 
	\begin{align} 
		& \frac{1}{(2\pi)^{4d}} \int_{\Pspace} \dd \Ybf \int_{\Pspace} \dd \Zbf \, (\Fourier_{\Sigma} F)(\Ybf) \, (\Fourier_{\Sigma} G)(\Zbf) \, \e^{+ \ii \frac{\eps}{2} \sigma(Y_L,Z_L)} \, \cocyp (Q;y_L,z_L) \, w^A(Y_L+Z_L) \, 
		\cdot \notag \\ 
		& 
		\qquad \qquad \hat{h}^A \, \e^{+ \ii \frac{\eps}{2} \sigma(Z_R,Y_R)} \, \cocyp (Q;z_R,y_R) \, w^A(Y_R + Z_R) \,
		\notag \\
		&
		= \frac{1}{(2\pi)^{4d}} \int_{\Pspace} \dd \Ybf \int_{\Pspace} \dd \Zbf \, (\Fourier_{\Sigma} F)(\Ybf) \, (\Fourier_{\Sigma} G)(\Zbf-\Ybf) \, \e^{+ \ii \frac{\eps}{2} \sigma(Y_L,Z_L)} \, \cocyp (Q;y_L,z_L-y_L) \, w^A(Z_L) \, 
		\cdot \notag \\ 
		& 
		\qquad \qquad \hat{h}^A \, \e^{+ \ii \frac{\eps}{2} \sigma(Z_R,Y_R)} \, \cocyp (Q;z_R-y_R,y_R) \, w^A(Z_R)
		.
		\label{magnetic_super_PsiDOs:eqn:proof_magnetic_super_Weyl_product_after_composing_Weyl_systems}
	\end{align}
	The operators to the left and to the right of $\hat{h}^A$ can be written as the Weyl quantization of another function. That allows us to replace 
	\begin{align*}
		\cocyp &(Q;y_L,z_L-y_L) \, w^A(Z_L) 
		\\
		&= \frac{1}{(2\pi)^{2d}} \int_{\pspace} \dd Y_L' \int_{\pspace} \dd Z_L' \, \e^{+ \ii \sigma(Z_L',Y_L')} \, \e^{+ \ii\sigma(Y_L',Z_L)} \, \cocyp \bigl ( y_L'-\tfrac{\eps}{2} z_L ; y_L , z_L-y_L \bigr ) \, w^A(Z_L') 
	\end{align*}
	with an integral expression over a magnetic Weyl system; to verify that left- and right-hand side are indeed one and the same operator, we compare the actions of both on some $\varphi \in \Schwartz(\R^d) \subset L^2(\R^d)$. Analogously, we can replace operator to the right of $\hat{h}^A$ by 
	\begin{align*}
		\cocyp &(Q;z_R-y_R,y_R) \, w^A(Z_R) 
		\\
		 &= \frac{1}{(2\pi)^{2d}} \int_{\pspace} \dd Y_R' \int_{\pspace} \dd Z_R' \, \e^{+ \ii \sigma(Z_R',Y_R')} \, \e^{+ \ii \sigma(Y_R',Z_R)} \, \cocyp \bigl ( y_R'-\tfrac{\eps}{2}z_R ; z_R-y_R , y_R \bigr ) \, w^A(Z_R') 
		 .
	\end{align*}
	Therefore, after we combine the $\eps$-dependent phase factors 
	\begin{align*}
		\e^{+ \ii \frac{\eps}{2} \sigma(Y_L,Z_L)} \, \e^{+ \ii \frac{\eps}{2} \sigma(Z_R,Y_R)} &= \e^{+ \ii \frac{\eps}{2} \Sigma(r(\Ybf),\Zbf)} 
		, 
	\end{align*}
	we can bring the product into the form 
	\begin{align*}
		\Op^A(F) \, &\Op^A(G) \, \hat{h}^A 
		\\
		&= \frac{1}{(2\pi)^{8d}} \int_{\Pspace} \dd \Ybf \int_{\Pspace} \dd \Zbf \int_{\Pspace} \dd \Ybf' \int_{\Pspace} \dd \Zbf' \, 
		\cdot \\
		&\qquad \qquad 
		\e^{+ \ii \frac{\eps}{2} \Sigma(r(\Ybf),\Zbf)} \, \e^{+ \ii \sigma(Z_L',Y_L')} \, \e^{+ \ii \sigma(Y_L',Z_L)} \, \e^{+ \ii \sigma(Z_R',Y_R')} \, \e^{+ \ii \sigma(Y_R',Z_R)} \, 
		\cdot \\
		&\qquad \qquad
		\cocyp \bigl ( y_L'-\tfrac{\eps}{2}z_L ; y_L , z_L-y_L \bigr ) \, \cocyp \bigl ( y_R'-\tfrac{\eps}{2}z_R ; z_R-y_R , y_R \bigr ) \, 
		\cdot \\
		&\qquad \qquad 
		(\Fourier_{\Sigma} F)(\Ybf) \, (\Fourier_{\Sigma} G)(\Zbf-\Ybf) \, w^A(Z_L') \, \hat{h}^A \, w^A(Z_R')
		\\
		&= \frac{1}{(2\pi)^{2d}} \int_{\Pspace} \dd \Zbf' \, \biggl ( \frac{1}{(2\pi)^{6d}} \int_{\Pspace} \dd \Ybf \int_{\Pspace} \dd \Zbf \int_{\Pspace} \dd \Ybf' \, \e^{+ \ii \Sigma(\Zbf',\Ybf')} \, \e^{+ \ii \Sigma(\Ybf',\Zbf)} \, \e^{+ \ii \frac{\eps}{2} \Sigma(r(\Ybf),\Zbf)} \, 
		\cdot \\
		&\qquad \qquad 
		\cocyp \bigl ( y_L'-\tfrac{\eps}{2}z_L ; y_L , z_L-y_L \bigr ) \, \cocyp \bigl ( y_R'-\tfrac{\eps}{2}z_R ; z_R-y_R , y_R \bigr ) \, 
		\cdot \\
		&\qquad \qquad 
		(\Fourier_{\Sigma} F)(\Ybf) \, (\Fourier_{\Sigma} G)(\Zbf-\Ybf) \biggr ) \, w^A(Z_L') \, \hat{h}^A \, w^A(Z_R') 
		.
	\end{align*}
	The inner integral can be recognized as $\bigl ( \Fourier_{\Sigma}(F\super^B G) \bigr )(\Zbf')$. Thus, we obtain the final form~\eqref{magnetic_super_PsiDOs:eqn:magnetic_super_Weyl_product} of the product formula, 
	\begin{align*}
		F \super^B G(\Xbf) &= \frac{1}{(2\pi)^{8d}} \int_{\Pspace} \dd \Ybf \int_{\Pspace} \dd \Zbf \int_{\Pspace} \dd \Ybf' \int_{\Pspace} \dd \Zbf' \, \e^{+ \ii \Sigma(\Xbf,\Zbf')} \, \e^{+ \ii \Sigma(\Zbf',\Ybf')} \, \e^{+ \ii \Sigma(\Ybf',\Zbf)} \, \e^{+ \ii \frac{\eps}{2}\Sigma(r(\Ybf),\Zbf)} \, 
		\cdot \\
		&\qquad \qquad 
		\cocyp \bigl ( y_L'-\tfrac{\eps}{2}z_L ; y_L , z_L-y_L \bigr ) \, \cocyp \bigl ( y_R'-\tfrac{\eps}{2}z_R ; z_R-y_R , y_R \bigr ) \, 
		\cdot \\
		&\qquad \qquad 
		(\Fourier_{\Sigma} F)(\Ybf)\, (\Fourier_{\Sigma} G)(\Zbf-\Ybf) 
		\\
		&= \frac{1}{(2\pi)^{4d}} \int_{\Pspace} \dd \Ybf \int_{\Pspace} \dd \Zbf \, \e^{+ \ii \Sigma(\Xbf,\Zbf)} \, \e^{+ \ii \frac{\eps}{2}\Sigma(r(\Ybf),\Zbf)} \, \cocyp \bigl ( x_L-\tfrac{\eps}{2}z_L ; y_L , z_L-y_L \bigr ) \, 
		\cdot \\
		&\qquad \qquad
		\cocyp \bigl ( x_R-\tfrac{\eps}{2}z_R ; z_R-y_R , y_R \bigl ) \, (\Fourier_{\Sigma} F)(\Ybf)\, (\Fourier_{\Sigma} G)(\Zbf-\Ybf) 
		\\
		&= \frac{1}{(2\pi)^{4d}} \int_{\Pspace} \dd \Ybf \int_{\Pspace} \dd \Zbf \, \e^{+ \ii \Sigma(\Xbf,\Ybf+\Zbf)} \, \e^{+ \ii \frac{\eps}{2} \Sigma(r(\Ybf),\Zbf)} \e^{- \ii \lambda \trifluxp(x_L,y_L,z_L)} \, 
		\cdot \\
		&\qquad \qquad 
		\e^{- \ii \lambda \trifluxp(x_R,z_R,y_R)} \, (\Fourier_{\Sigma} F)(\Ybf) \, (\Fourier_{\Sigma} G)(\Zbf) 
		,
	\end{align*}
	which completes the proof. 
\end{proof}
%

\subsection{Changes of representation} 
\label{rigorous_definition_supercalculus_on_S:changes_of_representation}
We will close this section with an important consideration. In quantum mechanics, physicists interpret unitaries are facilitating changes of representation. The unitarity implies that probabilities computed in either representation necessarily need to coincide, and therefore, both representations provide equivalent descriptions of the physics. Consequently, all essential properties of a system are those, which are independent of the choice of representation. 

One such property are the commutation relations~\eqref{magnetic_Weyl_calculus:eqn:commutation_relations} between position $Q_j$ and magnetic momentum $P^A_j$. Indeed, continuing the discussion we have started in Section~\ref{rigorous_definition_supercalculus_on_S:construction_quantization:super_weyl_system}, if 
\begin{align*}
	U : L^2(\R^d) \longrightarrow \Hil
\end{align*}
is any unitary and the position and momentum operators $(Q_U,P^A_U) = \bigl ( \Ad_U(Q) , \Ad_U(P^A) \bigr )$ in the new representation are given by \eqref{rigorous_definition_supercalculus_on_S:eqn:unitarily_equivalent_building_block_operators}. Then the commutation relations of $Q_U$ and $P^A_U$ are identical to those of $Q$ and $P^A$, namely those in \eqref{magnetic_Weyl_calculus:eqn:commutation_relations}. And the punchline of this subsection is that the pseudodifferential calculus, specifically the products $\weyl^B$, $\semisuper^B$ and $\super^B$, are independent of the choice of representation. 
\begin{theorem}\label{rigorous_definition_supercalculus_on_S:thm:products_independent_of_representation}
	The products $\weyl^B$, $\semisuper^B$ and $\super^B$ are independent of the choice of representation. 
\end{theorem}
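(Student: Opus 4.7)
The plan is to exploit the intertwining property of unitary changes of representation. Let $U : L^2(\R^d) \longrightarrow \Hil$ be an arbitrary unitary, inducing $Q_U = \Ad_U(Q)$ and $P^A_U = \Ad_U(P^A)$ as in equations~\eqref{rigorous_definition_supercalculus_on_S:eqn:unitarily_equivalent_building_block_operators}. By functional calculus and the definition of the magnetic Weyl system, one immediately has $w^A_U(X) = \Ad_U \bigl ( w^A(X) \bigr )$, and substituting into definition~\eqref{magnetic_Weyl_calculus:eqn:definition_magnetic_Weyl_quantization} yields
\begin{align*}
	\op^A_U(f) = \Ad_U \bigl ( \op^A(f) \bigr )
	,
	&&
	f \in \Schwartz(\pspace)
	.
\end{align*}
Likewise, the identity~\eqref{rigorous_definition_supercalculus_on_S:eqn:super_Weyl_system_new_representation} extends by linearity and the Bochner integral in Definition~\ref{magnetic_super_PsiDOs:defn:super_Weyl_quantization} to the super intertwining property $\Op^A_U(F) \, \hat{g}^A_U = \Ad_U \bigl ( \Op^A(F) \, \hat{g}^A \bigr )$ for $F \in \Schwartz(\Pspace)$ and $\hat{g}^A \in \mathcal{B}(L^2(\R^d))$, where $\hat{g}^A_U = \Ad_U(\hat{g}^A)$.

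Let $\weyl^B_U$, $\semisuper^B_U$ and $\super^B_U$ denote the products defined implicitly via the operator products in the new representation, \ie by $\op^A_U(f \weyl^B_U g) := \op^A_U(f) \, \op^A_U(g)$ and analogously for the other two. Using the intertwining properties together with the fact that $\Ad_U$ is a $\ast$-algebra isomorphism of $\mathcal{B}(L^2(\R^d))$ onto $\mathcal{B}(\Hil)$, the calculation
\begin{align*}
	\op^A_U(f) \, \op^A_U(g) = \Ad_U \bigl ( \op^A(f) \, \op^A(g) \bigr ) = \Ad_U \bigl ( \op^A(f \weyl^B g) \bigr ) = \op^A_U(f \weyl^B g)
\end{align*}
reduces matters to checking the injectivity of $\op^A_U$. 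The analogous chains of equalities, using Proposition~\ref{magnetic_super_PsiDOs:prop:semi_super_product_S} and equation~\eqref{magnetic_super_PsiDOs:eqn:magnetic_super_Weyl_product} respectively, yield $\Op^A_U(F) \, \op^A_U(g) = \op^A_U(F \semisuper^B g)$ and $\Op^A_U(F) \, \Op^A_U(G) \, \hat{h}^A_U = \Op^A_U(F \super^B G) \, \hat{h}^A_U$.

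The remaining task is injectivity of $\op^A_U$ and $\Op^A_U$ on Schwartz symbols. Injectivity of $\op^A$ on $\Schwartz(\pspace)$ is standard magnetic Weyl calculus, since it is invertible via the magnetic Wigner transform~\eqref{magnetic_Weyl_calculus:eqn:inverse_of_magnetic_Weyl_quantization}; composing with the $\ast$-isomorphism $\Ad_U$ preserves injectivity, so $\op^A_U$ is injective. Similarly, Proposition~\ref{magnetic_super_PsiDOs:prop:super_Wigner_transform_Schwartz_functions} provides the magnetic super Wigner transform as a dequantization for $\Op^A$ on Schwartz symbols, and since $\Op^A_U = \Ad_U \circ \Op^A \circ \Ad_U^{-1}$ as a super operator, the same inverse exists in the new representation. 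Applying injectivity therefore forces $f \weyl^B_U g = f \weyl^B g$, $F \semisuper^B_U g = F \semisuper^B g$ and $F \super^B_U G = F \super^B G$, which proves the theorem on Schwartz symbols.

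The main conceptual point, and the only real subtlety, is that on the level of super operators one must covariantly transport the operator on which $\Op^A$ acts, \ie use $\hat{g}^A_U = \Ad_U(\hat{g}^A)$, before the intertwining identity holds; the argument would fail if one naively compared $\Op^A_U(F)$ and $\Op^A(F)$ as maps on the same operator. Once this is set up, the proof is essentially functorial and no calculation involving the explicit phase factors $\cocyp$ or $\trifluxp$ is required.
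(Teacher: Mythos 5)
Your proposal is correct and follows essentially the same route as the paper: both exploit that $\Op^A_U = \Ad_U \circ \Op^A(\,\cdot\,) \circ \Ad_U^{-1}$ (equation~\eqref{rigorous_definition_supercalculus_on_S:eqn:transformation_super_Weyl_quantization}) to intertwine the operator products in the two representations and then conclude by injectivity of the quantization on Schwartz symbols. The paper's version is more terse — it spells out only the $\super^B$ computation and leaves the injectivity step implicit — whereas you make the dequantization via the (super) Wigner transform explicit and rightly flag that the operator argument must be transported covariantly; this is the same argument, laid out in slightly more detail.
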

Potential choices for $U$ could be the continuous Fourier transform $\Fourier$ to change to momentum representation or a change of gauge $\e^{+ \ii \lambda \chi(Q)}$. 

While at the end of the day our observations are somewhat trivial, this has not always been exploited in the literature. E.~g.~in \cite[Appendix~A]{PST:effective_dynamics_Bloch:2003} unnecessarily introduces a pseudodifferential calculus on weighted Hörmander classes by hand, where position and momentum have traded places, since they work in the momentum representation where the position operator $Q = \ii \nabla_k$ is given by a derivative and momentum $P = \hat{k}$ is a multiplication. Instead, as explained in \cite[Section~2.2]{DeNittis_Lein:Bloch_electron:2009} the (magnetic) pseudodifferential calculus only depends on the commutation relations and Panati et al.\ could have based their calculus on ordinary Hörmander classes instead. The point of this subsection is to extend the arguments from \cite[Section~2.2]{DeNittis_Lein:Bloch_electron:2009} to the pseudodifferential calculus of super operators constructed here. 

Assume we are given some unitary $U : L^2(\R^d) \longrightarrow \Hil$. Then adjoining with the unitary 
\begin{align*}
	\Ad_U : \mathcal{B} \bigl ( L^2(\R^d) \bigr ) \longrightarrow \mathcal{B}(\Hil)
\end{align*}
defines an isomorphism between the Banach spaces of operators on the respective Hilbert spaces. 

Moreover, because unitaries map orthonormal bases onto orthonormal bases, $\Ad_U$ relates the two traces, 
\begin{align*}
	\mathrm{Tr}_{\Hil} = \mathrm{Tr}_{L^2(\R^d)} \circ \Ad_U^{-1} 
	= \mathrm{Tr}_{L^2(\R^d)} \circ \Ad_{U^{-1}} 
	. 
\end{align*}
Consequently, $\Ad_U$ restricts to isomorphisms between the $p$-Schatten classes 
\begin{align*}
	\Ad_U : \mathfrak{L}^p \bigl ( \mathcal{B} \bigl ( L^2(\R^d) \bigr ) \bigr ) \longrightarrow \mathfrak{L}^p \bigl ( \mathcal{B}(\Hil) \bigr ) 
	. 
\end{align*}
Operators between these non-commutative $L^p$ spaces now transform via the “iterated adjoin operation”, and we set 
\begin{align}
	\Op^A_U = \Ad_{\Ad_U} \, \Op^A 
	= \Ad_U \, \Op^A( \, \cdot \, ) \, \Ad_U^{-1} 
	. 
	\label{rigorous_definition_supercalculus_on_S:eqn:transformation_super_Weyl_quantization}
\end{align}
To derive this from first principles is to see how the magnetic super Weyl system~\eqref{formal_super_calculus:eqn:super_Weyl_system} transforms and then define $\Op^A_U$ from equation~\eqref{formal_super_calculus:eqn:Op_A_formal_definition} and the transformed magnetic super Weyl system 
\begin{align*}
	W^A_U(\Xbf) = \Ad_{\Ad_U} \, W^A(\Xbf) = \Ad_U \, W^A(\Xbf) \, \Ad_U^{-1} 
	. 
\end{align*}
Fortunately, we have already verified the above relationship in equation~\eqref{rigorous_definition_supercalculus_on_S:eqn:super_Weyl_system_new_representation}. 
\begin{proof}[Theorem~\ref{rigorous_definition_supercalculus_on_S:thm:products_independent_of_representation}]
	To showcase how we can deduce that \eg the products $\weyl^B$, $\semisuper^B$ and $\super^B$ do not depend on our choice of representation, we will content ourselves proving this only for the magnetic super Weyl product $\super^B$. The other proofs are analogous. 
	
	Let $F , G \in \Schwartz(\Pspace)$ be two Schwartz functions; by Lemma~\ref{rigorous_definition_supercalculus_on_S:lem:properties_super_Weyl_quantization} the associated magnetic pseudodifferential super operators $\Op^A(F)$ and $\Op^A(G)$ define bounded maps between the $p$-Schatten classes. Let us pretend that there are two super Weyl products $\super^B_U$ and $\super^B$, and then prove with the following computation that they are in fact one and the same: 
	\begin{align}
		\Op_U^A(F \super^B_U G) &= \Op^A_U(F) \, \Op^A_U(G) 
		= \Ad_U \, \Op^A(F) \, \Ad_U^{-1} \, \Ad_U \, \Op^A(G) \, \Ad_U^{-1} 
		\notag 
		\\
		&= \Ad_U \, \Op^A(F \super^B G) \, \Ad_U^{-1} 
		= \Op^A_U(F \super^B G)
		.
		\label{rigorous_definition_supercalculus_on_S:eqn:super_B_independent_of_representation}
	\end{align}
	The above holds in the sense of bounded operators on $\mathcal{B}(\Hil)$ or $\mathfrak{L}^p \bigl ( \mathcal{B}(\Hil) \bigr )$ for any $1 \leq p < \infty$. This finishes the proof. 
\end{proof}
\begin{remark}
	Theorem~\ref{rigorous_definition_supercalculus_on_S:thm:products_independent_of_representation} can be generalized in any number of ways. For example, in the next two sections, we will extend $\semisuper^B$ and $\super^B$ to tempered distributions and Hörmander class symbols, and the above computation, suitably reinterpreted, applies. 
	
	Likewise, we can consider von Neumann sub algebras $\mathcal{A} \subseteq \mathcal{B} \bigl ( L^2(\R^d) \bigr )$: if $\super^B_U = \super^B$ holds on a larger set, then it also holds on a subset as well. 
	
	Lastly, with more effort, we can replace the usual trace $\mathrm{Tr}_{L^2(\R^d)}$ with other f.n.s.\ traces. The difficulty then consists of proving that \eg for suitable functions or distributions $F$ the operator $\Op^A(F)$ defines a map between the non-commutative $L^p$ spaces. Once that has been ensured, computation~\eqref{rigorous_definition_supercalculus_on_S:eqn:super_B_independent_of_representation} holds verbatim in the strong sense on $\mathfrak{L}^p(\mathcal{A})$. 
\end{remark}
%
\section{Extension of super calculus by duality} 
\label{super_calculus_extension_by_duality}
The extension by duality is now rather straight-forward. We mostly adapt the strategy nicely outlined in \cite{Mantoiu_Purice:magnetic_Weyl_calculus:2004}: we extend the kernel map and its inverse to tempered distributions, and introduce the notion of Moyal space and Moyal algebra. Even when we deviate, we still employ the same ideas.

\subsection{The magnetic super Weyl quantization and the magnetic super Wigner transform} 
\label{super_calculus_extension_by_duality:super_weyl_quantization}
There were two ways essentially to extend ordinary (magnetic) Weyl calculus: either one looks at matrix elements $\scpro{\varphi \, }{ \, \op^A(f) \psi}$ for Schwartz functions (see \eg \cite{Cornean_Helffer_Purice:simple_proof_Beals_criterion_magnetic_PsiDOs:2018}); or one uses the kernel map as in \cite{Mantoiu_Purice:magnetic_Weyl_calculus:2004}. We will follow the latter approach. 

When we apply $\Op^A(F)$ to operators $\hat{g}^A = \op^A(g)$ that are the magnetic Weyl quantization of $g \in \Schwartz(\pspace)$, we can use the kernel map 
\begin{align*}
	\Op^A(F) \, \op^A(g) &= \op^A \bigl ( \Int(K_F^B) \, g \bigr ) 
\end{align*}
to describe the action. Hence, the first step is to extend the kernel map. 
\begin{proposition}\label{magnetic_super_PsiDOs:prop:kernel_map_to_symbol_extension_distributions}
	The kernel map $F \mapsto K^B_F$ defined through~\eqref{magnetic_super_PsiDOs:eqn:kernel_semisuper_product_operator} extends to a continuous vector space isomorphism 
	\begin{align*}
		\Schwartz'(\Pspace) \longrightarrow \Schwartz'(\Pspace)
	\end{align*}
	on the tempered distributions. 
\end{proposition}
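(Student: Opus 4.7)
The plan is to leverage the decomposition of the kernel map $F \mapsto K_F^B$ into seven elementary topological vector space isomorphisms on $\Schwartz(\Pspace)$ that was established in the proof of Proposition~\ref{magnetic_super_PsiDOs:prop:kernel_map_semisuper_product}, and to extend each factor separately to $\Schwartz'(\Pspace)$ by standard duality. The composition of these seven extensions will then furnish the desired extension of $F \mapsto K_F^B$, and bijectivity plus bicontinuity on $\Schwartz'(\Pspace)$ follow automatically because each of the seven factors has the same two properties on $\Schwartz'(\Pspace)$.

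The first step is to classify the seven factors from steps~(1)--(7) of the proof of Proposition~\ref{magnetic_super_PsiDOs:prop:kernel_map_semisuper_product} according to their type: either (a)~composition with an invertible linear coordinate change on $\Pspace$, (b)~a partial Fourier transform in selected variables, or (c)~pointwise multiplication by an everywhere invertible function of class $\smoothpoly$ — explicitly, the magnetic flux factor \eqref{eq:supercalculus.magnetic-flux-factor} (which is invertible and $\smoothpoly$ by Lemma~\ref{appendix:oscillatory_integrals:lem:magnetic_phase_factor_estimate}) and the oscillatory prefactor $\tfrac{2^{2d}}{\pi^d \eps^{3d}} \, \e^{- \frac{\ii}{\eps}(z_L \cdot \zeta_R + z_R \cdot \zeta_L)}$.

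The second step is to note that each of the three types (a)--(c) extends canonically from $\Schwartz(\Pspace)$ to a topological vector space isomorphism on $\Schwartz'(\Pspace)$: a linear bijection $T$ acts on tempered distributions via $\sscpro{F \circ T}{\Phi} := \sabs{\det T}^{-1} \, \sscpro{F}{\Phi \circ T^{-1}}$; the partial Fourier transform is defined on $\Schwartz'(\Pspace)$ by its usual transpose action; and multiplication by $\psi \in \smoothpoly$ extends by $\sscpro{\psi \, F}{\Phi} := \sscpro{F}{\psi \, \Phi}$, which is a topological vector space isomorphism on $\Schwartz'(\Pspace)$ whenever $\psi$ is everywhere invertible with $\psi^{-1} \in \smoothpoly$. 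Composing the seven extensions in the order prescribed by steps~(1)--(7) produces the sought continuous vector space isomorphism on $\Schwartz'(\Pspace)$.

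Finally, I would verify that the extension so defined restricts on $\Schwartz(\Pspace) \subset \Schwartz'(\Pspace)$ to the original kernel map of Proposition~\ref{magnetic_super_PsiDOs:prop:kernel_map_semisuper_product}; this follows because each individual duality extension agrees with its Schwartz-class counterpart on the dense subspace $\Schwartz(\Pspace)$, and a composition of such compatible extensions inherits the same property. I do not expect a genuine analytic obstacle here: the entire argument is a bookkeeping exercise, and the only nontrivial input — that the magnetic flux factor together with its reciprocal lies in $\smoothpoly$ — is already in hand from the appendix. The main point to watch is simply that the $\smoothpoly$-regularity of both the magnetic flux factor \emph{and} its inverse is genuinely needed, so that step~(c) yields an \emph{isomorphism} on $\Schwartz'(\Pspace)$ rather than only a continuous map.
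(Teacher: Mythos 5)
Your proposal follows exactly the same route as the paper's own proof: both invoke the factorization of $F \mapsto K_F^B$ from the proof of Proposition~\ref{magnetic_super_PsiDOs:prop:kernel_map_semisuper_product} into linear coordinate changes, partial Fourier transforms, and multiplication by invertible $\smoothpoly$-class functions, each of which extends by duality to an isomorphism of $\Schwartz'(\Pspace)$. Your write-up is somewhat more explicit about the classification into three types and the verification that the extension restricts correctly on $\Schwartz(\Pspace)$, but the argument is the same.
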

\begin{proof}
	Peeking at the proof of Proposition~\ref{magnetic_super_PsiDOs:prop:kernel_map_semisuper_product} we remind ourselves that the kernel map is a combination of linear, continuous, invertible variable transformations, multiplication with $\Cont^{\infty}_{\mathrm{pol}}$ functions and partial Fourier transform. Each of these operations extends to a continuous map on $\Schwartz'(\Pspace)$ and as a composition of continuous maps also $F \mapsto K_F^B$ is a continuous map.

	Moreover, all of the inverse maps are continuous as well so that $F \mapsto K_F^B$ is a continuous linear isomorphism between distributions.
\end{proof}
Since the magnetic super Wigner transform is nothing but the inverse of the kernel map, we have also just extended the magnetic super Wigner transform.
\begin{corollary}
	The magnetic super Wigner transform~\eqref{magnetic_super_PsiDOs:eqn:super_Wigner_transform} extends to a continuous linear isomorphism
	\begin{align*}
		\Wigner^B : \Schwartz'(\Pspace) \longrightarrow \Schwartz'(\Pspace)
		.
	\end{align*}
\end{corollary}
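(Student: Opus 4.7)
The proof is essentially a one-line consequence of the preceding proposition, so the plan is short and structural rather than computational.

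The strategy is to exploit the definition of $\Wigner^B$ as the inverse of the kernel map. Recall that on Schwartz functions, $\Wigner^B : \Schwartz(\Pspace) \longrightarrow \Schwartz(\Pspace)$ was introduced precisely as the inverse of $F \mapsto K_F^B$ (see Proposition~\ref{magnetic_super_PsiDOs:prop:super_Wigner_transform_Schwartz_functions}). Proposition~\ref{magnetic_super_PsiDOs:prop:kernel_map_to_symbol_extension_distributions} has just established that the kernel map extends from $\Schwartz(\Pspace)$ to a continuous linear isomorphism
\begin{align*}
    F \longmapsto K_F^B : \Schwartz'(\Pspace) \longrightarrow \Schwartz'(\Pspace)
    .
\end{align*}
Since continuous linear isomorphisms between locally convex topological vector spaces form a group under composition — in particular the inverse of a continuous linear isomorphism is again a continuous linear isomorphism — we \emph{define} the extension of $\Wigner^B$ to $\Schwartz'(\Pspace)$ as the inverse of this extended kernel map. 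By construction this extension is a continuous linear isomorphism and it agrees with the original $\Wigner^B$ on the dense subspace $\Schwartz(\Pspace) \subset \Schwartz'(\Pspace)$ (the inclusion being via the canonical pairing).

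The only thing one needs to verify is \emph{consistency}: the extension obtained by inverting the extended kernel map on $\Schwartz'(\Pspace)$ restricts to the original map on $\Schwartz(\Pspace)$. But this is automatic, because the extended kernel map restricts to the Schwartz-class kernel map on $\Schwartz(\Pspace)$, so its inverse restricts to the Schwartz-class inverse, which is $\Wigner^B$. Moreover, uniqueness of the extension follows from the density of $\Schwartz(\Pspace)$ in $\Schwartz'(\Pspace)$ in the weak-$\ast$ topology together with continuity.

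No obstacle of substance arises here; the content of the corollary is entirely carried by Proposition~\ref{magnetic_super_PsiDOs:prop:kernel_map_to_symbol_extension_distributions}. The only conceptual point worth flagging is that, unlike for the Schwartz-class case where one also has the explicit oscillatory-integral formula~\eqref{magnetic_super_PsiDOs:eqn:super_Wigner_transform}, on $\Schwartz'(\Pspace)$ the map $\Wigner^B$ is accessible only through its characterisation as the inverse of the kernel map; if one wanted a description in terms of the explicit integrand~\eqref{magnetic_super_PsiDOs:eqn:super_Wigner_transform} one would have to interpret it distributionally, which is straightforward since each of the seven elementary operations (linear coordinate changes, multiplication by a $\smoothpoly$ function, and partial Fourier transforms) used in the proof of Proposition~\ref{magnetic_super_PsiDOs:prop:kernel_map_semisuper_product} extends individually to $\Schwartz'(\Pspace)$.
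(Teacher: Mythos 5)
Your argument is correct and coincides exactly with the paper's: the corollary is noted there to be immediate from Proposition~\ref{magnetic_super_PsiDOs:prop:kernel_map_to_symbol_extension_distributions} since $\Wigner^B$ is by definition the inverse of the kernel map $F \mapsto K_F^B$. The extra remarks you add (consistency on the dense embedding of $\Schwartz(\Pspace)$, and the observation that the explicit formula~\eqref{magnetic_super_PsiDOs:eqn:super_Wigner_transform} can be read distributionally via the same seven elementary operations) are sound and in the spirit of the paper, but not strictly needed.
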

Recall that to every $K\in\Schwartz'(\Pspace)$ we can associate a continuous linear operator $\Int(K) : \Schwartz(\pspace)\rightarrow\Schwartz'(\pspace)$, where $\Int(K)$ is defined by
\begin{equation*}
    \bigl ( \Int(K)g \, , \, h \bigr )_{\Schwartz(\pspace)} := \bigl ( K \, , \, h\otimes g \bigr )_{\Schwartz(\Pspace)} \qquad \forall g,h\in\Schwartz(\pspace) .
\end{equation*}
This generalizes the integral operator given in~\eqref{rigorous_definition_supercalculus_on_S:eqn:definition_integral_map}. We know by the Schwartz kernel theorem (\cf \cite[Section 51]{Treves:topological_vector_spaces:1967}) that $\Int$ induces a topological vector space isomorphism from $\Schwartz'(\Pspace)$ to $\bounded \bigl ( \Schwartz(\pspace) , \Schwartz'(\pspace) \bigr )$. Combining this with Proposition~\ref{magnetic_super_PsiDOs:prop:kernel_map_to_symbol_extension_distributions} shows that, given $F \in \Schwartz'(\Pspace)$, we obtain a continuous linear operator $\Int(K_F^B) : \Schwartz(\pspace) \longrightarrow \Schwartz'(\pspace)$. With this in hand, the basis for the extension of $\Op^A(F)$ is the equation~\eqref{rigorous_definition_supercalculus_on_S:eqn:definition_kernel_map_semi_super_product}. Namely, for $F \in \Schwartz'(\Pspace)$ and $g\in\Schwartz(\pspace)$ we define a continuous linear map $g\mapsto F\semisuper^B g$ from $\Schwartz(\pspace)$ to $\Schwartz'(\pspace)$ by letting
\begin{align*}
    F\semisuper^B g := \Int(K_F^B) \, g .
\end{align*}
%

\subsection{The magnetic semi-super product} 
\label{super_calculus_extension_by_duality:semi_super_product}
Throughout the paper, given a function $G\in\Schwartz(\Pspace)$, we shall denote by $G^{\mathrm{t}}$ the function 
\begin{align*}
	G^{\mathrm{t}}(X_L,X_R) := G(X_R,X_L) \qquad \forall X_L,X_R\in\pspace .
\end{align*}
Given $F\in\Schwartz'(\Pspace)$, let $F^{\mathrm{t}}$ be the extension of this transpose map to $\Schwartz'(\Pspace)$ via 
	\begin{align*}
		( F^{\mathrm{t}} , G )_{\Schwartz(\Pspace)} := (F , G^{\mathrm{t}})_{\Schwartz(\Pspace)} 
		&&
		\forall G \in \Schwartz(\Pspace) 
		. 
	\end{align*}
Since $F^{\mathrm{t}}\in\Schwartz'(\Pspace)$, the above arguments show that $h\mapsto F^{\mathrm{t}}\semisuper^B h$ gives rise to a continuous linear map from $\Schwartz(\pspace)$ to $\Schwartz'(\pspace)$.
\begin{definition}[Magnetic semi-super Moyal space]\label{magnetic_super_PsiDOs:defn:semi_super_Moyal_space}
	We say $F \in \Schwartz'(\Pspace)$ belongs to the semi-super Moyal space $\ssMoyalSpace \subseteq \Schwartz'(\Pspace)$ if
	\begin{align*}
		h \mapsto F^{\mathrm{t}} \semisuper^B h
	\end{align*}
	induces a linear, continuous homomorphism from $\Schwartz(\pspace)$ to \emph{itself}. Then for all $F \in \ssMoyalSpace$ and $g \in \Schwartz'(\pspace)$ we set
	\begin{align}
		\bigl ( F \semisuper^B g \, , \, h \bigr )_{\Schwartz(\pspace)} := \bigl ( g \, , \, F^{\mathrm{t}} \semisuper^B h \bigr )_{\Schwartz(\pspace)}
		&&
		\forall h \in \Schwartz(\pspace)
		.
		\label{magnetic_super_PsiDOs:eqn:extension_semi_super_product}
	\end{align}
\end{definition}
\begin{remark}\label{magnetic_super_PsiDOs:rem:remark:topology_semi_super_Moyal_space}
	If $B$ is a bounded subset of $\Schwartz(\pspace)$ and $F \in \ssMoyalSpace$, then since $h \mapsto F^{\mathrm{t}} \semisuper^B h$ induces a continuous linear map from $\Schwartz(\pspace)$ to itself, the set $\bigl \{ F^{\mathrm{t}} \semisuper^B h \; \vert \; h \in B \bigr \}$ is bounded in $\Schwartz(\pspace)$ as well. Combining this with the definition~\eqref{magnetic_super_PsiDOs:eqn:extension_semi_super_product} we get $\sup_{h\in B}|(F\semisuper^B g , h )| = \sup_{h\in B}|(g , F^{\mathrm{t}}\semisuper^B h)|<\infty$, which shows that $g \mapsto F \semisuper^B g$ gives rise to a continuous linear map from $\Schwartz'(\pspace)$ to itself with respect to the topology of uniform convergence on bounded subsets.
\end{remark}
Naturally, the semi-super Moyal space contains the standard Hörmander symbol spaces as well as those defined in Definition~\ref{symbol_super_calculus:defn:Hoermander_super_symbols}. 
\begin{lemma}\label{super_calculus_extension_by_duality:lem:Hoermander_symbol_classes_contained_in_semi_super_Moyal_space}
	Suppose the magnetic field is polynomially bounded in the sense of Assumption~\ref{intro:assumption:polynomially_bounded_magnetic_field}, $0 \leq \rho \leq 1$ and $0 < \eps , \lambda \leq 1$. Then the following holds: 
	\begin{enumerate}[(1)]
	    \item For any $m \in \R$, we have $S_{\rho,0}^m(\Pspace) \subset \ssMoyalSpace$.
	    \item For any $m_L , m_R \in \R$, we have $S_{\rho,0}^{m_L,m_R}(\Pspace) \subset \ssMoyalSpace$.
	\end{enumerate}
\end{lemma}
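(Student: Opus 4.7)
The plan is to verify the defining condition of $\ssMoyalSpace$ by using the explicit integral formula~\eqref{magnetic_super_PsiDOs:eqn:semi_super_product_integral_expression_no_Fourier}, treating the integral in the oscillatory sense. Since the transpose $F \mapsto F^{\mathrm{t}}$ merely swaps left and right phase-space variables, both $S^m_{\rho,0}(\Pspace)$ and $S^{m_L,m_R}_{\rho,0}(\Pspace)$ are preserved (up to relabelling of exponents), so it suffices to show that for $F$ in one of these classes the assignment $h \mapsto F \semisuper^B h$ maps $\Schwartz(\pspace)$ continuously into itself. Writing~\eqref{magnetic_super_PsiDOs:eqn:semi_super_product_integral_expression_no_Fourier} as
\begin{align*}
	F \semisuper^B h(X) = \frac{1}{(2\pi)^{3d}} \int_{\Pspace \times \pspace} \dd \Ybf \, \dd Z \; \e^{\ii \Phi(\Ybf,Z)} \, a(X,\Ybf,Z)
\end{align*}
with phase $\Phi(\Ybf,Z) := y_L \cdot \eta_L + y_R \cdot \eta_R + z \cdot \zeta$ and amplitude
\begin{align*}
	a(X,\Ybf,Z) := \e^{- \ii \lambda \recfluxp(x,y_L,y_R,z)} \, F \bigl ( x - \tfrac{\eps}{2}(y_R+z), \xi - \eta_L, x + \tfrac{\eps}{2}(y_L+z), \xi - \eta_R \bigr ) \, h \bigl ( x + \tfrac{\eps}{2}(y_L-y_R), \xi - \zeta \bigr )
	,
\end{align*}
reduces the claim to an oscillatory-integral estimate with the parameter $X$.

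By Lemma~\ref{appendix:oscillatory_integrals:lem:magnetic_phase_factor_estimate} the magnetic phase factor and each of its derivatives lies in $\smoothpoly$, so only $F$ and $h$ drive the growth of $a$. Applying the oscillatory-integral machinery of Appendix~\ref{appendix:oscillatory_integrals}, I would integrate by parts using
\begin{align*}
	(1 - \Delta_{y_L})^N \, \e^{\ii y_L \cdot \eta_L} = \sexpval{\eta_L}^{2N} \, \e^{\ii y_L \cdot \eta_L}
\end{align*}
and its analogues in $(y_R,\eta_R)$ and $(z,\zeta)$ to produce factors $\sexpval{\eta_L}^{-2N} \sexpval{\eta_R}^{-2N} \sexpval{\zeta}^{-2N}$; these dominate the Hörmander growth $\sexpval{\xi-\eta_L}^{m_L} \sexpval{\xi-\eta_R}^{m_R}$, separated from $\sexpval{\xi}$ via Peetre's inequality, together with the polynomial growth of $\smoothpoly$-derivatives of $\recfluxp$, once $N$ is chosen large enough. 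The reverse integrations by parts in $\eta_L,\eta_R,\zeta$ then convert the Schwartz decay of $h$ into decay in $y_L,y_R,z$. Crucially, the assumption $\delta = 0$ ensures that $x$-derivatives of $F$ do not generate any $x$-growth, so the amplitude bound is uniform in the parameter $X$, establishing absolute convergence of the oscillatory integral.

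To show that $F \semisuper^B h$ is Schwartz and that $h \mapsto F \semisuper^B h$ is continuous, I would differentiate under the integral sign: derivatives $\partial_x^a \partial_\xi^\alpha$ distribute onto the three factors of $a$, each of which stays in its original class (the Hörmander order of $F$ decreasing by $\sabs{\alpha}\rho$). Rapid decay of $F \semisuper^B h$ in $x$ is obtained by writing $x = \bigl ( x + \tfrac{\eps}{2}(y_L-y_R) \bigr ) - \tfrac{\eps}{2}(y_L - y_R)$ and absorbing the first piece into the Schwartz decay of $h$ in its first argument, while handling the residual polynomial in $(y_L,y_R)$ via additional integrations by parts in $\eta_L,\eta_R$. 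Decay in $\xi$ is produced symmetrically by splitting $\xi = (\xi - \zeta) + \zeta$ and using $\zeta^\beta \e^{\ii z \cdot \zeta} = (-\ii \partial_z)^\beta \e^{\ii z \cdot \zeta}$ together with integration by parts in $z$. Tracking these bounds shows that each Schwartz seminorm of $F \semisuper^B h$ is dominated by a finite product of Schwartz seminorms of $h$ and Hörmander seminorms of $F$, which is exactly continuity. The bi-symbol case $S^{m_L,m_R}_{\rho,0}(\Pspace)$ requires no new ideas, since the $\sexpval{\xi_L}^{m_L-\sabs{\alpha_L}\rho}$ and $\sexpval{\xi_R}^{m_R-\sabs{\alpha_R}\rho}$ factors are controlled independently by the corresponding groups of integrations by parts. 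The main obstacle is the combinatorial bookkeeping: the shifts in the arguments of $F$ intertwine $X$-decay with integration-variable decay, so each estimate must be stated carefully enough, via Peetre-type inequalities, to separate $\sexpval{\xi}$-growth from $\sexpval{\eta_L}, \sexpval{\eta_R}, \sexpval{\zeta}$-growth before the integration-by-parts gain is applied.
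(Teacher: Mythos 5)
Your approach is essentially the paper's: both start from the explicit representation of the semi-super product, insert $\sexpval{\cdot}^{-2N}$ regularizing factors in the paired position and momentum variables through two rounds of integration by parts, and combine the magnetic-flux estimate (Corollary~\ref{appendix:oscillatory_integrals:cor:magnetic_flux_estimate_semi_super_product}, which is the bound actually needed for $\recfluxp$, rather than Lemma~\ref{appendix:oscillatory_integrals:lem:magnetic_phase_factor_estimate} as you cite) with Peetre's inequality to obtain absolute convergence and the seminorm estimate. The only cosmetic differences are the order of the two IBP rounds (you extract momentum decay first and then position decay, the paper does the reverse) and that you work with $F \semisuper^B h$ and invoke the transpose at the end rather than estimating $F^{\mathrm{t}} \semisuper^B g$ directly — neither affects the argument.
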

Because the proof is quite laborious, yet standard, we have relegated it to Appendix~\ref{appendix:oscillatory_integrals:semi_super_product}. 

\subsection{The magnetic super Weyl product} 
\label{super_calculus_extension_by_duality:super_Weyl_product}
The last ingredient is an extension of the product itself. Here, we follow the playbook of \cite{Mantoiu_Purice:magnetic_Weyl_calculus:2004}. We exploit the following to extend the magnetic super Weyl product via the duality bracket: 
\begin{lemma}
	Suppose $F , G , H \in \Schwartz(\Pspace)$ are Schwartz functions. Then we the following holds: 
	\begin{enumerate}[(1)]
		\item $\displaystyle \int_{\Pspace} \dd \Xbf \, F \super^B G(\Xbf) = \int_{\Pspace} \dd \Xbf \, F(\Xbf)\, G(\Xbf) = {\sscpro {\bar{F}} G}_{L^2(\Pspace)} = (F,G)$
		\item $\bigl ( F \super^B G \, , \, H \bigr ) = \bigl ( F \, , \, G \super^B H \bigr ) = \bigl ( H \super^B F \, , \, G \bigr )$
	\end{enumerate}
\end{lemma}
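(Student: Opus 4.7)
The last two equalities in (1) are mere notational unpacking: $\sscpro{\bar F}{G}_{L^2(\Pspace)} = \int \overline{\bar F(\Xbf)} \, G(\Xbf) \, \dd \Xbf = \int F(\Xbf) \, G(\Xbf) \, \dd \Xbf$, and the duality pairing $(F,G)$ between Schwartz functions is by definition the same integral. The substance of (1) is therefore the single identity $\int_{\Pspace} \dd \Xbf \, F \super^B G(\Xbf) = \int_{\Pspace} \dd \Xbf \, F(\Xbf) \, G(\Xbf)$.

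To prove this I would substitute the explicit product formula~\eqref{magnetic_super_PsiDOs:eqn:magnetic_super_Weyl_product} and exploit that both the symplectic form and the magnetic phase split into left and right pieces: $\Sigma(\Xbf, \Ybf+\Zbf) = \sigma(X_L, Y_L+Z_L) + \sigma(X_R, Y_R+Z_R)$, and the magnetic phase is already the product of $\expo{-\ii\lambda\trifluxp(x_L,y_L,z_L)}$ and $\expo{-\ii\lambda\trifluxp(x_R,z_R,y_R)}$. Hence the $\Xbf$-integration factors into two independent oscillatory integrals, each of the form $\int \dd X \, \expo{\ii\sigma(X, Y+Z)} \expo{-\ii\lambda\trifluxp(x,y,z)}$ that governs the analogous identity for the ordinary magnetic Weyl product. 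The key geometric input is that $\trifluxp(x, y, -y) = 0$: the defining triangle of $\trifluxp$ degenerates to a line segment and encloses no flux. After regularization one finds that each $X_L$- and $X_R$-integration produces the effective factor $(2\pi)^{2d}\delta(Y_L + Z_L)$ and $(2\pi)^{2d}\delta(Y_R + Z_R)$ respectively, because on the support of these delta distributions the magnetic phases reduce to $1$. On the joint support $\Zbf = -\Ybf$ the residual phase $\expo{\ii\frac{\eps}{2}\Sigma(r(\Ybf), \Zbf)}$ is also trivial, since $\sigma(Y_L, -Y_L) = 0 = \sigma(-Y_R, -Y_R)$. What remains is
\begin{align*}
	\int_{\Pspace} \dd \Xbf \, F \super^B G(\Xbf) = \int_{\Pspace} \dd \Ybf \, (\Fourier_\Sigma F)(\Ybf) \, (\Fourier_\Sigma G)(-\Ybf) = \int_{\Pspace} \dd \Xbf \, F(\Xbf) \, G(\Xbf) ,
\end{align*}
where the final equality is Plancherel's identity for $\Fourier_\Sigma$ on $\Pspace$ (using $\Fourier_\Sigma^2 = \id$ and antisymmetry of $\Sigma$).

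For part (2) I would deduce everything from (1) together with associativity of $\super^B$. Associativity is automatic from the definition: since $\Op^A(F \super^B G) = \Op^A(F) \Op^A(G)$, the associativity of operator composition gives $\Op^A\bigl((F \super^B G) \super^B H\bigr) = \Op^A(F)\Op^A(G)\Op^A(H) = \Op^A\bigl(F \super^B (G \super^B H)\bigr)$; and $\Op^A$ is injective on $\Schwartz(\Pspace)$ via the kernel map. Since $\super^B$ sends $\Schwartz(\Pspace) \times \Schwartz(\Pspace)$ continuously into $\Schwartz(\Pspace)$, each intermediate symbol below remains Schwartz and (1) is legitimately applicable:
\begin{align*}
	(F \super^B G, H) &= \int_{\Pspace} \dd \Xbf \, (F \super^B G)(\Xbf) \, H(\Xbf) = \int_{\Pspace} \dd \Xbf \, \bigl((F \super^B G) \super^B H\bigr)(\Xbf) \\
	&= \int_{\Pspace} \dd \Xbf \, \bigl(F \super^B (G \super^B H)\bigr)(\Xbf) = \int_{\Pspace} \dd \Xbf \, F(\Xbf) \, (G \super^B H)(\Xbf) = (F, G \super^B H) .
\end{align*}
Pointwise commutativity $\int (F \super^B G) \, H \, \dd\Xbf = \int H \, (F \super^B G) \, \dd\Xbf$ combined with the same chain delivers $(F \super^B G, H) = (H \super^B F, G)$.

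The main technical hurdle is legitimizing the oscillatory-integral manipulations of part (1): the quadruple integral over $(\Xbf, \Ybf, \Zbf)$ is only conditionally convergent, so extracting the delta factors and interchanging the order of integration requires regularization. I would handle this by inserting a Schwartz cutoff in $\Xbf$, applying Fubini to the regularized integral, and passing to the limit using integration-by-parts identities to tame the $\Xbf$-dependence of $\trifluxp$; since the derivatives of the magnetic phase factor are polynomially bounded under Assumption~\ref{intro:assumption:polynomially_bounded_magnetic_field}, this mirrors the oscillatory integral bounds collected in Appendix~\ref{appendix:oscillatory_integrals}.
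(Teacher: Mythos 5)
Your proof is correct and follows essentially the same route as the paper: for part (1), substitute the explicit product formula, integrate out $\Xbf$ to produce $\delta(\Ybf+\Zbf)$, observe that the flux triangles degenerate on the support (so the magnetic phases and the $\eps$-phase are trivially $1$), and conclude with the Plancherel identity for $\Fourier_\Sigma$; for part (2), the paper declares it a ``direct consequence of (1)'' and your associativity argument is exactly the content of that remark. One small phrasing issue: you say the $\Xbf$-integration produces $\delta(Y_L+Z_L)$ \emph{because} the magnetic phase reduces to $1$ on the support of that delta, which as stated is circular — the clean resolution, which your argument implicitly relies on, is to integrate out $\xi_L$ first (the magnetic phase is $\xi$-independent, so this really does produce $\delta(y_L+z_L)$), then use $\trifluxp(x_L,y_L,-y_L)=0$ on that support to drop the $x_L$-dependence before integrating out $x_L$ to get $\delta(\eta_L+\zeta_L)$.
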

\begin{proof}
	Clearly, (2) is a direct consequence of (1). 
	
	So let us turn to the proof of (1): we plug in the explicit expression~\eqref{magnetic_super_PsiDOs:eqn:magnetic_super_Weyl_product} for the magnetic super Weyl product and first integrate over $\Xbf$, which produces $(2\pi)^{4d} \, \delta(\Ybf + \Zbf)$, 
	\begin{align*}
		\int_{\Pspace} \dd \Xbf \, F \super^B G(\Xbf) &= \frac{1}{(2\pi)^{4d}} \int_{\Pspace} \dd \Xbf \int_{\Pspace} \dd \Ybf \int_{\Pspace} \dd \Zbf \, \e^{+ \ii \Sigma(\Xbf,\Ybf+\Zbf)} \, \e^{+ \ii \frac{\eps}{2} \Sigma(r(\Ybf),\Zbf)} \, 
		\cdot \\
		&\qquad \qquad \cocyp \bigl ( x_L - \tfrac{\eps}{2}(y_L+z_L) , y_L , z_L \bigr ) \; \cocyp \bigl ( x_R-\tfrac{\eps}{2}(y_R+z_R) , z_R , y_R \bigr) \, 
		\cdot \\
		&\qquad \qquad (\Fourier_{\Sigma} F)(\Ybf) \, (\Fourier_{\Sigma} G)(\Zbf)
		\\
		&= \int_{\Pspace} \dd \Ybf \, \e^{+ \ii \frac{\eps}{2} \Sigma(r(\Ybf),-\Ybf)} \, \cocyp \bigl ( x_L - \tfrac{\eps}{2}(y_L-y_L) , y_L , -y_L \bigr ) \,
 		\cdot \\
		&\qquad \qquad \cocyp \bigl ( x_R - \tfrac{\eps}{2} (y_R - y_R) , -y_R , y_R \bigr ) \; (\Fourier_{\Sigma} F)(\Ybf) \; (\Fourier_{\Sigma} G)(-\Ybf) 
		. 
	\end{align*}
	Because the two magnetic flux triangles collapse and have zero area, the two magnetic phase factors are in fact $1$, and after writing out the symplectic Fourier transform and integrating over the free variables, we are left with 
	\begin{align*}
		\ldots &= \int_{\Pspace} \dd \Ybf \, (\Fourier_{\Sigma} F)(\Ybf)\, (\Fourier_{\Sigma} G)(-\Ybf) 
		\\
		&= \frac{1}{(2\pi)^{4d}} \int_{\Pspace} \dd \Ybf \int_{\Pspace} \dd \Ybf' \int_{\Pspace} \dd \Zbf' \, \e^{+ \ii \Sigma(\Ybf,\Ybf')} \, \e^{+ \ii \Sigma(-\Ybf,\Zbf')} \, F(\Ybf')\, G(\Zbf') 
		\\
		&= \int_{\Pspace} \dd \Ybf \, F(\Ybf) \, G(\Ybf) = (F,G) 
		= {\sscpro{\bar{F}} G}_{L^2(\Pspace)} 
		.
	\end{align*}
	This finishes the proof. 
\end{proof}
The extension by duality is now straightforward. 
\begin{definition}[Extension via duality]
	We extend the magnetic super Weyl product of $F \in \Schwartz'(\Pspace)$ and $G \in \Schwartz(\Pspace)$ by setting 
	\begin{align*}
		\bigl (F \super^B G \, , \, H \bigr )_{\Schwartz(\Pspace)} := \bigl ( F \, , \, G \super^B H \bigr )_{\Schwartz(\Pspace)} 
		&&
		\forall H \in \Schwartz(\Pspace) 
		, 
		\\
		\bigl ( G \super^B F \, , \, H \bigr )_{\Schwartz(\Pspace)} := \bigl ( F \, , \, H \super^B G \bigr )_{\Schwartz(\Pspace)} 
		&&
		\forall H \in \Schwartz(\Pspace) 
		.
	\end{align*}
\end{definition}
Of course, we are primarily interested in subspaces of distributions, which form an algebra with respect to the super Weyl product. Unlike for the semi-super product, we need to distinguish between distributions that are nicely behaved under multiplication from the left and from the right. 
\begin{definition}[Magnetic super Moyal algebra]
	We define
	\begin{align*}
		\MoyalAlgebra_L^B(\Pspace) &:= \bigl \{ F\in\Schwartz'(\Pspace) \; \; \vert \; \; F \super^B G \in \Schwartz(\Pspace) \; \forall G\in\Schwartz(\Pspace) \bigr \} 
		\\
		\MoyalAlgebra_R^B(\Pspace) &:= \bigl \{ F \in \Schwartz'(\Pspace) \; \; \vert \; \; G \super^B F \in \Schwartz(\Pspace) \; \forall G \in \Schwartz(\Pspace) \bigr \} 
		,
	\end{align*}
	and then we define $\sMoyalAlg = \MoyalAlgebra_L^B(\Pspace) \cap \MoyalAlgebra_R^B(\Pspace)$.
\end{definition}
The product of $F,G\in\sMoyalAlg$ can be defined in the same way as in~\cite{Mantoiu_Purice:magnetic_Weyl_calculus:2004} via duality,
\begin{align}
    \bigl ( F\super^B G \, , \, H \bigr ) := \bigl ( F \, , \, G\super^B H \bigr ) \qquad \forall H\in\Schwartz(\Pspace) .
    \label{super_calculus_extension_by_duality:eqn:product_on_semi_super_Moyal_space}
\end{align}
From the very definition of $\sMoyalAlg$ we can deduce that $F\super^B G\in\sMoyalAlg$.

The magnetic super Moyal algebra contains many classes of functions, including the two types of Hörmander symbols we have introduced earlier and in  Definition~\ref{symbol_super_calculus:defn:Hoermander_super_symbols}. 
\begin{lemma}\label{symbol_super_calculus:lem:Hoermander_symbols_in_super_Moyal_algebra}
	Suppose the magnetic field $B$ satisfies Assumption~\ref{intro:assumption:bounded_magnetic_field} and that $0 \leq \rho \leq 1$, $0 < \eps , \lambda \leq 1$. Then the following holds:
	\begin{enumerate}[(1)]
		\item For any $m \in \R$, we have $S_{\rho,0}^m(\Pspace) \subset \sMoyalAlg$.
		\item For any $m_L , m_R \in \R$, we have $S_{\rho,0}^{m_L,m_R}(\Pspace) \subset \sMoyalAlg$.
	\end{enumerate}
\end{lemma}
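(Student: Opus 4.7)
My plan is to follow the same strategy as the proof of Lemma~\ref{super_calculus_extension_by_duality:lem:Hoermander_symbol_classes_contained_in_semi_super_Moyal_space} in Appendix~\ref{appendix:oscillatory_integrals:semi_super_product}, but adapted to the super Weyl product $\super^B$ in place of $\semisuper^B$. Concretely, for $F$ in one of the two Hörmander classes and $G\in\Schwartz(\Pspace)$, I need to show that both $F\super^B G$ and $G\super^B F$ lie in $\Schwartz(\Pspace)$; by the very definition of $\sMoyalAlg$ this then establishes both (1) and (2).

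First I would recast the formula~\eqref{magnetic_super_PsiDOs:eqn:magnetic_super_Weyl_product} so that $F$ and $G$ appear directly rather than through their symplectic Fourier transforms. Writing out $\Fourier_{\Sigma}F$ and $\Fourier_{\Sigma}G$ as integrals and integrating out half of the conjugate variables produces delta distributions whose only effect is to affinely shift the arguments of $F$ and $G$. This recasts the product as an oscillatory integral of the schematic form
\begin{align*}
	F \super^B G(\Xbf) = c \int_{\Pspace}\dd\Ybf \int_{\Pspace}\dd\Zbf \; \e^{+\ii\Phi(\Xbf,\Ybf,\Zbf)} \, \omega^B_{\Xbf}(\Ybf,\Zbf) \, F(\Xbf + T_F(\Ybf,\Zbf)) \, G(\Xbf + T_G(\Ybf,\Zbf)),
\end{align*}
where $\Phi$ is a quadratic phase coupling $\Xbf$, $\Ybf$ and $\Zbf$, the $T_{F/G}$ are suitable linear shifts, and $\omega^B_{\Xbf}$ is a smooth magnetic phase factor whose derivatives are $\smoothpoly$ under Assumption~\ref{intro:assumption:bounded_magnetic_field}, thanks to Lemma~\ref{appendix:oscillatory_integrals:lem:magnetic_phase_factor_estimate}.

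Next I would regularize the integral by standard oscillatory integral manipulations. Repeated integration by parts against the $\Ybf$- and $\Zbf$-parts of $\Phi$ produces arbitrarily high negative powers of $\sexpval{\Ybf}$ and $\sexpval{\Zbf}$, which combined with the Schwartz decay of $G$ in \emph{all} of its arguments render the integrand absolutely integrable even when $F$ contributes polynomial growth of order $m$ (or $m_L$, $m_R$). Analogously, integration by parts against the $\Xbf$-dependence of $\Phi$ allows me to absorb any prescribed polynomial weight $\sexpval{\Xbf}^N$ and to justify bringing any number of derivatives $\partial^{\alpha}_{\Xbf}$ under the integral sign; those derivatives either act on the shifted $F$ (yielding symbols in the same Hörmander class with controlled seminorms), on the shifted $G$ (staying Schwartz), or on $\omega^B_{\Xbf}$ (producing another $\smoothpoly$ factor). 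All resulting integrands can be bounded by a product of a finite number of Hörmander seminorms of $F$ and Schwartz seminorms of $G$, proving both that $F\super^B G\in\Schwartz(\Pspace)$ and that the bilinear map is continuous. The argument for $G\super^B F$ is entirely symmetric.

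I expect the main difficulty to lie in the bookkeeping of the magnetic phase factors. Each integration by parts not only produces the desired negative powers of $\sexpval{\Ybf}$, $\sexpval{\Zbf}$ or $\sexpval{\Xbf}$, but also derivatives of $\omega^B_{\Xbf}$ and $\Phi$; the mixing of the two triangle fluxes $\trifluxp(x_L,y_L,z_L)$ and $\trifluxp(x_R,z_R,y_R)$ makes the combinatorics heavier than in the semi-super case. Keeping these derivatives compatible with the Hörmander weights $\sexpval{\xi_L}^{m_L-\abs{\alpha}\rho}\,\sexpval{\xi_R}^{m_R-\abs{\beta}\rho}$ arising from derivatives of $F$ is the crux of the estimate, and is precisely why the stronger Assumption~\ref{intro:assumption:bounded_magnetic_field} (rather than mere polynomial boundedness of $B$) is invoked here.
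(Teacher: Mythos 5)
Your overall strategy matches the paper's: reduce to showing $F \super^B G, G \super^B F \in \Schwartz(\Pspace)$, rewrite the product so $F$ and $G$ appear directly with linearly $\Xbf$-shifted arguments, regularize with $L$-operators, and control everything by Hörmander seminorms of $F$, Schwartz seminorms of $G$, and the magnetic flux estimates under Assumption~\ref{intro:assumption:bounded_magnetic_field}. However, there is one step that, as stated, would fail. You claim that ``integration by parts against the $\Xbf$-dependence of $\Phi$ allows me to absorb any prescribed polynomial weight $\sexpval{\Xbf}^N$.'' But the reduction you describe one sentence earlier — writing out the symplectic Fourier transforms and integrating out the resulting delta distributions — produces a phase with \emph{no} $\Xbf$-dependence at all: in the paper's version it is simply $y_L \cdot \eta_L + z_L \cdot \zeta_L + y_R \cdot \eta_R + z_R \cdot \zeta_R$. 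All the $\Xbf$-dependence has migrated into the arguments of $F$, $G$ and the magnetic flux factor. There is therefore no $\Xbf$-derivative of the phase to integrate against, and the proposed mechanism for absorbing $\sexpval{\Xbf}^N$ cannot be implemented in this form.

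The correct source of arbitrary polynomial $\Xbf$-decay — which you do mention in passing, so this is fixable rather than fatal — is exclusively the Schwartz decay of $G$ at its shifted arguments $\bigl(x_L + \tfrac{\eps}{2}y_L, \xi_L - \zeta_L, x_R - \tfrac{\eps}{2}y_R, \xi_R - \zeta_R\bigr)$, combined with Peetre's inequality to disentangle $\sexpval{x_L}^{-J_L}\sexpval{\xi_L}^{-Q_L}\sexpval{x_R}^{-J_R}\sexpval{\xi_R}^{-Q_R}$ from the integration variables (at the cost of inflating the polynomial growth in $\Ybf, \Zbf$, which must then be compensated by choosing the regularization constants $N_L, M_L, D_L, K_L$, etc.\ sufficiently large). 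Note also that the same decomposition handles $F \super^B G$ and $G \super^B F$; it does not matter which factor is the Hörmander symbol, since the only role of the Schwartz factor is to supply the arbitrary $\Xbf$-decay. Once you replace the nonexistent integration by parts against $\Xbf$ with $G$'s Schwartz decay plus Peetre, your argument coincides with the paper's proof.
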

The readers can find the proof in Appendix~\ref{appendix:oscillatory_integrals:super_weyl_product}. 

Of course, $\sMoyalAlg$ contains other topological vector spaces made up of functions, \eg the uniformly polynomially bounded functions $\Cont^{\infty}_{\mathrm{pol,u}}(\Pspace)$ (compare with \cite[Proposition~23]{Mantoiu_Purice:magnetic_Weyl_calculus:2004}). But we shall not give a proof here. 
\section{A magnetic pseudodifferential super operator calculus for Hörmander symbols} 
\label{symbol_super_calculus}
We left off by extending magnetic super Weyl calculus by duality to distributions. Of particular importance are spaces of distributions which have good composition properties; that is the magnetic semi-super Moyal space $\ssMoyalSpace$ and the magnetic super Moyal algebra $\sMoyalAlg$. But for the most part, the purpose of this whole procedure was to extend magnetic super Weyl calculus to functions which are not necessarily Schwartz function. Typically, we are interested in either Hörmander symbol classes $S^m_{\rho,\delta}(\Pspace)$, which is defined just as in Definition~\ref{magnetic_Weyl_calculus:defn:Hoermander_classes} but with $\pspace$ being replaced by $\Pspace$; we will re-use the notation introduced there. For some of our results, we will also introduce 
\begin{definition}[Hörmander super symbol classes $S^{m_L,m_R}_{\rho,\delta}(\Pspace)$]\label{symbol_super_calculus:defn:Hoermander_super_symbols}
	Let $m_L , m_R \in \R$, $0 \leq \delta \leq \rho \leq 1$ and $\delta<1$. The topological vector space $S_{\rho,\delta}^{m_L,m_R}(\Pspace)$ consists of functions $F \in \Cont^{\infty}(\Pspace)$ such that, for all $a_L , a_R , \alpha_L , \alpha_R \in \N_0^d$, there exists $C_{a_L a_R \alpha_L \alpha_R} > 0$ such that, for all $\Xbf = (X_L,X_R) \in \Pspace$, we have 
	\begin{align*}
		\babs{\partial_{x_L}^{a_L} \partial_{\xi_L}^{\alpha_L} \partial_{x_R}^{a_R} \partial_{\xi_R}^{\alpha_R} F(X_L,X_R)} \leq C_{a_L a_R \alpha_L \alpha_R} \; \sexpval{\xi_L}^{m_L - \sabs{\alpha_L} \rho + \sabs{a_L} \delta} \; \sexpval{\xi_R}^{m_R - \sabs{\alpha_R} \rho + \sabs{a_R} \delta} 
		. 
	\end{align*}
	The smallest such constants $C_{a_L a_R \alpha_L \alpha_R}$ are the seminorms 
	\begin{align*}
		\snorm{F}_{m_L m_R,a_L a_R \alpha_L \alpha_R} := \sup_{\Xbf \in \Pspace} \Bigl ( \sexpval{\xi_L}^{- m_L + \sabs{\alpha_L} \rho - \sabs{a_L} \delta} \; \sexpval{\xi_R}^{- m_R + \sabs{\alpha_R} \rho - \sabs{a_R} \delta} \; \babs{\partial_{x_L}^{a_L} \partial_{\xi_L}^{\alpha_L} \partial_{x_R}^{a_R} \partial_{\xi_R}^{\alpha_R} F(X_L,X_R)} \Bigr ) 
		. 
	\end{align*}
\end{definition}
These differ from regular Hörmander symbols in that we can keep track of the behavior in left and right momenta separately. They have been used occasionally in the literature before, albeit not in the context of a pseudodifferential \emph{super} calculus. Kumano-go has introduced a more general version of $S^{m_L,m_R}_{\rho,\delta}(\Pspace)$ (\cf \cite[Definition~2.1]{Kumanogo:PsiDOs_multiple_symbols_L2_boundedness:1975}) and used these generalized symbol classes to efficiently study the formulas for the adjoint and the analog of the Weyl product for the Kohn-Nierenberg quantization in Section~2.2 of his book \cite{Kumanogo:pseudodiff:1981}; the Kohn-Nierenberg quantization is equivalent to Weyl quantization (\cf \cite[Section~3.2]{Mantoiu_Purice_Richard:twisted_X_products:2004}) and corresponds to a different operator ordering. At roughly the same time Rodino \cite{Rodino:PsiDOs_manifolds:1975} introduced these symbol classes on open sets in Euclidean spaces. 
\begin{remark}[Alternate set of seminorms]
	For our estimates, we will instead use the seminorms
	\begin{align}
		q_N^{m_L,m_R}(F) := \max_{\sabs{a_L} + \sabs{a_R} + \sabs{\alpha_L} + \sabs{\alpha_R} \leq N} \snorm{F}_{m_L m_R,a_L a_R \alpha_L \alpha_R} 
		, 
		&&
		m_L , m_R \in \R , \; N \in \N_0 
		. 
		\label{symbol_super_calculus:eqn:max_seminorms_super_Hoermander_symbols}
	\end{align}
\end{remark}
\begin{remark}[Hörmander super symbol classes do not nest into $S^m_{\rho,\delta}(\Pspace)$]\label{symbol_super_calculus:rem:nesting_Hoermander_super_Hoermander_classes}
	It is tempting to think that Hörmander super symbol classes are nested, but with a little bit of thought we see that 
	\begin{align}
		S^{m_L,m_R}_{\rho,\delta}(\Pspace) \not\subseteq S^m_{\rho,\delta}(\Pspace) 
		\label{symbol_super_calculus:eqn:inclusion_Hoermander_spaces}
	\end{align}
	holds even if we try to choose $m$ suitably. Clearly, at first glance $m = m_L + m_R$ seems like a sensible option. 
	
	Let us deal with the obvious, easy attempt $m = m_L + m_R$. We will give a simple counter example where $m_L = -1 = m_R$: if we \emph{could} include $S^{-1,-1}_{\rho,\delta}(\Pspace)$ into $S^{-2}_{\rho,\delta}(\Pspace)$ ($m = m_L + m_R = -2$), then this would imply the estimate 
	\begin{align*}
		\sexpval{\xi_L}^{-1} \, \sexpval{\xi_R}^{-1} \leq  C \, \bexpval{(\xi_L,\xi_R)}^{-2} 
	\end{align*}
	for some constant $C > 0$. But this estimate cannot be true, no matter the value of $C$: the left-hand side decays like $\nicefrac{1}{\sabs{\xi_L}}$ as $\sabs{\xi_L} \to \infty$, which is slower than the quadratic decay in the supposed upper bound. 
	
	Another attempt would be to opt for $m = \sabs{m_L} + \sabs{m_R}$, in order to deal with symbols of negative order. But even that would not work when $\rho \gneq 0$: take a function $F \in S^{1,0}_{1,0}(\Pspace)$. Then our definition of Hörmander super symbol classes implies that for any $j = 1 , \ldots , d$ the first-order partial derivatives lie in $\partial_{\xi_{L,j}} F \in S^{0,0}_{1,0}(\Pspace)$ and $\partial_{\xi_{R,j}} F \in S^{1,-1}_{1,0}(\Pspace)$. 
	
	If $F \in S^1_{1,0}(\Pspace)$ were to hold true, then independently of whether we take partial derivatives with respect to the left or right momentum variables, we would deduce $\partial_{\xi_{L,j}} F , \partial_{\xi_{R,j}} F \in S^0_{1,0}(\Pspace)$. But the boundedness of the partial derivatives is not compatible with $\partial_{\xi_{R,j}} F \in S^{1,-1}_{1,0}(\Pspace)$ since $\partial_{\xi_{R,j}} F$ is allowed to grow linearly in $\sabs{\xi_L}$ at $\infty$. 
	
	The only nesting we do have is $S^{m_L,m_R}_{\rho,0}(\Pspace) \subset S^{\sabs{m_L} + \sabs{m_R}}_{0,0}(\Pspace)$, which states that Hörmander super symbols have uniform polynomial growth in momentum. But the utility of this inclusion is rather limited and it would make more sense then to think of 
	\begin{align*}
		S^{m_L,m_R}_{\rho,\delta}(\Pspace) \subset \Cont^{\infty}_{\mathrm{pol,u}}(\Pspace)
	\end{align*}
	instead, where the space on the right consists of smooth, uniformly polynomially bounded functions. But then we are outside of the realm of Hörmander symbols. 
	
	The opposite inclusion 
	\begin{align*}
		S^m_{\rho,\delta}(\Pspace) \subseteq S^{m,m}_{\rho,\delta}(\Pspace)
	\end{align*}
	does hold for $m \geq 0$ and $\rho = 0 = \delta$ — and for those \emph{only}, though. To see that, we need to establish that there exists a constant $C_m > 0$ for which 
	\begin{align}
		\sexpval{(\xi_L,\xi_R)}^{\sabs{m}} \leq C_m \, \sexpval{\xi_L}^{\sabs{m}} \, \sexpval{\xi_R}^{\sabs{m}} 
		\label{symbol_super_calculus:eqn:inequality_nesting_Hoermander_super_Hoermander_classes_m_geq_0}
	\end{align}
	is satisfied; the absolute value in the exponent is for emphasis. Excluding the trivial case $m = 0$, we need to consider when $m > 0$ is positive. Then the desired inequality follows from the elementary inequality 
	\begin{align*}
		\sexpval{(\xi_L,\xi_R)}^2 &= 1 + \sabs{\xi_L}^2 + \sabs{\xi_R}^2 \leq (1 + \sabs{\xi_L}^2) \, (1 + \sabs{\xi_R}^2) 
		\\
		&\leq \sexpval{\xi_L}^2 \, \sexpval{\xi_R}^2 
	\end{align*}
	and the fact that the function $s \mapsto s^m$ is monotonically increasing. That is, $C_m = 1$ is the best constant. 
	
	For $m < 0$ the map $s \mapsto s^m$ is monotonically decreasing, and therefore, taking a negative power of it flips the inequality. Alternatively, the interested reader can convince themselves directly for the case $m = -2$ that the inequality fails. 
\end{remark}
In addition to showing that the products $\semisuper^B$ and $\super^B$ map two Hörmander symbols onto a Hörmander symbol in a continuous fashion, we will also study their asymptotic expansions in a future work. Note that to ensure that the magnetic phase factor do not kick the products out of the symbol classes, we must suppose that the magnetic field $B$ satisfies the stricter Assumption~\ref{intro:assumption:bounded_magnetic_field}. Otherwise derivatives of the product in $x$ may grow polynomially, which is incompatible with either definition of Hörmander (super) symbol classes. 
\medskip

\noindent
Let us begin by investigating the products of Hörmander symbols first, which are the basis for any calculus. In future works, this will be the basis for proving that \eg super Moyal resolvents of magnetic pseudodifferential super operators associated to Hörmander symbols are again Hörmander symbols.

\subsection{The magnetic semi-super Weyl product $\semisuper^B$} 
\label{symbol_super_calculus:composition:semi_super_product}
The first order of business is to show that Hörmander symbols have good composition properties under the magnetic semi-super product. We have fortunately already deal with the first step, Lemma~\ref{super_calculus_extension_by_duality:lem:Hoermander_symbol_classes_contained_in_semi_super_Moyal_space}, which states that the two types of Hörmander symbols lie in the magnetic semi-super Moyal space $\ssMoyalSpace$. Combined with the observation that Hörmander symbols $S^m_{\rho,\delta}(\pspace)$ lie in $\Schwartz'(\pspace)$, we can define $F \semisuper^B g$ in case $F$ and $g$ are Hörmander symbols in the sense of~\eqref{magnetic_super_PsiDOs:eqn:extension_semi_super_product}. However, to show that the resulting tempered distribution $F\semisuper^B g$ is in fact another Hörmander symbol takes a bit more effort. 
\begin{proposition}[$\semisuper^B$ composition of Hörmander symbols]\label{symbol_super_calculus:prop:semi_super_composition}
	Suppose the magnetic field $B$ satisfies Assumption~\ref{intro:assumption:bounded_magnetic_field} and that $0 \leq \rho \leq 1$. Then the following holds:
	\begin{enumerate}[(1)]
		\item Let $m , m' \in \R$. Then the map $(F,g) \mapsto F \semisuper^B g$ gives rise to a continuous bilinear map 
		\begin{align*}
			\semisuper^B : S_{\rho,0}^m(\Pspace) \times S_{\rho,0}^{m'}(\pspace) \longrightarrow S_{\rho,0}^{m+m'}(\pspace)
			. 
		\end{align*}
		\item Let $m , m_L , m_R \in \R$. Then $\semisuper^B$ gives rise to a continuous bilinear map 
		\begin{align*}
			\semisuper^B : S_{\rho,0}^{m_L,m_R}(\Pspace) \times S_{\rho,0}^m(\pspace) \longrightarrow S_{\rho,0}^{m + m_L + m_R}(\pspace)
			. 
		\end{align*}
	\end{enumerate}
\end{proposition}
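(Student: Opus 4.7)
The plan is to reduce both assertions to a standard oscillatory-integral argument applied to the explicit formula~\eqref{magnetic_super_PsiDOs:eqn:semi_super_product_integral_expression_no_Fourier}. By Lemma~\ref{super_calculus_extension_by_duality:lem:Hoermander_symbol_classes_contained_in_semi_super_Moyal_space} the inputs already lie in $\ssMoyalSpace$, so $F \semisuper^B g$ is a well-defined tempered distribution in the sense of Definition~\ref{magnetic_super_PsiDOs:defn:semi_super_Moyal_space}; what remains is to show that this distribution is a smooth function with the required symbol estimates, and that the bilinear map is continuous in the Fréchet topologies. Cases~(1) and~(2) only differ in how one bounds $F$ by powers of its momentum arguments, so I would run a single argument and specialize the bookkeeping at the end.

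First I would write $F \semisuper^B g(x,\xi)$ as the oscillatory integral in~\eqref{magnetic_super_PsiDOs:eqn:semi_super_product_integral_expression_no_Fourier} and differentiate under the integral sign. A derivative $\partial_x^a \partial_\xi^\alpha$ is distributed by Leibniz between the magnetic phase factor $\e^{-\ii \lambda \recfluxp(x,y_L,y_R,z)}$, the symbol $F$ and the symbol $g$. Under Assumption~\ref{intro:assumption:bounded_magnetic_field} every $x$-derivative of $\recfluxp$ is bounded, and derivatives in $y_L$, $y_R$, $z$ are polynomially bounded in those variables uniformly in $\eps,\lambda\in(0,1]$; so each Leibniz term is a polynomial in $(y_L,y_R,z)$ times a bounded smooth function of $x$. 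The $\xi$-derivatives fall only on the arguments $\xi - \eta_L$, $\xi - \eta_R$ of $F$ and $\xi - \zeta$ of $g$. Each power of $y_L$, $y_R$, $z$ thus produced can be converted into a derivative in $\eta_L$, $\eta_R$, $\zeta$ by integrating by parts against the phase $\e^{+\ii(z\cdot\zeta + y_L\cdot\eta_L + y_R\cdot\eta_R)}$; these land harmlessly on $F$ and $g$, where they only improve the symbol estimates. The integrand is then a finite linear combination of terms of the form
\begin{align*}
\e^{+\ii(z\cdot\zeta + y_L\cdot\eta_L + y_R\cdot\eta_R)} \, \Phi^B(x,y_L,y_R,z) \, (\partial F)(\ldots) \, (\partial g)(\ldots),
\end{align*}
where $\Phi^B$ is smooth, bounded in $x$ and of polynomial growth in $(y_L,y_R,z)$ uniformly in $\eps,\lambda$.

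To recover the correct $\sexpval{\xi}$-decay I would introduce a smooth partition of unity subordinate to the region $\{\sabs{\eta_L}+\sabs{\eta_R}+\sabs{\zeta}\leq \tfrac{1}{2}\sexpval{\xi}\}$ and its complement. On the main region $\sexpval{\xi-\eta_L}$, $\sexpval{\xi-\eta_R}$ and $\sexpval{\xi-\zeta}$ are each comparable to $\sexpval{\xi}$, so the symbol estimates for $F$ and $g$ directly give the bound $\sexpval{\xi}^{m+m'-\sabs{\alpha}\rho}$ in case~(1) and $\sexpval{\xi}^{m+m_L+m_R-\sabs{\alpha}\rho}$ in case~(2); convergence of the auxiliary $(y_L,y_R,z)$-integrals is enforced by additional integrations by parts with the operators $\sexpval{y_L}^{-2N}(1-\Delta_{\eta_L})^N$ and the analogues for $y_R$ and $z$, for $N$ chosen large enough in terms of $m,m',m_L,m_R,\sabs{a},\sabs{\alpha}$. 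On the complementary region at least one of $\sexpval{\eta_L}$, $\sexpval{\eta_R}$, $\sexpval{\zeta}$ dominates $\sexpval{\xi}$; the symbol estimates are controlled there, up to polynomial factors in these momentum variables, by the elementary inequality $\sexpval{\xi-\eta}^s \leq 2^{\sabs{s}/2}\sexpval{\xi}^s\sexpval{\eta}^{\sabs{s}}$, and those polynomial factors are absorbed by repeated integration by parts in the spatial variables $y_L$, $y_R$, $z$, producing arbitrary negative powers of $\sexpval{\eta_L}+\sexpval{\eta_R}+\sexpval{\zeta}$ and hence an $\sexpval{\xi}^{-\infty}$-remainder dominated by the main-region bound.

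All estimates produced in the previous steps are controlled by finitely many of the seminorms $p^{m'}_N(g)$, $p^m_N(F)$ or $q^{m_L,m_R}_N(F)$ respectively, with constants depending only on $\rho$, $d$, the orders $\sabs{a},\sabs{\alpha}$ and finitely many $\Cont^{\infty}_{\mathrm b}$-seminorms of the components of $B$; this yields the continuity of $\semisuper^B$ in the relevant Fréchet topologies and establishes both~(1) and~(2). The main technical obstacle is the region decomposition in the third step: the magnetic phase factor $\recfluxp$ couples the three integration variables $y_L,y_R,z$ to $x$, and in case~(2) the two distinct evaluations $\xi-\eta_L$ and $\xi-\eta_R$ inside $F$ must be controlled simultaneously, so keeping the bounds uniform in $\eps,\lambda\in(0,1]$ and sharp in $\sexpval{\xi}$ requires careful bookkeeping along the lines of~\cite{Iftimie_Mantoiu_Purice:magnetic_psido:2006}.
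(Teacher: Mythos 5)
Your approach is correct in outline but takes a genuinely different route from the paper's. The paper defers the entire estimate to Lemma~\ref{appendix:oscillatory_integrals:lem:semi_super_product_existence_oscillatory_integral}, which runs a uniform, single-region argument: after inserting decay factors $\sexpval{y_L}^{-2N_L}, \ldots, \sexpval{\zeta}^{-2K}$ via the $L$-operators from~\eqref{appendix:oscillatory_integrals:eqn:L_operator} and integrating by parts, it applies Peetre's inequality $\sexpval{\xi-\zeta}^m \leq 2^{\sabs{m}/2}\sexpval{\xi}^m\sexpval{\zeta}^{\sabs{m}}$ together with the bound~\eqref{symbol_super_calculus:eqn:inequality_nesting_Hoermander_super_Hoermander_classes_m_geq_0} \emph{everywhere}, not just on a far region, then chooses $N_L, N_R, N, K_L, K_R, K$ large enough to absorb all resulting polynomial losses in the auxiliary variables. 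You instead introduce a partition of unity to split the momentum integration into a main region, where $\sexpval{\xi-\eta_L}$, $\sexpval{\xi-\eta_R}$, $\sexpval{\xi-\zeta}$ are comparable to $\sexpval{\xi}$, and a complementary region, where one of $\sexpval{\eta_L}, \sexpval{\eta_R}, \sexpval{\zeta}$ dominates. Both yield the proposition, but the paper's single-region argument is shorter and carries no cutoff bookkeeping (the $\xi$-derivatives of the cutoff, the further subdivision of the complementary region according to which of $\eta_L, \eta_R, \zeta$ is large), and it extends directly to the parametrized oscillatory integrals~\eqref{appendix:oscillatory_integrals:eqn:semi_super_product_oscillatory_integral} with extra monomial prefactors and parameters $\tau, \tau'$ that the authors need for the forthcoming asymptotic expansion. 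One place where your sketch is compressed: integration by parts in $(y_L, y_R, z)$ produces the separate factors $\sexpval{\eta_L}^{-M_1}\sexpval{\eta_R}^{-M_2}\sexpval{\zeta}^{-M_3}$, not directly a negative power of the sum $\sexpval{\eta_L}+\sexpval{\eta_R}+\sexpval{\zeta}$; turning the former into the desired $\sexpval{\xi}^{-\infty}$-decay on the complementary region requires a further three-way case split (or a more careful cutoff construction), since only one of the three momentum variables need be large there. This is a matter of exposition rather than a gap, but it is precisely the kind of bookkeeping the paper's uniform Peetre bound is designed to avoid.
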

We will only give the structure of the proof. It rests on standard oscillatory integral techniques, which are straightforward. But since they are standard and quite tedious, we have moved them to Appendix~\ref{appendix:oscillatory_integrals:semi_super_product}. 
\begin{proof}
	To prove both~(1) and~(2) we need to show that the integral~\eqref{magnetic_super_PsiDOs:eqn:semi_super_product_integral_expression} exists as an oscillatory integral and belongs to the correct symbol classes.

	We know by Corollary~\ref{appendix:oscillatory_integrals:cor:magnetic_flux_estimate_semi_super_product} that the factor $\e^{-\ii\lambda\recfluxp(x,y_L,y_R,z)}$ satisfies the assumptions on $G_{\tau'}$ of Lemma~\ref{appendix:oscillatory_integrals:lem:semi_super_product_existence_oscillatory_integral} with $\tau'=1$. Therefore, the oscillatory integral~\eqref{magnetic_super_PsiDOs:eqn:semi_super_product_integral_expression} defining the semi-super product $F \semisuper^B g$ satisfies the assumptions of Lemma~\ref{appendix:oscillatory_integrals:lem:semi_super_product_existence_oscillatory_integral} with $\tau=\tau'=1$ and $a=b=c=\alpha=\beta=\gamma=0$. Thus, it follows from Lemma~\ref{appendix:oscillatory_integrals:lem:semi_super_product_existence_oscillatory_integral} that the map $(F,g) \mapsto F \semisuper^B g$ gives rise to continuous bilinear maps from $S_{\rho,0}^m(\Pspace)\times S_{\rho,0}^{m'}(\pspace)$ to $S_{\rho,0}^{m+m'}(\pspace)$ and from $S_{\rho,0}^{m_L,m_R}(\Pspace) \times S_{\rho,0}^m(\pspace)$ to $S_{\rho,0}^{m+m_L+m_R}(\pspace)$. This proves both~(1) and~(2). The proof is complete.
\end{proof}
%

\subsection{The magnetic super Weyl product $\super^B$} 
\label{symbol_super_calculus:composition:super_Weyl_product}
Analogously, we proceed with the magnetic super Weyl product. When $F$ and $G$ are Hörmander symbols, we know by Lemma~\ref{symbol_super_calculus:lem:Hoermander_symbols_in_super_Moyal_algebra} that $F$ and $G$ belong to the magnetic super Moyal algebra $\sMoyalAlg$, and hence we can make sense of $F\super^B G$ as an element of $\sMoyalAlg$ by utilizing~\eqref{super_calculus_extension_by_duality:eqn:product_on_semi_super_Moyal_space}. Furthermore, if $F$ and $G$ are Hörmander symbols, then $F \super^B G$ is not merely an element of the super Moyal algebra, but actually another Hörmander symbol.
\begin{proposition}\label{symbol_super_calculus:prop:magnetic_super_composition_Hoermander_symbols}
	Suppose the magnetic field $B$ satisfies Assumption~\ref{intro:assumption:bounded_magnetic_field} and that $0 \leq \rho \leq 1$. Then the following holds:
	\begin{enumerate}[(1)]
		\item Let $m , m' \in \R$. Then the map $(F,G) \mapsto F \super^B G$ gives rise to a continuous bilinear map 
		\begin{align*}
			\super^B : S_{\rho,0}^m(\Pspace) \times S_{\rho,0}^{m'}(\Pspace) \longrightarrow S_{\rho,0}^{m + m'}(\Pspace)
			. 
		\end{align*}
		\item Let $m_L , m_L' , m_R , m_R' \in \R$. Then $\super^B$ gives rise to a continuous bilinear map 
		\begin{align*}
			\super^B : S_{\rho,0}^{m_L,m_R}(\Pspace) \times S_{\rho,0}^{m_L',m_R'}(\Pspace) \longrightarrow S_{\rho,0}^{m_L+m_L',m_R+m_R'}(\Pspace)
			. 
		\end{align*}
	\end{enumerate}
\end{proposition}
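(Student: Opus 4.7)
My plan is to follow the exact same strategy used for Proposition~\ref{symbol_super_calculus:prop:semi_super_composition}: verify that the integral defining $F \super^B G$ in formula~\eqref{magnetic_super_PsiDOs:eqn:magnetic_super_Weyl_product} exists as an oscillatory integral of the appropriate type, and then read off continuity and the correct symbol order from a general oscillatory integral lemma in Appendix~\ref{appendix:oscillatory_integrals}. The key observation that makes this work — and that makes the super Weyl product substantially cleaner than the semi-super product — is that the phase and magnetic factor in~\eqref{magnetic_super_PsiDOs:eqn:magnetic_super_Weyl_product} \emph{decouple} into a left and a right piece. Indeed
\begin{align*}
	\Sigma(\Xbf,\Ybf+\Zbf) + \tfrac{\eps}{2}\Sigma(r(\Ybf),\Zbf)
	&= \bigl [ \sigma(X_L,Y_L+Z_L) + \tfrac{\eps}{2}\sigma(Y_L,Z_L) \bigr ]
	\\
	&\quad + \bigl [ \sigma(X_R,Y_R+Z_R) - \tfrac{\eps}{2}\sigma(Y_R,Z_R) \bigr ]
	,
\end{align*}
while the magnetic factor splits as $\e^{-\ii\lambda\trifluxp(x_L,y_L,z_L)}\cdot\e^{-\ii\lambda\trifluxp(x_R,z_R,y_R)}$. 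So the phase looks structurally like the product of two magnetic Weyl product phases, one acting on left variables and one (with reversed orientation) acting on right variables. This is consistent with the product-symbol identity $(f_L\otimes f_R)\super^B(g_L\otimes g_R) = (f_L\weyl^B g_L) \otimes (g_R\weyl^B f_R)$ mentioned earlier.

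For step one, I would first rewrite~\eqref{magnetic_super_PsiDOs:eqn:magnetic_super_Weyl_product} in the ``non-Fourier'' form analogous to~\eqref{magnetic_super_PsiDOs:eqn:semi_super_product_integral_expression_no_Fourier}, by writing out the two symplectic Fourier transforms, performing the changes of variables $\eta_{L/R}\to \xi_{L/R}-\eta_{L/R}$ and $\zeta_{L/R}\to \xi_{L/R}-\zeta_{L/R}$, and integrating out four of the momentum variables through explicit Fourier inversions. What remains is an oscillatory integral of $F(x_L+\cdots,\xi_L-\eta_L,x_R+\cdots,\xi_R-\eta_R)\,G(\cdots)$ against a pure phase $\e^{\ii\Phi(\cdots)}$ times the two triangular magnetic factors, where the oscillation happens only in $y_L,y_R,z_L,z_R$ (paired with $\eta_L,\eta_R,\zeta_L,\zeta_R$). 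This is the form to which Lemma~\ref{appendix:oscillatory_integrals:lem:semi_super_product_existence_oscillatory_integral} (or its super Weyl product analog in Appendix~\ref{appendix:oscillatory_integrals:super_weyl_product}) applies.

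For step two, I would invoke the bounded magnetic field Assumption~\ref{intro:assumption:bounded_magnetic_field} together with the super Weyl analog of Corollary~\ref{appendix:oscillatory_integrals:cor:magnetic_flux_estimate_semi_super_product}: this gives that the magnetic phase factor $\e^{-\ii\lambda[\trifluxp(x_L,y_L,z_L)+\trifluxp(x_R,z_R,y_R)]}$ and all its derivatives are uniformly bounded, and more precisely that this factor satisfies the hypotheses of the oscillatory-integral lemma with the weight $\tau=\tau'=1$. For part (1) of the statement I take the inputs $F\in S^m_{\rho,0}(\Pspace)$ and $G\in S^{m'}_{\rho,0}(\Pspace)$ with the single Japanese bracket $\sexpval{(\xi_L,\xi_R)}$, and for part (2) I take $F\in S^{m_L,m_R}_{\rho,0}(\Pspace)$ and $G\in S^{m_L',m_R'}_{\rho,0}(\Pspace)$ with the product bracket $\sexpval{\xi_L}\sexpval{\xi_R}$. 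Because the phase decouples in the left/right momenta, the usual integration-by-parts tricks (in $y_L$ against $\partial_{\eta_L}$ etc., and in $\eta_L$ against $\partial_{y_L}$) produce the requisite decay in $\sexpval{\xi_L},\sexpval{\xi_R}$ (or $\sexpval{(\xi_L,\xi_R)}$) separately, so that the oscillatory integral converges absolutely after sufficiently many integrations by parts. Applying the oscillatory integral lemma with $a=b=c=\alpha=\beta=\gamma=0$ gives both that $F\super^B G\in\Cont^\infty(\Pspace)$ and that each seminorm in the target class is controlled by finitely many seminorms of $F$ and $G$, proving continuity of the bilinear map.

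The main obstacle, as in the semi-super case, is purely technical: carefully tracking the magnetic phase factor through the integration by parts, showing that each derivative in $(y_L,y_R,z_L,z_R)$ produces at worst a $\smoothpoly$-type factor that does not disturb the decay balance, and confirming that the resulting multi-index estimates assemble into precisely the seminorms of $S^{m+m'}_{\rho,0}(\Pspace)$ in (1) and of $S^{m_L+m_L',m_R+m_R'}_{\rho,0}(\Pspace)$ in (2). The only conceptually new ingredient relative to the semi-super product is verifying the magnetic bound for the two-triangle factor; this is the analog of Corollary~\ref{appendix:oscillatory_integrals:cor:magnetic_flux_estimate_semi_super_product} for $\super^B$ and should be stated and proved in Appendix~\ref{appendix:oscillatory_integrals:super_weyl_product}. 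Once that estimate is in place, both (1) and (2) reduce to a direct invocation of the appendix machinery, exactly parallel to the semi-super case.
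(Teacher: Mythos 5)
Your proposal follows the paper's strategy almost exactly: the paper likewise reduces both parts to the single oscillatory-integral Lemma~\ref{appendix:oscillatory_integrals:lem:existence_oscillatory_integral_super_Weyl_product} (with $\tau=\tau'=1$ and the monomial multi-indices set to zero), after checking via Corollary~\ref{appendix:oscillatory_integrals:cor:magnetic_flux_estimate_small_Weyl_product} that the two-triangle magnetic factor $\e^{-\ii\lambda\trifluxp(x_L,y_L,z_L)}\,\e^{-\ii\lambda\trifluxp(x_R,z_R,y_R)}$ satisfies the required $G_{\tau'}$ growth hypothesis. One small correction to your wording: the derivatives of that magnetic factor are \emph{polynomially} bounded in $(y_L,y_R,z_L,z_R)$ rather than uniformly bounded (as in Corollary~\ref{appendix:oscillatory_integrals:cor:magnetic_flux_estimate_small_Weyl_product}, applied to each triangle factor), which is precisely the hypothesis the appendix lemma is designed to handle.
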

As before, we will factor out the tedious, but necessary book-keeping that is involved in proving the existence of oscillatory integrals to Appendix~\ref{appendix:oscillatory_integrals:super_weyl_product}. 
\begin{proof}
	For proofs of both~(1) and~(2), we need to show that the integral~\eqref{formal_super_calculus:eqn:super_Weyl_product_formula} exists as an oscillatory integral and belongs to the correct symbol classes.

	Thanks to Corollary~\ref{appendix:oscillatory_integrals:cor:magnetic_flux_estimate_small_Weyl_product} we see that the magnetic flux factor $\e^{- \ii \lambda \trifluxp(x_L,y_L,z_L)} \, \e^{- \ii \lambda \trifluxp(x_R,z_R,y_R)}$ in~\eqref{formal_super_calculus:eqn:super_Weyl_product_formula} satisfies the assumptions on $G_{\tau'}$ of Lemma~\ref{appendix:oscillatory_integrals:lem:existence_oscillatory_integral_super_Weyl_product} with $\tau'=1$. Thus, the oscillatory integral~\eqref{formal_super_calculus:eqn:super_Weyl_product_formula} defining the super product $F\super^B G$ satisfies the assumptions of Lemma~\ref{appendix:oscillatory_integrals:lem:existence_oscillatory_integral_super_Weyl_product} with $\tau = \tau' = 1$ and $a = b = c = l = \alpha = \beta = \gamma = \delta = 0$. Therefore, it follows from Lemma~\ref{appendix:oscillatory_integrals:lem:existence_oscillatory_integral_super_Weyl_product} that the map $(F,G)\mapsto F\super^B G$ gives rise to continuous bilinear maps from $S_{\rho,0}^m(\Pspace)\times S_{\rho,0}^{m'}(\Pspace)$ to $S_{\rho,0}^{m+m'}(\Pspace)$ and from $S_{\rho,0}^{m_L,m_R}(\Pspace)\times S_{\rho,0}^{m_L',m_R'}(\Pspace)$ to $S_{\rho,0}^{m_L+m_L',m_R+m_R'}(\Pspace)$. This proves both~(1) and~(2) and completes the proof.
\end{proof}
%
%
\begin{appendix}
	\section{Estimates on the magnetic phase factor} 
	\label{appendix:oscillatory_integrals:magnetic_phase_factor}
	For our results we first need to establish some estimates on the magnetic phase and its exponential. The following lemma is the first step to obtain necessary estimates.
	\begin{lemma}[\cite{Iftimie_Mantoiu_Purice:magnetic_psido:2006}; see also~\cite{Lein:two_parameter_asymptotics:2008, Lein:progress_magWQ:2010}] \label{appendix:oscillatory_integrals:lem:magnetic_phase_factor_estimate}
		Suppose that the magnetic field $B = \dd A$ satisfies Assumption~\ref{intro:assumption:bounded_magnetic_field}. Then there are functions $D_{jk}, D_{jk}', D_{jk}'', E_{jk}, E_{jk}', E_{jk}''$, $1 \leq j , k \leq d$, in $\Cont^\infty_{\mathrm{b}}(\R^{3d})$ such that
		\begin{align*}
			\partial_{x_j} \trifluxp(x,y,z) &= \sum_{k=1}^d \bigl ( D_{jk}^{(x)}(x,\eps y,\eps z) \, y_k + E_{jk}^{(x)}(x,\eps y,\eps z) \, z_k \bigr )
			, 
			\\
			\partial_{y_j} \trifluxp(x,y,z) &= \sum_{k=1}^d \bigl ( D_{jk}^{(y)}(x,\eps y,\eps z) \, \eps y_k + E_{jk}^{(y)}(x,\eps y,\eps z) \, \eps z_k \bigr )
			, 
			\\
			\partial_{z_j} \trifluxp(x,y,z) &= \sum_{k=1}^d \bigl ( D_{jk}^{(z)}(x,\eps y,\eps z) \, \eps y_k + E_{jk}^{(z)}(x,\eps y,\eps z) \, \eps z_k \bigr ) 
			.
		\end{align*}
		Here the functions $D_{jk}^{(x)}, D_{jk}^{(y)}, D_{jk}^{(z)}, E_{jk}^{(x)}, E_{jk}^{(y)}, E_{jk}^{(z)}$ are given by
		\begin{align*}
			D_{jk}^{(x)}(x,y,z) &= \frac{1}{2} \int_{-1}^1 \dd t \, \Bigl ( B_{jk} \bigl ( x + \tfrac{1}{2} (ty-z) \bigr ) - B_{jk} \bigl ( x + \tfrac{1}{2} t (y+z) \bigr ) \Bigr )
			, 
			\\
			E_{jk}^{(x)}(x,y,z) &= \frac{1}{2} \int_{-1}^1 \dd t \, \Bigl ( B_{jk} \bigl ( x + \tfrac{1}{2} (y+tz) \bigr ) - B_{jk} \bigl ( x + \tfrac{1}{2} t (y+z) \bigr ) \Bigr )
			, 
			\\
			D_{jk}^{(y)}(x,y,z) &= \frac{1}{4} \int_{-1}^1 \dd t \, t \, \Bigl ( B_{jk} \bigl ( x + \tfrac{1}{2} (ty-z) \bigr ) - B_{jk} \bigl ( x + \tfrac{1}{2} t (y+z) \bigr ) \Bigr )
			, 
			\\
			E_{jk}^{(y)}(x,y,z) &= \frac{1}{4} \int_{-1}^1 \dd t \, \Bigl ( B_{jk} \bigl ( x + \tfrac{1}{2} (y+tz) \bigr ) - t \, B_{jk} \bigl ( x + \tfrac{1}{2} t (y+z) \bigr ) \Bigr )
			, 
			\\
			D_{jk}^{(z)}(x,y,z) &= \frac{1}{4} \int_{-1}^1 \dd t \, \Bigl ( B_{jk} \bigl ( x + \tfrac{1}{2} (ty-z) \bigr ) - t \, B_{jk} \bigl ( x + \tfrac{1}{2} t (y+z) \bigr ) \Bigr )
			, 
			\\
			E_{jk}^{(z)}(x,y,z) &= \frac{1}{4} \int_{-1}^1 \dd t \, t \, \Bigl ( B_{jk} \bigl ( x + \tfrac{1}{2} (y+tz) \bigr ) - B_{jk} \bigl ( x + \tfrac{1}{2} t (y+z) \bigr ) \Bigr ) 
			.
		\end{align*}
	\end{lemma}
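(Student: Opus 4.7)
The plan is to use Stokes' theorem to convert the area flux integral into a boundary line integral of $A$, and then differentiate the line integrals directly and re-extract $B$. Since $B = \dd A$, we have
\[
\trifluxp(x,y,z) = \tfrac{1}{\eps} \int_{\partial T(x,y,z)} A = \tfrac{1}{\eps}\bigl( I_{ab} + I_{bc} + I_{ca} \bigr),
\]
where $T(x,y,z)$ is the triangle with vertices $a = x - \tfrac{\eps}{2}(y+z)$, $b = x + \tfrac{\eps}{2}(y-z)$, $c = x + \tfrac{\eps}{2}(y+z)$, and each $I_{pq}$ is the line integral of $A$ along the edge $[p,q]$. After parametrizing each edge by $s \in [0,1]$ and substituting $t = 2s-1$ (or $t = 1-2s$ along $[c,a]$), the three line integrals take the form $\tfrac{\eps}{2}$ times the corresponding edge velocity ($\eps y$, $\eps z$, $-\eps(y+z)$) dotted with $\int_{-1}^1 A(\cdot) \, \dd t$, with arguments $x + \tfrac{\eps}{2}(ty-z)$, $x + \tfrac{\eps}{2}(y+tz)$, $x + \tfrac{\eps}{2}t(y+z)$ respectively — precisely the three arguments appearing in the claimed formulas for $D^{(\cdot)}_{jk}, E^{(\cdot)}_{jk}$.

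Next I would differentiate each $I_{pq}$ with respect to $x_j$, $y_j$, or $z_j$. The chain rule produces inside-the-integrand contributions of the form $\partial_j A_k$, and for $y_j$- or $z_j$-derivatives additional outside-the-integrand ``prefactor'' terms of the form $A_j$. Using the pointwise identity $\partial_j A_k = \partial_k A_j + B_{jk}$, the $B_{jk}$-piece contributes directly to the desired formula. The residual $\partial_k A_j$-piece is then integrated by parts in $t$ via the chain-rule identity $\tfrac{\dd}{\dd t} A_j(\text{edge}) = \tfrac{\eps}{2}(\text{edge velocity})\cdot \nabla A_j$: the boundary contributions evaluate $A_j$ at the vertices $a$, $b$, $c$ and pair against the prefactor $A_j$-terms from the other two edges. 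Summing the three edges then makes all $A_j$-contributions telescope to zero — each vertex is visited twice with opposite orientation — leaving only integrals of $B_{jk}$ over each edge weighted by $y_k$, $z_k$, or $(y_k+z_k)$. Collecting terms by the common direction of variation yields the claimed decomposition; the characteristic factor $\tfrac{1}{2}$ for the $x$-derivative and $\tfrac{1}{4}$ (accompanied by an extra factor of $t$) for the $y$- and $z$-derivatives come out of the additional $\tfrac{\eps}{2}$-factor from the chain rule and the boundary sign conventions.

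The main obstacle is bookkeeping: tracking the signs and boundary contributions carefully so that the gauge-dependent $A_j$-terms indeed cancel across the three edges. This cancellation is the mechanism by which a manifestly gauge-dependent decomposition yields a gauge-covariant answer expressed solely in terms of $B = \dd A$. Once it is established, the claim that the coefficient functions lie in $\Cont^\infty_{\mathrm{b}}(\R^{3d})$ is immediate: under Assumption~\ref{intro:assumption:bounded_magnetic_field} each $B_{jk}$ lies in $\Cont^\infty_{\mathrm{b}}(\R^d)$, so the representation of $D^{(\cdot)}_{jk}, E^{(\cdot)}_{jk}$ as integrals over the compact interval $[-1,1]$ of smooth, uniformly bounded integrands ensures that the resulting functions, together with all their derivatives, are uniformly bounded on $\R^{3d}$.
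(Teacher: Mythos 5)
Your plan is correct; I checked that the gauge-dependent terms cancel as claimed. It does, however, take a genuinely different route from the one suggested by the cited references, where one parametrizes the flux $\Gamma^B$ directly as a double integral of $B_{jk}\bigl(x + \tfrac{\eps}{2}(sy + tz)\bigr)$ over $\{-1 \le t \le s \le 1\}$, differentiates under the integral sign, replaces $\partial_\ell B_{jk}$ with $\partial_j B_{\ell k} - \partial_k B_{\ell j}$ (closedness of $B$), and integrates by parts once in $s$ or $t$ to collapse to the one-dimensional edge integrals in the statement. That argument never touches $A$ and is manifestly gauge-covariant; your Stokes--circulation route is quicker to set up but trades that for an explicit cancellation of the $A$-pieces. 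On that cancellation a small correction: the prefactor $A_j$ from each edge (from the product rule acting on $y_k$ or $z_k$) combines with that \emph{same} edge's chain-rule residual into a total derivative $\tfrac{\dd}{\dd t}\bigl(t\,A_j\bigr)$ (resp.\ $\tfrac{\dd}{\dd t}A_j$ for the $x$-derivative), and the cross-edge cancellation happens only at the boundary $t = \pm 1$, where the $t$-weights matter --- for instance for $\partial_{y_j}$ the boundary sum collapses to $\tfrac{1}{2}\bigl(A_j(b) - A_j(b)\bigr) = 0$, with the evaluations at $a$ and $c$ already cancelling internally at $t = -1$ and $t = +1$, respectively. So it is not simply ``each vertex visited twice with opposite orientation,'' and the $\tfrac{1}{4}$-and-extra-$t$ structure of $D^{(y)}, E^{(y)}, D^{(z)}, E^{(z)}$ comes precisely from tracking those boundary weights.
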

	This lemma can be proved by associating the small parameter $\eps$ to the proof in~\cite{Iftimie_Mantoiu_Purice:magnetic_psido:2006}. As a direct consequence of the above lemma we obtain the following result on the magnetic phase factor. 
	\begin{corollary}[\cite{Iftimie_Mantoiu_Purice:magnetic_psido:2006}; see also~\cite{Lein:two_parameter_asymptotics:2008, Lein:progress_magWQ:2010}]\label{appendix:oscillatory_integrals:cor:magnetic_flux_estimate_small_Weyl_product}
		Suppose that the magnetic field $B = \dd A$ satisfies Assumption~\ref{intro:assumption:bounded_magnetic_field} and $0 < \eps , \lambda \leq 1$. Then, for all $a,b,c \in \N_0^d$, there exists $C_{abc} , \tilde{C}_{abc} > 0$ such that
		\begin{align*}
			\abs{ \partial_x^a \partial_y^b \partial_z^c \e^{- \ii \lambda \trifluxp(x,y,z)}} \leq C_{abc} \, \bigl ( \sexpval{y} + \sexpval{z} \bigr )^{\abs{a} + \abs{b} + \abs{c}} 
			\leq \tilde{C}_{abc} \, \sexpval{y}^{\abs{a} + \abs{b} + \abs{c}} \, \sexpval{z}^{\abs{a} + \abs{b} + \abs{c}} 
			.
		\end{align*}
	\end{corollary}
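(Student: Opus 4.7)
The plan is to reduce everything to Faà di Bruno's formula applied to $\e^{-\ii\lambda\trifluxp}$, combined with an inductive estimate showing that \emph{every} partial derivative of $\trifluxp$ of positive order is controlled by $\sexpval{y}+\sexpval{z}$.

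First I would prove, by induction on $N=|a|+|b|+|c|\geq 1$, that
\[
	\babs{\partial_x^a\partial_y^b\partial_z^c\,\trifluxp(x,y,z)} \leq C_{abc}\,\bigl(\sexpval{y}+\sexpval{z}\bigr).
\]
The base case $N=1$ is exactly Lemma~\ref{appendix:oscillatory_integrals:lem:magnetic_phase_factor_estimate}: the three first-order derivatives are written as sums of $\Cont^\infty_{\mathrm b}(\R^{3d})$-valued coefficients (evaluated at $(x,\eps y,\eps z)$) multiplied by $y_k$ or $z_k$ (with harmless factors of $\eps\leq 1$ in the $y$- and $z$-derivatives), and $|y_k|,|z_k|\leq \sexpval{y}+\sexpval{z}$. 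For the inductive step, apply one more derivative to such a term: either the derivative lands on the $\Cont^\infty_{\mathrm b}$ coefficient (and stays in $\Cont^\infty_{\mathrm b}$, possibly with an $\eps\leq 1$ factor from the chain rule through the arguments $\eps y$, $\eps z$), or it lands on the explicit factor $y_k$, $z_k$ and kills one of them, producing a bounded function. In either case the new expression is again a bounded function times at most one of $y_k$ or $z_k$, hence $\leq C(\sexpval{y}+\sexpval{z})$.

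Second, apply the multivariate Faà di Bruno formula to the composition $\e^{f}$ with $f=-\ii\lambda\,\trifluxp$:
\[
	\partial_x^a\partial_y^b\partial_z^c\,\e^{-\ii\lambda\trifluxp} \;=\; \e^{-\ii\lambda\trifluxp}\sum_{\pi}C_\pi\prod_{B\in\pi}(-\ii\lambda)\,\partial^{B}\trifluxp,
\]
where $\pi$ ranges over the finitely many set-partitions of the multi-index $(a,b,c)$ and each block $B$ carries at least one derivative. Using $|\lambda|\leq 1$ and Step~1, every block factor is bounded by a constant times $\sexpval{y}+\sexpval{z}$, so a partition with $k=|\pi|$ blocks contributes at most $C_\pi(\sexpval{y}+\sexpval{z})^k$. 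Since $k\leq N$ and $\sexpval{y}+\sexpval{z}\geq 2\geq 1$, each partition is dominated by $(\sexpval{y}+\sexpval{z})^N$, and summing the finitely many partitions yields the first claimed estimate.

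Third, for the second inequality I would use $\sexpval{y},\sexpval{z}\geq 1$ to get $\sexpval{y}+\sexpval{z}\leq 2\,\sexpval{y}\sexpval{z}$, so $(\sexpval{y}+\sexpval{z})^N\leq 2^N\sexpval{y}^N\sexpval{z}^N$. The only real bookkeeping issue is the inductive step above, where one must make sure that chain-rule hits on $D^{(\cdot)}_{jk}(x,\eps y,\eps z)$ and $E^{(\cdot)}_{jk}(x,\eps y,\eps z)$ do not produce factors of $\eps^{-1}$ somewhere; this is automatic because the arguments $\eps y$ and $\eps z$ always multiply the differentiation with a factor of $\eps\leq 1$, never $\eps^{-1}$. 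No other step is delicate, and everything else is straightforward combinatorics.
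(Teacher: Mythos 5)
Your proof is correct, and it fills in an argument the paper itself omits: the text merely declares the corollary to be ``a direct consequence'' of Lemma~\ref{appendix:oscillatory_integrals:lem:magnetic_phase_factor_estimate} and cites the references without giving details. Your route — an inductive structural claim that every order-$\geq 1$ partial derivative of $\trifluxp$ is a finite sum of $\Cont^{\infty}_{\mathrm b}$ coefficients (uniformly bounded in $\eps\leq 1$, since the chain rule only ever produces nonnegative powers of $\eps$) times at most one linear factor $y_k$ or $z_k$, hence $\lesssim \sexpval{y}+\sexpval{z}$; then Faà di Bruno for $\e^{-\ii\lambda\trifluxp}$ with $|\lambda|\leq 1$ and $\sexpval{y}+\sexpval{z}\geq 1$ to close up the product over blocks; then $\sexpval{y}+\sexpval{z}\leq 2\,\sexpval{y}\sexpval{z}$ — is precisely the standard way to make that assertion rigorous, and matches the approach the paper implicitly relies on. One small stylistic point: state the inductive \emph{invariant} as the structural claim (sum of bounded coefficient times at most one linear factor) rather than just the resulting bound, since the bound alone is not strong enough to propagate through the induction; you do this implicitly, but it is worth being explicit.
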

	Furthermore, we also get the following result on the phase factor~\eqref{formal_super_calculus:eqn:definition_magnetic_phase_factor_super_product}, 
	\begin{align*}
		\recfluxp(x,y_L,y_R,z) := \trifluxp(x,y_L,z) + \trifluxp(x,y_L+z,y_R)
		. 
	\end{align*}
	\begin{corollary}\label{appendix:oscillatory_integrals:cor:magnetic_flux_estimate_semi_super_product}
		Suppose that the magnetic field $B = \dd A$ satisfies Assumption~\ref{intro:assumption:bounded_magnetic_field} and $0 < \eps , \lambda \leq 1$. Then, for all $a , b_L , b_R , c \in \N_0^d$, there exists $C_{a b_L b_R c} , \tilde{C}_{a b_L b_R c} > 0$ such that
		\begin{align*}
			\abs{\partial_x^a \partial_{y_L}^{b_L} \partial_{y_R}^{b_R} \partial_z^c \e^{- \ii \lambda \recfluxp(x,y_L,y_R,z)} } &\leq C_{a b_L b_R c} \, \bigl ( \sexpval{y_L} + \sexpval{y_R} + \sexpval{z} \bigr )^{\abs{a} + \sabs{b_L} + \abs{b_R} + \abs{c}} 
			\\
			&\leq \tilde{C}_{a b_L b_R c} \, \sexpval{y_L}^{\abs{a} + \abs{b_L} + \abs{b_R} + \abs{c}} \,  \sexpval{y_R}^{\abs{a} + \abs{b_L} + \abs{b_R} + \abs{c}} \, \sexpval{z}^{\abs{a} + \abs{b_L} + \abs{b_R} + \abs{c}}  
			.
		\end{align*}
	\end{corollary}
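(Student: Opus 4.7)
The plan is to reduce the corollary to a direct application of Corollary~\ref{appendix:oscillatory_integrals:cor:magnetic_flux_estimate_small_Weyl_product} via the Leibniz rule and the chain rule. Writing out the definition~\eqref{formal_super_calculus:eqn:definition_magnetic_phase_factor_super_product} of $\recfluxp$, we factor the magnetic phase as
\begin{align*}
	\e^{- \ii \lambda \recfluxp(x,y_L,y_R,z)} = \e^{- \ii \lambda \trifluxp(x,y_L,z)} \cdot \e^{- \ii \lambda \trifluxp(x,y_L+z,y_R)}
	.
\end{align*}
The first factor depends on $(x,y_L,z)$ but is independent of $y_R$, while the second factor depends on $(x,y_L,y_R,z)$ through $(x,y_L+z,y_R)$. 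Applying $\partial_x^{a} \partial_{y_L}^{b_L} \partial_{y_R}^{b_R} \partial_z^{c}$ via the Leibniz rule distributes the multi-indices $(a,b_L,b_R,c)$ over the two factors.

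The first factor is handled directly by Corollary~\ref{appendix:oscillatory_integrals:cor:magnetic_flux_estimate_small_Weyl_product} (note that $b_R$ derivatives of it vanish unless they are zero, restricting the Leibniz sum), yielding a bound of the form $\sexpval{y_L}^N \, \sexpval{z}^N$ where $N$ collects the relevant derivative counts. For the second factor, a preliminary chain-rule step replaces each derivative in $y_L$ and in $z$ by a derivative in the auxiliary variable $u := y_L+z$: one obtains
\begin{align*}
	\partial_x^{a'} \partial_{y_L}^{b_L'} \partial_{y_R}^{b_R'} \partial_z^{c'} \, \e^{-\ii \lambda \trifluxp(x,y_L+z,y_R)} = \bigl ( \partial_x^{a'} \partial_u^{b_L'+c'} \partial_v^{b_R'} \e^{-\ii \lambda \trifluxp(x,u,v)} \bigr ) \big\vert_{u = y_L + z, \, v = y_R}
	,
\end{align*}
to which Corollary~\ref{appendix:oscillatory_integrals:cor:magnetic_flux_estimate_small_Weyl_product} applies, producing a bound in terms of $\sexpval{y_L+z}$ and $\sexpval{y_R}$. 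A standard Peetre-type inequality $\sexpval{y_L + z} \leq \sqrt{2} \, \sexpval{y_L} \, \sexpval{z}$ then converts this into the desired product form $\sexpval{y_L}^{N'} \sexpval{z}^{N'} \sexpval{y_R}^{N'}$.

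Combining the two factor bounds via the Leibniz sum and absorbing the finitely many combinatorial constants into a single $C_{a b_L b_R c}$ gives the first stated inequality $\bigl ( \sexpval{y_L} + \sexpval{y_R} + \sexpval{z} \bigr )^{|a|+|b_L|+|b_R|+|c|}$ after applying the elementary estimate that each $\sexpval{\cdot}^k$ with $k \leq |a|+|b_L|+|b_R|+|c|$ is dominated by this common expression. The second, weaker inequality with $\tilde{C}_{a b_L b_R c}$ then follows from $\sexpval{u} + \sexpval{v} + \sexpval{w} \leq 3 \, \sexpval{u} \, \sexpval{v} \, \sexpval{w}$. There is no serious obstacle here; the only bookkeeping to watch is the chain-rule merging of $y_L$- and $z$-derivatives into $u$-derivatives for the second factor, which would otherwise cause one to under-count the polynomial degree and obtain a suboptimal exponent.
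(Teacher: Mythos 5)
Your overall strategy — factor the magnetic phase as $\e^{-\ii\lambda\recfluxp} = \e^{-\ii\lambda\trifluxp(x,y_L,z)} \, \e^{-\ii\lambda\trifluxp(x,y_L+z,y_R)}$, distribute the derivatives via Leibniz, merge the $y_L$- and $z$-derivatives on the second factor into $u$-derivatives via the chain rule, and invoke Corollary~\ref{appendix:oscillatory_integrals:cor:magnetic_flux_estimate_small_Weyl_product} on each factor — is exactly the intended argument (the paper gives none). However, your execution has the two stated inequalities in the wrong logical order, and the claimed implication between them goes the wrong way. After Peetre's product inequality $\sexpval{y_L+z} \leq \sqrt{2}\,\sexpval{y_L}\sexpval{z}$ you end up with the \emph{product} form $\sexpval{y_L}^N \sexpval{y_R}^N \sexpval{z}^N$, which is the \emph{second, weaker} estimate, and then you assert that the first, \emph{stronger}, sum-form estimate $(\sexpval{y_L}+\sexpval{y_R}+\sexpval{z})^N$ follows because ``each $\sexpval{\cdot}^k$ is dominated by this common expression.'' That deduction fails: from $\sexpval{y_L}^N \sexpval{y_R}^N \sexpval{z}^N$ you can at best bound by $(\sexpval{y_L}+\sexpval{y_R}+\sexpval{z})^{3N}$, which has the wrong exponent; and in fact the product form is \emph{not} dominated by the sum form with the same exponent (take $y_L = y_R = z$ large). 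The sum form is the sharper inequality and cannot be recovered from the product form.

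The fix is to \emph{keep} the sum-form bound from Corollary~\ref{appendix:oscillatory_integrals:cor:magnetic_flux_estimate_small_Weyl_product} on each factor: after the chain rule the second factor is bounded by $C\,(\sexpval{y_L+z} + \sexpval{y_R})^{M}$, and the first by $C'\,(\sexpval{y_L}+\sexpval{z})^{M'}$, with $M + M' = \sabs{a}+\sabs{b_L}+\sabs{b_R}+\sabs{c}$ after Leibniz (using that factor one is independent of $y_R$). Instead of Peetre's product inequality, use the elementary sum estimate $\sexpval{y_L + z} \leq \sqrt{2}\,\bigl(\sexpval{y_L} + \sexpval{z}\bigr)$, which follows from $1 + \sabs{y_L+z}^2 \leq 1 + 2\sabs{y_L}^2 + 2\sabs{z}^2$ and $\sqrt{u+v} \leq \sqrt{u}+\sqrt{v}$. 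Then $\sexpval{y_L+z} + \sexpval{y_R} \leq \sqrt{2}\,(\sexpval{y_L}+\sexpval{y_R}+\sexpval{z})$ and $\sexpval{y_L} + \sexpval{z} \leq \sexpval{y_L}+\sexpval{y_R}+\sexpval{z}$; multiplying yields the first inequality with exponent $M + M'$. The second inequality then follows from the first, by $\sexpval{u}+\sexpval{v}+\sexpval{w} \leq 3\,\sexpval{u}\sexpval{v}\sexpval{w}$, exactly as you wrote at the end.
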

	%

	\section{Existence of oscillatory integrals} 
	\label{appendix:oscillatory_integrals}
	The purpose of this appendix is to provide some elementary, albeit tedious proofs establishing the existence of certain oscillatory integrals.

	\subsection{Magnetic semi-super product $\semisuper^B$} 
	\label{appendix:oscillatory_integrals:semi_super_product}
	We begin with the oscillatory integrals that enter the magnetic semi-super product. 
	
	First, we will need to furnish the proof that Hörmander symbols 
	\begin{align*}
		S^m_{\rho,0}(\Pspace) , S^{m_L,m_R}_{\rho,0}(\Pspace) \subseteq \ssMoyalSpace 
	\end{align*}
	lie in the magnetic semi-super Moyal space, which is the basis for the extension of the semi-super product to Hörmander symbols. 
	\begin{remark}[Precise notation needed for derivatives]\label{appendix:oscillatory_integrals:rem:derivatives}
		In the following, we will sometimes need to be precise in our notation when taking certain derivatives. For example, the transpose in $F^{\mathrm{t}}(X_L,X_R) = F(X_R,X_L)$ swaps the argument, and in general left- and right-hand side of 
		\begin{align*}
			\partial_{x_L}^{a_L} \partial_{\xi_L}^{\alpha_L} \partial_{x_R}^{a_R} \partial_{\xi_R}^{\alpha_R} \bigl ( F(X_R,X_L) \bigr ) &= \bigl ( \partial_{x_L}^{a_R} \partial_{\xi_L}^{\alpha_R} \partial_{x_R}^{a_L} \partial_{\xi_R}^{\alpha_L} F \bigr )(X_R,X_L) 
			\\
			&\neq \bigl ( \partial_{x_L}^{a_L} \partial_{\xi_L}^{\alpha_L} \partial_{x_R}^{a_R} \partial_{\xi_R}^{\alpha_R} F \bigr )(X_R,X_L) 
		\end{align*}
		will disagree. So to do proper book keeping, we will add brackets to emphasize that 
		\begin{align*}
			\bigl ( \partial_{x_L}^{a_L} \partial_{\xi_L}^{\alpha_L} \partial_{x_R}^{a_R} \partial_{\xi_R}^{\alpha_R} F \bigr )(X_R,X_L) 
		\end{align*}
		is the derivative of $F$ evaluated at $X_R$ and $X_L$ (\ie the arguments are swapped). 
		
		This level of precision is also needed when we consider derivatives of 
		\begin{align*}
			g \bigl ( x + \tfrac{\eps}{2}(y_L - y_R) , \xi - \zeta \bigr )
		\end{align*}
		in $x$, $y_L$, $y_R$, $\xi$ and $\zeta$; to characterize the behavior of the function, we need to collect derivatives in the spatial and momentum variables, and we will also add brackets here to avoid ambiguity. 
	\end{remark}
	\begin{proof}[Lemma~\ref{super_calculus_extension_by_duality:lem:Hoermander_symbol_classes_contained_in_semi_super_Moyal_space}]
		\begin{enumerate}[(1)]
			\item Let $F \in S_{\rho,0}^m(\Pspace)$ and $g \in \Schwartz(\pspace)$. We shall prove that, for every $N , K \in \R$ and $r , \varrho \in \N_0^d$ we can estimate the corresponding Schwartz seminorm 
			\begin{align*}
				\bnorm{F^{\mathrm{t}} \semisuper^B g}_{\Schwartz , N K r \varrho} := \sup_{X \in \pspace} \, \sexpval{x}^N \, \sexpval{\xi}^K \babs{\partial_x^r \partial_{\xi}^{\varrho} F^{\mathrm{t}} \semisuper^B g(X)} 
			\end{align*}
			by some constant $C_{N K r \varrho} > 0$ times a finite number of seminorms of $S^m_{\rho,0}(\Pspace)$ and $\Schwartz(\pspace)$, 
			\begin{align*}
				\bnorm{F^{\mathrm{t}} \semisuper^B g}_{\Schwartz , N K r \varrho} \leq C_{n \nu r \varrho} \, p^m_J(F) \, \max_{N' , K' , \sabs{r'} , \sabs{\varrho'} \leq Q} \snorm{g}_{\Schwartz , N' K' r' \varrho'} 
			\end{align*}
			for some appropriately chosen integers $J , Q \geq 0$. 
		
			That being said, we proceed to estimate the derivative $\partial_x^n \partial_{\xi}^{\nu} F^{\mathrm{t}} \semisuper^B g(X)$ of the magnetic semi-super product for $n , \nu \in \N_0^d$. Writing out the symplectic Fourier transforms in the magnetic semi-super product and distributing the derivatives across the different terms gives a sum 
			\begin{align*}
				\partial_x^n \partial_{\xi}^{\nu} &F^{\mathrm{t}} \semisuper^B g(X) 
				\\
				&= \sum_{\substack{a_L + a_R + b + c = n \\ \alpha_L + \alpha_R + \beta = \nu}} \frac{n!}{a_L! \, a_R! \, b! \, c!} \frac{\nu!}{\alpha_L! \, \alpha_R! \, \beta!} \frac{1}{(2\pi)^{6d}} \int_{\Pspace} \dd \Ybf \int_{\pspace} \dd Z \int_{\Pspace} \dd \Ybf' \int_{\pspace} \dd Z'
				\, \cdot \\
				&\quad \quad 
				\bigl ( \partial_x^{a_L} \partial_{\xi}^{\alpha_L} \e^{+ \ii \sigma (X-Y_L',Y_L)} \bigr ) \, \bigl ( \partial_x^{a_R} \partial_{\xi}^{\alpha_R} \e^{+ \ii \sigma (X-Y_R',Y_R)} \bigr ) \, \bigl ( \partial_x^b \partial_{\xi}^{\beta} \e^{+ \ii \sigma (X-Z',Z)} \bigr )
				\, \cdot \\
				&\quad \quad 
				\e^{+ \ii \frac{\eps}{2} \sigma(Y_L+Z,Y_R+Z)} \, \partial_x^c \bigl ( \e^{- \ii \lambda \recfluxp(x,y_L,y_R,z)} \bigr ) \, F(Y_R',Y_L') \, g(Z') 
				\\
				&= \sum_{\substack{a_L + a_R + b + c = n \\ \alpha_L + \alpha_R + \beta = \nu}} \frac{n!}{a_L! \, a_R! \, b! \, c!} \frac{\nu!}{\alpha_L! \, \alpha_R! \, \beta!} \frac{1}{(2\pi)^{6d}} \int_{\Pspace} \dd \Ybf \int_{\pspace} \dd Z \int_{\Pspace} \dd \Ybf' \int_{\pspace} \dd Z'
				\, \cdot \\
				&\quad \quad 
				\e^{+ \ii \sigma (X-Y_L',Y_L)} \, \e^{+ \ii \sigma (X-Z',Z)} \, \e^{+ \ii \sigma (X-Y_R',Y_R)} \, \e^{+ \ii \frac{\eps}{2} \sigma(Y_L+Z,Y_R+Z)} \, \partial_x^c \bigl ( \e^{- \ii \lambda \recfluxp(x,y_L,y_R,z)} \bigr )
				\, \cdot \\
				&\quad \quad 
				\bigl ( \partial_{y_L'}^{a_R} \partial_{\eta_L'}^{\alpha_R} \partial_{y_R'}^{a_L} \partial_{\eta_R'}^{\alpha_L} F \bigr ) (Y_R',Y_L') \; \partial_{z'}^b \partial_{\zeta'}^{\beta} g(Z') 
				.
			\end{align*}
			Keep in mind Remark~\ref{appendix:oscillatory_integrals:rem:derivatives},  
			\begin{align*}
				\bigl ( \partial_{y_L'}^{a_R} F \bigr ) (Y_R',Y_L') &= \partial_{y_R'}^{a_R} \bigl ( F(Y'_R,Y'_L) \bigr ) 
			\end{align*}
			means we take the $a_R$th derivative in the left spatial variable of $F(X_L,X_R)$ and then evaluate the function at $X_L = Y_R'$ and $X_R = Y_L'$. 
			
			To simplify our presentation, we will omit sums and most prefactors; further, we will also exchange the labels of the multi indices $a_L$ and $\alpha_L$ with $a_R$ and $\alpha_R$. What is important is that the sums in the relevant multi indices are all finite and of a specific order. 
		
			We now collect the phase factors so as to factor out the free variables $y_L$, $y_R$, $z$, $\eta_L$, $\eta_R$ and $\zeta$, integrate them out and then change momentum variables to $\eta_L = \xi-\eta_L'$, $\eta_R = \xi-\eta_R'$ and $\zeta = \xi-\zeta'$ in the same way as in the proof of Proposition~\ref{magnetic_super_PsiDOs:prop:semi_super_product_S} (3). That gives us a finite sum of integrals of the type 
			\begin{align}
				\frac{1}{(2\pi)^{3d}} \int_{\Pspace} \dd \Ybf \int_{\pspace} \dd Z 
				\, &\e^{+ \ii (y_L \cdot \eta_L+y_R \cdot \eta_R+z\cdot\zeta)} \, \partial_x^c \bigl ( \e^{- \ii \lambda \recfluxp(x,y_L,y_R,z)} \bigr )
				\, \cdot 
				\notag \\
				&
				\bigl ( \partial_{x_L}^{a_L} \partial_{\xi_L}^{\alpha_L} \partial_{x_R}^{a_R} \partial_{\xi_R}^{\alpha_R} F \bigr ) \bigl ( x+\tfrac{\eps}{2}(y_L+z) , \xi - \eta_R , x - \tfrac{\eps}{2}(y_R + z) , \xi - \eta_L \bigr )
				\, \cdot \notag \\
				&
				\bigl ( \partial_x^b \partial_{\xi}^{\beta} g \bigr ) \bigl ( x+\tfrac{\eps}{2}(y_L-y_R) , \xi-\zeta \bigr ) 
				. \label{appendix:oscillatory_integrals:eqn:prototypical_oscillatory_integral_extension_via_duality}
			\end{align}
			To show the existence of this oscillatory integral we rely on $L$ operators of the form $L_{\xi} := \sqrt{1 - \Delta_{\xi}}$ that satisfy 
			\begin{align}
				 \sexpval{x}^{-1} \, L_{\xi} \e^{+ \ii x \cdot \xi} = \e^{+ \ii x \cdot \xi} 
				 . \label{appendix:oscillatory_integrals:eqn:L_operator}
			\end{align}
			We first add sufficient even powers of $y_L, y_R$ and $z$ by using this. Then we get
			\begin{align*}
				\int_{\Pspace} \dd \Ybf \int_{\pspace} \dd Z \, &\bigl ( \sexpval{y_L}^{-2N_L} L_{\eta_L}^{2N_L} \e^{+ \ii y_L \cdot \eta_L} \bigr ) \, \bigl ( \sexpval{z}^{-2N} L_{\zeta}^{2N} \e^{+ \ii z\cdot\zeta} \bigr ) \, \bigl ( \sexpval{y_R}^{-2N_R} L_{\eta_R}^{2N_R} \e^{+ \ii y_R \cdot \eta_R} \bigr )
				\, \cdot \notag \\
				&
				\partial_x^c \bigl ( \e^{- \ii \lambda \recfluxp(x,y_L,y_R,z)} \bigr ) 
				\, \cdot \notag \\
				&
				\bigl ( \partial_{x_L}^{a_L} \partial_{\xi_L}^{\alpha_L} \partial_{x_R}^{a_R} \partial_{\xi_R}^{\alpha_R} F \bigr ) \bigl ( x+\tfrac{\eps}{2}(y_L+z) ,\xi-\eta_R , x - \tfrac{\eps}{2}(y_R+z) , \xi-\eta_L \bigr )
				\, \cdot \notag \\
				&
				\bigl ( \partial_x^b \partial_{\xi}^{\beta} g \bigr ) \bigl ( x+\tfrac{\eps}{2}(y_L-y_R) , \xi-\zeta \bigr ) .
			\end{align*}
			Next, we partially integrate this with respect to the associated momentum variables $\eta_L, \eta_R$ and $\zeta$. And then, we insert powers in the momentum variables $\eta_L, \eta_R$ and $\zeta$ by applying $L_{y_L}, L_{y_R}$ and $L_z$ to the factors $\e^{+\ii y_L\cdot\eta_L}, \e^{+\ii y_R\cdot\eta_R}$ and $\e^{+\ii z\cdot\zeta}$, respectively, and partially integrate with respect to $y_L,y_R$ and $z$. The last partial integration step gives rise to derivatives of $\sexpval{y_L}^{-2N_L}$ and in the other spatial variables, which we will label as 
			\begin{align}
				\sexpval{x}^{-2N} \, \varphi_{N a}(x) := \partial_x^a \bigl ( \sexpval{x}^{-2N} \bigr ) 
				. 
				\label{appendix:oscillatory_integrals:eqn:derivatives_Japanese_bracket}
			\end{align}
			Importantly for us, the functions $\varphi_{N a} \in \Cont^{\infty}(\R^d) \cap L^{\infty}(\R^d)$ are smooth and bounded. 
		
			After all that we obtain another finite sum of terms of the type 
			\begin{align}
				&\int_{\Pspace} \dd \Ybf \int_{\pspace} \dd Z \, \e^{+ \ii (y_L \cdot \eta_L+y_R \cdot \eta_R+z\cdot\zeta)}
				\, \cdot \notag \\
				&\,\,\,\,
				\sexpval{y_L}^{-2N_L} \, \sexpval{z}^{-2N} \, \sexpval{y_R}^{-2N_R} \, \sexpval{\eta_L}^{-2K_L} \, \sexpval{\zeta}^{-2K} \, \sexpval{\eta_R}^{-2K_R}
				\, \cdot \notag \\
				&\,\,\,\, 
				\varphi_{N_L a'_L}(y_L) \, \varphi_{N b'}(z) \, \varphi_{N_R a'_R}(y_R) \, \partial_x^c \partial_{y_L}^{c'_L} \partial_{y_R}^{c'_R} \partial_z^{c'} \bigl ( \e^{- \ii \lambda \recfluxp(x,y_L,y_R,z)} \bigr )
				\, \cdot \notag \\
				&\,\,\,\,
				\bigl ( \partial_{x_L}^{a_L + a_L'' + a_L'''} \partial_{\xi_L}^{\alpha_L + \alpha'_L} \partial_{x_R}^{a_R + a''_R + a'''_R} \partial_{\xi_R}^{\alpha_R + \alpha'_R} F \bigr ) \bigl ( x + \tfrac{\eps}{2} (y_L+z) , \xi - \eta_R , x - \tfrac{\eps}{2} (y_R+z) , \xi - \eta_L \bigr )
				\, \cdot \notag \\
				&\,\,\,\,
				\bigl ( \partial_x^{b+b''+b'''} \partial_{\xi}^{\beta+\beta'} g \bigr ) \bigl ( x+\tfrac{\eps}{2}(y_L-y_R) , \xi-\zeta \bigr ) 
				.
				\label{appendix:oscillatory_integrals:eqn:prototypical_oscillatory_integral_to_be_estimated_extension_via_duality}
			\end{align}
			Here, the prefactors contain non-negative powers of $\eps$ that are harmless since we have assumed $\eps \in (0,\eps_0]$ for some $\eps_0 > 0$. The sum is over all multi indices satisfying 
			\begin{align*}
				\sabs{\alpha'_R} &\leq 2N_L 
				, 
				&
				\sabs{\alpha'_L} &\leq 2N_R 
				, 
				&
				\sabs{\beta'} &\leq 2N
				, 
				\\
				\abs{a'_L + b'' + a''_L + c'_L} &\leq 2K_L 
				, 
				&
				\abs{a'''_R + b''' + a'_R + c'_R} &\leq 2K_R 
				, 
				&
				\abs{a''_R + b' + a'''_L + c'} &\leq 2K 
			\end{align*}
			as well as the previously introduced multi indices. Note that some of the left and right indices mix, because \eg we evaluate $F$ in the \emph{left} momentum variable at $\xi - \eta_R$. 
		
			For the purposes of our estimates we introduce the constant 
			\begin{align*}
				N(n , \nu) := \sabs{n} + \sabs{\nu} + 2(N_L + N_R + N + K_L + K_R + K) 
			\end{align*}
			that will become the order of the seminorm~\eqref{magnetic_Weyl_calculus:eqn:max_seminorm_Hoermander_symbols}, and we pick some arbitrary integers $J , Q > 0$ that will quantify the polynomial decay in $x$ and $\xi$. 
		
			With the help of our estimates on the exponential of the magnetic phase factor, Corollary~\ref{appendix:oscillatory_integrals:cor:magnetic_flux_estimate_semi_super_product}, the integrand of \eqref{appendix:oscillatory_integrals:eqn:prototypical_oscillatory_integral_to_be_estimated_extension_via_duality} can be estimated in absolute value by 
			\begin{align}
				\babs{\mbox{$\displaystyle \int$grand}}
				&\leq
				\sexpval{y_L}^{-2N_L} \, \sexpval{y_R}^{-2N_R} \, \sexpval{z}^{-2N} \, \sexpval{\eta_L}^{-2K_L} \, \sexpval{\eta_R}^{-2K_R} \, \sexpval{\zeta}^{-2K} 
				\, \cdot \notag \\
				&\quad \quad
				\Babs{\varphi_{N_L a'_L}(y_L) \, \varphi_{N_R a'_R}(y_R) \, \varphi_{N b'}(z)} \, \Babs{\partial_x^c \partial_{y_L}^{c'_L} \partial_{y_R}^{c'_R} \partial_z^{c'} \bigl ( \e^{- \ii \lambda \recfluxp(x,y_L,y_R,z)} \bigr )}
				\, \cdot \notag \\
				&\quad \quad
				\Bigl \lvert \bigl ( \partial_{x_L}^{a_L + a''_L + a'''_L} \partial_{\xi_L}^{\alpha_L + \alpha'_L} \partial_{x_R}^{a_R + a''_R + a'''_R} \partial_{\xi_R}^{\alpha_R + \alpha'_R} F \bigr ) \Bigr . 
				\notag \\
				&\qquad \qquad \qquad \qquad 
				\Bigl . \bigl ( x + \tfrac{\eps}{2} (y_L+z) , \xi - \eta_R , x - \tfrac{\eps}{2} (y_R+z) , \xi - \eta_L \bigr ) \Bigr \rvert 
				\, \cdot \notag \\
				&\quad \quad
				\Babs{\bigl ( \partial_x^{b + b'' + b'''} \partial_{\xi}^{\beta + \beta'} g \bigr ) \bigl ( x + \tfrac{\eps}{2} (y_L-y_R) , \xi - \zeta \bigr )} 
				\notag \\
				&\leq C' \, p_{N(n,\nu)}^m(F) \; \Bigl ( \max_{\sabs{b} + \sabs{\beta} \leq N(n,\nu)} \snorm{g}_{\Schwartz,J Q' b \beta} \Bigr ) \, 
				\, \cdot \notag \\
				&\quad \quad
				\sexpval{y_L}^{-2N_L + \sabs{c} + \sabs{c'_L} + \sabs{c'_R} + \sabs{c'}} \, \sexpval{y_R}^{-2N_R+\sabs{c} + \sabs{c'_L} + \sabs{c'_R} + \sabs{c'}} \, \sexpval{z}^{-2N + \sabs{c} + \sabs{c'_L} + \sabs{c'_R} + \sabs{c'}} 
				\, \cdot \notag \\
				&\quad \quad
				\bexpval{x + \tfrac{\eps}{2}(y_L-y_R)}^{-J} 
				\, \cdot \notag \\
				&\quad \quad
				\sexpval{\eta_L}^{-2K_L} \, \sexpval{\eta_R}^{-2K_R} \, \sexpval{\zeta}^{-2K}
				\, \cdot \notag \\
				&\quad \quad
				\sexpval{(\xi-\eta_R,\xi-\eta_L)}^{m - \rho (\sabs{\alpha_L} + \sabs{\alpha'_L} + \sabs{\alpha_R} + \sabs{\alpha'_R})} \, \sexpval{\xi-\zeta}^{-Q-m} 
				,
				\label{appendix:oscillatory_integrals:eqn:proof_Hoermander_classes_in_Moyal_space_estimate_oscillatory_integrand}
			\end{align}
			where we have set $Q' = Q + m$. In the last step, we need to find lower bounds on $N_L$, $N_R$, $N$, $K_L$, $K_R$ and $K$ that ensure not only ensure integrability, but also give us the claimed decay. To arrive at the suitable estimate, we will employ Peetre's Inequality 
			\begin{align*}
				\sexpval{\xi - \eta}^m \leq 2^{\nicefrac{\sabs{m}}{2}} \, \sexpval{\xi}^m \, \sexpval{\eta}^{\sabs{m}} 
			\end{align*}
			and combine it with $\sexpval{\epsilon \xi}^{\sabs{m}} \leq \sexpval{\xi}^{\sabs{m}}$ for any $m \in \R$ and $\epsilon \in (0,1]$: 
			\begin{align*}
				\bexpval{x + \tfrac{\eps}{2}(y_L-y_R)}^{-J} \, \sexpval{\xi-\zeta}^{-Q-m} \leq 2^{J + \frac{1}{2} (Q + \sabs{m})} \, \sexpval{x}^{-J} \, \sexpval{\xi}^{-Q - m} \, \sexpval{y_L}^J \, \sexpval{y_R}^J \, \sexpval{\zeta}^{Q + \abs{m}} .
			\end{align*}
			We will need two further inequalities: the first is \eqref{symbol_super_calculus:eqn:inequality_nesting_Hoermander_super_Hoermander_classes_m_geq_0}, which we have proven in Remark~\ref{symbol_super_calculus:rem:nesting_Hoermander_super_Hoermander_classes}. 
		
			The second one reads 
			\begin{align}
				\sexpval{(\xi,\xi)}^m \leq 2^{\nicefrac{\sabs{m}}{2}} \, \sexpval{\xi}^m
				, 
				\label{appendix:oscillatory_integrals:eqn:Japanese_bracket_xi_xi_squeezing_estimate_Japanese_bracket_xi}
			\end{align}
			which is a consequence of 
			\begin{align*}
				\sexpval{\xi}^2 = 1 + \sabs{\xi}^2 
				\leq 1 + 2 \sabs{\xi}^2 
				= \sexpval{(\xi,\xi)}^2 
				\leq 2 \sexpval{\xi}^2 
				. 
			\end{align*}
			All of these can be combined to bound \eqref{appendix:oscillatory_integrals:eqn:proof_Hoermander_classes_in_Moyal_space_estimate_oscillatory_integrand} from above by 
			\begin{align}
				&C'' \, p_{N(n,\nu)}^m(F) \; \Bigl ( \max_{\sabs{b} + \sabs{\beta} \leq N(n,\nu)} \snorm{g}_{\Schwartz,J Q' b \beta} \Bigr ) \, \sexpval{x}^{-J} \, \sexpval{\xi}^{-Q} 
				\, \cdot \notag \\
				&\quad \quad
				\sexpval{y_L}^{-2N_L + \abs{n} + 2(K_L+K+K_R) + J} \, \sexpval{y_R}^{-2N_R + \abs{n} + 2(K_L+K+K_R) + J} \, \sexpval{z}^{-2N+\abs{n}+2(K_L+K+K_R)} 
				\, \cdot \notag \\
				&\quad \quad 
				\sexpval{\eta_L}^{-2K_L + \abs{m}} \, \sexpval{\eta_R}^{-2K_R + \abs{m}} \, \sexpval{\zeta}^{-2K + Q + \abs{m}} 
				, 
				\label{appendix:oscillatory_integrals:eqn:proof_Hoermander_classes_in_Moyal_space_estimate_oscillatory_integrand-all-variables-separated}
			\end{align}
			where $C'' > 0$ is a suitable constant. To ensure integrability, we first need to pick constants $K_L$, $K_R$ and $K$ that are at least as large as 
			\begin{align*}
				K_L , K_R &> \tfrac{1}{2} (\sabs{m} + d) 
				,
				\\
				K &> \tfrac{1}{2} (\sabs{m} + d + Q)
				%
				.
			\end{align*}
			That is because the lower bounds on $N_L$, $N_R$ and $N$ will contain these constants as well, 
			\begin{align*}
				N_L , N_R &> \tfrac{1}{2} (\sabs{n} + d + J) + K_L + K_R + K 
				, 
				\\
				N &> \tfrac{1}{2} (\sabs{n} + d) + K_L + K_R + K 
				. 
			\end{align*}
			Once we have chosen those six constants thusly, the right-hand side of \eqref{appendix:oscillatory_integrals:eqn:proof_Hoermander_classes_in_Moyal_space_estimate_oscillatory_integrand-all-variables-separated} is integrable with respect to $y_L$, $y_R$, $z$, $\eta_L$, $\eta_R$ and $\zeta$. 
		
			This last estimate is independent of all multi indices save for $n$ and $\nu$. Once we sum all of the finitely many terms, we have demonstrated with the above arguments that there is a constant $C(n,\nu,J,Q) > 0$ such that all derivatives 
			\begin{align*}
				\babs{\partial_x^n \partial_{\xi}^{\nu} F^{\mathrm{t}} \semisuper^B g(X)} \leq 
				C(n,\nu,J,Q) \; p_{N(n,\nu)}^m(F) \; \Bigl ( \max_{\sabs{b} + \sabs{\beta} \leq N(n,\nu)} \snorm{g}_{\Schwartz,J Q' b \beta} \Bigr ) \, \sexpval{x}^{-J} \, \sexpval{\xi}^{-Q} 
			\end{align*}
			decay faster than any inverse polynomial in $X \in \pspace$. This shows that the map $g \mapsto F^{\mathrm{t}} \semisuper^B g$ gives rise to a continuous linear map from $\Schwartz(\pspace)$ to itself, and hence $F \in \ssMoyalSpace$. Therefore, we conclude that for any $m \in \R$ the Hörmander class $S_{\rho,0}^m(\Pspace) \subset \ssMoyalSpace$ lies in the magnetic semi-super Moyal space. This proves~(1).
			\item In principle, the proof is mostly identical, we just need to use different Hörmander seminorms~\eqref{symbol_super_calculus:eqn:max_seminorms_super_Hoermander_symbols} in our estimates of the integrand of \eqref{appendix:oscillatory_integrals:eqn:prototypical_oscillatory_integral_to_be_estimated_extension_via_duality}. So let $F \in S^{m_L,m_R}_{\rho,0}(\Pspace)$ be from a Hörmander super symbol class and $g \in \Schwartz(\pspace)$. The only difference is that this time we need to keep track of left- and right-variables separately for some of our arguments. Specifically, the estimates analogously to to \eqref{appendix:oscillatory_integrals:eqn:proof_Hoermander_classes_in_Moyal_space_estimate_oscillatory_integrand} and \eqref{appendix:oscillatory_integrals:eqn:proof_Hoermander_classes_in_Moyal_space_estimate_oscillatory_integrand-all-variables-separated} are 
			\begin{align*}
				\babs{\mbox{$\displaystyle \int$grand}} &\leq C \, q^{m_L,m_R}_{N(n,\nu)}(F) \; \Bigl ( \max_{\sabs{b} + \sabs{\beta} \leq N(n,\nu)} \snorm{g}_{\Schwartz,J \tilde{Q} b \beta} \Bigr ) 
				\, \cdot \notag \\
				&\quad \quad
				\sexpval{y_L}^{-2N_L + \sabs{c} + \sabs{c'_L} + \sabs{c'_R} + \sabs{c'}} \, \sexpval{y_R}^{-2N_R+\sabs{c} + \sabs{c'_L} + \sabs{c'_R} + \sabs{c'}}
				\sexpval{z}^{-2N + \sabs{c} + \sabs{c'_L} + \sabs{c'_R} + \sabs{c'}} 
				\, \cdot \notag \\
				&\quad \quad
				\bexpval{x + \tfrac{\eps}{2}(y_L-y_R)}^{-J} \, \sexpval{\eta_L}^{-2K_L} \, \sexpval{\eta_R}^{-2K_R} \, \sexpval{\zeta}^{-2K} \, 
				\, \cdot \notag \\
				&\quad \quad
				\sexpval{\xi - \eta_L}^{m_R - \rho \, (\sabs{\alpha_R} + \sabs{\alpha'_R})} \, \sexpval{\xi - \eta_R}^{m_L - \rho \, (\sabs{\alpha_L} + \sabs{\alpha'_L})} \, \sexpval{\xi - \zeta}^{- Q - m_L - m_R} 
				\\
				&\leq C' \, q^{m_L,m_R}_{N(n,\nu)}(F) \; \Bigl ( \max_{\sabs{b} + \sabs{\beta} \leq N(n,\nu)} \snorm{g}_{\Schwartz,J \tilde{Q} b \beta} \Bigr ) \, \sexpval{x}^{-J} \, \sexpval{\xi}^{-Q} 
				\, \cdot \notag \\
				&\quad \quad 
				\sexpval{y_L}^{-2N_L + \abs{n} + 2(K_L+K+K_R) + J} \, \sexpval{y_R}^{-2N_R + \abs{n} + 2(K_L+K+K_R) + J} \, \sexpval{z}^{-2N+\abs{n}+2(K_L+K+K_R)} 
				\, \cdot \notag \\
				&\quad \quad 
				\sexpval{\eta_L}^{-2K_L + \abs{m_R}} \, \sexpval{\eta_R}^{-2K_R + \abs{m_L}} \, \sexpval{\zeta}^{-2K + Q + \abs{m_L} + \abs{m_R}} 
			\end{align*}
			for the same $N(n,\nu)$ that we used in the proof of (1) and suitable constants $C , C' > 0$ that can be chosen independently of $\eps \in [0,1]$. Here we have set $\tilde{Q} = Q + m_L + m_R$. Note that it is \eg $m_L$ that appears in the exponent of $\sexpval{\eta_R}$, because the roles of left and right variables in $F^{\mathrm{t}}(X_L,X_R) = F(X_R,X_L)$ are reversed. 
		
			Summing over all terms, there are only finitely many of them, we arrive at
			\begin{align*}
				\babs{\partial_x^n \partial_{\xi}^{\nu} F^{\mathrm{t}} \semisuper^B g(X)} \leq C(n,\nu,J,Q) \, q^{m_L,m_R}_{N(n,\nu)}(F) \; \Bigl ( \max_{\sabs{b} + \sabs{\beta} \leq N(n,\nu)} \snorm{g}_{\Schwartz,J \tilde{Q} b \beta} \Bigr ) \, \sexpval{x}^{-J} \, \sexpval{\xi}^{-Q} 
			\end{align*}
			where the constant only depends on $n$, $\nu$, $J$ and $Q$. That shows $g \mapsto F^{\mathrm{t}} \semisuper^B g$ is a continuous linear map $\Schwartz(\pspace) \longrightarrow \Schwartz(\pspace)$, and therefore $S^{m_L,m_R}_{\rho,0}(\Pspace)$ lies in the magnetic semi-super Moyal space $\ssMoyalSpace$. 
		\end{enumerate}
	\end{proof}
	The next lemma covers the oscillatory integral that defines the semi-super product $F\semisuper^B g$ of Hörmander symbols $F$ and $g$. But before we proceed, we make the following remark.
	\begin{remark}
		In fact, this lemma actually contains a stronger result than necessary in this paper. The case $a_L = a_R = b = \alpha_L = \alpha_R = \beta$ and $\tau = \tau' = 1$ in Lemma~\ref{appendix:oscillatory_integrals:lem:semi_super_product_existence_oscillatory_integral} is enough for our purpose in this paper, and we will use the full strength of this lemma to derive the asymptotic expansion of $F\semisuper^B g$ in the forthcoming work. 
	\end{remark}
	\begin{lemma}\label{appendix:oscillatory_integrals:lem:semi_super_product_existence_oscillatory_integral}
		Suppose the magnetic field satisfies Assumption~\ref{intro:assumption:bounded_magnetic_field}, and that the constants satisfy $\rho , \tau , \tau' \in [0,1]$, $\eps \in (0,1]$ and $m , m_L , m_R , m' \in \R$. Moreover, assume that the function 
		\begin{align*}
			[0,1] \ni \tau' \mapsto G_{\tau'} \in \Cont^{\infty}_{\mathrm{b}} \bigl ( \R_x^d \, , \, \Cont^{\infty}_{\mathrm{pol}}(\R_{y_L}^d \times \R_{y_R}^d \times \R_z^d) \bigr ) 
		\end{align*}
		depends on $\tau'$ in a continuous fashion and that for all $c , c_L , c_R , c'\in \N_0^d$, there exists $C_{c c_L c_R c'} > 0$ such that
		\begin{align} 
			\abs{\partial_x^c \partial_{y_L}^{c_L} \partial_{y_R}^{c_R} \partial_z^{c'} G_{\tau'}(x,y_L,y_R,z)} \leq C_{c c_L c_R c'} \bigl ( \sexpval{y_L} + \sexpval{y_R} + \sexpval{z} \bigr )^{\sabs{c} + \sabs{c_L} + \sabs{c_R} + \sabs{c'}} 
			.
			\label{appendix:oscillatory_integrals:eqn:semi_super_product_estimate_oscillatory_integral}
		\end{align}
		Then for a function $F$ on $\Pspace$ and a function $g$ on $\pspace$, $a_L , a_R , b , \alpha_L , \alpha_R , \beta \in \N_0^d$ and $\tau , \tau' \in [0,1]$, we define the oscillatory integral as
		\begin{align}
			I_{\tau\tau'}(X) := \frac{1}{(2\pi)^{3d}} \int_{\Pspace} \dd \Ybf \int_{\pspace} \dd Z \, &\e^{+ \ii \sigma(X,Z+Y_L+Y_R)} \, \e^{+ \ii \tau \frac{\eps}{2} \sigma(Y_L+Z,Y_R+Z)}
			\, \cdot 
			\label{appendix:oscillatory_integrals:eqn:semi_super_product_oscillatory_integral}
			\\
			& 
			G_{\tau'}(x,y_L,y_R,z) \, y_L^{a_L} \, \eta_L^{\alpha_L} \, y_R^{a_R} \, \eta_R^{\alpha_R} \, z^b \, \zeta^{\beta} \, (\Fourier_{\Sigma} F)(\Ybf) \, (\Fourier_{\sigma} g)(Z) 
			. 
			\notag 
		\end{align}
		Then in the following circumstances the oscillatory integral exists: 
		\begin{enumerate}[(1)]
			\item The map 
			\begin{align*}
				(F,g) \mapsto I_{\tau \tau'} : S_{\rho,0}^m(\Pspace) \times S_{\rho,0}^{m'}(\pspace) \longrightarrow S_{\rho,0}^{m + m' - \rho (\abs{a_L} + \abs{a_R} + \abs{b})}(\pspace)
			\end{align*}
			is a continuous bilinear map. Furthermore, given symbols $F\in S_{\rho,0}^m(\Pspace)$ and $g\in S_{\rho,0}^{m'}(\pspace)$, we obtain a continuous map $[0,1] \times [0,1] \ni (\tau,\tau')\mapsto I_{\tau\tau'}\in S_{\rho,0}^{m + m' - \rho (\abs{a_L} + \abs{a_R} + \abs{b})}(\pspace)$. 
			\item The map 
			\begin{align*}
				(F,g) \mapsto I_{\tau\tau'} : S_{\rho,0}^{m_L,m_R}(\Pspace) \times S_{\rho,0}^m(\pspace) \longrightarrow S_{\rho,0}^{m + m_L + m_R - \rho (\sabs{a_L} + \sabs{a_R} + \abs{b})}(\pspace)
			\end{align*}
			is a continuous bilinear map. Furthermore, given symbols $F\in S_{\rho,0}^{m_L,m_R}(\Pspace)$ and $g\in S_{\rho,0}^m(\pspace)$, we obtain a continuous map $[0,1] \times [0,1] \ni (\tau,\tau')\mapsto I_{\tau\tau'}\in S_{\rho,0}^{m + m_L + m_R - \rho (\sabs{a_L} + \sabs{a_R} + \abs{b})}(\pspace)$.
		\end{enumerate}
	\end{lemma}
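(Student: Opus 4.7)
[Proof proposal for Lemma~\ref{appendix:oscillatory_integrals:lem:semi_super_product_existence_oscillatory_integral}]
The plan is to adapt the argument used for Lemma~\ref{super_calculus_extension_by_duality:lem:Hoermander_symbol_classes_contained_in_semi_super_Moyal_space}, treating the extra monomials $y_L^{a_L} \eta_L^{\alpha_L} y_R^{a_R} \eta_R^{\alpha_R} z^b \zeta^{\beta}$ and the auxiliary function $G_{\tau'}$ along the way. First, I would write out the symplectic Fourier transforms $\Fourier_{\Sigma} F$ and $\Fourier_{\sigma} g$ explicitly, and rewrite the momentum monomials $\eta_L^{\alpha_L}, \eta_R^{\alpha_R}, \zeta^{\beta}$ as spatial derivatives of $F$ and $g$ via integration by parts against the phase $\e^{+\ii\sigma(\cdot,\cdot)}$; since $\delta=0$, these produce no change in symbol order. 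Analogously, the spatial monomials $y_L^{a_L}, y_R^{a_R}, z^b$ are converted into momentum derivatives of $F$ and $g$, each one lowering the order by $\rho$ and thereby accounting for the claimed exponent $-\rho(\sabs{a_L}+\sabs{a_R}+\sabs{b})$ in the target Hörmander class. After the change of variables used in the proof of Proposition~\ref{magnetic_super_PsiDOs:prop:kernel_map_semisuper_product} (\ie  $\eta_L \to \xi-\eta_L$, etc.), the integral is brought into the shape of~\eqref{appendix:oscillatory_integrals:eqn:prototypical_oscillatory_integral_extension_via_duality}, with $F$ and $g$ evaluated at shifted arguments, together with $G_{\tau'}(x,y_L,y_R,z)$ and the extra phase $\e^{+\ii\tau\frac{\eps}{2}\sigma(Y_L+Z,Y_R+Z)}$.

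Next, to compute the derivatives $\partial_x^n\partial_{\xi}^{\nu} I_{\tau\tau'}$ that appear in the Hörmander seminorms, I distribute the derivatives by Leibniz's rule across $G_{\tau'}$, the magnetic phase factor $\e^{-\ii\lambda\recfluxp}$, the $\tau$-phase, and the shifted $F$ and $g$. Spatial derivatives of the $\tau$-phase produce harmless polynomials in $y_L+z$ and $y_R+z$, bounded by $\sexpval{y_L}\sexpval{y_R}\sexpval{z}$ up to constants independent of $\tau\in[0,1]$ and $\eps\in(0,1]$; derivatives of $G_{\tau'}$ are controlled by the growth hypothesis~\eqref{appendix:oscillatory_integrals:eqn:semi_super_product_estimate_oscillatory_integral}; derivatives of the magnetic phase are estimated by Corollary~\ref{appendix:oscillatory_integrals:cor:magnetic_flux_estimate_semi_super_product}. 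I then insert $L$-operators of the form $L_{\eta_L}=\sqrt{1-\Delta_{\eta_L}}$ and $L_{y_L}=\sqrt{1-\Delta_{y_L}}$ (and the analogues for the other pairs) and integrate by parts, using the identity~\eqref{appendix:oscillatory_integrals:eqn:L_operator}, to introduce factors $\sexpval{y_L}^{-2N_L}\sexpval{y_R}^{-2N_R}\sexpval{z}^{-2N}$ and $\sexpval{\eta_L}^{-2K_L}\sexpval{\eta_R}^{-2K_R}\sexpval{\zeta}^{-2K}$. Each such integration by parts also produces factors $\varphi_{N a}$ as in~\eqref{appendix:oscillatory_integrals:eqn:derivatives_Japanese_bracket}, which are smooth and bounded.

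At this point the integrand is bounded in absolute value following the scheme of~\eqref{appendix:oscillatory_integrals:eqn:proof_Hoermander_classes_in_Moyal_space_estimate_oscillatory_integrand}, except that the derivative counts on $F$ now carry the extra multi-indices coming from $a_L,a_R,b,\alpha_L,\alpha_R,\beta$, and the Hörmander bound on $F$ yields the desired decrease of order by $\rho(\sabs{a_L}+\sabs{a_R}+\sabs{b})$. Peetre's inequality $\sexpval{\xi-\eta}^m \leq 2^{\sabs{m}/2}\sexpval{\xi}^m\sexpval{\eta}^{\sabs{m}}$ together with the elementary bounds~\eqref{symbol_super_calculus:eqn:inequality_nesting_Hoermander_super_Hoermander_classes_m_geq_0} and~\eqref{appendix:oscillatory_integrals:eqn:Japanese_bracket_xi_xi_squeezing_estimate_Japanese_bracket_xi} separate the $\xi$-dependence from the integration variables, yielding a factor $\sexpval{\xi}^{m+m'-\rho(\sabs{a_L}+\sabs{a_R}+\sabs{b})}$ (respectively $\sexpval{\xi}^{m+m_L+m_R-\rho(\cdots)}$ in case~(2)) in front of integrable remainders in $(y_L,y_R,z,\eta_L,\eta_R,\zeta)$ for sufficiently large $N_L,N_R,N,K_L,K_R,K$. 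Summing the finitely many Leibniz terms gives, for each choice of $n,\nu$, an estimate of $\babs{\partial_x^n\partial_{\xi}^{\nu} I_{\tau\tau'}(X)}$ by a constant times the claimed power of $\sexpval{\xi}$, with the constant involving only finitely many seminorms of $F$ and $g$ — proving both the symbol membership and the continuity of the bilinear map.

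The main obstacle is the careful multi-index bookkeeping: one must track precisely which of the monomials $y_L^{a_L}, y_R^{a_R}, z^b$ are converted into momentum derivatives of $F$ (yielding the $\rho$-gain) and which become spatial derivatives of $g$ or $F$ (yielding none, since $\delta=0$), while simultaneously absorbing the factors produced by the $\tau$-phase, $G_{\tau'}$, and the magnetic phase, each of which contributes polynomial growth in $(y_L,y_R,z)$ that has to be dominated by the inserted $\sexpval{\cdot}^{-2N_{\bullet}}$ factors. Once this bookkeeping is in place, continuity in $(\tau,\tau')\in[0,1]\times[0,1]$ follows from dominated convergence: the integrand depends continuously on $\tau$ through the bounded phase $\e^{+\ii\tau\frac{\eps}{2}\sigma(\cdots)}$ and on $\tau'$ through $G_{\tau'}$ (continuous with values in $\Cont^{\infty}_{\mathrm{b}}\bigl(\R^d_x,\Cont^{\infty}_{\mathrm{pol}}(\R^{3d})\bigr)$ by hypothesis), and the bounds derived above are uniform in $(\tau,\tau')$.
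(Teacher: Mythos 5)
Your proposal follows the paper's own strategy: convert the monomials to symbol derivatives (momentum derivatives yielding the $-\rho(\sabs{a_L}+\sabs{a_R}+\sabs{b})$ order drop since $\delta=0$), regularize with $L$-operators, estimate the resulting integrand via the growth bound on $G_{\tau'}$, Peetre's inequality and the nesting inequalities for Japanese brackets, and then read off uniform bounds in $(\tau,\tau')$. Two bookkeeping slips are worth correcting, though they would self-correct on writing out the computation. First, the phase $\e^{+\ii\tau\frac{\eps}{2}\sigma(Y_L+Z,Y_R+Z)}$ involves only the integration variables, not $X=(x,\xi)$, so $\partial_x^n\partial_\xi^\nu$ never acts on it; after collecting phases, $\tau$ survives solely as the scale factor in the shifted $x$-arguments of $F$ and $g$ (\eg $x\mp\tfrac{\tau\eps}{2}(y_R+z)$), and these shifts produce no extra polynomial factors under differentiation — there is no ``$\tau$-phase'' contribution in the Leibniz sum. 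Second, in the statement of the lemma the function $G_{\tau'}$ is an arbitrary function satisfying the growth estimate; the magnetic phase $\e^{-\ii\lambda\recfluxp}$ is the \emph{particular} $G_{\tau'}$ chosen when the lemma is invoked in Proposition~\ref{symbol_super_calculus:prop:semi_super_composition}, not a second factor sitting next to $G_{\tau'}$ in the integrand, so there is only one Leibniz slot to account for here.
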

	\begin{proof}
		\begin{enumerate}[(1)]
			\item Let $F\in S_{\rho,0}^m(\Pspace)$ and $g\in S_{\rho,0}^{m'}(\pspace)$. The given oscillatory integral~\eqref{appendix:oscillatory_integrals:eqn:semi_super_product_oscillatory_integral} can be rewritten as
			\begin{align*}
				I_{\tau\tau'}(X) &= \frac{1}{(2\pi)^{6d}} \int_{\Pspace} \dd \Ybf \int_{\pspace} \dd Z \int_{\Pspace} \dd \Ybf' \int_{\pspace} \dd Z' \, \bigl ( (+ \ii \partial_{\eta_L'})^{a_L} (- \ii \partial_{y_L'})^{\alpha_L} \e^{+ \ii \sigma(X-Y_L',Y_L)} \bigr )
				\, \cdot \\
				&\quad \quad
				\bigl ( (+\ii\partial_{\zeta'})^b (-\ii\partial_{z'})^{\beta} \e^{+ \ii \sigma(X-Z',Z)} \bigr ) \, \bigl ( (+ \ii \partial_{\eta_R'})^{a_R} (- \ii \partial_{y_R'})^{\alpha_R} \e^{+ \ii \sigma(X-Y_R',Y_R)} \bigr )
				\, \cdot \\
				&\quad \quad
				\e^{+ \ii \tau\frac{\eps}{2}\sigma(Y_L+Z,Y_R+Z)} G_{\tau'}(x,y_L,y_R,z) \, F(\Ybf') \, g(Z') 
				\\
				&= \frac{1}{(2\pi)^{6d}} \int_{\Pspace} \dd \Ybf \int_{\pspace} \dd Z \int_{\Pspace} \dd \Ybf' \int_{\pspace} \dd Z' \, \e^{+ \ii \sigma(X-Y_L',Y_L)} \, \e^{+ \ii \sigma(X-Z',Z)} \, \e^{+ \ii \sigma(X-Y_R',Y_R)}
				\, \cdot \\
				&\quad \quad
				\e^{+ \ii \tau \frac{\eps}{2} \sigma(Y_L+Z,Y_R+Z)} \, G_{\tau'}(x,y_L,y_R,z)
				\, \cdot \\
				&\quad \quad
				\bigl ( (-\ii\partial_{\eta_L'})^{a_L} (+ \ii \partial_{y_L'})^{\alpha_L} (- \ii \partial_{\eta_R'})^{a_R} (+ \ii \partial_{y_R'})^{\alpha_R} F \bigr )(\Ybf') \; \bigl ( (- \ii \partial_{\zeta'})^b (+ \ii \partial_{z'})^{\beta} g \bigr )(Z') 
				.
			\end{align*}
			Since the functions $\partial_{x_L}^{\alpha_L} \partial_{\xi_L}^{a_L} \partial_{x_R}^{\alpha_R} \partial_{\xi_R}^{a_R} F \in S_{\rho,0}^{m - \rho ( \sabs{a_L} + \abs{a_R} )}(\Pspace)$ and $\partial_x^{\beta} \partial_{\xi}^b g \in S_{\rho,0}^{m' - \rho \abs{b}}(\pspace)$ are just Hörmander symbols of lower ($\rho > 0$) or the same order ($\rho = 0$), without loss of generality we may set the multi indices to $a_L = a_R = b = \alpha_L = \alpha_R = \beta = 0 \in \N_0^d$. 
			
			We need to show the existence of not just $I_{\tau\tau'}(X)$, but all of its partial derivatives as well. For $n , \nu \in \N_0^d$ the partial derivatives $\partial_x^n \partial_{\xi}^{\nu} I_{\tau\tau'}(X)$ distribute and give a finite sum of terms of the type 
			\begin{align*}
				\frac{1}{(2\pi)^{6d}} &\int_{\Pspace} \dd \Ybf \int_{\pspace} \dd Z \int_{\Pspace} \dd \Ybf' \int_{\pspace} \dd Z'
				\, \cdot 
				\notag \\
				&\quad \quad
				\bigl ( \partial_x^{a_L} \partial_{\xi}^{\alpha_L} \e^{+ \ii \sigma(X-Y_L',Y_L)} \bigr ) \, \bigl ( \partial_x^b \partial_{\xi}^{\beta} \e^{+ \ii \sigma(X-Z',Z)} \bigr ) \, \bigl ( \partial_x^{a_R} \partial_{\xi}^{\alpha_R} \e^{+ \ii \sigma(X-Y_R',Y_R)} \bigr )
				\, \cdot 
				\notag \\
				&\quad \quad
				\e^{+ \ii \tau \frac{\eps}{2} \sigma(Y_L+Z,Y_R+Z)} \, \partial_x^c G_{\tau'}(x,y_L,y_R,z) \, F(\Ybf') \, g(Z')
			\end{align*}
			with binomial coefficients where the multi indices have to satisfy $a_L + a_R + b + c = n$ and $\alpha_L + \alpha_R + \beta = \nu$. Taking integration by parts, writing out the phase factors explicitly and collecting them so that the non-prime variables factor out and integrating over these non-prime variables to get Dirac deltas, we can simplify these terms to 
			\begin{align}
				\frac{1}{(2\pi)^{6d}} &\int_{\Pspace} \dd \Ybf \int_{\pspace} \dd Z \int_{\Pspace} \dd \Ybf' \int_{\pspace} \dd Z'
				\, \cdot 
				\notag \\
				& \quad \quad
				\e^{-\ii \left( x-\frac{\tau\eps}{2}(y_R+z)-y_L' \right ) \cdot \eta_L} \, \e^{-\ii \left( x+\frac{\tau\eps}{2}(y_L+z)-y_R' \right ) \cdot \eta_R} \, \e^{-\ii \left( x+\frac{\tau\eps}{2}(y_L-y_R)-z' \right )\cdot\zeta}
				\, \cdot 
				\notag \\
				&\quad \quad
				\e^{+ \ii (\xi-\eta_L') \cdot y_L} \, \e^{+ \ii (\xi-\eta_R') \cdot y_R} \, \e^{+ \ii (\xi-\zeta') \cdot z} \, \partial_x^c G_{\tau'}(x,y_L,y_R,z)
				\, \cdot 
				\notag \\
				&\quad \quad
				\bigl ( \partial_{y_L'}^{a_L} \partial_{\eta_L'}^{\alpha_L} \partial_{y_R'}^{a_R} \partial_{\eta_R'}^{\alpha_R} F \bigr )(\Ybf') \, \bigl ( \partial_{z'}^b \partial_{\zeta'}^{\beta} g \bigr )(Z') 
				\notag \\
				&= \frac{1}{(2\pi)^{3d}} \int_{\Pspace} \dd \Ybf \int_{\pspace} \dd Z
				\, 
				\e^{+ \ii (y_L \cdot \eta_L + y_R \cdot \eta_R + z \cdot \zeta)} \, \partial_x^c G_{\tau'}(x,y_L,y_R,z)
				\, \cdot 
				\notag \\
				& \quad \quad
				\bigl ( \partial_{x_L}^{a_L} \partial_{\xi_L}^{\alpha_L} \partial_{x_R}^{a_R} \partial_{\xi_R}^{\alpha_R} F \bigr ) \bigl ( x - \tfrac{\tau\eps}{2} (y_R+z) , \xi-\eta_L , x + \tfrac{\tau\eps}{2} (y_L+z) , \xi - \eta_R \bigr )
				\, \cdot 
				\notag \\
				&\quad \quad
				\bigl ( \partial_x^b \partial_{\xi}^{\beta} g \bigr ) \bigl ( x + \tfrac{\tau\eps}{2} (y_L-y_R) , \xi - \zeta \bigr ) 
				. 
				\label{appendix:oscillatory_integrals:eqn:semi_super_product_partial_derivatives_type_term}
			\end{align}
			Now we take six non-negative integers $N_L$, $N_R$, $N$, $K_L$, $K_R$ and $K$ that we will choose large enough later on to ensure integrability. Exploiting the operators $L_{\eta_L}, L_{\eta_R}, L_\zeta, L_{y_L}, L_{y_R}$ and $L_z$ defined as in~\eqref{appendix:oscillatory_integrals:eqn:L_operator} and taking integration by parts in the same way as in the proof of Lemma~\ref{super_calculus_extension_by_duality:lem:Hoermander_symbol_classes_contained_in_semi_super_Moyal_space}, we can express each of the terms~\eqref{appendix:oscillatory_integrals:eqn:semi_super_product_partial_derivatives_type_term} as yet another finite sum of $(\tau\eps)^{\sabs{a_L''} + \sabs{a_L'''} + \sabs{a_R''} + \sabs{a_R'''} + \sabs{b''} + \sabs{b'''}}$ times terms of the form
			\begin{align}
				\int_{\Pspace} \dd \Ybf &\int_{\pspace} \dd Z \, \e^{+ \ii (y_L \cdot \eta_L + y_R \cdot \eta_R + z \cdot \zeta)}
				\, \sexpval{y_L}^{-2N_L} \, \sexpval{y_R}^{-2N_R} \, \sexpval{z}^{-2N} \, \sexpval{\eta_L}^{-2K_L} \, \sexpval{\eta_R}^{-2K_R} \, \sexpval{\zeta}^{-2K}
				\, \cdot \notag \\
				&
				\varphi_{N_L a_L'}(y_L) \, \varphi_{N_R a_R'}(y_R) \, \varphi_{N b'}(z) \; \partial_x^c \partial_{y_L}^{c_L} \partial_{y_R}^{c_R} \partial_z^{c'} G_{\tau'}(x,y_L,y_R,z)
				\, \cdot \notag \\
				&
				\bigl ( \partial_{x_L}^{a_L + a''_L + a'''_L} \partial_{\xi_L}^{\alpha_L + \alpha'_L} \partial_{x_R}^{a_R + a_R'' + a_R'''} \partial_{\xi_R}^{\alpha_R + \alpha'_R} F \bigr ) \bigl ( x - \tfrac{\tau\eps}{2} (y_R+z) , \xi - \eta_L , x + \tfrac{\tau\eps}{2} (y_L+z) , \xi-\eta_R \bigr )
				\cdot \notag \\
				&
				\bigl ( \partial_x^{b + b'' + b'''} \partial_{\xi}^{\beta + \beta'} g \bigr ) \bigl ( x + \tfrac{\tau\eps}{2} (y_L-y_R) , \xi - \zeta \bigr ) 
				.
				\label{appendix:oscillatory_integrals:eqn:semi_super_product_partial_derivatives_type_term-each-summand}
			\end{align}
			Here $\varphi_{N\alpha}$, $N\in\N_0$, $\alpha\in\N_0^d$, are bounded and smooth functions described in~\eqref{appendix:oscillatory_integrals:eqn:derivatives_Japanese_bracket}. Importantly for our book keeping later on, the sum is over all multi indices satisfying 
			\begin{align*}
				N_L &\geq \tfrac{1}{2} \sabs{\alpha'_L}
				, 
				&
				N_R &\geq \tfrac{1}{2} \sabs{\alpha'_R}
				, 
				&
				N &\geq \tfrac{1}{2} \sabs{\beta}
				, 
				\\
				K_L &\geq \tfrac{1}{2} \babs{a'_L + a''_R + b'' + c_L} 
				, 
				&
				K_R &\geq \tfrac{1}{2} \babs{a'''_L + a'_R + b''' + c_R} 
				, 
				&
				K &\geq \tfrac{1}{2} \babs{a''_L + a'''_R + b' + c'} 
				. 
			\end{align*}
			To show that once we choose these six integers large enough, the integrand is indeed absolutely integrable, we plug in the \emph{a priori} estimate~\eqref{appendix:oscillatory_integrals:eqn:semi_super_product_estimate_oscillatory_integral} for $G_{\tau'}$ and estimate the Hörmander symbols appropriately: 
			\begin{align*}
				&\sexpval{y_L}^{-2N_L} \, \sexpval{y_R}^{-2N_R} \, \sexpval{z}^{-2N} \, \sexpval{\eta_L}^{-2K_L} \, \sexpval{\eta_R}^{-2K_R} \, \sexpval{\zeta}^{-2K} 
				\, \cdot \\
				&\qquad 
				\Babs{\varphi_{N_L a_L'}(y_L) \, \varphi_{N_R a_R'}(y_R) \, \varphi_{N b'}(z) \; \partial_x^c \partial_{y_L}^{c_L} \partial_{y_R}^{c_R} \partial_z^{c'} G_{\tau'}(x,y_L,y_R,z)}
				\, \cdot \\
				&\qquad
				\Bigl \rvert \bigl ( \partial_{x_L}^{a_L + a''_L + a'''_L} \partial_{\xi_L}^{\alpha_L + \alpha'_L} \partial_{x_R}^{a_R + a_R'' + a_R'''} \partial_{\xi_R}^{\alpha_R + \alpha'_R} F \bigr ) \Bigr . 
				\\
				&\qquad \qquad \qquad  
				\Bigl . \bigl ( x - \tfrac{\tau\eps}{2} (y_R+z) , \xi - \eta_L , x + \tfrac{\tau\eps}{2} (y_L+z) , \xi-\eta_R \bigr ) \Bigr \rvert 
				\, \cdot \\
				&\qquad
				\Babs{\bigl ( \partial_x^{b + b'' + b'''} \partial_{\xi}^{\beta + \beta'} g \bigr ) \bigl ( x + \tfrac{\tau\eps}{2} (y_L-y_R) , \xi - \zeta \bigr )} \\
				&\leq C \, p_{L(n,\nu)}^m(F) \, p_{L(n,\nu)}^{m'}(g) 
				\, \cdot \\
				&\qquad
				\sexpval{y_L}^{-2N_L + \sabs{c} + \sabs{c_L} + \sabs{c_R} + \sabs{c'}} \, \sexpval{y_R}^{-2N_R + \sabs{c} + \sabs{c_L} + \sabs{c_R} + \sabs{c'}} \, \sexpval{z}^{-2N + \sabs{c} + \sabs{c_L} + \sabs{c_R} + \sabs{c'}} 
				\, \cdot \\
				&\qquad
				\sexpval{\eta_L}^{-2K_L} \, \sexpval{\eta_R}^{-2K_R} \, \sexpval{\zeta}^{-2K} 
				\, \cdot \\
				&\qquad
				\bexpval{(\xi - \eta_L , \xi - \eta_R)}^{m - \rho (\sabs{\alpha_L} + \sabs{\alpha_R} + \sabs{\alpha_L'} + \sabs{\alpha_R'})} \; \sexpval{\xi - \zeta}^{m' - \rho(\sabs{\beta} + \sabs{\beta'})} .
			\end{align*}
			The constant $C$ can be chosen so that it holds for all of the relevant multi indices and is uniform in $\tau \in [0,1]$. A little bit of book keeping reveals that if we want this estimate to hold for all relevant multi indices we may pick 
			\begin{align*}
				L(n,\nu) = \sabs{n} + \sabs{\nu} + 2 (N_L + N_R + N + K_L + K_R + K)
			\end{align*}
			for the Fréchet seminorms~\eqref{magnetic_Weyl_calculus:eqn:max_seminorm_Hoermander_symbols} of $F$ and $g$. 
			
			For the next step we will need two inequalities: the first is \eqref{symbol_super_calculus:eqn:inequality_nesting_Hoermander_super_Hoermander_classes_m_geq_0}, and the second one is~\eqref{appendix:oscillatory_integrals:eqn:Japanese_bracket_xi_xi_squeezing_estimate_Japanese_bracket_xi}. Combining these with the estimates
			\begin{gather*}
			    \bexpval{(\xi - \eta_L , \xi - \eta_R)}^{m - \rho (\sabs{\alpha_L} + \sabs{\alpha_R} + \sabs{\alpha_L'} + \sabs{\alpha_R'})} \leq \bexpval{(\xi - \eta_L , \xi - \eta_R)}^{m - \rho (\sabs{\alpha_L} + \sabs{\alpha_R})} , \\
			    \sexpval{\xi - \zeta}^{m' - \rho(\sabs{\beta} + \sabs{\beta'})} \leq \sexpval{\xi - \zeta}^{m' - \rho\sabs{\beta}}
			\end{gather*}
			and Peetre's inequality $\sexpval{\xi - \zeta}^m \leq 2^{\nicefrac{\sabs{m}}{2}} \, \sexpval{\xi}^m \, \sexpval{\zeta}^{\sabs{m}}$ for $m \in \R$ allows us to collect $\sexpval{\xi}$ (at the expense of having to increase $K_L$, $K_R$ and $K$ later on), leading to the estimate 
			\begin{align*}
				\ldots &\leq C' \, p_{L(n,\nu)}^m(F) \, p_{L(n,\nu)}^{m'}(g) \, \sexpval{\xi}^{m + m' - \rho\sabs{\nu}}
				\, \cdot \\
				&\qquad
				\sexpval{y_L}^{-2N_L + \sabs{c} + \sabs{c_L} + \sabs{c_R} + \sabs{c'}} \, \sexpval{y_R}^{-2N_R + \sabs{c} + \sabs{c_L} + \sabs{c_R} + \sabs{c'}} \, \sexpval{z}^{-2N + \sabs{c} + \sabs{c_L} + \sabs{c_R} + \sabs{c'}} 
				\, \cdot \\
				&\qquad
				\sexpval{\eta_L}^{-2K_L + \sabs{m - \rho (\sabs{\alpha_L} + \sabs{\alpha_R})}} \, \sexpval{\eta_R}^{-2K_R + \sabs{m - \rho (\sabs{\alpha_L} + \sabs{\alpha_R})}} \, \sexpval{\zeta}^{-2K + \sabs{m' - \rho\sabs{\beta}}} 
			\end{align*}
			with a modified constant $C'$ that absorbed $C$ and $\sqrt{2}$ raised to an appropriate power. Here we have also used $\nu = \alpha_L + \alpha_R + \beta$, which was mentioned at the early stage of the proof.
			
			Now if we choose $N_L$, $N_R$, $N$, $K_L$, $K_R$ and $K$ so large that the last six exponents are smaller than $-d$ for all relevant choices of the multi indices, the latter being the dimension of configuration space $\R^d$, then we can make this term absolutely integrable for all relevant choices of our multi indices, 
			\begin{align*}
				K_L , K_R &> \tfrac{1}{2} (d + \sabs{m} + \rho \sabs{\nu}) 
				, \; 
				\\
				K &> \tfrac{1}{2} (d + \sabs{m'} + \rho \sabs{\nu}) 
				, 
				\\
				N_L , N_R , N &> K_L + K_R + K + \tfrac{1}{2} (d + \sabs{n})
				. 
			\end{align*}
			Once integrated and summed up, this leads us to the estimate 
			\begin{align}
				\babs{ \partial_x^n \partial_{\xi}^{\nu} I_{\tau\tau'}(X) } \leq C'' \, p_{L(n,\nu)}^m(F) \, p_{L(n,\nu)}^{m'}(g) \, \sexpval{\xi}^{m + m' - \rho\sabs{\nu}} 
				\label{appendix:oscillatory_integrals:eqn:semi_super_product_Hoermander_symbol_super_symbol_semi_norm_estimate}
			\end{align}
			where $C'' > 0$ is a suitable constant independent of $\tau , \tau' \in [0,1]$. Consequently, the oscillatory integral  $I_{\tau\tau'}(X)$ exists as a Hörmander symbol of order $m + m'$. More precisely, equation~\eqref{appendix:oscillatory_integrals:eqn:semi_super_product_Hoermander_symbol_super_symbol_semi_norm_estimate} states that $(F , g) \mapsto I_{\tau\tau'}$ is a continuous bilinear map with respect to the Fréchet topologies.
			
			Lastly, to confirm the continuity of $I_{\tau\tau'}$ in $(\tau,\tau')$, we remind ourselves that $\tau$ raised to some non-negative power appears as a prefactor that garnishes \eqref{appendix:oscillatory_integrals:eqn:semi_super_product_partial_derivatives_type_term-each-summand}. Because we may exchange limits and oscillatory integration, the evident continuity of $\tau \mapsto \e^{+ \ii \tau \frac{\eps}{2} \sigma(Y_L + Z , Y_R + Z)}$ and the assumed continuity of $\tau' \mapsto G_{\tau'}$ lead to the continuity of $(\tau,\tau') \mapsto I_{\tau\tau'}$. 
			\item The proof is largely identical, we just need to use the Fréchet seminorms~\eqref{symbol_super_calculus:eqn:max_seminorms_super_Hoermander_symbols} for $F$ in our estimates. More precisely, we may bound the integrand of~\eqref{appendix:oscillatory_integrals:eqn:semi_super_product_partial_derivatives_type_term-each-summand} from above by 
			\begin{align*}
				\ldots \leq C \, &q_{L(n,\nu)}^{m_L,m_R}(F) \, p_{L(n,\nu)}^m(g) \, \sexpval{\xi}^{m_L+m_R+m-\rho\abs{\nu}} 
				\, \cdot \\
				&
				\sexpval{y_L}^{-2N_L + \abs{n} + 2(K_L+K_R+K)} \, \sexpval{y_R}^{-2N_R + \abs{n} + 2(K_L+K_R+K)} \, \sexpval{z}^{-2N + \abs{n} + 2(K_L+K_R+K)}
				\, \cdot \\
				&
				\sexpval{\eta_L}^{-2K_L + \abs{m_L} + \rho \abs{\nu}} \, \sexpval{\eta_R}^{-2K_R + \abs{m_R} + \rho \abs{\nu}} \,  \sexpval{\zeta}^{-2K + \abs{m}+\rho\abs{\nu}} 
				.
			\end{align*}
			Here $C > 0$ is a constant that depends on the multi indices and is independent of $\tau$, and $L(n,\nu)$ is the same constant as above. This means we arrive at slightly different values for $N_L$, $N_R$, $N$, $K_L$, $K_R$ and $K$ than in (1), we can make the right-hand side absolutely integrable: we first need to fix $K_L$, $K_R$ and $K$ so that they satisfy 
			\begin{align*}
				K_{L,R} &> \tfrac{1}{2} (d + \sabs{m_{L,R}} + \rho \sabs{\nu}) 
				,
				\\
				K &> \tfrac{1}{2} (d + \abs{m} + \rho \sabs{\nu}) 
				, 
			\end{align*}
			and then choose $N_L$, $N_R$ and $N$ for which 
			\begin{align*}
				N_L , N_R , N &> \tfrac{1}{2} (d + \sabs{n}) + K_L + K_R + K 
			\end{align*}
			holds. All this shows that there is a constant $C' > 0$ such that
			\begin{align*}
				\sexpval{\xi}^{-(m_L+m_R+m)+\rho\abs{\nu}} \, \babs{\partial_x^n \partial_{\xi}^{\nu} I_{\tau\tau'}(X)} \leq C' \, q_{L(n,\nu)}^{m_L,m_R}(F) \, p_{L(n,\nu)}^m(g) 
			\end{align*}
			holds for all $X \in \pspace$. As before, this shows the existence of the oscillatory integral $I_{\tau\tau'} \in S^{m_L + m_R + m}_{\rho,0}(\pspace)$ as a Hörmander symbol and the continuity of the map $(F,g) \mapsto I_{\tau\tau'}$. The continuity of $(\tau,\tau') \mapsto I_{\tau\tau'}$ can be deduced as in (1) from the continuity of the phase factor in $\tau$ and the continuity of $\tau' \mapsto G_{\tau'}$. This proves~(2) and completes the proof.
		\end{enumerate}
	\end{proof}
	%

	\subsection{Magnetic super Weyl product $\super^B$} 
	\label{appendix:oscillatory_integrals:super_weyl_product}
	First, we need to furnish the proof that Hörmander symbol classes lie in the magnetic super Moyal algebra. 
	\begin{proof}[Lemma~\ref{symbol_super_calculus:lem:Hoermander_symbols_in_super_Moyal_algebra}]
		\begin{enumerate}[(1)]
			\item Suppose $F \in S^m_{\rho,0}(\Pspace)$ is a Hörmander symbol and $G \in \Schwartz(\Pspace)$ a Schwartz function. We need to show that both $F \super^B G$ and $G \super^B F$ are Schwartz functions. Since the strategy of the proof is identical to that of Lemma~\ref{super_calculus_extension_by_duality:lem:Hoermander_symbol_classes_contained_in_semi_super_Moyal_space}, we will be more brief.
		
			To show that $F \super^B G \in \Schwartz(\Pspace)$, we need to bound all of the the Schwartz seminorms
			\begin{align*}
				\bnorm{F \super^B G}_{\Schwartz,J_L Q_L J_R Q_R n_L \nu_L n_R \nu_R} := \sup_{\Xbf \in \Pspace} \sexpval{x_L}^{J_L} \, \sexpval{\xi_L}^{Q_L} \, \sexpval{x_R}^{J_R} \, \sexpval{\xi_R}^{Q_R} \Babs{ \partial_{x_L}^{n_L} \partial_{\xi_L}^{\nu_L} \partial_{x_R}^{n_R} \partial_{\xi_R}^{\nu_R} (F \super^B G)(\Xbf)}
				, 
			\end{align*}
			which are indexed by $J_L, Q_L, J_R, Q_R\in\R$ and the multi indices $n_L , \nu_L , n_R , \nu_R\in \N_0^d$.
		
			When distributing the derivative $\partial_{x_L}^{n_L} \partial_{\xi_L}^{\nu_L} \partial_{x_R}^{n_R} \partial_{\xi_R}^{\nu_R}$ of $F \super^B G$, by using~\eqref{magnetic_super_PsiDOs:eqn:magnetic_super_Weyl_product} we get a finite sum of terms of the type
			\begin{align*}
				\frac{1}{(2\pi)^{8d}} &\int_{\Pspace} \dd \Ybf \int_{\Pspace} \dd \Zbf \int_{\Pspace} \dd \Ybf' \int_{\Pspace} \dd \Zbf'
				\, \cdot \\
				&\quad \quad
				\bigl ( \partial_{x_L}^{a_L} \partial_{\xi_L}^{\alpha_L} \partial_{x_R}^{a_R} \partial_{\xi_R}^{\alpha_R} \e^{+ \ii \Sigma (\Xbf-\Ybf',\Ybf)} \bigr ) \, \bigl ( \partial_{x_L}^{b_L} \partial_{\xi_L}^{\beta_L} \partial_{x_R}^{b_R} \partial_{\xi_R}^{\beta_R} \e^{+ \ii \Sigma (\Xbf-\Zbf',\Zbf)} \bigr )
				\, \cdot \\
				&\quad \quad
				\e^{+ \ii \frac{\eps}{2} \Sigma (r(\Ybf),\Zbf)} \, \partial_{x_L}^{c_L} \e^{- \ii \lambda \trifluxp(x_L,y_L,z_L)} \, \partial_{x_R}^{c_R} \e^{- \ii \lambda \trifluxp(x_R,z_R,y_R)} \, F(\Ybf') \, G(\Zbf')
				\\
				&= \frac{1}{(2\pi)^{8d}} \int_{\Pspace} \dd \Ybf \int_{\Pspace} \dd \Zbf \int_{\Pspace} \dd \Ybf' \int_{\Pspace} \dd \Zbf' \, \e^{+ \ii \Sigma (\Xbf-\Ybf',\Ybf)} \, \e^{+ \ii \Sigma (\Xbf-\Zbf',\Zbf)} \, \e^{+ \ii \frac{\eps}{2} \Sigma (r(\Ybf),\Zbf)} 
				\, \cdot \\
				&\quad \quad
				\partial_{x_L}^{c_L} \e^{- \ii \lambda \trifluxp(x_L,y_L,z_L)} \, \partial_{x_R}^{c_R} \e^{- \ii \lambda \trifluxp(x_R,z_R,y_R)}
				\, \cdot \\
				&\quad \quad 
				\partial_{y_L'}^{a_L} \partial_{\eta_L'}^{\alpha_L} \partial_{y_R'}^{a_R} \partial_{\eta_R'}^{\alpha_R} F(\Ybf') \, \partial_{z_L'}^{b_L} \partial_{\zeta_L'}^{\beta_L} \partial_{z_R'}^{b_R} \partial_{\zeta_R'}^{\beta_R} G(\Zbf') 
				\\
				&= \frac{1}{(2\pi)^{8d}} \int_{\Pspace} \dd \Ybf \int_{\Pspace} \dd \Zbf \int_{\Pspace} \dd \Ybf' \int_{\Pspace} \dd \Zbf'
				\, \cdot \\
				&\quad \quad
				\e^{- \ii (x_L-y_L'-\frac{\eps}{2}z_L) \cdot \eta_L} \, \e^{- \ii (x_L-z_L'+\frac{\eps}{2}y_L) \cdot \zeta_L} \, \e^{- \ii (x_R-y_R'+\frac{\eps}{2}z_R) \cdot \eta_R} \, \e^{- \ii (x_R-z_R'-\frac{\eps}{2}y_R) \cdot \zeta_R}
				\, \cdot \\
				&\quad \quad
				\e^{+ \ii (\xi_L-\eta_L') \cdot y_L} \, \e^{+ \ii (\xi_L-\zeta_L') \cdot z_L} \, \e^{+ \ii (\xi_R-\eta_R') \cdot y_R} \, \e^{+ \ii (\xi_R-\zeta_R') \cdot z_R}
				\, \cdot \\
				&\quad \quad
				\partial_{x_L}^{c_L} \e^{- \ii \lambda \trifluxp(x_L,y_L,z_L)} \, \partial_{x_R}^{c_R} \e^{- \ii \lambda \trifluxp(x_R,z_R,y_R)} \, \partial_{y_L'}^{a_L} \partial_{\eta_L'}^{\alpha_L} \partial_{y_R'}^{a_R} \partial_{\eta_R'}^{\alpha_R} F(\Ybf') \, \partial_{z_L'}^{b_L} \partial_{\zeta_L'}^{\beta_L} \partial_{z_R'}^{b_R} \partial_{\zeta_R'}^{\beta_R} G(\Zbf') 
				\\
				&= \frac{1}{(2\pi)^{4d}} \int_{\Pspace} \dd \Ybf \int_{\Pspace} \dd \Zbf \; \e^{+ \ii (y_L \cdot \eta_L + z_L \cdot \zeta_L + y_R \cdot \eta_R + z_R \cdot \zeta_R)}
				\, \cdot \\
				&\quad \quad
				\partial_{x_L}^{c_L} \e^{- \ii \lambda \trifluxp(x_L,y_L,z_L)} \, \partial_{x_R}^{c_R} \e^{- \ii \lambda \trifluxp(x_R,z_R,y_R)}
				\, \cdot \\
				&\quad \quad
				\partial_{x_L}^{a_L} \partial_{\xi_L}^{\alpha_L} \partial_{x_R}^{a_R} \partial_{\xi_R}^{\alpha_R} F \bigl ( x_L-\tfrac{\eps}{2}z_L , \xi_L-\eta_L , x_R + \tfrac{\eps}{2}z_R , \xi_R - \eta_R \bigr )
				\, \cdot \\
				&\quad \quad
				\partial_{x_L}^{b_L} \partial_{\xi_L}^{\beta_L} \partial_{x_R}^{b_R} \partial_{\xi_R}^{\beta_R} G \bigl ( x_L+\tfrac{\eps}{2}y_L , \xi_L-\zeta_L , x_R-\tfrac{\eps}{2}y_R , \xi_R-\zeta_R \bigr ) 
			\end{align*}
			so that the multi indices satisfy 
			\begin{align*}
				n_{L,R} &= a_{L,R} + b_{L,R} + c_{L,R} 
				, 
				\\
				\nu_{L,R} &= \alpha_{L,R} + \beta_{L,R} 
				. 
			\end{align*}
			With the help of $L$ operators~\eqref{appendix:oscillatory_integrals:eqn:L_operator} we insert $\sexpval{y_L}^{-2N_L}$ and similar factors in the other position variables, and then do the same for the momentum variables. The derivatives distribute and yield a finite sum of terms of the form 
			\begin{align}
				&\int_{\Pspace} \dd \Ybf \int_{\Pspace} \dd \Zbf \, \e^{+ \ii (y_L \cdot \eta_L + z_L \cdot \zeta_L + y_R \cdot \eta_R + z_R \cdot \zeta_R)}
				\, \cdot \notag \\
				&\quad \quad
				\sexpval{y_L}^{-2N_L} \, \sexpval{z_L}^{-2M_L} \, \sexpval{y_R}^{-2N_R} \, \sexpval{z_R}^{-2M_R} \, \sexpval{\eta_L}^{-2D_L} \, \sexpval{\zeta_L}^{-2K_L} \, \sexpval{\eta_R}^{-2D_R} \, \sexpval{\zeta_R}^{-2K_R} 
				\, \cdot \notag \\
				&\quad \quad
				\varphi_{N_L b'_L}(y_L) \, \varphi_{M_L a'_L}(z_L) \, \varphi_{N_R b'_R}(y_R) \, \varphi_{M_R a'_R}(z_R) 
				\, \cdot \notag \\
				&\quad \quad
				\partial_{x_L}^{c_L} \partial_{y_L}^{a''_L} \partial_{z_L}^{b''_L} \e^{- \ii \lambda \trifluxp(x_L,y_L,z_L)} \, \partial_{x_R}^{c_R} \partial_{y_R}^{a''_R} \partial_{z_R}^{b''_R} \e^{- \ii \lambda \trifluxp(x_R,z_R,y_R)}
				\, \cdot \notag \\
				&\quad \quad
				\partial_{x_L}^{a_L + a'''_L} \partial_{\xi_L}^{\alpha_L + \alpha'_L} \partial_{x_R}^{a_R + a'''_R} \partial_{\xi_R}^{\alpha_R + \alpha'_R} F \bigl ( x_L - \tfrac{\eps}{2} z_L , \xi_L - \eta_L , x_R + \tfrac{\eps}{2} z_R , \xi_R - \eta_R \bigr )
				\, \cdot \notag \\
				&\quad \quad
				\partial_{x_L}^{b_L + b'''_L} \partial_{\xi_L}^{\beta_L + \beta'_L} \partial_{x_R}^{b_R + b'''_R} \partial_{\xi_R}^{\beta_R + \beta'_R} G \bigl ( x_L + \tfrac{\eps}{2} y_L , \xi_L - \zeta_L , x_R - \tfrac{\eps}{2}y_R , \xi_R-\zeta_R \bigr ) 
				,
				\label{appendix:oscillatory_integrals:eqn:oscillatory_integral_super_Weyl_product_Schwartz_function_prototypical_term}
			\end{align}
			where $\varphi_{N_L b'_L}(y_L)$ stems from deriving $\sexpval{y_L}^{-2N_L}$ and similarly for the other terms. As before, we will pick large enough integers $N_L$, $N_R$, $M_L$, $M_R$, $D_L$, $D_R$, $K_L$ and $K_R$ later on. The degrees of the newly introduced multi indices are bounded by 
			\begin{align}
			    \label{appendix:oscillatory_integrals:eqn:bound_for_multi_indices_N_L_and_M_L}
				\sabs{\alpha'_L} &\leq 2N_L 
				, 
				&
				\sabs{\beta'_L} &\leq 2M_L 
				, 
				\\ 
				\sabs{\alpha'_R} &\leq 2N_R 
				, 
				&
				\sabs{\beta'_R} &\leq 2M_R
				, 
				\\
				\abs{b'_L + a''_L + b'''_L} &\leq 2D_L 
				, 
				&
				\abs{a'_L + b''_L + a'''_L} &\leq 2K_L 
				, 
				\\ 
				\label{appendix:oscillatory_integrals:eqn:bound_for_multi_indices_D_R_and_K_R}
				\abs{b'_R + a''_R + b'''_R} &\leq 2D_R 
				, 
				&
				\abs{a'_R + b''_R + a'''_R} &\leq 2K_R 
				.
			\end{align}
			We proceed to estimate the integrand as usual. Up to some power of $\eps$, once we fix the non-negative constants $J_L$, $J_R$, $Q_L$ and $Q_R$, in terms of the Fréchet seminorms of $S^m_{\rho,0}(\Pspace)$ and $\Schwartz(\Pspace)$, we can deduce the following upper bound
			\begin{align} 
				\Babs{\mbox{$\displaystyle \int$grand}} &\leq \sexpval{y_L}^{-2N_L} \, \sexpval{y_R}^{-2N_R} \, \sexpval{z_L}^{-2M_L} \, \sexpval{z_R}^{-2M_R} \, \sexpval{\eta_L}^{-2D_L} \, \sexpval{\eta_R}^{-2D_R} \, \sexpval{\zeta_L}^{-2K_L} \, \sexpval{\zeta_R}^{-2K_R} 
				\, \cdot \notag \\
				&\quad \quad
				\Babs{\varphi_{N_L b'_L}(y_L) \, \varphi_{N_R b'_R}(y_R) \, \varphi_{M_L a'_L}(z_L) \, \varphi_{M_R a'_R}(z_R)}
				\, \cdot \notag \\
				&\quad \quad
				\Babs{\partial_{x_L}^{c_L} \partial_{y_L}^{a''_L} \partial_{z_L}^{b''_L} \e^{- \ii \lambda \trifluxp(x_L,y_L,z_L)}} \, \Babs{\partial_{x_R}^{c_R} \partial_{y_R}^{a''_R} \partial_{z_R}^{b''_R} \e^{- \ii \lambda \trifluxp(x_R,z_R,y_R)}}
				\, \cdot \notag \\
				&\quad \quad
				\Babs{\partial_{x_L}^{a_L + a'''_L} \partial_{\xi_L}^{\alpha_L + \alpha'_L} \partial_{x_R}^{a_R + a'''_R} \partial_{\xi_R}^{\alpha_R + \alpha'_R} F \bigl ( x_L - \tfrac{\eps}{2} z_L , \xi_L - \eta_L , x_R + \tfrac{\eps}{2} z_R , \xi_R - \eta_R \bigr )}
				\, \cdot \notag \\
				&\quad \quad
				\Babs{\partial_{x_L}^{b_L + b'''_L} \partial_{\xi_L}^{\beta_L + \beta'_L} \partial_{x_R}^{b_R + b'''_R} \partial_{\xi_R}^{\beta_R + \beta'_R} G \bigl ( x_L + \tfrac{\eps}{2} y_L , \xi_L-\zeta_L , x_R - \tfrac{\eps}{2} y_R , \xi_R - \zeta_R \bigr )}
				\displaybreak[0]
				\notag \\
				&\leq C \, p_N^m(F) \, \Bigl ( \max_{\sabs{b_L} + \sabs{b_R} + \sabs{\beta_L} + \sabs{\beta_R} \leq N} \snorm{G}_{\Schwartz,J_L Q'_L J_R Q'_R b_L \beta_L b_R \beta_R} \Bigr )
				\, \cdot \notag \\
				&\quad \quad
				\sexpval{y_L}^{-2N_L + \sabs{c_L} + \sabs{a''_L} + \sabs{b''_L}} \, \sexpval{y_R}^{-2N_R + \sabs{c_R} + \sabs{a''_R} + \sabs{b''_R}} 
				\, \cdot \notag \\
				&\quad \quad
				\sexpval{z_L}^{-2M_L + \sabs{c_L} + \sabs{a''_L} + \sabs{b''_L}} \, \sexpval{z_R}^{-2M_R + \sabs{c_R} + \sabs{a''_R} + \sabs{b''_R}} \, \sexpval{\eta_L}^{-2D_L} \, \sexpval{\zeta_L}^{-2K_L}
				\, \cdot \notag \\
				&\quad \quad
				\sexpval{\eta_R}^{-2D_R} \, \sexpval{\zeta_R}^{-2K_R} \, \sexpval{(\xi_L - \eta_L , \xi_R - \eta_R)}^{m - \rho(\sabs{\alpha_L} + \sabs{\alpha'_L} + \sabs{\alpha_R} + \sabs{\alpha'_R})}
				\, \cdot \notag \\
				&\quad \quad
				\sexpval{x_L + \tfrac{\eps}{2} y_L}^{-J_L} \, \sexpval{\xi_L - \zeta_L}^{-Q_L - \sabs{m}} \, \sexpval{x_R - \tfrac{\eps}{2} y_R}^{-J_R} \, \sexpval{\xi_R - \zeta_R}^{-Q_R - \sabs{m}}
				.
				\label{appendix:oscillatory_integrals:eqn:estimate_oscillatory_integral_super_Weyl_product_Schwartz}
			\end{align}
			Here we have set $Q'_{L,R} = Q_{L,R} + \sabs{m}$ and
			\begin{align}
				N := \sabs{n_L} + \sabs{n_R} + \sabs{\nu_L} + \sabs{\nu_R} + 2(N_L + N_R + M_L + M_R + D_L + D_R + K_L + K_R)
				. 
				\label{appendix:oscillatory_integrals:eqn:estimate_oscillatory_integral_super_Weyl_product_constant_N}
			\end{align}
			Once we plug in the estimates on the Japanese bracket in the proof of Lemma~\ref{super_calculus_extension_by_duality:lem:Hoermander_symbol_classes_contained_in_semi_super_Moyal_space}~(1) and combine them with 
			\begin{align*}
				\sexpval{(\xi_L - \eta_L , \xi_R - \eta_R)}^{m - \rho (\sabs{\alpha_L} + \sabs{\alpha'_L} + \sabs{\alpha_R} + \sabs{\alpha'_R})} \leq 2^{\frac{\sabs{m}}{2}} \, \sexpval{\xi_L}^{\sabs{m}} \, \sexpval{\xi_R}^{\sabs{m}} \, \sexpval{\eta_L}^{\sabs{m}} \, \sexpval{\eta_R}^{\sabs{m}} 
				, 
			\end{align*}
			we arrive at the estimate 
			\begin{align}
				\ldots &\leq C' \, p_N^m(F) \, \Bigl ( \max_{\sabs{b_L} + \sabs{b_R} + \sabs{\beta_L} + \sabs{\beta_R} \leq N} \snorm{G}_{\Schwartz,J_L Q'_L J_R Q'_R b_L \beta_L b_R \beta_R} \Bigr )
				\, \cdot \notag \\
				&\quad \quad
				\sexpval{x_L}^{-J_L} \, \sexpval{x_R}^{-J_R} \, \sexpval{\xi_L}^{-Q_L} \, \sexpval{\xi_R}^{-Q_R} 
				\, \cdot \notag \\
				&\quad \quad
				\sexpval{y_L}^{-2N_L + \abs{n_L} + 2(D_L+K_L) + J_L} \, \sexpval{y_R}^{-2N_R + \abs{n_R} + 2(D_R+K_R) + J_R} 
				\, \cdot \notag \\
				&\quad \quad
				\sexpval{z_L}^{-2M_L + \abs{n_L} + 2(D_L+K_L)} \, \sexpval{z_R}^{-2M_R + \abs{n_R} + 2(D_R+K_R)} 
				\, \cdot \notag \\
				&\quad \quad
				\sexpval{\eta_L}^{-2D_L + \abs{m}} \, \sexpval{\zeta_L}^{-2K_L + Q_L + \abs{m}} \, \sexpval{\eta_R}^{-2D_R + \abs{m}} \, \sexpval{\zeta_R}^{-2K_R + Q_R + \abs{m}}
			\end{align}
			that no longer depends on the precise value of the multi indices, \ie it holds for all terms in the big, but finite sum. Consequently, we can ensure integrability of all terms if we pick the integrability constants for the spatial variables first, 
			%
			\begin{align*}
				D_{L,R} &> \tfrac{1}{2} (\sabs{m} + d) 
				, 
				\\
				K_{L,R} &> \tfrac{1}{2} (\sabs{m} + d + Q_{L,R}) 
				, 
			\end{align*}
			and then fix the integrability constants in the momentum variables, 
			\begin{align*}
				M_{L,R} &> \tfrac{1}{2} (\sabs{n_{L,R}} + d) + D_{L,R} + K_{L,R} 
				, 
				\\
				N_{L,R} &> \tfrac{1}{2} (\sabs{n_{L,R}} + d + J_{L,R}) + D_{L,R} + K_{L,R} 
				. 
			\end{align*}
			Summing over all terms shows that for $F \super^B G$ and any of the derivatives 
			\begin{align}
				&\Babs{\partial_{x_L}^{n_L} \partial_{\xi_L}^{\nu_L} \partial_{x_R}^{n_R} \partial_{\xi_R}^{\nu_R} F \super^B G(X_L,X_R)} 
				\notag \\
				&\quad \leq 
				C'' \, p_N^m(F) \; \Bigl ( \max_{\sabs{b_L} + \sabs{b_R} + \sabs{\beta_L} + \sabs{\beta_R} \leq N} \snorm{G}_{\Schwartz,J_L Q'_L J_R Q'_R b_L \beta_L b_R \beta_R} \Bigr )
				\, \sexpval{x_L}^{-J_L} \, \sexpval{x_R}^{-J_R} \, \sexpval{\xi_L}^{-Q_L} \, \sexpval{\xi_R}^{-Q_R}
				\label{appendix:oscillatory_integrals:eqn:estimate_oscillatory_integral_super_Weyl_product_Schwartz_final_estimate}
			\end{align}
			holds for arbitrary values of $J_L$, $J_R$, $Q_L$ and $Q_R$, where the constant $C'' > 0$ depends on the order of the derivative and the just mentioned four decay constants. This shows that $F\super^B G\in\Schwartz(\Pspace)$ and hence $F\in\MoyalAlgebra_L^B(\Pspace)$. Similarly, we can also prove that if $F\in S_{\rho,0}^m(\Pspace)$, then $F\in\MoyalAlgebra_R^B(\Pspace)$ by showing that $G\super^B F\in\Schwartz(\Pspace)$ for all $G\in\Schwartz(\Pspace)$. All this shows that $S_{\rho,0}^m(\Pspace)\subset\sMoyalAlg$, which proves the first part.
			\item The proof here is analogous to (1), except that we need to involve the seminorms~\eqref{symbol_super_calculus:eqn:max_seminorms_super_Hoermander_symbols} instead in the estimate~\eqref{appendix:oscillatory_integrals:eqn:estimate_oscillatory_integral_super_Weyl_product_Schwartz}. We just need to keep track of left and right degrees separately. Note that unlike in the proof of Lemma~\ref{super_calculus_extension_by_duality:lem:Hoermander_symbol_classes_contained_in_semi_super_Moyal_space}~(2), left and right places no longer trade places as there we had to deal with $F^{\mathrm{t}}(X_L,X_R) = F(X_R,X_L)$ rather than $F$. 
		
			Suppose $F \in S^{m_L,m_R}_{\rho,0}(\Pspace)$ and $G \in \Schwartz(\Pspace)$. As before, $n_L , \nu_L , n_R , \nu_R \in \N_0^d$ are multi indices for the derivatives of $F \super^B G$ and we pick arbitrary $J_L , J_R , Q_L , Q_R > 0$ in advance. We proceed to estimate the integrand of \eqref{appendix:oscillatory_integrals:eqn:oscillatory_integral_super_Weyl_product_Schwartz_function_prototypical_term} by making use of the inequality~\eqref{symbol_super_calculus:eqn:inequality_nesting_Hoermander_super_Hoermander_classes_m_geq_0}: 
			\begin{align*}
				\Babs{\mbox{$\displaystyle \int$grand}} &\leq C \, q^{m_L,m_R}_N(F) \; \Bigl ( \max_{\sabs{b_L} + \sabs{b_R} + \sabs{\beta_L} + \sabs{\beta_R} \leq N} \snorm{G}_{\Schwartz,J_L \tilde{Q}_L J_R \tilde{Q}_R b_L \beta_L b_R \beta_R} \Bigr ) 
				\, \cdot \\
				&\quad \quad
				\sexpval{y_L}^{-2N_L + \sabs{c_L} + \sabs{a''_L} + \sabs{b''_L}} \, \sexpval{y_R}^{-2N_R + \sabs{c_R} + \sabs{a''_R} + \sabs{b''_R}} 
				\, \cdot \notag \\
				&\quad \quad
				\sexpval{z_L}^{-2M_L + \sabs{c_L} + \sabs{a''_L} + \sabs{b''_L}} \, \sexpval{z_R}^{-2M_R + \sabs{c_R} + \sabs{a''_R} + \sabs{b''_R}} \, \sexpval{\eta_L}^{-2D_L} \, \sexpval{\zeta_L}^{-2K_L}
				\, \cdot \notag \\
				&\quad \quad
				\sexpval{\eta_R}^{-2D_R} \, \sexpval{\zeta_R}^{-2K_R} \, \sexpval{\xi_L-\eta_L}^{m_L - \rho(\sabs{\alpha_L} + \sabs{\alpha'_L})} \, \sexpval{\xi_R-\eta_R}^{m_R - \rho(\sabs{\alpha_R} + \sabs{\alpha'_R})}
				\, \cdot \notag \\
				&\quad \quad
				\sexpval{x_L + \tfrac{\eps}{2}y_L}^{-J_L} \, \sexpval{\xi_L-\zeta_L}^{-Q_L - \sabs{m_L}} \, \sexpval{x_R-\tfrac{\eps}{2}y_R}^{-J_R} \, \sexpval{\xi_R-\zeta_R}^{-Q_R - \sabs{m_R}}
				\\
				&\leq C' \, q^{m_L,m_R}_N(F) \; \Bigl ( \max_{\sabs{b_L} + \sabs{b_R} + \sabs{\beta_L} + \sabs{\beta_R} \leq N} \snorm{G}_{\Schwartz,J_L \tilde{Q}_L J_R \tilde{Q}_R b_L \beta_L b_R \beta_R} \Bigr ) 
				\, \cdot \\
				&\quad \quad
				\, \sexpval{x_L}^{-J_L} \, \sexpval{x_R}^{-J_R} \, \sexpval{\xi_L}^{-Q_L} \, \sexpval{\xi_R}^{-Q_R} 
				\, \cdot \notag \\
				&\quad \quad
				\sexpval{y_L}^{-2N_L + \sabs{n_L} + 2(D_L + K_L) + J_L} \, \sexpval{y_R}^{-2N_R + \sabs{n_R} + 2(D_R + K_R) + J_R}
				\, \cdot \notag \\
				&\quad \quad
				\sexpval{z_L}^{-2M_L + \sabs{n_L} + 2(D_L + K_L)} \, \sexpval{z_R}^{-2M_R + \sabs{n_R} + 2(D_R + K_R)}
				\, \cdot \notag \\
				&\quad \quad
				\sexpval{\eta_L}^{-2D_L + \sabs{m_L}} \, \sexpval{\zeta_L}^{-2K_L + Q_L + \sabs{m_L}} \, \sexpval{\eta_R}^{-2D_R + \sabs{m_R}} \, \sexpval{\zeta_R}^{-2K_R + Q_R + \sabs{m_R}}
				,
			\end{align*}
			where we have set $\tilde{Q}_{L,R} = Q_{L,R} + \sabs{m_{L,R}}$. With the exception of 
			\begin{align*}
				D_{L,R} &> \tfrac{1}{2} (\sabs{m_{L,R}} + d) 
				, 
				\\
				K_{L,R} &> \tfrac{1}{2} (\sabs{m_{L,R}} + d + Q_{L,R})
				, 
			\end{align*}
			the constants $N_{L,R}$, $M_{L,R}$ and $N$ need to be chosen as in the proof of (1). That ensures integrability and overall, we arrive at the conclusion that for any $n_{L,R} , \nu_{L,R} \in \N_0^d$ and $J_{L,R} , Q_{L,R} > 0$ there exists $C'' > 0$ for which \eqref{appendix:oscillatory_integrals:eqn:estimate_oscillatory_integral_super_Weyl_product_Schwartz_final_estimate} with $p_N^m(F)$ replaced by $q_N^{m_L,m_R}(F)$ and $Q'_{L,R}$ replaced by $\tilde{Q}_{L,R}$ holds true for all $(X_L,X_R) \in \Pspace$. That shows that $F\super^B G\in\Schwartz(\Pspace)$, and we can similarly prove that $G\super^B F\in\Schwartz(\Pspace)$. All this shows that $S_{\rho,0}^{m_L,m_R}(\Pspace)\subset\sMoyalAlg$ and finishes the proof.
		\end{enumerate}
	\end{proof}
	\begin{remark}
		The above proof actually shows that the magnetic super Weyl product $\super^B$ actually gives rise to continuous bilinear maps
		\begin{gather*}
		    S_{\rho,0}^m(\Pspace) \times \Schwartz(\Pspace) \longrightarrow \Schwartz(\Pspace), \qquad \Schwartz(\Pspace) \times S_{\rho,0}^m(\Pspace) \longrightarrow \Schwartz(\Pspace), \\
		    S_{\rho,0}^{m_L,m_R}(\Pspace) \times \Schwartz(\Pspace) \longrightarrow \Schwartz(\Pspace), \qquad \Schwartz(\Pspace) \times S_{\rho,0}^{m_L,m_R}(\Pspace) \longrightarrow \Schwartz(\Pspace) .
		\end{gather*}
	\end{remark}
	The fact that Hörmander symbols lie in the magnetic super Moyal algebra implies that we can talk about the magnetic super Weyl product of two Hörmander symbols, which results in a tempered distribution. The proof that this yields a Hörmander symbol is quite similar in strategy to the proof above, we just need to employ different seminorms in our estimates. Before we proceed to the statement and proof of the lemma, we make the following remark:
	\begin{remark}
		Again, as in Lemma~\ref{appendix:oscillatory_integrals:lem:semi_super_product_existence_oscillatory_integral}, the following lemma contains a stronger result than necessary. The case $a_L = a_R = b_L = b_R = \alpha_L = \alpha_R = \beta_L = \beta_R = 0$ and $\tau = \tau' = 1$ in Lemma~\ref{appendix:oscillatory_integrals:lem:existence_oscillatory_integral_super_Weyl_product} is enough for our purpose, and we will use the full strength of this lemma to derive the asymptotic expansion of $F\super^B G$ in the forthcoming work. 
	\end{remark}
	\begin{lemma}\label{appendix:oscillatory_integrals:lem:existence_oscillatory_integral_super_Weyl_product}
		Suppose the magnetic field satisfies Assumption~\ref{intro:assumption:bounded_magnetic_field}, and that the constants satisfy $\rho , \tau , \tau' \in [0,1]$, $\eps \in (0,1]$ and $m , m_L , m_R , m' , m'_L , m'_R \in \R$. Moreover, assume that the function 
		\begin{align*}
			[0,1] \ni \tau' \mapsto G_{\tau'} \in \Cont^{\infty}_{\mathrm{b}} \bigl ( \R^d_{x_L} \times \R^d_{x_R} \, , \, \Cont^{\infty}_{\mathrm{pol}}(\R_{y_L}^d \times \R_{y_R}^d \times \R_{z_L}^d \times \R_{z_R}^d) \bigr )
		\end{align*}
		depends on $\tau'$ in a continuous fashion and that for all 
		$h_L , h_R , r_L , r_R , s_L , s_R \in \N_0^d$, there exists $C_{h_L h_R r_L r_R s_L s_R} > 0$ such that
		\begin{align} 
			&\Babs{\partial_{x_L}^{h_L} \partial_{x_R}^{h_R} \partial_{y_L}^{r_L} \partial_{y_R}^{r_R} \partial_{z_L}^{s_L} \partial_{z_R}^{s_R} G_{\tau'}(x_L,x_R,y_L,y_R,z_L,z_R)} \notag \\
			&\qquad \qquad 
			\leq C_{h_L h_R r_L r_R s_L s_R} \, \bigl ( \sexpval{y_L} + \sexpval{y_R} + \sexpval{z_L} + \sexpval{z_R} \bigr )^{\sabs{h_L} + \sabs{h_R} + \sabs{r_L} + \sabs{r_R} + \sabs{s_L} + \sabs{s_R}} 
			.
			\label{appendix:oscillatory_integrals:eqn:super_Weyl_product_oscillatory_integral-flux-factor-estimate}
		\end{align}
		For two functions $F$ and $G$ on $\Pspace$, $a_L , a_R , b_L , b_R , \alpha_L , \alpha_R , \beta_L , \beta_R \in \N_0^d$ and $\tau,\tau'\in[0,1]$, we define the oscillatory integral as
		\begin{align} 
			I_{\tau\tau'}(\Xbf) := \frac{1}{(2\pi)^{4d}} \int_{\Pspace} \dd \Ybf \int_{\Pspace} \dd \Zbf \, &\e^{+ \ii \Sigma(\Xbf,\Ybf+\Zbf)} \, \e^{+ \ii \tau \frac{\eps}{2} \Sigma(r(\Ybf),\Zbf)} \, G_{\tau'}(x_L,x_R,y_L,y_R,z_L,z_R) 
			\label{appendix:oscillatory_integrals:eqn:super_Weyl_product_oscillatory_integral}
			\, \cdot \\
			&
			y_L^{a_L} \, \eta_L^{\alpha_L} \, y_R^{a_R} \, \eta_R^{\alpha_R} \, z_L^{b_L} \zeta_L^{\beta_L} \, z_R^{b_R} \zeta_R^{\beta_R} \, (\Fourier_\Sigma F)(\Ybf) \, (\Fourier_\Sigma G)(\Zbf) 
			.
			\notag 
		\end{align}
		Then in the following circumstances the oscillatory integral exists: 
		\begin{enumerate}[(1)]
			\item The map 
			\begin{align*}
				(F,G) \mapsto I_{\tau\tau'} &: S_{\rho,0}^m(\Pspace)\times S_{\rho,0}^{m'}(\Pspace) \longrightarrow S_{\rho,0}^{m + m' - \rho(\sabs{a_L} + \sabs{a_R} + \sabs{b_L} + \sabs{b_R})}(\Pspace)
			\end{align*}
			is a continuous bilinear map. Furthermore, given symbols $F\in S_{\rho,0}^m(\Pspace)$ and $G\in S_{\rho,0}^{m'}(\Pspace)$, we obtain a continuous map $[0,1] \times [0,1]\ni (\tau,\tau') \mapsto I_{\tau\tau'}\in S_{\rho,0}^{m + m' - \rho(\sabs{a_L} + \sabs{a_R} + \sabs{b_L} + \sabs{b_R})}(\Pspace)$.
			\item The map 
			\begin{align*}
				(F,G) \mapsto I_{\tau\tau'} &: S_{\rho,0}^{m_L,m_R}(\Pspace) \times S_{\rho,0}^{m_L',m_R'}(\Pspace) \longrightarrow S_{\rho,0}^{m_L + m_L' - \rho(\abs{a_L}+\abs{b_L}) , m_R + m_R' - \rho(\abs{a_R}+\abs{b_R})}(\Pspace)
			\end{align*}
			is a continuous bilinear map. Furthermore, given symbols $F\in S_{\rho,0}^{m_L,m_R}(\Pspace)$ and $G\in S_{\rho,0}^{m_L',m_R'}(\Pspace)$, the map $(\tau,\tau') \mapsto I_{\tau\tau'}$ gives rise to a continuous map from $[0,1] \times [0,1]$ to $S_{\rho,0}^{m_L + m_L' - \rho(\abs{a_L}+\abs{b_L}) , m_R + m_R' - \rho(\abs{a_R}+\abs{b_R})}(\Pspace)$.
		\end{enumerate}
	\end{lemma}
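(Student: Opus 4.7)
[Proposal for Lemma~\ref{appendix:oscillatory_integrals:lem:existence_oscillatory_integral_super_Weyl_product}]
The plan is to mimic the strategy of Lemma~\ref{appendix:oscillatory_integrals:lem:semi_super_product_existence_oscillatory_integral}, adapted to the fact that we now have \emph{two} pairs of auxiliary phase-space variables $(Y_L,Y_R)$ and $(Z_L,Z_R)$ rather than $(Y_L,Y_R)$ and a single $Z$. First I would reduce to the case $a_L = a_R = b_L = b_R = \alpha_L = \alpha_R = \beta_L = \beta_R = 0$: write $y_L^{a_L} \eta_L^{\alpha_L} \cdots (\Fourier_\Sigma F)(\Ybf) \, (\Fourier_\Sigma G)(\Zbf)$ as $(\Fourier_\Sigma \tilde F)(\Ybf)(\Fourier_\Sigma \tilde G)(\Zbf)$ where $\tilde F, \tilde G$ are partial derivatives of $F, G$ (up to signs and $\ii$'s). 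This moves $a_L$ units of decay from $\sexpval{\xi_L}$ on the $F$-side into $\tilde F \in S^{m - \rho |a_L|}_{\rho,0}$ or $S^{m_L - \rho |a_L|, m_R}_{\rho,0}$, and analogously for the other multi-indices. The resulting loss in order exactly matches the claimed target orders.

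Next I would compute a general derivative $\partial_{x_L}^{n_L} \partial_{\xi_L}^{\nu_L} \partial_{x_R}^{n_R} \partial_{\xi_R}^{\nu_R} I_{\tau\tau'}$ under the integral sign by distributing across the three $\Xbf$-dependent pieces (the two linear phases and $G_{\tau'}$) via Leibniz. Each resulting term is of the same shape as the one appearing in the proof of Lemma~\ref{symbol_super_calculus:lem:Hoermander_symbols_in_super_Moyal_algebra}, up to derivatives falling on $G_{\tau'}$ (which are controlled by the hypothesis \eqref{appendix:oscillatory_integrals:eqn:super_Weyl_product_oscillatory_integral-flux-factor-estimate}) and on $F, G$ (which merely raises the order of differentiation we will bound by a Fréchet seminorm). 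Then, exactly as in the proof of Lemma~\ref{symbol_super_calculus:lem:Hoermander_symbols_in_super_Moyal_algebra}, I would use the operators $L_{\eta_L}, L_{\eta_R}, L_{\zeta_L}, L_{\zeta_R}, L_{y_L}, L_{y_R}, L_{z_L}, L_{z_R}$ built from $\sexpval{\, \cdot \,}^{-2} (1-\Delta)$ to insert the prefactors $\sexpval{y_L}^{-2N_L} \sexpval{y_R}^{-2N_R} \sexpval{z_L}^{-2M_L} \sexpval{z_R}^{-2M_R} \sexpval{\eta_L}^{-2D_L} \sexpval{\eta_R}^{-2D_R} \sexpval{\zeta_L}^{-2K_L} \sexpval{\zeta_R}^{-2K_R}$ and integrate by parts to move the companion powers of $\eta_L, \eta_R, \zeta_L, \zeta_R$ and $y_L, y_R, z_L, z_R$ onto $F, G$ and onto $G_{\tau'}$.

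After this procedure I would estimate the integrand. The derivatives of $G_{\tau'}$ are absorbed into the $\sexpval{y_{L,R}}, \sexpval{z_{L,R}}$ factors via the hypothesis, losing at most $|n_L|+|n_R|+2(D_L+D_R+K_L+K_R)$ powers of the spatial variables. For case~(1) I would estimate the derivatives of $F$ shifted at $\xi_L - \eta_L, \xi_R - \eta_R$ and of $G$ shifted at $\xi_L - \zeta_L, \xi_R - \zeta_R$ by $p_N^m(F), p_{N}^{m'}(G) \, \sexpval{(\xi_L-\eta_L,\xi_R-\eta_R)}^{m - \rho|\nu|_F} \sexpval{(\xi_L-\zeta_L,\xi_R-\zeta_R)}^{m' - \rho|\nu|_G}$, then apply Peetre's inequality together with \eqref{symbol_super_calculus:eqn:inequality_nesting_Hoermander_super_Hoermander_classes_m_geq_0} and \eqref{appendix:oscillatory_integrals:eqn:Japanese_bracket_xi_xi_squeezing_estimate_Japanese_bracket_xi} to split off the factor $\sexpval{\xi_L}^{(m+m')/2}\sexpval{\xi_R}^{(m+m')/2}$ and bound what remains by integrable powers of $\eta_{L,R}, \zeta_{L,R}$. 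For case~(2) I would instead use the super-Hörmander seminorms $q_N^{m_L,m_R}(F)$ and $q_N^{m_L',m_R'}(G)$, so that the $F$- and $G$-factors split cleanly into a product $\sexpval{\xi_L - \eta_L}^{m_L} \sexpval{\xi_R - \eta_R}^{m_R} \sexpval{\xi_L - \zeta_L}^{m_L'} \sexpval{\xi_R - \zeta_R}^{m_R'}$ and Peetre's inequality applied separately in each momentum slot delivers the desired $\sexpval{\xi_L}^{m_L+m_L'}\sexpval{\xi_R}^{m_R+m_R'}$ factor. In either case I then choose $D_{L,R}, K_{L,R}$ (independently of $n_{L,R}$) larger than half of $d$ plus the relevant $|m|$-constants, and afterwards $N_{L,R}, M_{L,R}$ larger than $\tfrac12(d + |n_{L,R}|) + D_{L,R} + K_{L,R}$, to ensure integrability uniformly over all finitely many summands produced by Leibniz.

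The main obstacle is purely combinatorial: tracking the many multi-indices generated by differentiating the $L$-operator insertions, the magnetic/symplectic phase, and the two shifted copies of $F$ and $G$, while making sure the order losses line up to give exactly $S^{m+m'-\rho(|a_L|+|a_R|+|b_L|+|b_R|)}_{\rho,0}$ in case~(1) and $S^{m_L+m_L'-\rho(|a_L|+|b_L|),\, m_R+m_R'-\rho(|a_R|+|b_R|)}_{\rho,0}$ in case~(2). Once the bounds are in place, the continuity of the bilinear map in $(F,G)$ is read off from the uniform dependence of the constants on Fréchet seminorms $p_N^m, q_N^{m_L,m_R}$, and continuity of $(\tau,\tau')\mapsto I_{\tau\tau'}$ follows from continuity of the phase factor $\e^{+\ii\tau\frac{\eps}{2}\Sigma(r(\Ybf),\Zbf)}$ in $\tau$ and the assumed continuity of $\tau'\mapsto G_{\tau'}$ together with dominated convergence, since all estimates are uniform in $(\tau,\tau')\in[0,1]^2$.
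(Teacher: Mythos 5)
Your plan follows the paper's proof essentially step for step: reduce to $a_L=a_R=b_L=b_R=\alpha_L=\alpha_R=\beta_L=\beta_R=0$ by converting the monomials into derivatives of $F,G$ (dropping the symbol order by $\rho$ times the relevant degree); distribute the $\Xbf$-derivative by Leibniz; insert eight $L$-operator factors (one for each of $y_{L,R},z_{L,R},\eta_{L,R},\zeta_{L,R}$) and integrate by parts; bound the derivatives of $G_{\tau'}$ via the hypothesis \eqref{appendix:oscillatory_integrals:eqn:super_Weyl_product_oscillatory_integral-flux-factor-estimate} and absorb them into the spatial Japanese brackets; bound the derivatives of $F,G$ by Fréchet seminorms and use Peetre plus \eqref{symbol_super_calculus:eqn:inequality_nesting_Hoermander_super_Hoermander_classes_m_geq_0}, \eqref{appendix:oscillatory_integrals:eqn:Japanese_bracket_xi_xi_squeezing_estimate_Japanese_bracket_xi}; then fix $D_{L,R},K_{L,R}$ and afterwards $N_{L,R},M_{L,R}$ to ensure integrability uniformly in $(\tau,\tau')$; and finally get continuity of $(\tau,\tau')\mapsto I_{\tau\tau'}$ from the uniformity of the bounds and dominated convergence. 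This is exactly what the paper does.

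One small inaccuracy in your description of case~(1): you say you would split off $\sexpval{\xi_L}^{(m+m')/2}\sexpval{\xi_R}^{(m+m')/2}$. This factorization of $\sexpval{(\xi_L,\xi_R)}^{m+m'}$ is not an identity and in general goes the wrong way. The target seminorm for $S^{m+m'}_{\rho,0}(\Pspace)$ uses the joint Japanese bracket $\sexpval{(\xi_L,\xi_R)}$ on the doubled momentum, so after Peetre you want to retain the factor $\sexpval{(\xi_L,\xi_R)}^{m+m'-\rho(|\nu_L|+|\nu_R|)}$ as is (as the paper does) rather than split it into a product; the product form is only what is needed — and only what holds cleanly — in case~(2), where the super symbol class has separate left and right weights by definition. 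This is a presentational slip, not a gap: the inequalities you cite are exactly the ones needed, you just apply them in the other direction. With that corrected, your argument matches the paper's.
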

	\begin{proof}
		\begin{enumerate}[(1)]
			\item Since the proof will be largely identical to that of Lemma~\ref{appendix:oscillatory_integrals:lem:semi_super_product_existence_oscillatory_integral} we will skip some of the details; we will re-use all of the notation. 
			
			First of all, it suffices to consider the case $a_L = a_R = b_L = b_R = \alpha_L = \alpha_R = \beta_L = \beta_R = 0$: by repeated partial integration, we can convert the monomial in $\Ybf$ and $\Zbf$ into derivatives acting on $F$ and $G$. But those are just Hömander symbols of the same ($\rho = 0$) or lower ($\rho > 0$) order. 
			
			To show the existence of the oscillatory integrals and control the relevant Hörmander seminorms at the same time, we compute its derivatives $\partial_{x_L}^{n_L} \partial_{\xi_L}^{\nu_L} \partial_{x_R}^{n_R} \partial_{\xi_R}^{\nu_R} I_{\tau\tau'}$ for arbitrary values of $n_L , n_R , \nu_L , \nu_R \in \N_0^d$. This give rise to a finite sum of terms of the type
			\begin{align*}
				\frac{1}{(2\pi)^{4d}} \int_{\Pspace} \dd \Ybf \int_{\Pspace} \dd \Zbf
				\, & \, \e^{+ \ii (y_L \cdot \eta_L + z_L \cdot \zeta_L + y_R \cdot \eta_R + z_R \cdot \zeta_R)} \, \partial_{x_L}^{c_L} \partial_{x_R}^{c_R} G_{\tau'}(x_L,x_R,y_L,y_R,z_L,z_R)
				\, \cdot \\
				&
				\partial_{x_L}^{a_L} \partial_{\xi_L}^{\alpha_L} \partial_{x_R}^{a_R} \partial_{\xi_R}^{\alpha_R} F \bigl ( x_L - \tfrac{\tau\eps}{2} z_L , \xi_L - \eta_L , x_R + \tfrac{\tau\eps}{2} z_R , \xi_R - \eta_R \bigr )
				\, \cdot \\
				&
				\partial_{x_L}^{b_L} \partial_{\xi_L}^{\beta_L} \partial_{x_R}^{b_R} \partial_{\xi_R}^{\beta_R} G \bigl ( x_L + \tfrac{\tau\eps}{2} y_L , \xi_L - \zeta_L , x_R - \tfrac{\tau\eps}{2} y_R , \xi_R - \zeta_R \bigr )
				.
			\end{align*}
			To prove the existence of this oscillatory integral, we will use the same family of $L$ operators~\eqref{appendix:oscillatory_integrals:eqn:L_operator} in all the relevant variables, which, after repeated partial integration, results in a finite sum of terms of the type 
			\begin{align} 
				\int_{\Pspace} \dd \Ybf & \int_{\Pspace} \dd \Zbf \, \e^{+ \ii (y_L \cdot \eta_L + z_L \cdot \zeta_L + y_R \cdot \eta_R + z_R \cdot \zeta_R)}
				\, \cdot 
				\notag \\
				&
				\sexpval{y_L}^{-2N_L} \, \sexpval{y_R}^{-2N_R} \, \sexpval{z_L}^{-2M_L} \, \sexpval{z_R}^{-2M_R} \, \sexpval{\eta_L}^{-2D_L} \, \sexpval{\eta_R}^{-2D_R} \, \sexpval{\zeta_L}^{-2K_L} \, \sexpval{\zeta_R}^{-2K_R} 
				\, \cdot 
				\notag \\
				&
				\varphi_{N_L b'_L}(y_L) \, \varphi_{N_R b'_R}(y_R) \, \varphi_{M_L a'_L}(z_L) \, \varphi_{M_R a'_R}(z_R) 
				\, \cdot 
				\notag \\
				&
				\partial_{x_L}^{c_L} \partial_{x_R}^{c_R} \partial_{y_L}^{a''_L} \partial_{y_R}^{a''_R} \partial_{z_L}^{b''_L} \partial_{z_R}^{b''_R} G_{\tau'}(x_L,x_R,y_L,y_R,z_L,z_R)
				\, \cdot 
				\notag \\
				&
				\partial_{x_L}^{a_L + a'''_L} \partial_{\xi_L}^{\alpha_L + \alpha'_L} \partial_{x_R}^{a_R + a'''_R} \partial_{\xi_R}^{\alpha_R + \alpha'_R} F \bigl ( x_L - \tfrac{\tau\eps}{2} z_L , \xi_L - \eta_L , x_R + \tfrac{\tau\eps}{2} z_R , \xi_R - \eta_R \bigr )
				\, \cdot 
				\notag \\
				&
				\partial_{x_L}^{b_L + b'''_L} \partial_{\xi_L}^{\beta_L + \beta'_L} \partial_{x_R}^{b_R + b'''_R} \partial_{\xi_R}^{\beta_R + \beta'_R} G \bigl ( x_L + \tfrac{\tau\eps}{2} y_L , \xi_L - \zeta_L , x_R - \tfrac{\tau\eps}{2} y_R , \xi_R - \zeta_R \bigr ) 
				.
				\label{appendix:oscillatory_integrals:eqn:super_Weyl_product_oscillatory_integral_terms_of_type}
			\end{align}
			As before, the bounded, smooth functions $\varphi_{N a}(y)$ arise from taking the $a \in \N_0^d$ derivative of $\sexpval{y}^{-N}$. 
			
			Lower bounds on the values of the non-negative integers $N_L$, $N_R$, $M_L$, $M_R$, $D_L$, $D_R$, $K_L$ and $K_R$ will be given below. Importantly, the sum is over all primed multi indices satisfying~\eqref{appendix:oscillatory_integrals:eqn:bound_for_multi_indices_N_L_and_M_L}--\eqref{appendix:oscillatory_integrals:eqn:bound_for_multi_indices_D_R_and_K_R}. Our estimates for the seminorms further involve the $(n_L,\nu_L,n_R,\nu_R)$-dependent constant $N$ given by~\eqref{appendix:oscillatory_integrals:eqn:estimate_oscillatory_integral_super_Weyl_product_constant_N}. Up to a common constant that is uniform in $\tau$ and $\tau'$, the integrand of \eqref{appendix:oscillatory_integrals:eqn:super_Weyl_product_oscillatory_integral_terms_of_type} can be bounded in absolute value in terms of the standard Hörmander seminorms $p_N^m$ of $F$ and $G$, as well as constants $C , C' > 0$ that can be chosen uniformly in the multiindices and $\tau$ and $\tau'$: 
			\begin{align*}
				&\sexpval{y_L}^{-2N_L} \, \sexpval{y_R}^{-2N_R} \, \sexpval{z_L}^{-2M_L} \, \sexpval{z_R}^{-2M_R} \, \sexpval{\eta_L}^{-2D_L} \, \sexpval{\eta_R}^{-2D_R} \, \sexpval{\zeta_L}^{-2K_L} \, \sexpval{\zeta_R}^{-2K_R} 
				\, \cdot \\
				&\quad \quad
				\Babs{\varphi_{N_L b'_L}(y_L) \, \varphi_{N_R b'_R}(y_R) \, \varphi_{M_L a'_L}(z_L) \, \varphi_{M_R a'_R}(z_R)}
				\, \cdot \\
				&\quad \quad
				\Babs{\partial_{x_L}^{c_L} \partial_{x_R}^{c_R} \partial_{y_L}^{a''_L} \partial_{y_R}^{a''_R} \partial_{z_L}^{b''_L} \partial_{z_R}^{b''_R} G_{\tau'}(x_L,x_R,y_L,y_R,z_L,z_R)}
				\, \cdot \\
				&\quad \quad
				\Babs{\partial_{x_L}^{a_L + a'''_L} \partial_{\xi_L}^{\alpha_L + \alpha'_L} \partial_{x_R}^{a_R + a'''_R} \partial_{\xi_R}^{\alpha_R + \alpha'_R} F \bigl ( x_L - \tfrac{\tau\eps}{2} z_L , \xi_L-\eta_L , x_R + \tfrac{\tau\eps}{2}z_R , \xi_R-\eta_R \bigr )}
				\, \cdot \\
				&\quad \quad
				\Babs{\partial_{x_L}^{b_L + b'''_L} \partial_{\xi_L}^{\beta_L + \beta'_L} \partial_{x_R}^{b_R + b'''_R} \partial_{\xi_R}^{\beta_R + \beta'_R} G \bigl ( x_L + \tfrac{\tau\eps}{2} y_L , \xi_L - \zeta_L , x_R - \tfrac{\tau\eps}{2} y_R , \xi_R - \zeta_R \bigr )} 
				\\
				&\quad \leq 
				C \, p_N^m(F) \, p_N^{m'}(G) \, \sexpval{y_L}^{-2N_L + \sabs{c_L} + \sabs{c_R} + \sabs{a''_L} + \sabs{a''_R} + \sabs{b''_L} + \sabs{b''_R}} \, \sexpval{z_L}^{-2M_L + \sabs{c_L} + \sabs{c_R} + \sabs{a''_L} + \sabs{a''_R} + \sabs{b''_L} + \sabs{b''_R}}
				\, \cdot \\
				&\quad \quad
				\sexpval{y_R}^{-2N_R + \sabs{c_L} + \sabs{c_R} + \sabs{a''_L} + \sabs{a''_R} + \sabs{b''_L} + \sabs{b''_R}} \, \sexpval{z_R}^{-2M_R + \sabs{c_L} + \sabs{c_R} + \sabs{a''_L} + \sabs{a''_R} + \sabs{b''_L} + \sabs{b''_R}}
				\, \sexpval{\eta_L}^{-2D_L} \, \sexpval{\zeta_L}^{-2K_L} \, \cdot \\
				&\quad \quad
				\sexpval{\eta_R}^{-2D_R} \, \sexpval{\zeta_R}^{-2K_R} 
				\, \sexpval{(\xi_L-\eta_L , \xi_R-\eta_R)}^{m - \rho(\sabs{\alpha_L} + \sabs{\alpha'_L} + \sabs{\alpha_R} + \sabs{\alpha'_R})} \, \cdot \\
				&\quad \quad
				\sexpval{(\xi_L-\zeta_L , \xi_R-\zeta_R)}^{m' - \rho(\sabs{\beta_L} + \sabs{\beta'_L} + \sabs{\beta_R} + \sabs{\beta'_R})}
				\\
				&\quad \leq C' \, p_N^m(F) \, p_N^{m'}(G) \, \sexpval{(\xi_L , \xi_R)}^{m+m'-\rho(\sabs{\nu_L}+\sabs{\nu_R})} \, 
				\, \cdot \\
				&\quad \quad
				\sexpval{y_L}^{-2N_L + \sabs{n_L} + \sabs{n_R} + 2(D_L+D_R+K_L+K_R)}
				\sexpval{y_R}^{-2N_R + \sabs{n_L} + \sabs{n_R} + 2(D_L+D_R+K_L+K_R)} \, 				\, \cdot \\
				&\quad \quad
				\sexpval{z_L}^{-2M_L + \sabs{n_L} + \sabs{n_R} + 2(D_L+D_R+K_L+K_R)} \,
				\sexpval{z_R}^{-2M_R + \sabs{n_L} + \sabs{n_R} + 2(D_L+D_R+K_L+K_R)}
				\, \cdot \\
				&\quad \quad
				\sexpval{\eta_L}^{-2D_L + \sabs{m} + \rho(\sabs{\nu_L} + \sabs{\nu_R})} \, \sexpval{\eta_R}^{-2D_R + \sabs{m} + \rho(\sabs{\nu_L} + \sabs{\nu_R})} 
				\, \cdot \\
				&\quad \quad
				\sexpval{\zeta_L}^{-2K_L + \sabs{m'} + \rho(\sabs{\nu_L} + \sabs{\nu_R})} \, \sexpval{\zeta_R}^{-2K_R + \sabs{m'} + \rho(\sabs{\nu_L} + \sabs{\nu_R})} 
				.
			\end{align*}
			If we first fix $D_L,D_R,K_L$ and $K_R$ such that
			\begin{align*}
				D_L , D_R &> \tfrac{1}{2} \bigl ( d + \sabs{m} + \rho (\sabs{\nu_L} + \sabs{\nu_R}) \bigr )
				,
				\\
				K_L , K_R &> \tfrac{1}{2} \bigl ( d + \sabs{m'} + \rho (\sabs{\nu_L} + \sabs{\nu_R}) \bigr )
				,
			\end{align*}
			then we can choose the other four constants so that they all satisfy 
			\begin{align*}
				N_L , N_R , M_L , M_R > \tfrac{1}{2} (d + \abs{n_L} + \sabs{n_R}) + (D_L + D_R + K_L + K_R) 
				. 
			\end{align*}
			Given that all the sums are finite, this shows that there is a constant $\tilde{C} > 0$ such that
			\begin{align*}
				\sexpval{(\xi_L , \xi_R)}^{-(m+m')+\rho(\abs{\nu_L}+\abs{\nu_R})} \, \babs{\partial_{x_L}^{n_L} \partial_{\xi_L}^{\nu_L} \partial_{x_R}^{n_R} \partial_{\xi_R}^{\nu_R} I_{\tau\tau'}(\Xbf)} \leq \tilde{C} \, p_N^m(F) \, p_N^{m'}(G) 
			\end{align*}
			holds for all $\Xbf \in \Pspace$. 
			
			When we restore the monomials in the oscillatory integral~\eqref{appendix:oscillatory_integrals:eqn:super_Weyl_product_oscillatory_integral} that we eliminated in the first step of the proof, we deduce that the oscillatory integral $I_{\tau\tau'} \in S_{\rho,0}^{m+m'-\rho(\abs{a_L}+\abs{a_R}+\abs{b_L}+\abs{b_R})}(\Pspace)$ exists as a Hörmander symbol, where we have restored the indices $a_L$, $a_R$, $b_L$ and $b_R$ from the polynomial prefactor in equation~\eqref{appendix:oscillatory_integrals:eqn:super_Weyl_product_oscillatory_integral}. Furthermore, the map $(F,G) \mapsto I_{\tau\tau'}$ gives rise to a continuous bilinear map from $S_{\rho,0}^m(\Pspace) \times S_{\rho,0}^{m'}(\Pspace)$ to $S_{\rho,0}^{m+m'-\rho(\abs{a_L}+\abs{a_R}+\abs{b_L}+\abs{b_R})}(\Pspace)$. 
			
			Analogous to the proof of Lemma~\ref{appendix:oscillatory_integrals:lem:semi_super_product_existence_oscillatory_integral}, by using the uniformness in $\tau$ and $\tau'$ of the bounds in the last inequality, we deduce that the map 
			\begin{align*}
				[0,1] \times [0,1] \ni (\tau,\tau') \mapsto I_{\tau\tau'} \in S_{\rho,0}^{m+m'-\rho(\abs{a_L}+\abs{a_R}+\abs{b_L}+\abs{b_R})}(\Pspace) 
			\end{align*}
			is continuous. This proves~(1).
			\item The proof for super Hörmander symbols is virtually identical, we just need to make separate estimates for left and right variables and use the super Hörmander class seminorms~\eqref{symbol_super_calculus:eqn:max_seminorms_super_Hoermander_symbols}.
			
			All steps are identical to the proof of (1), save, of course, for how we estimate the integrand of~\eqref{appendix:oscillatory_integrals:eqn:super_Weyl_product_oscillatory_integral_terms_of_type}. So we will skip right to the crucial estimate. In this case, the integrand of~\eqref{appendix:oscillatory_integrals:eqn:super_Weyl_product_oscillatory_integral_terms_of_type} can be estimated by
			\begin{align*}
			    &
				C \; q_N^{m_L,m_R}(F) \; q_N^{m_L',m_R'}(G)  
				\, \cdot \\
				&\qquad \qquad
				\sexpval{y_L}^{-2N_L + \sabs{c_L} + \sabs{c_R} + \sabs{a''_L} + \sabs{a''_R} + \sabs{b''_L} + \sabs{b''_R}} \, \sexpval{y_R}^{-2N_R + \sabs{c_L} + \sabs{c_R} + \sabs{a''_L} + \sabs{a''_R} + \sabs{b''_L} + \sabs{b''_R}}
				\, \cdot \\
				&\qquad \qquad
				\sexpval{z_L}^{-2M_L + \sabs{c_L} + \sabs{c_R} + \sabs{a''_L} + \sabs{a''_R} + \sabs{b''_L} + \sabs{b''_R}} \, \sexpval{z_R}^{-2M_R + \sabs{c_L} + \sabs{c_R} + \sabs{a''_L} + \sabs{a''_R} + \sabs{b''_L} + \sabs{b''_R}}
				\, \cdot \\
				&\qquad \qquad
				\sexpval{\eta_L}^{-2D_L} \, \sexpval{\eta_R}^{-2D_R} \, \sexpval{\zeta_L}^{-2K_L} \, \sexpval{\zeta_R}^{-2K_R} \, 
				\, \cdot \\
				&\qquad \qquad
				\sexpval{\xi_L-\eta_L}^{m_L-\rho(\sabs{\alpha_L}+\sabs{\alpha'_L})} \, \sexpval{\xi_L-\zeta_L}^{m_L'-\rho(\sabs{\beta_L}+\sabs{\beta'_L})}
				\, \cdot \\
				&\qquad \qquad
				\sexpval{\xi_R-\eta_R}^{m_R-\rho(\sabs{\alpha_R}+\sabs{\alpha'_R})} \, \sexpval{\xi_R-\zeta_R}^{m_R'-\rho(\sabs{\beta_R}+\sabs{\beta'_R})}
				\\
				&\qquad \leq 
				C' \; q_N^{m_L,m_R}(F) \; q_N^{m_L',m_R'}(G) \; \sexpval{\xi_L}^{m_L+m_L'-\rho\sabs{\nu_L}} \; \sexpval{\xi_R}^{m_R+m_R'-\rho\sabs{\nu_R}} 
				\, \cdot \\
				&\qquad \qquad
				\sexpval{y_L}^{-2N_L+\sabs{n_L}+\sabs{n_R}+2(D_L+D_R+K_L+K_R)}
				\sexpval{y_R}^{-2N_R+\sabs{n_L}+\sabs{n_R}+2(D_L+D_R+K_L+K_R)} \, 
				\, \cdot \\
				&\qquad \qquad
				\sexpval{z_L}^{-2M_L+\sabs{n_L}+\sabs{n_R}+2(D_L+D_R+K_L+K_R)} \,
				\sexpval{z_R}^{-2M_R+\sabs{n_L}+\sabs{n_R}+2(D_L+D_R+K_L+K_R)}
				\, \cdot \\
				&\qquad \qquad
				\sexpval{\eta_L}^{-2D_L+\sabs{m_L}+\rho\sabs{\nu_L}} \,  \sexpval{\eta_R}^{-2D_R+\sabs{m_R}+\rho\sabs{\nu_R}} \, \sexpval{\zeta_L}^{-2K_L+\sabs{m_L'}+\rho\sabs{\nu_L}} \,\sexpval{\zeta_R}^{-2K_R+\sabs{m_R'}+\rho\sabs{\nu_R}} 
				.
			\end{align*}
			Here we have re-used the constant $N$ from the proof of~(1). To ensure integrability, we choose $D_L,D_R,K_L$ and $K_R$ such that
			\begin{align*}
				D_{L,R} > \tfrac{1}{2} (d + \sabs{m_{L,R}} + \rho \sabs{\nu_{L,R}})
				, \qquad
				K_{L,R} > \tfrac{1}{2} (d + \sabs{m'_{L,R}} + \rho \sabs{\nu_{L,R}})
				.
			\end{align*}
			For these fixed $D_L$, $D_R$, $K_L$ and $K_R$, we impose the same lower bounds on $N_L$, $N_R$, $M_L$ and $M_R$ as in the proof of~(1). These conditions imposed on $D_L$, $D_R$, $K_L$, $K_R$, $N_L$, $N_R$, $M_L$ and $M_R$ ensures the integrability of the right-hand side of the last inequality in $y_L$, $y_R$, $z_L$, $z_R$, $\eta_L$, $\eta_R$, $\zeta_L$ and $\zeta_R$.

			All this shows that there is a constant $\tilde{C}>0$ such that
			\begin{align*}
				\sexpval{\xi_L}^{-(m_L+m_L')+\rho\abs{\nu_L}} \, \sexpval{\xi_R}^{-(m_R+m_R')+\rho\abs{\nu_R}} \, &\babs{\partial_{x_L}^{n_L} \partial_{\xi_L}^{\nu_L} \partial_{x_R}^{n_R} \partial_{\xi_R}^{\nu_R} I_{\tau\tau'}(\Xbf)} \leq \tilde{C} \, q_N^{m_L,m_R}(F) \; q_N^{m_L',m_R'}(G)
			\end{align*}
			holds for all $\Xbf \in \Pspace$. Consequently, we see that the map $(F,G)\mapsto I_{\tau\tau'}$ gives rise to a continuous bilinear map from $S_{\rho,0}^{m_L,m_R}(\Pspace) \times S_{\rho,0}^{m_L',m_R'}(\Pspace)$ to $S_{\rho,0}^{m_L+m_L',m_R+m_R'}(\Pspace)$.
			
			When we restore the monomials in the oscillatory integral~\eqref{appendix:oscillatory_integrals:eqn:super_Weyl_product_oscillatory_integral} that we eliminated in the first step of the proof, we deduce as before that 
			\begin{align*}
				(F,G) \mapsto I_{\tau\tau'} : S_{\rho,0}^{m_L,m_R}(\Pspace) \times S_{\rho,0}^{m_L,m_R}(\Pspace) \longrightarrow S_{\rho,0}^{m_L + m_L' - \rho(\sabs{a_L} + \sabs{b_L}),m_R + m_R' - \rho(\sabs{a_R} + \sabs{b_R})}(\Pspace)
			\end{align*}
			is a continuous bilinear map. Similarly, we conclude that $(\tau,\tau') \mapsto I_{\tau\tau'}$ is continuous. This proves~(2) and the proof is complete.
		\end{enumerate}
	\end{proof}
	%
\end{appendix}
%

\bibliographystyle{alpha}
\bibliography{bibliography}

\end{document}